\keywords{recursive types, intuitionistic linear logic, categorical semantics}
\newsavebox{\@brx}
\newcommand{\llangle}[1][]{\savebox{\@brx}{\(\m@th{#1\langle}\)}%
\mathopen{\copy\@brx\kern-0.5\wd\@brx\usebox{\@brx}}}
\newcommand{\rrangle}[1][]{\savebox{\@brx}{\(\m@th{#1\rangle}\)}%
\mathclose{\copy\@brx\kern-0.5\wd\@brx\usebox{\@brx}}}
\newenvironment{bprooftree}
  {\leavevmode\hbox\bgroup}
  {\DisplayProof\egroup}
\newcommand{\secref}[1]{\S\ref{#1}}
\newcommand{\swap}{\emph{swap}}
\newcommand{\drop}{\emph{drop}}
\newcommand{\calc}{LNL-FPC}%
\newcommand{\Ob}{\text{Ob}}%
\newcommand{\Mor}{\text{Mor}}%
\newcommand{\id}{\text{id}}
\newcommand{\Id}{\text{Id}}
\newcommand{\op}{\ensuremath{\mathrm{op}}}
\newcommand{\AAA}{\ensuremath{\mathbf{A}}}
\newcommand{\BB}{\ensuremath{\mathbf{B}}}
\newcommand{\BBB}{\ensuremath{\mathbf{B}}}
\newcommand{\CC}{\ensuremath{\mathbf{C}}}
\newcommand{\CCe}{\ensuremath{\mathbf{C}_e}}
\newcommand{\DD}{\ensuremath{\mathbf{D}}}
\newcommand{\EE}{\ensuremath{\mathbf{E}}}
\newcommand{\LL}{\ensuremath{\mathbf{L}}}
\newcommand{\LLe}{\ensuremath{\mathbf{L}_e}}
\newcommand{\PE}{\ensuremath{\mathbf{C}_{pe}}}
\newcommand{\Poset}{\ensuremath{\mathbf{Poset}}}
\newcommand{\DCPO}{\ensuremath{\mathbf{CPO}}}
\newcommand{\dcpo}{\DCPO}
\newcommand{\dcpobs}{\ensuremath{\mathbf{CPO}_{\perp!}}}
\newcommand{\cpobs}{\dcpobs}
\newcommand{\cpo}{\dcpo}
\newcommand{\Rel}{\ensuremath{\mathbf{Rel}}}
\newcommand{\RR}{\ensuremath{\mathbf{R}}}
\newcommand{\M}{\ensuremath{\mathbf{M}}}
\newcommand{\colim}{\ensuremath{\mathrm{colim}}}
\newcommand{\tr}{\ensuremath{\ensuremath{\triangleright}}}
\newcommand{\sfold}{\ensuremath{\mathrm{fold}}}
\newcommand{\sunfold}{\ensuremath{\mathrm{unfold}}}
\newcommand{\overcup}[1]{\ensuremath{\stackrel{\smallsmile}{\smash{#1}\rule{0pt}{1.05ex}}}}
\newcommand{\lift}{\textnormal{\texttt{lift}}}
\newcommand{\lleft}{\textnormal{\texttt{left}}}
\newcommand{\rright}{\textnormal{\texttt{right}}}
\newcommand{\force}{\textnormal{\texttt{force}}}
\newcommand{\ccase}{\textnormal{\texttt{case}}}
\newcommand{\llet}{\textnormal{\texttt{let}}}
\newcommand{\eval}{\mathrm{eval}}
\newcommand{\rec}{\textnormal{\texttt{rec}}}
\newcommand{\fold}{\textnormal{\texttt{fold}}}
\newcommand{\ifold}{\ensuremath{\mathpzc{fold}}}
\newcommand{\unfold}{\textnormal{\texttt{unfold}}}
\newcommand{\iunfold}{\ensuremath{\mathpzc{unfold}}}
\newcommand{\FOLD}[1]{\mathbb{I}^{#1}}
\newcommand{\UNFOLD}[1]{\mathbb{E}^{#1}}
\newcommand{\Nat}{\textnormal{\texttt{Nat}}}
\newcommand{\List}{\textnormal{\texttt{List}}}
\newcommand{\Stream}{\textnormal{\texttt{Stream}}}
\newcommand{\REC}{\ensuremath{\mathcal R}}
\newcommand{\lrb}[1]{{\llbracket #1 \rrbracket}}
\newcommand{\sem}[1]{{\llbracket #1 \rrbracket}}
\newcommand{\flrb}[1]{{\llparenthesis #1 \rrparenthesis}}
\newcommand{\elrb}[2]{{\lVert #1 \rVert^{#2}}}
\newcommand{\elrbc}[1]{{\elrb{#1}{\vec C}}}
\newcommand{\elrbs}[1]{{\lVert #1 \rVert}}
\newcommand{\lrbtz}[1]{\lrb{#1}_{\vec Z}^{\Theta}}
\newcommand{\flrbtz}[1]{\flrb{#1}_{\vec Z}^{\Theta}}
\newcommand{\sleq}{\sqsubseteq}
\newcommand{\tleq}{\trianglelefteq}
\newcommand{\naturalto}{\ensuremath{\Rightarrow}}
\DeclareMathAlphabet{\mathpzc}{OT1}{pzc}{m}{it}
\tikzstyle{braceedge}=[decorate,decoration={brace,amplitude=10pt}]
\tikzstyle{square box}=[rectangle,fill=white,draw=black,minimum height=6mm,minimum width=6mm,yshift=0.7mm]
\tikzstyle{wire label}=[font=\footnotesize, auto,swap]
\tikzstyle{none}=[inner sep=0pt]
\tikzstyle{empty}=[rectangle,fill=none,draw=none]
\tikzstyle{scaled}=[rectangle,fill=none,draw=none, font=\small]
\tikzstyle{to}=[->,draw=Black]
\tikzstyle{naturalto}=[-{Implies},double distance=1.5pt]
\tikzstyle{equal-arrow}=[double equal sign distance]
\tikzstyle{every picture}=[baseline=-0.25em]
\newcommand{
\InputIfFileExists{}{}{\input{./tikz/}}
}[1]{
\InputIfFileExists{#1}{}{\input{./tikz/#1}}
}
\newcommand{\InputIfFileExists{}{}{\input{./tikz/}}}[1]{\InputIfFileExists{#1}{}{\input{./tikz/#1}}}
\newcommand{\stikz}[2][1]{\scalebox{#1}{
\InputIfFileExists{#2}{}{\input{./tikz/#2}}
}}
\newcommand{\cstikz}[2][1]{\begin{center}\stikz[#1]{#2}\end{center}}
\begin{document}

\title[LNL-FPC: The Linear/Non-linear Fixpoint Calculus]{LNL-FPC: The Linear/Non-linear Fixpoint Calculus\rsuper{*}}
\titlecomment{{\lsuper*}This is an extended version of the ICFP paper \cite{icfp19}.}

\author[Bert Lindenhovius]{Bert Lindenhovius\rsuper{a}}
\address{\lsuper{a}Tulane University, New Orleans, USA}

\author[Michael Mislove]{Michael Mislove\rsuper{a}}

\author[Vladimir Zamdzhiev]{Vladimir Zamdzhiev\rsuper{b}}
\address{\lsuper{b}Universit\'e de Lorraine, CNRS, Inria, LORIA, F 54000 Nancy, France}




\begin{abstract}
\noindent 
We describe a type system with mixed linear and non-linear recursive types
called LNL-FPC (the linear/non-linear fixpoint calculus). The type system
supports linear typing, which enhances the safety properties of programs, but
also supports non-linear typing as well, which makes the type system more
convenient for programming. Just as in FPC, we show that LNL-FPC supports
type-level recursion, which in turn induces term-level recursion. We also
provide sound and computationally adequate categorical models for LNL-FPC that
describe the categorical structure of the substructural operations of
Intuitionistic Linear Logic at all non-linear types, including the
recursive ones. In order to do so, we describe a new technique for solving
recursive domain equations within cartesian categories by constructing the
solutions over pre-embeddings. The type system also enjoys implicit
weakening and contraction rules that we are able to model by identifying the
canonical comonoid structure of all non-linear types. We also show that the
requirements of our abstract model are reasonable by constructing a large class
of concrete models that have found applications not only in classical
functional programming, but also in emerging programming paradigms that
incorporate linear types, such as quantum programming and circuit description
programming languages.
\end{abstract}

\maketitle



\section{Introduction}

The \emph{Fixpoint Calculus} (FPC) is a type system that has been extensively
studied as a foundation for functional programming languages with recursive
types. Originally proposed by Plotkin~\cite{plotkin-85}, FPC was the focus of
Fiore's celebrated PhD thesis~\cite{fiore-thesis}. The seminal
paper~\cite{fiore-plotkin} gives a summary account of how axiomatic domain
theory can be used to characterize sound and computationally adequate
denotational models of FPC.

Girard's introduction of \emph{linear logic}~\cite{girard} initiated a parallel
line of research into the logics underpinning functional programming
languages~\cite{samson-linear-logic}, focusing on the analysis of
intuitionistic logic in terms of how hypotheses are consumed as resources for
proofs. Attempts to bring linear types into functional programming languages
soon followed~\cite{wadler:world}. Recently, there has been a concrete proposal
for extending Haskell with linear types~\cite{Bernardy:linear_haskell}, and new
languages are proposed that provide both linear and non-linear types with
convenience for programming a design goal~\cite{Tov:practical,Morris:best}.
One of the main benefits of linear types is the enhanced safety properties of
programs, which results from the more fine-grained control of resources, 
e.g., safe in-place updates of mutable data and safe access control to external 
resources such as files, sockets, etc. (see \cite{Bernardy:linear_haskell}).
Linear types also appear in concurrent settings where session types can be
used to model the $\pi$-calculus ~\cite{caires:session-types}; in quantum
programming languages where they are used to ensure error-freeness by enforcing
compliance with the laws of quantum mechanics~\cite{quant-semantics}; and in circuit
description programming languages where they are used to ensure that wires
do not split and do not remain unconnected~\cite{ewire-mfps,eclnl}.

In this paper we present a foundational treatment of mixed linear and
non-linear recursive types. We formulate a denotational semantics that shows
how solutions to a large class of domain equations may be interpreted in both a
cartesian category and a linear one and that the solutions are strongly
related.

\subsection*{Overview and summary of results}

To illustrate our results, we present a type system called \emph{LNL-FPC} (the
linear/non-linear fixpoint calculus). The syntax of our language
(\secref{sec:syntax}) can roughly be understood as adding linear
features to FPC, or alternatively, as adding recursive types to a
linear/non-linear lambda calculus (such as DILL~\cite{dill} or
LNL~\cite{benton-small}). More precisely, it is an extension with recursive
types of the circuit-free fragment of Proto-Quipper-M~\cite{pqm-small} that is
referred to as the CLNL calculus in~\cite{eclnl}.  The
type system has implicit weakening and contraction rules:
non-linear variables are automatically copied and discarded whenever
necessary by the language (without requiring user input). 
The non-linear terms,
types and contexts form \emph{subsets} of the terms, types and contexts of
LNL-FPC. There is no strict separation between linear and non-linear
primitives in LNL-FPC. Instead, we view the non-linear primitives as having
the additional property of being compatible with respect to the substructural
operations of contraction and weakening (copying and discarding), and the remaining primitives
are necessarily treated linearly by the type system. This leads to 
greater convenience for programming, because our type system reduces the need for applying promotion (lifting) and dereliction (forcing) operations, essentially only to linear function types.
We also equip LNL-FPC with a call-by-value big-step operational semantics and show
that type-level recursion induces term-level recursion
(\secref{sec:operational}), thus recreating a well-known result from FPC.

The primary difficulties in designing LNL-FPC are on the denotational side.
Our categorical model (\secref{sec:categorical-model}) is a $\cpo$-enriched
linear/non-linear (LNL) model~\cite{benton-small} that has suitable
$\omega$-colimits. It is given by a $\cpo$-enriched symmetric monoidal
adjunction $\stikz{lnl-fpc-model.tikz}.$ We show that the category $\LL$ (representing the linear world), is
$\cpo$-algebraically compact in the sense of~\cite{fiore-plotkin} and thus one
can solve recursive domain equations in $\LL$ by constructing limits and
colimits over embedding-projection pairs using the famous limit-colimit
coincidence theorem~\cite{smyth-plotkin:domain-equations}. The same paper shows
any mixed-variance $\cpo$-functor $T : \LL^\op \times \LL \to \LL$ may be
seen as a \emph{covariant} functor $T_e : \LL_e \times \LL_e \to \LL_e$ on the
subcategory $\LL_e$ of embeddings and we use this as a basis for the
interpretation of types. 

However, since we work in a mixed linear/non-linear setting, we also have
to explain how to solve certain recursive domain equations involving mixed-variance functors within the category $\CC$,
which is a more challenging problem. We do so by \emph{reflecting}
the solutions from $\LL_e$ via the left adjoint $F$ into the subcategory of \emph{pre-embeddings}
$\PE,$ where a pre-embedding $f \in \CC$ is a morphism, s.t. $Ff$ is an embedding in $\LL_e$.
Unlike the subcategory of embeddings $\CC_e$, the subcategory of pre-embeddings $\PE$ has
sufficient structure for constructing (parameterised) initial algebras, which moreover satisfy
important coherence properties with respect to the (parameterised) initial algebras constructed in $\LL_e$ (\secref{sec:w-categories}).

Then, the (standard) interpretation of an (arbitrary) type $\Theta \vdash A$ is a covariant functor denoted
$\lrb{\Theta \vdash A} : \LL_e^{|\Theta|} \to \LL_e$. A non-linear type $\Theta \vdash P$ admits an
\emph{additional non-linear interpretation} as a covariant functor $\flrb{\Theta \vdash P} : \PE^{|\Theta|} \to \PE$
that is strongly related to its standard interpretation via a natural isomorphism
\[ \alpha^{\Theta \vdash P}: \lrb{\Theta \vdash P} \circ F_{pe}^{\times |\Theta|} \cong F_{pe} \circ \flrb{\Theta \vdash P} : \PE^{|\Theta|} \to \LL_e , \]
where $F_{pe}$ is the restriction of $F$ to $\PE$. By exploiting this natural
isomorphism and the strong coherence properties that it enjoys, we provide a
coherent interpretation of the substructural operations of ILL at all
\emph{non-linear} types, including the recursive ones.
This then allows us to characterise the canonical comonoid structure of non-linear (recursive) types and we prove our semantics sound (\secref{sec:semantics}).
We emphasize that
even though our recursive type expressions allow use of
$\multimap$, the reason we can interpret these types as \emph{covariant}
functors is because we have identified suitable well-behaved subcategories of
$\CC$ and $\LL$ in our model.

We also show that the requirements of our abstract model are reasonable by
constructing a large class of concrete models that have been used in different
programming paradigms ranging from classical to quantum
(\secref{sub:concrete}). In addition, we present a computational adequacy
result for a (non-empty) class of models that satisfy certain additional conditions
(\secref{sec:adequate}).

Our semantic contribution can be split into two main parts. The first one is
the two-tier semantics (one linear and one non-linear), together with the
coherence properties relating them, that show how to rigorously construct the
comonoids we need for non-linear types in the linear category. However, in
order to achieve this in the presence of recursive types, our second main
contribution is fundamental. It shows how the solutions to recursive domain
equations on the linear side, which are constructed over embeddings (and that
are well-understood), can be reflected onto the cartesian side in a coherent way
by constructing them over pre-embeddings (which is novel).

Our results presented here also carry over to
two-judgement calculi, such as LNL \cite{benton-small,benton-wadler}, and it is
not hard to see that our two-tier semantics presented here allows one to extend
such two-judgement calculi with recursive types on both the linear side and the
non-linear side. More specifically, if we adopt this view, then our semantics
allows for a full range of recursive types on the linear side and for all
recursive types on the non-linear side that do not involve non-linear function
space (but which do allow linear function space followed by the right-adjoint
type constructor, making the composite type non-linear).

\subsection*{Publication History}

This article is an extended version of the ICFP paper~\cite{icfp19}. Compared to that version, extensive additions and improvements have been made, including:
\begin{itemize}
\item A more general notion of model where we no longer assume $\CC = \cpo$.
\item Adding an entire subsection devoted to showing that the formal approximation relations (needed for the adequacy proof) exist (\secref{sub:existence}).
\item Improving the presentation of parameterised initial algebras (\secref{sec:w-categories}).
\item Adding omitted proofs and lemmas throughout the entire paper.
\item Adding an example program for the factorial function (Example~\ref{ex:factorial}) that illustrates the presence of non-linear types other than those of the form $! A$. 
\end{itemize}

\section{Syntax of LNL-FPC}\label{sec:syntax}

We begin by describing the syntax of LNL-FPC. The type-level syntax is very similar to that of FPC~\cite{fpc-syntax}
and the term-level syntax can be seen as an extension with recursive types of either
the circuit-free fragment of Proto-Quipper-M~\cite{pqm-small}, or the CLNL calculus in~\cite{eclnl,eclnl2}.

\subsection{LNL-FPC Types}

We use $X,Y,Z$ to range over \emph{type variables}, which are needed in order to form recursive types.
We use $\Theta$ to range over \emph{type contexts} (see Figure~\ref{fig:syntax}).
A type context $\Theta = X_1, \ldots, X_n$ is simply a finite list of type variables.
A type context $\Theta$ is \emph{well-formed}, denoted $\vdash \Theta$, if it can be derived from the rules:
\[
    \begin{bprooftree}
    \AxiomC{{\color{white}$\Theta$}}
    \UnaryInfC{$\vdash \cdot$}
    \end{bprooftree}
    \qquad
    \begin{bprooftree}
    \AxiomC{$\vdash \Theta$}
    \RightLabel{$X \not \in \Theta$,}
    \UnaryInfC{$\vdash \Theta, X$}
    \end{bprooftree}
\]
that is, if all the variables of $\Theta$ are distinct.

We use $A,B,C$ to range over (arbitrary) types of our language (see Figure~\ref{fig:syntax} for their grammar).
A type $A$ is well-formed in type context $\Theta$, denoted $\Theta \vdash A,$ if the judgement can be derived according to the following rules:
\[
    \begin{bprooftree}
    \AxiomC{$\vdash \Theta$}
    \RightLabel{$1 \leq i \leq |\Theta|$}
    \UnaryInfC{$\Theta \vdash \Theta_i$}
    \end{bprooftree}
    \quad
    \begin{bprooftree}
    \AxiomC{$\Theta \vdash A$}
    \UnaryInfC{$\Theta \vdash !A$}
    \end{bprooftree}
    \quad
    \begin{bprooftree}
    \AxiomC{$\Theta \vdash A$}
    \AxiomC{$\Theta \vdash B$}
    \RightLabel{$\star \in \{+, \otimes, \multimap\}$}
    \BinaryInfC{$\Theta \vdash A \star B$}
    \end{bprooftree}
    \quad
    \begin{bprooftree}
    \AxiomC{$\Theta, X \vdash A$}
    \UnaryInfC{$\Theta \vdash \mu X. A$}
    \end{bprooftree}
\]
Note that if $\Theta \vdash A$ is derivable, then so is $\vdash \Theta.$
A well-formed type $A$ is \emph{closed} if $\cdot \vdash A.$


The \emph{non-linear types} form a subset of our types and we use $P,R$ to range over them (see Figure~\ref{fig:syntax} for their grammar).
Of course, they are governed by the same formation rules as those for (arbitrary) types.
A type that is not non-linear is called \emph{linear}.
Note that we do not provide a separate grammar for linear types.
This is because we will mostly be working with arbitrary types (which we do not assume we may copy/discard/promote) and with non-linear types (which we may always copy/discard/promote).
In particular, it is possible for a non-linear type to be treated as arbitrary, but not vice versa.
Thus, when we write $\Theta \vdash A$, the type $A$ may or may not be non-linear, but when we write $\Theta \vdash P$, then $P$ can be only non-linear.
In particular, we do not introduce specific notation for linear types.

\begin{exa}
\label{ex:types}
We list some important closed types that are definable in LNL-FPC. The \emph{empty} type is $0 \equiv \mu X.X,$ which is non-linear.
Another non-linear type is the \emph{unit} type, which is defined by $I \equiv\ !(0 \multimap 0).$ The type of natural numbers is $\Nat \equiv \mu X. I + X,$ which is also non-linear.
Given a closed type $A$, then lists of type $A$ are defined by $\List\ A \equiv \mu X. I + A \otimes X.$ The type $\List\ A$ is non-linear iff $A$ is non-linear.
\emph{Lazy} datatypes can be defined by making use of the $!$ and the $\multimap$ connectives. For instance,
streams of type $A$ can be defined by $\Stream\ A \equiv \mu X. A \otimes !X,$ which is a non-linear type iff $A$ is non-linear.
The ability to form recursive types using both $\multimap$ and $!$ is a
powerful feature and it allows us to \emph{derive} a term-level recursion
operator (see \secref{sub:term-recursion}). Notice that we do not allow non-linear function space $A \to B$, but by using both $\multimap$ and $!$ we may still define recursion for non-linear functions.
\end{exa}

For any types $A$ and $B$ and type variable $X$, we denote with $A[B/X]$ the
type where all free occurrences of $X$ in $A$ are replaced by $B$ (which is
defined in the standard way).

\begin{lem}
If $ \Theta, X \vdash A$ and $\Theta \vdash B,$ then $\Theta \vdash A[B/X].$ Moreover, if $A$ and $B$ are non-linear, then so is $A[B/X]$.
\end{lem}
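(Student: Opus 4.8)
The plan is to prove both statements by a single structural induction on the derivation of $\Theta, X \vdash A$. The first claim—that $\Theta \vdash A[B/X]$—is a standard substitution lemma, and the second claim—that non-linearity is preserved—will be carried along as an additional invariant in the same induction. Since the non-linear types are governed by the very same formation rules as arbitrary types (as the excerpt notes), handling the two claims together adds essentially no overhead: at each case I verify both the well-formedness and the preservation of non-linearity simultaneously.

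**First I would** set up the induction on the last rule used to derive $\Theta, X \vdash A$. The base cases are the variable rule: either $A = \Theta_i$ for some $i$ (so $A$ does not mention $X$ and $A[B/X] = A$, making $\Theta \vdash A$ immediate since weakening the context with $X$ and then removing it is harmless), or $A = X$ itself, in which case $A[B/X] = B$ and the conclusion $\Theta \vdash B$ is exactly our hypothesis. The inductive cases follow the remaining three formation rules. For the $\star$-connectives ($A = A_1 \star A_2$ with $\star \in \{+, \otimes, \multimap\}$), substitution commutes with the connective, so I apply the induction hypothesis to $A_1$ and $A_2$ and reassemble via the binary rule; for the $!$ case, $(!A_1)[B/X] = \,!(A_1[B/X])$ and I apply the induction hypothesis then the $!$-rule.

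**The hard part will be** the recursive-type case $A = \mu Y. A_0$ derived from $\Theta, X, Y \vdash A_0$. Here the standard subtlety of capture-avoiding substitution arises: to form $(\mu Y. A_0)[B/X]$ correctly I must ensure $Y \notin \mathrm{fv}(B)$, which is guaranteed by working up to $\alpha$-equivalence (choosing $Y$ fresh for both $B$ and $\Theta$), so that $(\mu Y. A_0)[B/X] = \mu Y.(A_0[B/X])$. Applying the induction hypothesis requires the premise $\Theta, Y \vdash B$, which I obtain from $\Theta \vdash B$ by a weakening argument: since derivability of a type judgement only requires the free variables of the type to appear in the context, extending the context with a fresh $Y$ preserves derivability. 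I would therefore either prove this weakening property as a preliminary remark or fold it into the induction; once available, the induction hypothesis gives $\Theta, Y \vdash A_0[B/X]$, and the $\mu$-rule yields $\Theta \vdash \mu Y.(A_0[B/X])$.

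**For the non-linearity invariant**, I assume both $A$ and $B$ are non-linear and check that in each inductive case the subterms to which I apply the induction hypothesis are themselves non-linear, which follows because the grammar for non-linear types is closed under exactly the same constructors. The one point requiring care is that the variable $X$ being substituted must be treated consistently: since $B$ is non-linear, replacing occurrences of $X$ by $B$ cannot introduce a linear subterm, so the resulting type remains within the non-linear grammar. Thus the preservation of non-linearity propagates through every case in lockstep with well-formedness, completing the induction.
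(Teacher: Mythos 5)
The paper states this lemma without proof (it is treated as routine), and your structural induction on the derivation of $\Theta, X \vdash A$ is exactly the standard argument one would supply; the case analysis, the capture-avoidance remark for $\mu Y.A_0$, and the observation that non-linearity propagates because the non-linear grammar is closed under the same constructors are all correct. The only detail worth tightening is the $\mu$ case: the premise there is $\Theta, X, Y \vdash A_0$, with $X$ \emph{not} in last position, so the induction hypothesis as literally stated does not apply directly --- you need either a syntactic exchange lemma (to pass to $\Theta, Y, X \vdash A_0$) alongside the weakening you already invoke, or a strengthened statement allowing substitution for any variable of the context; both are routine given that the variable rule only tests membership in $\Theta$, but one of them should be made explicit.
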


\begin{rem}
Like FPC, our language allows \emph{nested type recursion} (e.g. $\mu X. \mu Y. I + (X \otimes Y)).$
\end{rem}

\begin{figure}
\centering
\begin{tabular}{l  l  l  l}
  Type variables & $X,Y,Z$ & & \\
  Term variables & $x,y,z$ & & \\
	Types & $A, B, C$ &                                       ::= & $X$  | $A+B$ | $A\otimes B$ | $A \multimap B$ | $!A$ | $\mu X.A$ \\
	Non-linear types & $P, R$ &                           ::= & $X$  | $P+R$ | $P\otimes R$ | $!A$ | $\mu X.P$ \\
  Type contexts & $\Theta$ &                                ::= & $X_1, X_2, \ldots, X_n$\\
  Term contexts & $\Gamma , \Sigma $ &                      ::= & $x_1: A_1, x_2: A_2, \ldots, x_n : A_n$\\
  Non-linear term contexts & $\Phi$ &                   ::= & $x_1: P_1, x_2: P_2, \ldots, x_n : P_n$\\
  Terms & $m, n, p$ & ::= & $x$ | \lleft$_{A,B} m$ | \rright$_{A,B} m$ \\
  & & &| \ccase{} $m$ \texttt{of} $\{$\lleft{} $x\to n\ $\rright{} $y \to p\}$ \\ 
  & & &| $\langle m, n \rangle$ | \llet{} $\langle x, y \rangle = m$ \texttt{in} $n$ | $\lambda x^A.m$ | $mn$  \\
  & & &| \lift{} $m$ | \force{} $m$ | $\fold{}_{\mu X.A} m$ | $\unfold{}\ m$ \\
  Values & $v,w$ & ::= & $x$ | \lleft$_{A,B} v$ | \rright$_{A,B} v$ | $\langle v, w \rangle$ | $\lambda x^A.m$ \\
  & & &| \lift{} $m$ | $\fold{}_{\mu X.A} v$ \\
\end{tabular}
\caption{Syntax of the \calc{} Calculus.}\label{fig:syntax}
\end{figure}

\subsection{LNL-FPC Terms}\label{sub:lnl-terms}

We use $x,y,z$ to range over \emph{term variables}.
We use $\Gamma, \Sigma$ to range over (arbitrary) \emph{term contexts}. A
term context is a list of term variables with types,
written as $\Gamma = x_1: A_1, \ldots, x_n:A_n.$
A term context is \emph{well-formed} in type context $\Theta$, denoted $\Theta \vdash \Gamma$, if the
judgement can be derived from the rules:
\[
    \begin{bprooftree}
    \AxiomC{$\vdash \Theta$}
    \UnaryInfC{$\Theta \vdash \cdot$}
    \end{bprooftree}
    \begin{bprooftree}
    \AxiomC{$\Theta \vdash \Gamma$}
    \AxiomC{$\Theta \vdash A$}
    \RightLabel{$x \not \in \Gamma$,}
    \BinaryInfC{$\Theta \vdash \Gamma, x: A$}
    \end{bprooftree}
\]
that is, if $\Gamma$ is a list of distinct variables with well-formed types.
A \emph{non-linear term context} is a term context whose types are all
non-linear.  We use $\Phi$ to range over non-linear term contexts.
Just as with types, we do not introduce specific notation for purely linear contexts.

The terms and values of our language are defined in Figure~\ref{fig:syntax}.
A \emph{term judgement} has the form $\Theta; \Gamma \vdash m :A$
and indicates that $m$ is a well-formed term of type $A$ in type context $\Theta$ and
term context $\Gamma$. The formation rules are shown in Figure~\ref{fig:typing-term},
under the condition that the $\Gamma$ and $\Sigma$ contexts do not have any
variables in common (one can deduce $\Phi \cap \Gamma =
\varnothing = \Phi \cap \Sigma$).
The formation rules for $\fold{}$ and $\unfold{}$ are the same as in FPC (cf.~\cite{fpc-syntax}).
Observe that if $\Theta; \Gamma \vdash m :A,$ then also $\Theta \vdash \Gamma$ and $\Theta \vdash A.$
Note that there is only one kind of term context and we do not have explicit notation to separate the linear variables from the non-linear ones.
Thus, when we write $\Theta; \Gamma \vdash m : A$, then the context $\Gamma$ could contain both linear and non-linear variables.
However, when we write $\Theta; \Phi \vdash m : A$, then the context $\Phi$ contains only non-linear variables.
The type system enforces that a linear variable is used exactly
once, whereas a non-linear variable may be used any number of times, including
zero.

\begin{figure}
{%
\small{%
  \[
    \begin{bprooftree}
    \AxiomC{$\Theta \vdash \Phi, x:A$}
    \UnaryInfC{$\Theta; \Phi, x:A \vdash x: A$}
    \end{bprooftree}
    \quad
    \begin{bprooftree}
    \def\ScoreOverhang{0.5pt}
    \AxiomC{$\Theta; \Gamma \vdash m : A$}
    \AxiomC{$\Theta \vdash B$}
    \BinaryInfC{$\Theta; \Gamma \vdash \lleft_{A,B} m : A+B$}
    \end{bprooftree}
    \quad
    \begin{bprooftree}
    \def\ScoreOverhang{0.5pt}
    \AxiomC{$\Theta; \Gamma \vdash m : B$}
    \AxiomC{$\Theta \vdash A$}
    \BinaryInfC{$\Theta; \Gamma \vdash \rright_{A,B} m : A+B$}
    \end{bprooftree}
  \]

  \[
    \begin{bprooftree}
    \def\ScoreOverhang{0.5pt}
    \AxiomC{$\Theta; \Phi, \Gamma \vdash m : A+B$}
    \AxiomC{$\Theta; \Phi, \Sigma, x : A \vdash n : C$}
    \AxiomC{$\Theta; \Phi, \Sigma, y : B \vdash p : C$}
    \TrinaryInfC{$\Theta; \Phi, \Gamma, \Sigma \vdash \ccase\ m\ \texttt{of}\ \{\lleft\ x \to n\ |\ \rright\ y \to p\} : C$}
    \end{bprooftree}
  \]

  \[
    \begin{bprooftree}
    \def\ScoreOverhang{0.5pt}
    \AxiomC{$\Theta; \Phi, \Gamma \vdash m : A$}
    \AxiomC{$\Theta; \Phi, \Sigma \vdash n : B$}
    \BinaryInfC{$\Theta; \Phi, \Gamma, \Sigma \vdash \langle m, n \rangle : A \otimes B$}
    \end{bprooftree}
    \ 
    \begin{bprooftree}
    \def\ScoreOverhang{0.5pt}
    \AxiomC{$\Theta; \Phi,\Gamma \vdash m : A\otimes B$}
    \AxiomC{$\Theta; \Phi, \Sigma,x:A,y:B \vdash n : C$}
    \BinaryInfC{$\Theta; \Phi, \Gamma, \Sigma \vdash\llet\ \langle x,y \rangle=m\ \texttt{in}\ n:C$}
    \end{bprooftree}
  \]
  
  \[
    \begin{bprooftree}
    \def\ScoreOverhang{0.5pt}
    \AxiomC{$\Theta; \Gamma, x: A \vdash m : B$}
    \UnaryInfC{$\Theta; \Gamma \vdash \lambda x^A . m : A \multimap B$}
    \end{bprooftree}
    \quad
    \begin{bprooftree}
    \def\ScoreOverhang{0.5pt}
    \AxiomC{$\Theta; \Phi, \Gamma \vdash m : A \multimap B$}
    \AxiomC{$\Theta; \Phi, \Sigma \vdash n : A$}
    \BinaryInfC{$\Theta; \Phi, \Gamma, \Sigma \vdash mn : B$}
    \end{bprooftree}
    \quad
    \begin{bprooftree}
    \def\ScoreOverhang{0.5pt}
    \AxiomC{$\Theta; \Phi \vdash m : A$}
    \UnaryInfC{$\Theta; \Phi \vdash \lift\ m :\ !A$}
    \end{bprooftree}
  \]

  \[
    \begin{bprooftree}
    \def\ScoreOverhang{0.5pt}
    \AxiomC{$\Theta; \Gamma \vdash m :\ !A$}
    \UnaryInfC{$\Theta; \Gamma \vdash \force\ m : A$}
    \end{bprooftree}
    \ 
    \begin{bprooftree}
    \def\ScoreOverhang{0.5pt}
    \AxiomC{$\Theta; \Gamma \vdash m : A[\mu X. A / X]$}
    \AxiomC{$\Theta, X \vdash A$}
    \BinaryInfC{$\Theta; \Gamma \vdash \fold{}_{\mu X.A} m: \mu X. A$}
    \end{bprooftree}
    \ 
    \begin{bprooftree}
    \def\ScoreOverhang{0.5pt}
    \AxiomC{$\Theta; \Gamma \vdash m : \mu X. A$}
    \UnaryInfC{$\Theta; \Gamma \vdash \unfold\ m : A[\mu X. A / X]$}
    \end{bprooftree}
  \]
  
  \[ \text{where } \Gamma \cap \Sigma = \varnothing.  \]
}%
}%
\caption{Formation rules for \calc{} terms.}\label{fig:typing-term}
\end{figure}

Type assignment to terms is unique, but derivations of term judgements in LNL-FPC are in general not unique,
because non-linear variables may be part of an arbitrary context $\Gamma$, that is,
our type system allows for non-linear variables to be treated as if they were linear (but not vice versa).
For example, if $\Theta \vdash P_1$ and $\Theta \vdash P_2$ are non-linear types, then:
\[
\begin{bprooftree}
\AxiomC{$\Theta; x:P_1 \vdash x: P_1$}
\AxiomC{$\Theta; y:P_2 \vdash y: P_2$}
\BinaryInfC{$\Theta; x:P_1, y:P_2 \vdash \langle x, y \rangle: P_1 \otimes P_2$}
\end{bprooftree}
\ 
\begin{bprooftree}
\AxiomC{$\Theta; x:P_1 \vdash x: P_1$}
\AxiomC{$\Theta; x:P_1, y:P_2 \vdash y: P_2$}
\BinaryInfC{$\Theta; x:P_1, y:P_2 \vdash \langle x, y \rangle: P_1 \otimes P_2$}
\end{bprooftree}
\]
are two different derivations of the same judgement. In both of these
derivations, the variable $y$ is treated as if it were linear and it is
propagated up into only one judgement. The variable $x$ we treat non-linearly
in one case (propagated up twice) and in other case we treat it as if it were
linear (propagated up once). This is because non-linear variables are allowed
to be part of arbitrary contexts (see formation rule for pairing).

This non-uniqueness is a result of the design choice to have only one kind of pairing, one kind of conditional branching, etc. (instead of having separate linear and non-linear ones), which we think
results in a more convenient syntax for programming. 
We note that the interpretation of any two derivations of the same judgement are equal (see Theorem~\ref{thm:derivations}).

\begin{exa}
The term $\cdot; \cdot \vdash \lambda x. \langle x, x \rangle : A \multimap A \otimes A$ is well-formed iff $A$ is a non-linear type.
Indeed, if $A$ is a linear type, then the term $\cdot; x: A \vdash \langle x, x \rangle : A \otimes A$ is not well-formed, because
the variable $x$ is part of a linear context and contraction is not admissible.
\end{exa}

A  \emph{term of closed type} is a term $m$, such that $\cdot\ ; \Gamma \vdash m :
A$ for some type $A$ and context $\Gamma.$ In such a situation we simply write
$\Gamma \vdash m : A.$
Naturally, we are primarily interested in these terms (observe that the term formation rules are invariant with respect to the type context).
A \emph{program} is a term $p$, such that $\cdot\ ; \cdot \vdash p : A$ for some type $A$
and we simply write $p : A$ to indicate this.

\begin{exa}
We list some important programs. We define $* \equiv \lift\ \lambda x^0. x : I,$
which is the canonical value of unit type. The zero natural number is defined by the program $\texttt{zero} \equiv \fold_{\Nat}\ \lleft_{I, \Nat}\ * : \Nat$.
The successor function can be defined by the program $\texttt{succ} \equiv \lambda n^\Nat. \fold_{\Nat}\ \rright_{I,\Nat} n : \Nat{} \multimap \Nat{}.$
\end{exa}

Given terms $m$, $n$ and a variable $x$, we denote with $m[n/x]$ the term
obtained from $m$ by replacing all free occurrences of $x$ with $n$ (which is
defined in the standard way).

\begin{lem}[Substitution]\label{lem:syntax-sub}
  If $\Theta; \Phi, \Gamma, x:A \vdash m: B$ and $\Theta; \Phi, \Sigma \vdash n : A$ and $\Gamma \cap \Sigma = \varnothing,$ then ${\Theta; \Phi, \Gamma, \Sigma \vdash m[n/x] :B.}$
\end{lem}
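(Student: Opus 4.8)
The plan is to argue by induction on the derivation of the first hypothesis $\Theta; \Phi, \Gamma, x:A \vdash m : B$, pushing the second hypothesis in wherever the distinguished variable $x$ is actually consumed. Before starting I would establish, by a routine separate induction, the \emph{weakening} principle that if $\Theta; \Delta \vdash p : C$ and $\Theta \vdash \Phi'$ is a non-linear context disjoint from $\Delta$, then $\Theta; \Delta, \Phi' \vdash p : C$; this is the one structural fact the argument genuinely needs, and it is exactly what allows a non-linear variable to be discarded. I would also treat term contexts up to reordering, invoking the admissibility of exchange when reassembling judgements. The guiding intuition is that $\Phi$ is the part copied into every premise of a branching rule, whereas $\Gamma$ together with $x:A$ is the part that gets \emph{split}; since $x:A$ is a single variable, it descends into the premises along this split, and our $\Sigma$ must be routed along the same branch.

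For the base case $m$ is a variable and the judgement is an instance of the axiom. If $m = x$ then $B = A$, the residual context $\Phi,\Gamma$ is non-linear, $m[n/x] = n$, and weakening $\Theta; \Phi, \Sigma \vdash n : A$ by the non-linear $\Gamma$ gives the required $\Theta; \Phi, \Gamma, \Sigma \vdash n : A$. If $m$ is a different variable then $x$ lies in the discarded, hence non-linear, part of the axiom, $m[n/x] = m$, and the conclusion again follows by weakening the appropriate axiom by the residual non-linear context. The congruence rules for $\lleft$, $\rright$, $\lambda$-abstraction, $\force$, $\fold$, and $\unfold$ are immediate: the inductive hypothesis applies to the single premise and the substitution is pushed under the constructor, $\alpha$-renaming any bound variable so that it avoids both $x$ and the free variables of $n$. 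The only congruence with a side condition is $\lift$, whose premise (and therefore whole context) is already non-linear, so the non-linearity constraint survives the substitution.

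The real work — and the step I expect to be the main obstacle — is the family of \emph{context-splitting} rules: application, pairing, $\llet\ \langle x,y\rangle = m\ \text{in}\ n$, and $\ccase$. Here the non-$\Phi$ part of the conclusion is partitioned among the premises, so $x:A$ lands in one specific block of the split, say a premise whose split context is $\Gamma', x:A$ with $\Gamma = \Gamma', \Gamma''$; the other premises do not mention $x$, so substitution is the identity there, while the inductive hypothesis applied to the chosen premise (with $\Sigma$ appended) turns $\Gamma', x:A$ into $\Gamma', \Sigma$. Reassembling with the same rule yields a context whose non-$\Phi$ part is $\Gamma', \Sigma, \Gamma''$, i.e.\ $\Gamma, \Sigma$ up to exchange, and the disjointness side condition follows from $\Sigma \cap \Gamma = \varnothing$ together with $\Gamma'' \subseteq \Gamma$. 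The genuinely delicate point is $\ccase$: its two branches \emph{share} their split context, so when $x:A$ belongs to that shared part it occurs in both branches, and one must then apply the inductive hypothesis to each branch and supply $\Sigma$ to both — which is sound precisely because the mutually exclusive branches may draw on the same resources. Once this routing (which premise binds $x$, how $\Phi$ is duplicated, and which disjointness conditions must be rechecked) is handled uniformly, the desired judgement $\Theta; \Phi, \Gamma, \Sigma \vdash m[n/x] : B$ falls out in every case.
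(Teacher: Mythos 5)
The paper states this lemma without proof, so there is nothing to compare against line by line; your strategy --- induction on the derivation of $\Theta; \Phi, \Gamma, x{:}A \vdash m : B$, with an auxiliary weakening lemma for non-linear contexts and admissible exchange --- is the natural one and handles the purely linear routing correctly. The gap lies at exactly the points where the derivation does \emph{not} treat $x$ linearly, i.e.\ where $x$ is discarded or propagated into several premises (possible precisely when $A$ is non-linear). In the base case $m = y \neq x$ you finish ``by weakening the appropriate axiom by the residual non-linear context,'' but the residual context now contains $\Sigma$, and nothing in the hypotheses makes $\Sigma$ non-linear; your weakening lemma, which you correctly restricted to non-linear contexts, therefore does not apply. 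The same problem recurs whenever $x{:}A$ sits in a rule's fully shared non-linear block, so that it occurs in the context of \emph{every} premise --- including the premise of $\lift$, whose whole context must be non-linear: the inductive hypothesis then inserts $\Sigma$ into each premise, and reassembling the rule forces $\Sigma$ into a block that must be non-linear. Your $\ccase$ discussion only covers sharing between the two branches, which is indeed harmless because that shared block is arbitrary; it does not cover sharing between the scrutinee and the branches, nor $\lift$.

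This missing step cannot be repaired as the lemma is stated, because for arbitrary $n$ the statement is false. Take $m \equiv \lift\ \lambda w^{0}.w$ with $\cdot\,;\, x{:}I \vdash m : I$ (so $\Phi = \Gamma = \cdot$ and $A = B = I$), and $n \equiv g\,(\lift\ \lambda w^{0}.w)$ with $\cdot\,;\, g{:}I \multimap I \vdash n : I$, so that $\Sigma = g{:}I\multimap I$ is linear. Then $m[n/x] = m$, but $\cdot\,;\, g{:}I\multimap I \vdash \lift\ \lambda w^{0}.w : I$ is not derivable, since the $\lift$ rule demands a non-linear context and $g$ is never consumed. The statement, and your proof, are saved by either of the standard extra assumptions: that $n$ is a \emph{value} --- then Lemma~\ref{lem:values-syntax} forces $\Phi,\Sigma$ to be non-linear whenever $A$ is non-linear, while for linear $A$ no derivation ever places $x$ in a block that is required to be non-linear, so $\Sigma$ is only ever inserted where an arbitrary context is permitted --- or that $\Sigma$ is non-linear whenever $A$ is. Since the call-by-value operational semantics and the semantic substitution lemma only ever substitute values, the first option is the right one; you should make that restriction explicit and invoke Lemma~\ref{lem:values-syntax} at the two problematic spots.
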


We say a value $\Theta; \Gamma \vdash v : P$ is \emph{non-linear} whenever the type $P$ is non-linear.

\begin{lem}\label{lem:values-syntax}
  If $\Theta; \Gamma \vdash v : P$ is a non-linear value, then $\Gamma$ is also non-linear.
\end{lem}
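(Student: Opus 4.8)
The plan is to argue by induction on the structure of the value $v$, equivalently by rule induction on the derivation of $\Theta; \Gamma \vdash v : P$, using the fact that since $v$ is a value only the value-forming rules can have concluded this judgement, together with uniqueness of type assignment. The case $v = \lambda x^A . m$ is dispatched immediately and is the conceptual heart of the statement: such a value necessarily has type $A \multimap B$, which is not generated by the grammar of non-linear types, so by uniqueness of type assignment $P = A \multimap B$ cannot be non-linear and this case simply does not arise. This is precisely where the design decision to exclude $\multimap$ from the non-linear fragment does the work.

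Two further cases need no recursion, because the shape of the relevant typing rule already pins down the context. For $v = x$ the only applicable rule has conclusion $\Theta; \Phi, x : A \vdash x : A$ with $\Phi$ a non-linear context; since $P = A$ is non-linear by hypothesis, the whole context $\Phi, x : P$ is non-linear. For $v = \lift\ m$ the rule has conclusion $\Theta; \Phi \vdash \lift\ m : !A$, whose context is, by the form of the rule, a non-linear context $\Phi$; hence $\Gamma = \Phi$ is non-linear outright, with no appeal to $m$ at all.

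The remaining cases, namely $\lleft_{A,B} v'$, $\rright_{A,B} v'$, $\langle v', w \rangle$ and $\fold_{\mu X.A} v'$, are genuine inductive steps, and in each I would first invert the non-linear grammar to see that the immediate value subterm(s) again carry non-linear types: if $A + B$ (respectively $A \otimes B$) is non-linear then so are $A$ and $B$, since the only non-linear sum (respectively tensor) has the form $P + R$ (respectively $P \otimes R$). Applying the induction hypothesis to the subterm(s) then shows their contexts are non-linear, and for pairing one combines the two contexts $\Phi, \Gamma$ and $\Phi, \Sigma$ to conclude that $\Phi, \Gamma, \Sigma$ is non-linear. I expect the $\fold$ case to be the main obstacle: there $v'$ has type $A[\mu X. A / X]$, and before invoking the induction hypothesis I must establish that this unfolded type is itself non-linear. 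This is exactly where I would use the earlier substitution lemma for types: since $\mu X. A$ is non-linear, its body $A$ (in context $\Theta, X$) is non-linear, so substituting the non-linear type $\mu X. A$ for $X$ yields the non-linear type $A[\mu X. A / X]$, after which the induction hypothesis applies and closes the case.
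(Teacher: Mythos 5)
Your proof is correct and is precisely the routine induction on the typing derivation that the paper leaves implicit (the lemma is stated without proof): the variable and \texttt{lift} rules force a non-linear context directly, the $\multimap$ case is vacuous since $A \multimap B$ is never non-linear, and the $\fold$ case is discharged via the type substitution lemma showing $A[\mu X.A/X]$ is non-linear. No gaps.
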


\subsection{Term Recursion in LNL-FPC}\label{sub:term-recursion}
Recall that in FPC, general recursion on terms may be implemented using the
\fold{} and \unfold{} terms.
The same is also true for LNL-FPC. Moreover, the derived term for recursion has
exactly the same syntax as the one in~\cite{eclnl}, where the authors showed
how to extend a mixed linear/non-linear type system with recursion.
In the next section we show it has the same operational behaviour as well.
We also note that our recursion term is very similar to one of the recursive terms in \cite[Chapter 8]{eppendahl-thesis}.

\begin{thm}\label{thm:recursion operator}
The rule
\begin{bprooftree}
\AxiomC{$\Theta; \Phi, z:!A \vdash m: A$}
\UnaryInfC{$\Theta; \Phi \vdash \rec\ z^{!A}. m: A$}
\end{bprooftree}
is derivable in \calc{},
where $\Phi$ is non-linear,
\begin{align*}
\rec\ z^{!A}. m &\equiv (\unfold\ \force\ \alpha_m^z)\alpha_m^z \qquad \text{ and} \\
\alpha_m^z &\equiv \lift\ \fold\ \lambda x^{!\mu X. (!X \multimap A)}. (\lambda z^{!A}. m)(\lift\ (\unfold\ \force\ x) x),
\end{align*}
such that $X \not \in \Theta$ and $x \not \in \Phi.$
\end{thm}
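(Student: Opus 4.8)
The plan is to exhibit an explicit typing derivation for $\rec\ z^{!A}.m$, built up from its two defining subterms. Write $\tau \equiv \mu X.(!X \multimap A)$ for the recursive type occurring in $\alpha_m^z$. Before typing anything, I would record two consequences of the side conditions. First, since the hypothesis gives $\Theta \vdash A$ and $X \notin \Theta$, the variable $X$ does not occur free in $A$; hence $\tau$ is well-formed in context $\Theta$ (one checks $\Theta, X \vdash\ !X \multimap A$ by weakening $\Theta \vdash A$ along $X$), and moreover $A[\tau/X] = A$. Consequently the $\fold$ and $\unfold$ rules relate $\tau$ to its unfolding as $(!X \multimap A)[\tau/X] =\ !\tau \multimap A$. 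Second, $x \notin \Phi$ guarantees that $\Phi, x :\ !\tau$ is a well-formed and non-linear context, so the non-linear variable $x$ may be freely adjoined to $\Phi$ (non-linear weakening is admissible, since the variable axiom already carries an arbitrary non-linear context).

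Next I would establish $\Theta; \Phi \vdash \alpha_m^z :\ !\tau$, working from the inside out. The hypothesis $\Theta; \Phi, z:!A \vdash m : A$ yields $\Theta; \Phi \vdash \lambda z^{!A}.m :\ !A \multimap A$ by $\multimap$-introduction. Under the assumption $x :\ !\tau$, I type the argument $\lift\ (\unfold\ \force\ x)x$: from $x :\ !\tau$ one gets $\force\ x : \tau$, then $\unfold\ \force\ x :\ !\tau \multimap A$, and applying this to $x :\ !\tau$ gives $(\unfold\ \force\ x)x : A$ in the non-linear context $\Phi, x:!\tau$, whence $\lift$ yields type $!A$. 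Here $x$ is used twice, which is legal precisely because $!\tau$ is non-linear, so the two occurrences share the non-linear context $\Phi, x:!\tau$ in the application rule. Applying $\lambda z^{!A}.m$ (weakened to the context $\Phi, x:!\tau$, in which it does not mention $x$) to this argument produces a body of type $A$; abstracting $x$ gives $\Theta; \Phi \vdash \lambda x^{!\tau}.(\cdots) :\ !\tau \multimap A$, then $\fold_\tau$ gives type $\tau$ and $\lift$ gives $\Theta; \Phi \vdash \alpha_m^z :\ !\tau$.

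Finally I would assemble $\rec\ z^{!A}.m \equiv (\unfold\ \force\ \alpha_m^z)\alpha_m^z$. From $\Theta; \Phi \vdash \alpha_m^z :\ !\tau$ I obtain $\force\ \alpha_m^z : \tau$ and $\unfold\ \force\ \alpha_m^z :\ !\tau \multimap A$, and applying this to a second copy of $\alpha_m^z :\ !\tau$ produces type $A$, as required. The one point needing care is that $\alpha_m^z$ occurs twice: since all its free term variables lie in the non-linear context $\Phi$, the application rule applies with shared non-linear context $\Phi$ and empty linear parts, so no linear variable is duplicated and $\Theta; \Phi \vdash \rec\ z^{!A}.m : A$ follows.

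The routine-but-essential obligation throughout is bookkeeping of the substructural rules: every reuse of a non-linear variable relies on admissible weakening for non-linear contexts and on the application rule sharing its common non-linear context. I do not expect a deep difficulty here; rather, the thing to get exactly right is confirming $A[\tau/X] = A$, so that the unfolding of $\tau$ is precisely the function type $!\tau \multimap A$ needed for the two applications to compose — this is exactly where the side condition $X \notin \Theta$ is used.
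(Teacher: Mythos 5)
Your proposal is correct and follows essentially the same route as the paper's proof: first type the inner term $\lift\ (\unfold\ \force\ x)x$ in the non-linear context containing $x:\,!\tau$, then derive $\Theta;\Phi \vdash \alpha_m^z :\ !\tau$, then assemble the final application, using in each step that duplicated variables ($x$, and the free variables of $\alpha_m^z$) live in non-linear contexts. You also correctly isolate the role of the side condition $X \notin \Theta$ in guaranteeing $A[\tau/X]=A$, which the paper uses implicitly when writing the unfolded type as $(!\,\tau)\multimap A$.
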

\begin{proof}
Since $ \Theta; \Phi, z:\ !A \vdash m: A $, then $\Theta \vdash A$ (see~\secref{sub:lnl-terms}) and since $X \not \in \Theta$, then
\begin{equation}\label{eq:type-ok}
\begin{bprooftree}
\AxiomC{$\Theta, X \vdash\ !X \multimap A$}
\end{bprooftree}
\quad
\text{and}
\quad
\Theta \vdash x:\ !\mu X.(!X\multimap A)
\end{equation}

We will first show that:
\begin{equation}\label{eq:x-derivation}
\begin{bprooftree}
\AxiomC{$\Theta; x:\ !\mu X.(!X\multimap A)\vdash\lift\ (\unfold\ \force\ x)x:\ !A$}
\end{bprooftree}
\end{equation}

For brevity, we write $\REC \equiv \mu X.(!X\multimap A).$ Then, we have:
\\
  \begin{minipage}{\textwidth}
  \begin{prooftree}
        \AxiomC{}
        \RightLabel{\eqref{eq:type-ok}}
        \UnaryInfC{$\Theta \vdash x:\ !\REC$}
      \UnaryInfC{$\Theta; x:\ !\REC\vdash x:\ !\REC$}
      \UnaryInfC{$\Theta; x:\ !\REC\vdash\force\ x:\REC$}
      \UnaryInfC{$\Theta; x:\ !\REC\vdash\unfold\ \force\ x:\big(!\REC\big)\multimap A$}
        \AxiomC{}
        \RightLabel{\eqref{eq:type-ok}}
        \UnaryInfC{$\Theta \vdash x:\ !\REC$}
      \UnaryInfC{$\Theta; x:\ !\REC\vdash x:\ !\REC$}
      \BinaryInfC{$\Theta; x:\ !\REC\vdash(\unfold\ \force\ x)x:A$}
      \UnaryInfC{$\Theta; x:\ !\REC\vdash\lift\ (\unfold\ \force\ x)x:\ !A$}
  \end{prooftree}
  \end{minipage}

\mbox{}

Next, we show that:
\begin{equation}\label{eq:alpha-z-m}
\begin{bprooftree}
\AxiomC{$ \Theta; \Phi, z:\ !A \vdash m: A $}
\UnaryInfC{$ \Theta; \Phi\vdash \alpha_m^z:\ !\REC $}
\end{bprooftree}
\end{equation}

Indeed, we have:

\mbox{}

\scalebox{0.89}{
\begin{minipage}{\textwidth}
\begin{prooftree}
    \def\ScoreOverhang{0.5pt}
    \def\defaultHypSeparation{\hskip .1in}
\AxiomC{$\Theta; \Phi,z:\ !A\vdash m:A$}
\UnaryInfC{$\Theta; \Phi\vdash(\lambda z^{!A}.m):\ !A\multimap A$}
\AxiomC{}
\RightLabel{\eqref{eq:x-derivation}}
\UnaryInfC{$\Theta; x:\ !\REC\vdash\lift\ (\unfold\ \force\ x)x:\ !A$}
\BinaryInfC{$\Theta; \Phi,x:\ !\REC\vdash(\lambda z^{!A}.m)(\lift\ (\unfold\ \force\ x)x):A$}
\UnaryInfC{$\Theta; \Phi\vdash\lambda x^{!\REC}.(\lambda z^{!A}.m)(\lift\ (\unfold\ \force\ x)x):(!\REC) \multimap A$}
\AxiomC{}
\RightLabel{\eqref{eq:type-ok}}
\UnaryInfC{$\Theta , X \vdash\ !X \multimap A$}
\BinaryInfC{$\Theta; \Phi\vdash\fold\ \lambda x^{!\REC}.(\lambda z^{!A}.m)(\lift\ (\unfold\ \force\ x)x):\REC$}
\UnaryInfC{$\Theta; \Phi \vdash \lift\ \fold\ \lambda x^{!\REC}. (\lambda z^{!A}. m)(\lift\ (\unfold\ \force\ x) x):\ !\REC$}
\end{prooftree}
\end{minipage}
}

\mbox{}

Finally:
\begin{prooftree}
  \AxiomC{$ \Theta; \Phi, z:\ !A \vdash m: A $}
  \RightLabel{\eqref{eq:alpha-z-m}}
	\UnaryInfC{$\Theta; \Phi\vdash \alpha_m^z:\ ! \REC$}
	\UnaryInfC{$\Theta; \Phi\vdash\force\ \alpha_m^z:\REC$}
	\UnaryInfC{$\Theta; \Phi\vdash\unfold\ \force\ \alpha_m^z: (!\REC) \multimap A$}
  \AxiomC{$ \Theta; \Phi, z:\ !A \vdash m: A $}
  \RightLabel{\eqref{eq:alpha-z-m}}
	\UnaryInfC{$\Theta; \Phi\vdash \alpha_m^z:\ ! \REC$}
	\BinaryInfC{$\Theta; \Phi\vdash(\unfold\ \force\ \alpha_m^z)\alpha_m^z : A$}
\end{prooftree}	
and therefore:
\[
\begin{bprooftree}
  \AxiomC{$ \Theta; \Phi, z:\ !A \vdash m: A $}
	\UnaryInfC{$\Theta; \Phi \vdash \rec\ z^{!A}. m: A$}
\end{bprooftree}
\qedhere
\]
\end{proof}


We now consider an example that highlights some of the advantages of the
ambivalence of our typing system.

\begin{exa}
\label{ex:factorial}
The factorial function on natural numbers may be defined by:

\noindent
\texttt{rec fact.} $\lambda$ \texttt{n.}\\
\texttt{
  case unfold n of \\
    \qquad left u --> succ zero \\
    \qquad right n' --> mult(n, (force fact) n')
\texttt}

\noindent
where \texttt{mult} is the multiplication function (which also can be easily 
defined). Note that copying and discarding of the non-linear variables is
implicit. In more traditional linear typing systems, the variables for
which contraction and weakening (copying and discarding) are admissible are
those of type $!A$. But, notice that the variable \texttt{n} above is of type
\texttt{Nat}, which is \emph{not} of the form $!A$. We are able to implicitly copy
it above, because we have extended the non-linear types to include more
types than just those of the form $!A$. This means that \texttt{lift} and \texttt{force} are
mostly needed for promotion and dereliction of terms of function types (as
can be seen above).
\end{exa}

\section{Operational Semantics}\label{sec:operational}

The operational semantics of LNL-FPC is standard.  We use a big-step
call-by-value evaluation relation whose rules are shown in
Figure~\ref{fig:operational}.

\begin{figure}
\[
\begin{bprooftree}
  \def\ScoreOverhang{0.5pt}
\AxiomC{{\color{white} $\Downarrow$}}
\UnaryInfC{$x \Downarrow x$}
\end{bprooftree}
\quad
\begin{bprooftree}
  \def\ScoreOverhang{0.5pt}
\AxiomC{$m \Downarrow v$}
\UnaryInfC{$\lleft\ m \Downarrow \lleft\ v$}
\end{bprooftree}
\quad
\begin{bprooftree}
  \def\ScoreOverhang{0.5pt}
\AxiomC{$m \Downarrow v$}
\UnaryInfC{$\rright\ m \Downarrow \rright\ v$}
\end{bprooftree}
\]

\[
\begin{bprooftree}
  \def\ScoreOverhang{0.5pt}
\AxiomC{$m \Downarrow \lleft\ v$}
\AxiomC{$n[v/x] \Downarrow w$}
\BinaryInfC{$\ccase\ m\ \texttt{of}\ \{\lleft\ x \to n\ |\ \rright\ y \to p\} \Downarrow w$}
\end{bprooftree}
\quad
\begin{bprooftree}
  \def\ScoreOverhang{0.5pt}
\AxiomC{$m \Downarrow v$}
\AxiomC{$n \Downarrow w$}
\BinaryInfC{$\langle m, n \rangle \Downarrow \langle v, w \rangle$}
\end{bprooftree}
\]

\[
\begin{bprooftree}
  \def\ScoreOverhang{0.5pt}
\AxiomC{$m \Downarrow \rright\ v$}
\AxiomC{$p[v/y] \Downarrow w$}
\BinaryInfC{$\ccase\ m\ \texttt{of}\ \{\lleft\ x \to n\ |\ \rright\ y \to p\} \Downarrow w$}
\end{bprooftree}
\quad
\begin{bprooftree}
  \def\ScoreOverhang{0.5pt}
\AxiomC{$m \Downarrow \langle v, v' \rangle$}
\AxiomC{$n[v/x, v'/y] \Downarrow w$}
\BinaryInfC{$\llet\ \langle x, y \rangle = m\ \text{in}\ n \Downarrow w$}
\end{bprooftree}
\]

\[
\begin{bprooftree}
  \def\ScoreOverhang{0.5pt}
  \AxiomC{{\color{white} $\Downarrow$}}
\UnaryInfC{$\lambda x.m \Downarrow \lambda x.m$}
\end{bprooftree}
\quad
\begin{bprooftree}
	\def\ScoreOverhang{0.5pt}
	\AxiomC{$m \Downarrow \lambda x.m'$}
	\AxiomC{$n \Downarrow v$}
	\AxiomC{$m'[v/x]\Downarrow w$}
	\TrinaryInfC{$mn \Downarrow w$}
\end{bprooftree}
\]

\[
\begin{bprooftree}
  \def\ScoreOverhang{0.5pt}
\AxiomC{{\color{white} $\Downarrow$}}
\UnaryInfC{$\lift\ m \Downarrow \lift\ m$}
\end{bprooftree}
\quad
\begin{bprooftree}
	\def\ScoreOverhang{0.5pt}
	\AxiomC{$m \Downarrow \lift\ m'$}
	\AxiomC{$m' \Downarrow v$}
	\BinaryInfC{$\force\ m \Downarrow v$}
\end{bprooftree}
\quad
\begin{bprooftree}
  \def\ScoreOverhang{0.5pt}
\AxiomC{$m \Downarrow v$}
\UnaryInfC{$\fold{}\ m \Downarrow \fold{}\ v$}
\end{bprooftree}
\quad
\begin{bprooftree}
  \def\ScoreOverhang{0.5pt}
\AxiomC{$m \Downarrow \fold{}\ v$}
\UnaryInfC{$\unfold{}\ m \Downarrow v$}
\end{bprooftree}
\]
\caption{Operational semantics of the \calc{} calculus.}\label{fig:operational}
\end{figure}

As usual, the \emph{values} are terms $v$ such that $v \Downarrow v$ (see Figure~\ref{fig:syntax}).
The evaluation relation $(- \Downarrow -)$ is, in fact, a \emph{partial} function from terms to values.
Of course, it is not a total function, because the language supports a general recursion operator, as we show next.

\begin{thm}\label{thm:operational-recursion}
The evaluation rule 
    \begin{bprooftree}
    \AxiomC{$m[\lift\ \rec\ z^{!A}. m\ /\ z] \Downarrow v$}
    \UnaryInfC{$\rec\ z^{!A}. m \Downarrow v$}
    \end{bprooftree}
is derivable within LNL-FPC, where $\rec\ z^{!A}. m$ is defined as in Theorem~\ref{thm:recursion operator}.
\end{thm}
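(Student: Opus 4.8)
The plan is to construct the big-step derivation of $\rec\ z^{!A}.m \Downarrow v$ directly from the hypothesis $m[\lift\ \rec\ z^{!A}.m / z] \Downarrow v$ by tracing through the evaluation rules of Figure~\ref{fig:operational}, exploiting the fact that they define a deterministic (partial) function on terms so that every intermediate step is forced by the syntax. Writing $\alpha \equiv \alpha_m^z$ and $\REC \equiv \mu X.(!X \multimap A)$ as in Theorem~\ref{thm:recursion operator}, recall that $\rec\ z^{!A}.m = (\unfold\ \force\ \alpha)\alpha$ and that $\alpha$ has the form $\lift\ N$, where $N = \fold\ \lambda x^{!\REC}.(\lambda z^{!A}.m)(\lift\ (\unfold\ \force\ x)x)$. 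Since $\alpha$ is a value, its evaluation at each stage can simply be read off.

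First I would evaluate the function part $\unfold\ \force\ \alpha$. The value $\alpha = \lift\ N$ evaluates to itself, so by the $\force$ rule $\force\ \alpha$ reduces to whatever $N$ evaluates to; but $N = \fold\ (\ldots)$ has a $\lambda$-abstraction as its immediate subterm, which is already a value, so $N \Downarrow N$ and hence $\force\ \alpha \Downarrow N$. A use of the $\unfold$ rule then strips the outer $\fold$, giving $\unfold\ \force\ \alpha \Downarrow \lambda x^{!\REC}.(\lambda z^{!A}.m)(\lift\ (\unfold\ \force\ x)x)$. The argument $\alpha$ is itself a value and evaluates to $\alpha$.

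The crucial step is the application rule. Substituting the argument $\alpha$ for $x$ in the body of the abstraction yields $(\lambda z^{!A}.m)(\lift\ (\unfold\ \force\ \alpha)\alpha)$, using that $x$ does not occur free in $m$ (since $x \notin \Phi$ and $z \neq x$). Here one sees the knot-tying: the subterm $(\unfold\ \force\ \alpha)\alpha$ is by definition exactly $\rec\ z^{!A}.m$, so after substitution the body becomes $(\lambda z^{!A}.m)(\lift\ \rec\ z^{!A}.m)$. Both $\lambda z^{!A}.m$ and $\lift\ \rec\ z^{!A}.m$ are values, so a final application of the application rule reduces this to the evaluation of $m[\lift\ \rec\ z^{!A}.m / z]$, which by hypothesis yields $v$. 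Chaining these sub-derivations produces a valid big-step derivation of $\rec\ z^{!A}.m \Downarrow v$, establishing that the displayed rule is admissible.

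I expect the only delicate point to be bookkeeping rather than a genuine obstacle: one must check that the substitution producing $\lift\ \rec\ z^{!A}.m$ is correct relative to the variable conditions $z \neq x$ and the freshness of $x$ and $X$ inherited from Theorem~\ref{thm:recursion operator}, so that no capture occurs and the recursive term is faithfully reconstructed. Since the evaluation relation is a partial function, each step above is in fact forced, so the admissible rule is an equivalence: $\rec\ z^{!A}.m \Downarrow v$ holds if and only if $m[\lift\ \rec\ z^{!A}.m / z] \Downarrow v$.
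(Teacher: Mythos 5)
Your proposal is correct and follows essentially the same route as the paper's proof: evaluate $\unfold\ \force\ \alpha_m^z$ down to the $\lambda x$-abstraction, observe that substituting $\alpha_m^z$ for $x$ in its body reconstructs $(\lambda z^{!A}.m)(\lift\ \rec\ z^{!A}.m)$, and chain the application rules with the hypothesis. The paper merely packages the abstraction body as an auxiliary term $m'$ for readability; the derivation is the same.
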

\begin{proof}

First, we introduce the term
\[m'\equiv(\lambda z^{!A}.m)(\lift\ (\unfold\ \force\ x)x),\]
so that 
\[\alpha_m^z=\lift\ \fold\ \lambda x^{!\mu X. (!X \multimap A)}.m',\]
and
\[  m'[\alpha_m^z\ /\ x]=(\lambda z^{!A}.m)(\lift\ (\unfold\ \force\ \alpha_m^z)\alpha_m^z)=(\lambda z^{!A}.m)(\lift\ \rec\ z^{!A}.m).\]
Then
\[
\begin{bprooftree}
\AxiomC{}
\UnaryInfC{$\lambda z^{!A}.m\Downarrow \lambda z^{!A}.m$}
\AxiomC{}
\UnaryInfC{$\lift\ \rec\ z^{!A}.m\Downarrow \lift\ \rec\ z^{!A}.m$}
\AxiomC{$m[\lift\ \rec\ z^{!A}. m\ /\ z] \Downarrow v$}
\TrinaryInfC{$(\lambda z^{!A}.m)(\lift\ \rec\ z^{!A}.m)\Downarrow v$}
\end{bprooftree}
\]
hence
\begin{equation}\label{eq:oper-whatever}
\begin{bprooftree}
	\AxiomC{$m[\lift\ \rec\ z^{!A}. m\ /\ z] \Downarrow v$}
	\UnaryInfC{$m'[\alpha_m^z\ /\ x] \Downarrow v$}
\end{bprooftree}
\end{equation}
Moreover,
\[
\begin{bprooftree}
\AxiomC{}
\UnaryInfC{$\lift\ \fold\ \lambda x.m'\Downarrow \lift\ \fold\ \lambda x.m'$}
\AxiomC{}
\UnaryInfC{$\lambda x.m'\Downarrow \lambda x.m'$}	
\UnaryInfC{$\fold\ \lambda x.m'\Downarrow\fold\ \lambda x.m' $}
\BinaryInfC{$\force\ \lift\ \fold{}\ \lambda x.m'\Downarrow \fold\ \lambda x.m'$}
\UnaryInfC{$\unfold\ \force\ \lift\ \fold\ \lambda x.m' \Downarrow \lambda x. m'$}
\end{bprooftree}
\]
hence we obtain
\begin{equation}\label{eq:oper-again}
\begin{bprooftree}
\AxiomC{}
\UnaryInfC{$\unfold\ \force\ \alpha_m^z\Downarrow\lambda x.m'$}
\end{bprooftree}
\end{equation}
Then
\[
\begin{bprooftree}
  \AxiomC{}
  \RightLabel{\eqref{eq:oper-again}}
	\UnaryInfC{$\unfold\ \force\ \alpha_m^z\Downarrow\lambda x.m'$}
  \AxiomC{}
	\UnaryInfC{$\alpha_m^z\Downarrow\alpha_m^z$}
  \RightLabel{\eqref{eq:oper-whatever}}
		\AxiomC{$m[\lift\ \rec\ z^{!A}. m\ /\ z] \Downarrow v$}
	\UnaryInfC{$m'[\alpha_m^z\ /\ x] \Downarrow v$}
	\TrinaryInfC{ $(\unfold\ \force\ \alpha_m^z)\alpha_m^z\Downarrow v$,}
	\end{bprooftree}
  \]
and therefore:
\[
   \begin{bprooftree}
	\AxiomC{$m[\lift\ \rec\ z^{!A}. m\ /\ z] \Downarrow v$}
	\UnaryInfC{$\rec\ z^{!A}. m \Downarrow v$}
\end{bprooftree}
\qedhere
\]
\end{proof}

This theorem shows that our \emph{derived} general recursion operator is exactly the same as the one in~\cite{eclnl},
where it was added as an axiom.

\begin{nota}
A term $m$ is said to \emph{terminate}, denoted by $m \Downarrow$, if there exists
a value $v$, such that $m \Downarrow v.$ 
\end{nota}
The simplest
non-terminating program of type $A$ is given by $\rec\ z^{!A}.\ \force\ z : A$.

\begin{exa}
\label{ex:streams}
The constant stream of zero natural numbers can be defined by
\[ \texttt{const}_0 \equiv \rec\ s^{!(\Stream\ \Nat)} \fold_{\Stream\ \Nat}\ \langle \texttt{zero}, s \rangle : \Stream\ \Nat. \]
\end{exa}

\begin{rem}
Streams of type $P$ should not be defined as $\mu X. P \otimes X$, because there are no \emph{closed values} of this type. This
is a consequence of Theorem \ref{thm:adequacy} and the fact that $\lrb{\mu X. P \otimes X} = 0$ (see \secref{sec:semantics}).
\end{rem}


\begin{thm}[Subject reduction]\label{thm:subject-reduction}
If $\Theta; \Gamma \vdash m :A$ and $m \Downarrow v,$ then
$\Theta; \Gamma \vdash v :A$.
\end{thm}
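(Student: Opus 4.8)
The plan is to prove the statement by induction on the derivation of the evaluation judgement $m \Downarrow v$, using rule inversion on the typing derivation at each step. The structure is standard: for each evaluation rule in Figure~\ref{fig:operational}, I assume the result holds for the premises (the inductive hypothesis) and establish it for the conclusion. Since type assignment to terms is unique (as noted after Figure~\ref{fig:typing-term}), I may freely invert the typing judgement $\Theta;\Gamma \vdash m : A$ to determine the shape of $m$ and recover the typing of its subterms, though I must be mindful that \emph{derivations} are not unique due to the ambivalent treatment of non-linear variables.

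First I would dispatch the value cases (variables, $\lambda$-abstractions, $\lift$), where $m \Downarrow m$ and there is nothing to prove. The congruence cases ($\lleft$, $\rright$, $\fold$, and pairing) are routine: inverting the typing rule gives a typing for the immediate subterm, the inductive hypothesis yields a typing for its value, and reapplying the same formation rule reconstructs the typing of the resulting value. The more delicate cases are those where a substitution occurs, namely $\ccase$, $\llet$, application, $\force$, and $\unfold$. For these I rely crucially on the Substitution Lemma (Lemma~\ref{lem:syntax-sub}): for instance, in the application case $mn \Downarrow w$, inverting gives $\Theta;\Phi,\Gamma \vdash m : A \multimap B$ and $\Theta;\Phi,\Sigma \vdash n : A$; the inductive hypotheses give $\Theta;\Phi,\Gamma \vdash \lambda x.m' : A \multimap B$ and $\Theta;\Phi,\Sigma \vdash v : A$; inverting the abstraction yields $\Theta;\Phi,\Gamma,x:A \vdash m' : B$, and then Lemma~\ref{lem:syntax-sub} gives $\Theta;\Phi,\Gamma,\Sigma \vdash m'[v/x] : B$, whence the inductive hypothesis on $m'[v/x] \Downarrow w$ closes the case. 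The $\ccase$ and $\llet$ cases proceed analogously, splitting the context appropriately.

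The main obstacle I anticipate lies in the bookkeeping of the linear/non-linear context splits when applying rule inversion together with the Substitution Lemma. Because non-linear variables may appear inside the arbitrary contexts $\Gamma$ and $\Sigma$ (and need not be isolated into $\Phi$), inverting a typing judgement does not yield a canonical partition of the context, so I must argue that \emph{some} valid splitting exists matching the shared part $\Phi$ and the disjoint parts $\Gamma, \Sigma$ with $\Gamma \cap \Sigma = \varnothing$, as required by Lemma~\ref{lem:syntax-sub}. For the $\force$ case, I would invert $\Theta;\Gamma \vdash \force\ m : A$ to $\Theta;\Gamma \vdash m : \mathord{!}A$, apply the inductive hypothesis to $m \Downarrow \lift\ m'$ obtaining $\Theta;\Gamma \vdash \lift\ m' : \mathord{!}A$, invert the $\lift$ rule (which forces $\Gamma$ to be non-linear) to get $\Theta;\Gamma \vdash m' : A$, and finish via the inductive hypothesis on $m' \Downarrow v$. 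The $\unfold$ case uses inversion of $\fold$ together with the typing annotation $\mu X.A$ to recover the type $A[\mu X.A/X]$ of the unfolded value. None of these steps is deep, but the absence of a separate linear/non-linear context discipline means the context-splitting obligations must be checked carefully rather than read off mechanically.
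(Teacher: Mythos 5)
The paper states Theorem~\ref{thm:subject-reduction} without proof, so there is nothing to compare against directly; your plan — induction on the evaluation derivation, with rule inversion on the typing judgement and Lemma~\ref{lem:syntax-sub} for the $\beta$-like cases — is the standard argument the authors evidently intend, and it goes through. Your attention to the context-splitting obligations (matching the shared non-linear part $\Phi$ and the disjoint $\Gamma,\Sigma$ when invoking the substitution lemma, and the fact that inverting $\lift$ forces the context to be non-linear in the $\force$ case) is exactly the right place to be careful, and I see no gap.
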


\begin{asm}
Throughout the remainder of the paper, we implicitly assume that all types, contexts and terms are well-formed.
\end{asm}

\section{$\omega$-categories and (Parameterised) Initial Algebras}\label{sec:w-categories}

In this section we recall the theory of $\omega$-categories introduced
in~\cite{lehman-smyth} and develop new results of our own.
In~\secref{sub:notation}, we introduce some notation for operations on
natural transformations that we use throughout the paper.
In~\secref{sub:initial}, we recall how initial algebras are constructed in $\omega$-categories.
In~\secref{sub:par-initial}, we show how to construct parameterised initial algebras in $\omega$-categories that we use to
model recursive types that may potentially be defined by nested recursion.
In~\secref{sub:relating}, we present new results that show how
(parameterised) initial algebras of functors acting on different categories may
be related to one another, provided there exist suitable mediating functors
between the two categories. The type-level semantics makes heavy use of this
relationship in order to present \emph{coherent} non-linear
type interpretations that are strongly related to the standard type interpretations.

\subsection{Operations on natural transformations}\label{sub:notation}

Given natural transformations $\tau$ and $\sigma$ and a functor $F$, we denote:
vertical composition by $\sigma \circ \tau$; horizontal composition by $\sigma
* \tau$; whiskering by $F \tau$ and $\tau F$, whenever these
operations are admissible.

If ${\tau:T \naturalto H : \CC \to \CC}$ is a natural transformation between two endofunctors, then we define $\tau^{*n} \coloneqq
\tau * \cdots *\tau : T^n \naturalto H^n : \CC \to \CC$ to be the $n$-fold horizontal composition of $\tau$ with itself (if $n=0$, then $\tau^{*0}$ is the identity  $\id : \Id \naturalto \Id : \CC \to \CC$).

If ${\tau: T \naturalto T': \AAA \to \BB}$ and ${\sigma: H \naturalto H': \CC \to \DD}$ are natural transformations, we define
a natural transformation ${\tau \times \sigma : T \times H \naturalto T' \times H' : \AAA \times \CC \to \BB \times \DD}$
via the assignment ${(\tau \times \sigma)_{(A,C)} \coloneqq (\tau_A, \sigma_C).}$

If $\tau: T \naturalto T' : \AAA \to \BB$ and $\sigma: H \naturalto H': \AAA \to \CC$ are natural transformations, we define a natural transformation
${\langle \tau, \sigma \rangle : \langle T, H \rangle \naturalto \langle T', H' \rangle : \AAA \to \BB \times \CC}$ by
${\langle \tau, \sigma \rangle_A \coloneqq (\tau_A, \sigma_A).}$

We denote with $\omega$ the poset of natural numbers when viewed as a category. A functor $D : \omega \to \CC$ is then an $\omega$-diagram.
We let $[\AAA, \BB]$ denote the functor category from $\AAA$ to $\BB$.
Given a functor $M: \AAA \to \BB$, we define a functor ${M \triangleright - : [\omega, \AAA] \to [\omega, \BB]}$ by:
\[
M \tr D = M \circ D\qquad\text{and}\qquad
M \tr \tau = M \tau.
\]
So, the functor $M \tr -$ is just whiskering with $M$. 

\subsection{Initial algebras in $\omega$-categories}\label{sub:initial}

We now recall some definitions and facts about $\omega$-categories and $\omega$-functors that are stated in~\cite{lehman-smyth}.
A functor $F: \AAA \to \CC$ is an \emph{$\omega$-functor} if $F$ preserves all existing colimits of $\omega$-diagrams.
If, in addition, $\AAA$ has an initial object and $F$ preserves it, then we say that $F$ is a \emph{strict} $\omega$-functor.
Of course, $\omega$-functors are closed under composition and pairing, that is, if $F$ and $G$ are $\omega$-functors, then
so are $F \circ G$ and $\langle F, G \rangle$ whenever composition and pairing are admissible.

A category $\CC$ is an \emph{$\omega$-category} if it has an initial object and all $\omega$-colimits.
We denote with $[\AAA \to_\omega \CC]$ the full subcategory of $[\AAA,\CC]$ consisting of $\omega$-functors.
If $\CC$ is an $\omega$-category, then so are $[\AAA, \CC]$ and $[\AAA \to_\omega \CC]$.

\begin{defi}[\cite{lehman-smyth}]
Given an $\omega$-category $\CC$ with an initial object $\varnothing$, we define a functor $S: [\CC \to_\omega \CC] \to [\omega, \CC]$ in the following way:
\begin{itemize}
\item Given an $\omega$-functor $T: \CC \to \CC,$ then $S(T)$ is the $\omega$-diagram $\varnothing \xrightarrow{\iota} T\varnothing \xrightarrow{T\iota} T^2\varnothing \xrightarrow{T^2\iota} \cdots$.
More specifically, $S(T)(n) \coloneqq T^{n}\varnothing$ and $S(T)(n \leq n+1) \coloneqq T^n\iota$, where $\iota : \varnothing \to T\varnothing$ is the initial map.
\item Given $\tau: T \naturalto H:\CC\to \CC$, then $S(\tau) : S(T) \naturalto S(H) : \omega \to \CC$ is given by
\[ S(\tau)_n \coloneqq (\tau^{*n})_\varnothing: T^n\varnothing \to H^n\varnothing. \]
\end{itemize}
We also define a functor ${Y \coloneqq \colim \circ S : [\CC \to_\omega \CC] \to \CC},$ where
$\colim: [\omega, \CC] \to \CC$ is the colimiting
functor, which is the left-adjoint of the $\omega$-ary diagonal functor $\Delta: \CC \to [\omega, \CC]$.
\end{defi}

\begin{thm}\label{thm:Y is omega functor}\cite{lehman-smyth}
Let $\CC$ be an arbitrary $\omega$-category. Then both ${S: [\CC \to_\omega \CC] \to [\omega, \CC]}$ and ${Y: [\CC \to_\omega \CC] \to \CC}$ are $\omega$-functors.
\end{thm}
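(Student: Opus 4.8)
The plan is to establish the statement for $S$ and then read off the statement for $Y$ as a corollary. For $Y$, recall that $\colim : [\omega,\CC] \to \CC$ is a left adjoint (to $\Delta$) and hence preserves all colimits, in particular $\omega$-colimits, so $\colim$ is itself an $\omega$-functor. Since $\omega$-functors are closed under composition, $Y = \colim \circ S$ will be an $\omega$-functor as soon as $S$ is shown to be one. Thus the entire content of the theorem is the preservation of $\omega$-colimits by $S$.

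I would next reduce the claim about $S$ to a pointwise statement in $\CC$. Colimits in $[\omega,\CC]$ are computed pointwise, and the $\omega$-colimits of $[\CC \to_\omega \CC]$ are also computed pointwise in $\CC$ (the pointwise colimit of $\omega$-functors is again an $\omega$-functor, because $\omega$-colimits commute with $\omega$-colimits). So, given an $\omega$-diagram $i \mapsto T_i$ in $[\CC \to_\omega \CC]$ with colimiting cocone $\mu_i : T_i \naturalto T_\infty$, where $T_\infty A = \colim_i T_i A$ for every object $A$, proving that $S(T_\infty)$ with cocone $S(\mu_i)$ is $\colim_i S(T_i)$ amounts to showing, for each fixed $n \in \omega$, that $T_\infty^n\varnothing$ together with the maps $(\mu_i^{*n})_\varnothing : T_i^n\varnothing \to T_\infty^n\varnothing$ is the colimit of the $\omega$-diagram $i \mapsto T_i^n\varnothing$.

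I would prove this by induction on $n$. The base case $n = 0$ is the constant diagram at $\varnothing$ with identity transitions, whose colimit over the connected category $\omega$ is $\varnothing$, matching $T_\infty^0\varnothing = \varnothing$ with cocone $\id_\varnothing$. For the inductive step, assume $T_\infty^n\varnothing = \colim_i T_i^n\varnothing$. Each $T_j$ is an $\omega$-functor, so it preserves this colimit, giving $T_j(T_\infty^n\varnothing) = \colim_i T_j(T_i^n\varnothing)$; combining with the pointwise formula $T_\infty B = \colim_j T_j B$ at $B = T_\infty^n\varnothing$ yields
\[
T_\infty^{n+1}\varnothing \;=\; \colim_j \colim_i T_j(T_i^n\varnothing).
\]
The crux is to collapse this double $\omega$-colimit onto its diagonal: since the diagonal $k \mapsto (k,k)$ is cofinal in $\omega \times \omega$ and the transition maps are monotone in each variable, $\colim_j\colim_i T_j(T_i^n\varnothing) = \colim_k T_k(T_k^n\varnothing) = \colim_k T_k^{n+1}\varnothing$.

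I expect the main obstacle to be not the interchange itself (which is the general fact that colimits commute with colimits) but verifying that the colimiting cocone produced by this argument is literally the family $(\mu_k^{*(n+1)})_\varnothing$ predicted by the definition of $S(\mu_k)$. Unwinding $\mu_k^{*(n+1)} = \mu_k * \mu_k^{*n}$ through the definition of horizontal composition gives $(\mu_k^{*(n+1)})_\varnothing = T_\infty\big((\mu_k^{*n})_\varnothing\big) \circ (\mu_k)_{T_k^n\varnothing}$, which is exactly the composite realizing the diagonal cocone at index $k$ (first the pointwise component $(\mu_k)_{T_k^n\varnothing}$ into $T_\infty(T_k^n\varnothing)$, then the image under $T_\infty$ of the inductive cocone map). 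Checking that this coincidence holds coherently in $k$, by naturality of $\mu_k$ and the inductive description of the cocone at level $n$, is the one piece of bookkeeping that must be carried out honestly; once it is, the induction closes and $S$, and therefore $Y$, is an $\omega$-functor.
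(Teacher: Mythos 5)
The paper gives no proof of this theorem --- it simply cites Lehmann--Smyth --- and your argument is correct and is essentially the standard proof from that reference: reduce $Y$ to $\colim \circ S$ via the left-adjointness of $\colim$, compute all colimits pointwise, and close the induction at level $n+1$ by collapsing the double $\omega$-colimit onto its cofinal diagonal. Your identification of the resulting colimiting cocone with $(\mu_k^{*(n+1)})_\varnothing = T_\infty\big((\mu_k^{*n})_\varnothing\big) \circ (\mu_k)_{T_k^n\varnothing}$ via naturality of $\mu_k$ is exactly the bookkeeping needed, so the proof is complete.
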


Therefore, for an $\omega$-functor $T: \CC \to \CC$ on an $\omega$-category
$\CC$, its initial sequence is given by $S(T)$ and the \emph{carrier}
of its initial algebra is given by $Y(T)$, thanks to a famous result
in~\cite{adamek-original}.
To describe its initial algebra structure we need an additional definition.


\begin{thm}[\cite{fixed-points-of-functors}] \label{thm:initial-algebra-simple}
Let $T: \CC \to \CC$ be an $\omega$-endofunctor on an $\omega$-category $\CC$.
We define a natural transformation
$s^T : S(T) \naturalto T \circ S(T) : \omega \to \CC$ by $(s^T)_n \coloneqq T^n(\iota_{T\varnothing})$,
where
${\iota_{T\varnothing}:\varnothing \to T\varnothing}$
is the initial morphism.
Then one can define an isomorphism:
\begin{align*}
y^T &: T(Y(T)) \to Y(T) \\
y^T &\coloneqq \left( TY(T) = T \colim(S(T)) = \colim(TS(T)) \xrightarrow{(\colim(s^T))^{-1}} \colim(S(T)) = Y(T) \right)
\end{align*}
and the pair $(Y(T), y^T)$ forms the initial $T$-algebra.
\end{thm}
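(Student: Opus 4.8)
The plan is to work throughout via the universal property of $Y(T) = \colim(S(T))$, first converting the abstract definition of $y^T$ into a single concrete equation on the colimiting cocone and then running the classical initial-chain argument of Adámek. Write $\mu_n : T^n\varnothing \to Y(T)$ for the colimiting cocone of $S(T)$, so that $\mu_{n+1}\circ T^n\iota = \mu_n$, where $\iota = \iota_{T\varnothing} : \varnothing \to T\varnothing$ is the initial morphism and $(s^T)_m = T^m\iota$. Because $T$ is an $\omega$-functor it preserves the colimit $Y(T)$, so $(T\mu_n : T^{n+1}\varnothing \to TY(T))_n$ is a colimiting cocone for the shifted diagram $T\circ S(T)$; this is exactly the identification $TY(T) = \colim(T\circ S(T))$ in the statement. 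Unwinding $\colim$ as a functor, the morphism $\colim(s^T)$ is the unique map with $\colim(s^T)\circ\mu_m = T\mu_m\circ T^m\iota$; specialising to $m = n+1$ and using functoriality of $T$ together with the cocone identity gives $\colim(s^T)\circ\mu_{n+1} = T(\mu_{n+1}\circ T^n\iota) = T\mu_n$. Hence, once $\colim(s^T)$ is invertible, its inverse $y^T$ is determined by the single relation
\[ y^T \circ T\mu_n = \mu_{n+1} \qquad (n \geq 0), \]
which is all the rest of the argument uses.

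To see that $y^T$ is a well-defined isomorphism, I would exhibit $TY(T)$ and $Y(T)$ as two colimits of the \emph{same} diagram $T\circ S(T)$. One colimiting cocone is $(T\mu_n)_n$, since $T$ preserves the colimit. The other comes from the observation that $T\circ S(T) = S(T)\circ\sigma$, where $\sigma : \omega\to\omega$ is the successor functor $n\mapsto n+1$; as $\sigma$ is cofinal, the shifted cocone $(\mu_{n+1})_n$ exhibits $Y(T)$ as a colimit of $T\circ S(T)$ as well. The canonical comparison isomorphism between these two colimits is precisely the map determined by $y^T\circ T\mu_n = \mu_{n+1}$, and a routine check on cocone components (including the base index $m = 0$) identifies it as the two-sided inverse of $\colim(s^T)$, matching the definition in the statement. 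Alternatively, this step may be postponed entirely: once initiality is proved, Lambek's lemma forces $y^T$ to be an isomorphism.

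For initiality, let $(A, a : TA \to A)$ be any $T$-algebra. I would build a cocone $(f_n : T^n\varnothing \to A)_n$ on $S(T)$ by taking $f_0$ to be the unique morphism out of the initial object and setting $f_{n+1} = a\circ Tf_n$; the cocone identity $f_{n+1}\circ T^n\iota = f_n$ follows by induction, the base case from initiality and the step from functoriality of $T$. The universal property of $Y(T)$ then yields a unique $\bar f : Y(T)\to A$ with $\bar f\circ\mu_n = f_n$. That $\bar f$ is an algebra homomorphism, $\bar f\circ y^T = a\circ T\bar f$, is checked by precomposing with the colimiting cocone $(T\mu_n)_n$ of $TY(T)$: the left-hand side gives $\bar f\circ\mu_{n+1} = f_{n+1} = a\circ Tf_n$ by the defining relation for $y^T$, while the right-hand side gives $a\circ T(\bar f\circ\mu_n) = a\circ Tf_n$, so the two agree on every component. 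For uniqueness, any homomorphism $g$ satisfies $g\circ\mu_n = f_n$ by induction, the step computing $g\circ\mu_{n+1} = g\circ y^T\circ T\mu_n = a\circ T(g\circ\mu_n)$; hence $g = \bar f$ by the uniqueness clause of the universal property.

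The one genuinely delicate point is the convergence of the initial chain, i.e. that $\colim(s^T)$ is an isomorphism; the cofinality argument above handles it cleanly, and it is the only place where more than bookkeeping is required. Once the relation $y^T\circ T\mu_n = \mu_{n+1}$ is in hand, the homomorphism and uniqueness arguments are mechanical applications of the universal property of $Y(T)$, with preservation of the colimit by $T$ being exactly what lets us verify morphisms out of $TY(T)$ component-by-component.
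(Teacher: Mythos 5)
Your proof is correct: the paper states this result as a citation to Ad\'amek's classical theorem and gives no proof of its own, and your argument is precisely the standard one from that source --- identify $TY(T)$ and $Y(T)$ as colimits of the shifted chain via preservation and cofinality to invert $\colim(s^T)$, then build the mediating algebra morphism from the cocone $f_{n+1}=a\circ Tf_n$ and check the homomorphism and uniqueness conditions against the jointly epic cocone $(T\mu_n)_n$. The one delicate point, the invertibility of $\colim(s^T)$ including the base component $m=0$, is handled correctly.
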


\subsection{Parameterised initial algebras}\label{sub:par-initial}

Initial algebras can be used to model recursive types where the type recursion
is done over a single type variable. In order to model recursive types defined
by \emph{nested recursion}, one has to allow recursive types to
depend on several type variables. The interpretation of these more general recursive types requires
a more general notion, namely \emph{parameterised initial algebras}, which we introduce next.

\begin{defi}[cf. {\cite[\S 6.1]{fiore-thesis}}]
\label{def:param-initial}
  Given categories $\AAA$ and $\BB$ and a functor $T : \AAA \times \BB \to \BB,$ a \emph{parameterised initial algebra}
  for $T$ is a pair $(T^\dagger, \phi^T),$ such that:
  \begin{itemize}
    \item $T^\dagger : \AAA \to \BB$ is a functor;
    \item $\phi^T : T \circ \langle \Id, T^\dagger \rangle \naturalto T^\dagger : \AAA \to \BB$ is a natural isomorphism;
    \item For every $A \in \Ob(\AAA)$, the pair $(T^\dagger A, \phi^T_A)$ is an initial $T(A, -)$-algebra.
  \end{itemize}
    
\end{defi}

\begin{rem}
Notice that by trivialising the category $\AAA$, we recover the usual notion of initial algebra.
Because of this, parameterised initial algebras are a more general notion.
\end{rem}

\begin{rem}
The naturality of $\phi$ \eqref{eq:par-naturality} in fact determines the action of $T^\dagger$ on morphisms. Indeed,
for every $f : A_1 \to A_2$ in $\AAA$, $T^\dagger f$ is the unique $T(A_1,-)$-algebra morphism making \eqref{eq:par-initial} commute. To see this, notice that \eqref{eq:par-naturality} and \eqref{eq:par-initial} are equivalent diagrams.
\begin{equation}
\label{eq:par-naturality}
\stikz{parameterised-initial-naturality.tikz}
\end{equation}
\begin{equation}
\label{eq:par-initial}
\stikz{parameterised-initial-algebra.tikz}
\end{equation}
\end{rem}


Next, we aim to show that the class of $\omega$-functors on an $\omega$-category is closed under formation of parameterised initial algebras
(Theorem~\ref{thm:dagger is omega-continuous}). In order to do so, we have to establish a few additional lemmas.

\begin{lem}\label{lem:s-naturality}
Let $\BB$ be an $\omega$-category and let $T: \AAA \times \BB \to \BB$ be an $\omega$-functor. The mapping
\[  S(T(A, -)) \xRightarrow{s^{T(A,-)}} T(A, -) \circ  S(T(A,-)) : \omega \to \BB \]
is natural in $A.$ More specifically, the following diagram:
\cstikz{natural-s.tikz}
commutes for any $f: A \to B$ in $\AAA$.
\end{lem}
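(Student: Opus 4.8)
The plan is to show that the displayed square commutes componentwise over $\omega$ and, at each stage $n$, to recognise it as a naturality square of the natural transformation $S(T(f,-))$. The starting observation is a structural identity: for any $\omega$-endofunctor $G$ on $\BB$ one has $G \circ S(G) = S(G)\circ \mathrm{shift}$, where $\mathrm{shift}:\omega\to\omega$ sends $n$ to $n+1$. Indeed both diagrams send $n$ to $G^{n+1}\varnothing$ and the arrow $(n\le n+1)$ to $G^{n+1}\iota$. Under this identification the formula $(s^{G})_n = G^n(\iota_{G\varnothing})$ from Theorem~\ref{thm:initial-algebra-simple} is exactly the transition morphism $S(G)(n \le n+1) = G^n\iota$ of the diagram $S(G)$, since $\iota_{G\varnothing}$ and $\iota$ are both the unique initial morphism $\varnothing \to G\varnothing$. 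Thus $s^{G}$ is precisely the natural transformation assembled from the transition maps of $S(G)$. Because $T$ is an $\omega$-functor, each partial application $T(A,-)$ preserves $\omega$-colimits (the colimit of a constant $\omega$-diagram being its value, as $\omega$ is connected), so $T(A,-)$ is an $\omega$-endofunctor on $\BB$ and everything above applies with $G = T(A,-)$.

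Next I would pin down the two vertical maps of the square. Write $\phi := T(f,-): T(A,-)\naturalto T(B,-)$. The left-hand map is $S(\phi)$, with $(S(\phi))_n = (\phi^{*n})_\varnothing$ by the definition of $S$ on morphisms. The right-hand map is the functorial image of $f$ under $A \mapsto T(A,-)\circ S(T(A,-))$, namely the horizontal composite $\phi * S(\phi)$. Using the component formula for horizontal composition together with $(S(\phi))_n = (\phi^{*n})_\varnothing$, its $n$-component computes to
\[
(\phi * S(\phi))_n = T(B,-)\big((\phi^{*n})_\varnothing\big)\circ \phi_{T(A,-)^n\varnothing} = (\phi^{*(n+1)})_\varnothing = (S(\phi))_{n+1},
\]
where the middle equality is the definition $\phi^{*(n+1)} = \phi * \phi^{*n}$ evaluated at $\varnothing$. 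Hence the right-hand vertical map is nothing but the one-step shift of $S(\phi)$.

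With these identifications, the square at stage $n$ reads
\[
(S(\phi))_{n+1}\circ S(T(A,-))(n\le n+1) = S(T(B,-))(n\le n+1)\circ (S(\phi))_n,
\]
which is exactly the naturality square of $S(\phi): S(T(A,-))\naturalto S(T(B,-))$ at the morphism $(n\le n+1)$ of $\omega$. Since $S$ is a functor, $S(\phi)$ is a morphism in $[\omega,\BB]$, i.e.\ a genuine natural transformation, so this commutes for every $n$; therefore the original square commutes in $[\omega,\BB]$.

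I expect the only real work to be the bookkeeping in the second paragraph: recognising that the right-hand vertical map $\phi * S(\phi)$ is the one-step shift of the left-hand map $S(\phi)$. Once that is in place, the statement collapses to the naturality of $S(\phi)$, which is immediate. As a sanity check, the base case $n=0$ reduces to $\phi_\varnothing \circ \iota = \iota$ between morphisms $\varnothing \to T(B,-)\varnothing$, which holds by initiality of $\varnothing$ and is subsumed by the general naturality argument.
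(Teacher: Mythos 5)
Your proof is correct, but it takes a genuinely different route from the paper's. The paper argues by direct computation: it unfolds $(T_f * S(T_f))_n$ and $(s^{T_A})_n$ into explicit nested expressions $T(f,T(f,\cdots))$ and $T(A,T(A,\cdots))$, pushes the composite inward using bifunctoriality of $T$, and applies initiality at the innermost level to rewrite $T(f,\varnothing)\circ\iota_{T_A\varnothing}$ as $\iota_{T_B\varnothing}$, reassembling the result as $(s^{T_B}\circ S(T_f))_n$. You instead make two structural identifications --- $(s^{G})_n = S(G)(n\le n+1)$, so that $s^G$ is exactly the family of transition maps of the initial sequence viewed as a map $S(G)\naturalto S(G)\circ\mathrm{shift}$, and $(T_f * S(T_f))_n = (S(T_f))_{n+1}$, so that the right vertical map is the one-step shift of the left one --- after which the square is literally the naturality square of $S(T_f): S(T_A)\naturalto S(T_B)$ at the generating morphisms of $\omega$. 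This is cleaner and explains \emph{why} the lemma holds, and your component computations (including the interchange step $(\phi*S(\phi))_n=(\phi^{*(n+1)})_\varnothing$) check out. The one thing to flag is that your argument defers all of the real content to the assertion that $S(T_f)$ is a natural transformation, i.e.\ that $S$ is well-defined as a functor into $[\omega,\BB]$; the paper takes this from the cited definition of $S$ without proof, and its appendix computation is, modulo your identifications, precisely a proof of that naturality (by induction over $\omega$, with initiality supplying the base case, as in your own sanity check). So your appeal is legitimate given how the paper sets things up, but in a fully self-contained treatment the two arguments would have to do the same work.
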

\begin{proof}
In Appendix~\ref{proof:s-naturality}.
\end{proof}

\begin{nota}
\label{not:w-diagrams}
Given an $\omega$-diagram $D:\omega\to\AAA$,
we denote objects $D(n)$ by $D_n$, the colimit of $D$ (if it exists) by $D_\omega$ and its colimiting cocone morphisms by $d_{n}:D_n\to D_\omega$. 
\end{nota}

We proceed with two simple lemmas that show that the $\colim: [\omega, \CC] \to \CC$ functor is quite well-behaved on $\omega$-categories and $\omega$-functors, as one would expect.

\begin{lem}\label{lem:whiskering colimit}
Let $T:\AAA\to\BB$ be an $\omega$-functor between $\omega$-categories $\AAA$ and $\BB$. Assume further $D,D':\omega\to\AAA$ are $\omega$-diagrams and $\tau:D\naturalto D'$ a natural transformation. Then the following diagram commutes: 
\cstikz{colimitofwhisker.tikz}
\end{lem}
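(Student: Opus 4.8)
The diagram expresses the single equation $T(\colim\tau)=\colim(T\tr\tau)$, read through the canonical identifications $\colim(T\tr D)=T D_\omega$ and $\colim(T\tr D')=T D'_\omega$ that constitute its vertical edges; these identifications are exactly the witnesses of the fact that the $\omega$-functor $T$ preserves the colimits of the $\omega$-diagrams $D$ and $D'$. The plan is to prove this equation by a single appeal to the universal property of colimits, so the first task is to characterise both mediating morphisms in the diagram purely in those terms.

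First I would recall that $\colim\tau:D_\omega\to D'_\omega$ is by definition the unique morphism through which the cocone $(d'_n\circ\tau_n:D_n\to D'_\omega)_{n\in\omega}$ factors over the colimiting cocone $(d_n)_n$; that is, it is determined by the equations $\colim(\tau)\circ d_n=d'_n\circ\tau_n$ for every $n$. Applying $T$ and using functoriality yields
\[ T(\colim\tau)\circ T d_n = T d'_n\circ T(\tau_n)\qquad(n\in\omega). \]
Next I would use that $T$ is an $\omega$-functor: since it preserves the colimit of $D$, the cocone $(T D_\omega,(T d_n)_n)$ is colimiting for $T\tr D$, and likewise $(T D'_\omega,(T d'_n)_n)$ is colimiting for $T\tr D'$. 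Hence $\colim(T\tr\tau)$ is the unique morphism $T D_\omega\to T D'_\omega$ satisfying $\colim(T\tr\tau)\circ T d_n = T d'_n\circ (T\tr\tau)_n$. By the whiskering convention of \secref{sub:notation} we have $(T\tr\tau)_n=T(\tau_n)$, so this is precisely the system of equations already satisfied by $T(\colim\tau)$; the uniqueness clause of the universal property of the colimit $T D_\omega$ of $T\tr D$ then forces $T(\colim\tau)=\colim(T\tr\tau)$.

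This is essentially a one-line universal-property argument, so there is no genuine obstacle, only bookkeeping. The only care required is to keep the canonical identifications straight: I must make sure the colimiting cocone used to compute $\colim(T\tr\tau)$ is the cocone $(T d_n)_n$ obtained by applying $T$ to $(d_n)_n$ — which is exactly what preservation of colimits by $T$ provides — and that the whiskering identity $(T\tr\tau)_n=T(\tau_n)$ is read off correctly from the conventions fixed earlier. With those matched, the two mediating morphisms satisfy the same defining equations and the diagram commutes.
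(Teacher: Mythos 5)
Your argument is correct and is essentially identical to the paper's own proof: both characterise $\colim\tau$ by the equations $\colim(\tau)\circ d_n=d'_n\circ\tau_n$, apply $T$, use that $T$ preserves the colimits of $D$ and $D'$ so that $(Td_n)_n$ and $(Td'_n)_n$ are colimiting, and conclude $T\colim(\tau)=\colim(T\tau)$ by the uniqueness clause of the universal property. No gaps.
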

\begin{proof}
In Appendix~\ref{proof:whiskering colimit}.
\end{proof}

\begin{lem}\label{lem:whiskering colimit2}
Let $T, H:\AAA\to\BB$ be $\omega$-functors between $\omega$-categories $\AAA$ and $\BB$, and let $\tau:T\naturalto H$ be a natural transformation. Given an $\omega$-diagram $D:\omega\to\AAA$, the following diagram commutes: 
	\cstikz{colimitofwhisker2.tikz}
\end{lem}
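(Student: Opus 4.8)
The plan is to unwind both paths around the square using the universal property of the colimit and then reduce the whole statement to a single naturality square of $\tau$. The starting point is that $T$ and $H$, being $\omega$-functors, preserve the colimit of $D$. Concretely, writing $D_\omega$ for the colimit of $D$ with colimiting cocone $d_n : D_n \to D_\omega$ (Notation~\ref{not:w-diagrams}), the cocone $(Td_n)_n$ is colimiting for $T \tr D = T \circ D$ and $(Hd_n)_n$ is colimiting for $H \tr D = H \circ D$. These are precisely the (comparison) isomorphisms appearing as the vertical edges of the diagram, and they identify $\colim(T \tr D)$ with $TD_\omega$ and $\colim(H \tr D)$ with $HD_\omega$.

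With these identifications in place, I would next characterise the horizontal map. The whiskered transformation $\tau D : T \circ D \naturalto H \circ D$ has components $(\tau D)_n = \tau_{D_n}$, and since $\colim$ is the left adjoint of $\Delta$, the morphism $\colim(\tau D) : TD_\omega \to HD_\omega$ is by definition the unique mediating map satisfying $\colim(\tau D) \circ Td_n = Hd_n \circ \tau_{D_n}$ for every $n \in \omega$. Thus it suffices to verify that the component $\tau_{D_\omega}$ satisfies this same family of equations: uniqueness of the mediating morphism out of the colimiting cocone $(Td_n)_n$ then forces $\colim(\tau D) = \tau_{D_\omega}$, which is exactly the commutativity asserted by the diagram.

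The remaining identities are nothing more than the naturality of $\tau$. Instantiating the naturality square of $\tau : T \naturalto H$ at the morphism $d_n : D_n \to D_\omega$ of $\AAA$ gives $\tau_{D_\omega} \circ Td_n = Hd_n \circ \tau_{D_n}$ for each $n$, as required. I do not expect a genuine obstacle here; the proof is a clean application of naturality together with preservation of $\omega$-colimits, mirroring the companion statement Lemma~\ref{lem:whiskering colimit}. The only points demanding care are bookkeeping ones, namely tracking which colimiting cocone is used on each side and confirming that the comparison isomorphisms furnished by the hypothesis that $T$ and $H$ are $\omega$-functors are indeed the intended vertical edges of the diagram, both of which are settled directly by the definition of $\colim$ and Theorem~\ref{thm:Y is omega functor}.
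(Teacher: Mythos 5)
Your proposal is correct and follows essentially the same route as the paper's own proof: identify $(Td_n)_n$ and $(Hd_n)_n$ as the colimiting cocones via preservation of $\omega$-colimits, characterise $\colim(\tau D)$ as the unique mediating morphism satisfying $\colim(\tau D)\circ Td_n = Hd_n\circ\tau_{D_n}$, and observe that the naturality square of $\tau$ at $d_n$ shows $\tau_{D_\omega}$ satisfies the same equations. No gaps.
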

\begin{proof}
In Appendix~\ref{proof:whiskering colimit2}.
\end{proof}	

We now recall an important lemma that shows that $\omega$-functors on $\omega$-categories are closed under currying.

\begin{lem}[\cite{lehman-smyth}]
Given an $\omega$-category $\CC$ and an $\omega$-functor $T: \AAA \times \BB \to \CC$,
we define a functor ${\lambda B. T(-,B) : \AAA \to [\BB \to_\omega \CC]}$ by:
\begin{align*}
\lambda B.T(A, B)  &\coloneqq T(A,-) : \BB \to \CC,                   &&\text{ for } A \in \mathrm{Ob}(\AAA);\\
\lambda B.T(f, B)  &\coloneqq T(f, -): T(A_1,-) \naturalto T(A_2,-) : \BB \to \CC,  &&\text{ for } f: A_1 \to A_2.
\end{align*}
Then ${\lambda B. T(-,B) : \AAA \to [\BB \to_\omega \CC]}$ is an $\omega$-functor.
\end{lem}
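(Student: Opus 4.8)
The plan is to establish two things: first, that the assignment $\lambda B. T(-,B)$ is a well-defined functor landing in the subcategory $[\BB \to_\omega \CC]$, and second, that it preserves all existing $\omega$-colimits. The entire argument rests on a single observation about how $\omega$-functors interact with constant diagrams, together with the fact (recalled earlier in the excerpt) that $[\BB, \CC]$ and $[\BB \to_\omega \CC]$ are $\omega$-categories whose $\omega$-colimits are computed pointwise.

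For well-definedness, I would first fix $A \in \Ob(\AAA)$ and show that $T(A,-) : \BB \to \CC$ is an $\omega$-functor. Given an $\omega$-diagram $D : \omega \to \BB$ with colimit $D_\omega$, I note that the constant diagram $\Delta A : \omega \to \AAA$ has colimit $A$ (since $\omega$ is directed, hence connected), so $\langle \Delta A, D \rangle : \omega \to \AAA \times \BB$ has colimit $(A, D_\omega)$ computed componentwise in the product. Because $T$ is an $\omega$-functor, applying $T$ preserves this colimit; and since $T \circ \langle \Delta A, D \rangle = T(A,-) \circ D$, this says exactly that $T(A,-)$ sends $D_\omega$ to $\colim(T(A,-) \circ D)$ with the expected cocone. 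Hence $T(A,-)$ is an $\omega$-functor. Functoriality of $\lambda B. T(-,B)$ on morphisms — that $T(f,-)$ is a natural transformation $T(A_1,-) \naturalto T(A_2,-)$, and that identities and composites are respected — is then a routine consequence of the bifunctoriality of $T$.

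For the main claim, I would use that evaluation commutes with pointwise colimits. Let $E : \omega \to \AAA$ be an $\omega$-diagram admitting a colimit $E_\omega$ with cocone $e_n : E_n \to E_\omega$; applying the functor $\lambda B. T(-,B)$ produces the cocone $T(e_n, -) : T(E_n, -) \naturalto T(E_\omega, -)$ in $[\BB \to_\omega \CC]$. Since $\omega$-colimits in $[\BB \to_\omega \CC]$ are computed pointwise (equivalently, a cocone there is colimiting iff each evaluation $\ev_B$ of it is colimiting in $\CC$), it suffices to fix $B \in \BB$ and check that $T(e_n, B) : T(E_n, B) \to T(E_\omega, B)$ is a colimiting cocone in $\CC$. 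But this cocone is precisely the image under $T(-,B) : \AAA \to \CC$ of the colimiting cocone $(e_n)$, and $T(-,B)$ is an $\omega$-functor by the symmetric version of the constant-diagram argument above (pairing $E$ with $\Delta B$). Therefore $T(-,B)$ preserves the colimit, and the result follows.

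The only points requiring care — the ``hard part'' such as it is — are bookkeeping rather than conceptual: I must ensure that the pointwise colimit $T(E_\omega, -)$ genuinely lies in the subcategory $[\BB \to_\omega \CC]$, so that it serves as the colimit there and not merely in $[\BB, \CC]$, but this is guaranteed by the first part (it is an $\omega$-functor); and I must confirm that the cocone produced by applying the functor agrees componentwise with the pointwise cocones, which is immediate from the definitions. I expect no genuine obstacle: the proof is a clean reduction, via the pointwise computation of functor-category colimits, to the preservation properties of the two partial functors $T(A,-)$ and $T(-,B)$, each of which is obtained from $T$ being an $\omega$-functor by the constant-diagram trick.
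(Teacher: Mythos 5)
The paper states this lemma without proof, citing Lehmann--Smyth, so there is no in-paper argument to compare against; your proof is correct and is the standard one for this result. Both key steps check out: the constant-diagram trick (using that $\omega$ is connected, so $\Delta A$ has colimit $A$, and that colimits in $\AAA\times\BB$ are computed componentwise) correctly shows each partial functor $T(A,-)$ and $T(-,B)$ preserves existing $\omega$-colimits, and the reduction of colimit-preservation of $\lambda B.T(-,B)$ to the pointwise computation of $\omega$-colimits in $[\BB\to_\omega\CC]$ — together with your observation that the candidate colimit $T(E_\omega,-)$ already lies in that full subcategory — is exactly what is needed.
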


The main result from this subsection follows.

\begin{thm}\label{thm:dagger is omega-continuous}
Let $\BB$ be an $\omega$-category and let $T: \AAA \times \BB \to \BB$ be an $\omega$-functor. Then:
\begin{enumerate}
\item The functor $ T^{\dagger} \coloneqq Y \circ \lambda B.T(-,B) : \AAA \to \BB $ is an $\omega$-functor;
\item There exists a natural isomorphism $\phi^T :  T \circ \langle \Id, T^\dagger \rangle \naturalto T^\dagger : \AAA \to \BBB$ given by:
\[ \phi^T_A \coloneqq \left( T(A, T^\dagger A) =  T(A, Y(T(A,-))) \xrightarrow{y^{T(A,-)}} Y(T(A,-)) = T^\dagger A \right); \]
\item The pair $(T^\dagger, \phi^T)$ is a parameterised initial algebra of $T$.
\end{enumerate}
\end{thm}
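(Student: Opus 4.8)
The plan is to prove the three claims essentially in the order they are stated, since later parts depend on earlier ones and the bulk of the conceptual work has already been packaged into the preceding lemmas. The key observation is that $T^\dagger$ is defined as a composite of $\omega$-functors, so claim (1) should follow immediately from closure properties.

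First I would establish claim (1). By the currying lemma, $\lambda B.T(-,B) : \AAA \to [\BB \to_\omega \CC]$ is an $\omega$-functor (here $\CC = \BB$), and by Theorem~\ref{thm:Y is omega functor}, $Y : [\BB \to_\omega \BB] \to \BB$ is an $\omega$-functor. Since $\omega$-functors are closed under composition, the composite $T^\dagger = Y \circ \lambda B.T(-,B)$ is an $\omega$-functor, which is exactly claim (1).

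Next I would address claim (2), the existence and naturality of $\phi^T$. For each fixed $A \in \Ob(\AAA)$, the functor $T(A,-) : \BB \to \BB$ is an $\omega$-endofunctor on the $\omega$-category $\BB$, so Theorem~\ref{thm:initial-algebra-simple} supplies the isomorphism $y^{T(A,-)} : T(A, Y(T(A,-))) \to Y(T(A,-))$, and unwinding definitions gives $T^\dagger A = Y(T(A,-))$, so each component $\phi^T_A \coloneqq y^{T(A,-)}$ is an isomorphism with the correct type. The real work is verifying naturality of $\phi^T$ in $A$, i.e. that for every $f : A \to B$ the appropriate square commutes. The strategy is to reduce this to the naturality results already proved: $y^{T(A,-)}$ is built from $\colim$ applied to $s^{T(A,-)}$, and Lemma~\ref{lem:s-naturality} states precisely that $s^{T(A,-)}$ is natural in $A$. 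Combining this with the good behaviour of $\colim$ recorded in Lemmas~\ref{lem:whiskering colimit} and~\ref{lem:whiskering colimit2} (which describe how $\colim$ interacts with whiskering and with natural transformations of $\omega$-diagrams), one can chase the definition of $\phi^T_A$ through these compatibility diagrams to conclude naturality. I expect this naturality square to be the main obstacle: it requires carefully assembling the action of $T^\dagger$ on the morphism $f$ (which is itself $Y$ applied to $\lambda B.T(f,B)$) with the colimit description of $y$, and checking that the diagrams from the three lemmas paste together correctly.

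Finally, claim (3) should be almost immediate once (1) and (2) are in hand. By Definition~\ref{def:param-initial}, a parameterised initial algebra is a pair $(T^\dagger, \phi^T)$ with $T^\dagger$ a functor, $\phi^T$ a natural isomorphism of the stated type, and $(T^\dagger A, \phi^T_A)$ an initial $T(A,-)$-algebra for each $A$. The first condition follows from (1), the second from (2), and the third holds because Theorem~\ref{thm:initial-algebra-simple} guarantees that $(Y(T(A,-)), y^{T(A,-)}) = (T^\dagger A, \phi^T_A)$ is exactly the initial $T(A,-)$-algebra. Thus all three conditions of the definition are met and $(T^\dagger, \phi^T)$ is a parameterised initial algebra for $T$.
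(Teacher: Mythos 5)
Your proposal is correct and follows essentially the same route as the paper: part (1) by closure of $\omega$-functors under composition, part (2) by taking each component to be $y^{T(A,-)}$ and reducing naturality to Lemma~\ref{lem:s-naturality} together with Lemmas~\ref{lem:whiskering colimit} and~\ref{lem:whiskering colimit2}, and part (3) directly from Theorem~\ref{thm:initial-algebra-simple}. The only difference is that you sketch the naturality diagram chase rather than carrying it out, but you correctly identify all the ingredients the paper's own pasting argument uses.
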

\begin{proof}
\hfill%

(1) Because $T^\dagger$ is the composition of two $\omega$-functors.

(2) Every component of $\phi^T$ is an isomorphism by Theorem~\ref{thm:initial-algebra-simple}.
It remains to show naturality. So let $f:A\to B$ be a morphism in $\AAA$. For brevity we write $T_A, T_B, T_f$ for $T(A,-), T(B, -), T(f,-),$ respectively.
 By definition
{%
\small{%
\[ (\phi^T_A)^{-1} = \left( T^\dagger A = \colim(S(T_A)) \xrightarrow{\colim (s^{T(A, -)})} \colim (T_AS(T_A)) = T_A(\colim (S(T_A))) = T(A, T^\dagger A) \right) \] 
}%
}%
and naturality of $\phi^T$ comes down to the commutativity of the outer square of the diagram:
\[ \stikz[0.95]{alpha-dagger-natural2.tikz} \]
This is indeed the case, since squares (1) and (4) commute by definition of $T^\dagger$, and diagram (2) commutes by the functoriality of $\colim$ and Lemma \ref{lem:s-naturality}.
By definition of horizontal composition, 
we have 
$ T_f*S(T_f)=T_fS(T_B)\circ T_A S(T_f). $
Since $\colim$ is a functor, it follows from the latter equality that the commutativity of square (3) comes down to the commutativity of the outer square of the following diagram:
\cstikz{colimitofwhisker3.tikz}
Here the upper square commutes by Lemma \ref{lem:whiskering colimit} and the lower one by Lemma \ref{lem:whiskering colimit2}.

(3) For every $A \in \Ob(\AAA)$, we have $(T^\dagger A, \phi^T_A) = (Y(T_A), y^{T_A})$, which is the initial $T(A,-)$-algebra by Theorem~\ref{thm:initial-algebra-simple}.
\end{proof}

\begin{rem}
In the situation of the above theorem, we sometimes simply write $\phi$ instead of $\phi^T$ whenever $T$ is clear from the context.
Moreover, in the special case when $\AAA=\BB^n$, we see that both $T: \BB^{n+1} \to \BB$ and $T^\dagger : \BB^n \to \BB$ are $\omega$-functors.
In fact, in our semantic treatment, the interpretation of a type
$\Theta \vdash A$ is given by an $\omega$-functor that is of the form $H: \BB^{|\Theta|} \to \BB$.
\end{rem}

We conclude the subsection by showing an important proposition for proving the type substitution lemma.

\begin{prop}\label{prop:dagger-equal}
Let $\CC$ be an $\omega$-category, $T: \BB \times \CC \to \CC$ an $\omega$-functor and $H: \AAA \to \BB$ an $\omega$-functor.
Then $(T \circ (H \times \Id) )^\dagger = T^\dagger \circ H : \AAA \to \CC.$
\end{prop}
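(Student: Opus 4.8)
The plan is to reduce the claimed equality of functors to the explicit formula $(-)^\dagger = Y \circ \lambda C.(\cdot)(-,C)$ furnished by Theorem~\ref{thm:dagger is omega-continuous}, so that the proposition becomes a strict (on-the-nose) identity rather than merely an isomorphism. First I would record that $G \coloneqq T \circ (H \times \Id) : \AAA \times \CC \to \CC$ is itself an $\omega$-functor: writing $H \times \Id = \langle H \circ \pi_1, \pi_2 \rangle$, the projections $\pi_1,\pi_2$ are $\omega$-functors (colimits in a product category are computed componentwise), and since $\omega$-functors are closed under composition and pairing, both $H \times \Id$ and hence $G$ preserve $\omega$-colimits. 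Thus $G^\dagger = Y \circ \lambda C. G(-,C) : \AAA \to \CC$ is well-defined by Theorem~\ref{thm:dagger is omega-continuous}, and likewise $T^\dagger = Y \circ \lambda C. T(-,C) : \BB \to \CC$.

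The key step is to establish the strict equality of functors
\[ \lambda C. G(-,C) = \big(\lambda C. T(-,C)\big) \circ H : \AAA \to [\CC \to_\omega \CC], \]
which I would verify by checking objects and morphisms. On an object $A \in \Ob(\AAA)$ both sides return $T(HA,-) : \CC \to \CC$, since $G(A,C) = T\big((H \times \Id)(A,C)\big) = T(HA,C)$. On a morphism $f : A_1 \to A_2$ both sides return the natural transformation $T(Hf,-) : T(HA_1,-) \naturalto T(HA_2,-)$, because the component of $\lambda C. G(f,C) = G(f,-)$ at $C$ is $G(f,\id_C) = T(Hf,\id_C)$, which is exactly the component at $C$ of $T(Hf,-) = \big(\lambda C. T(-,C)\big)(Hf)$. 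Hence the two functors agree on objects and on morphisms.

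I would then conclude by whiskering with $Y$ and using associativity of functor composition:
\[ G^\dagger = Y \circ \lambda C. G(-,C) = Y \circ \big(\lambda C. T(-,C)\big) \circ H = \big(Y \circ \lambda C. T(-,C)\big) \circ H = T^\dagger \circ H. \]

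I do not expect a genuine obstacle: the entire content lies in unwinding the currying operation $\lambda C.(\cdot)(-,C)$ and observing that $H \times \Id$ interacts with it so as to produce precisely a precomposition by $H$. The one point requiring care is that we want an \emph{honest equality} of functors rather than a coherent isomorphism of parameterised initial algebras; this is exactly what the canonical formula for $(-)^\dagger$ through $Y$ buys us, so no appeal to uniqueness of initial algebras up to isomorphism is needed.
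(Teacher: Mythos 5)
Your proof is correct and follows essentially the same route as the paper: both unwind the canonical formula $(-)^\dagger = Y \circ \lambda B.\,T(-,B)$ and verify the strict equality on objects and morphisms. Your version merely makes explicit the intermediate identity $\lambda C.\,G(-,C) = (\lambda C.\,T(-,C)) \circ H$ and the preliminary check that $T \circ (H \times \Id)$ is an $\omega$-functor, which the paper leaves implicit.
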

\begin{proof}
For any object $A \in \Ob(\AAA)$ and
for any morphism $f \in \Mor(\AAA)$, we have:
\begin{align*}
( T \circ (H \times \Id))^\dagger(A) &= Y(T(HA, -)) \\
&= (Y \circ \lambda B. T(-, B) )(HA) \\
&= T^\dagger(HA) \\
&= (T^\dagger \circ H)(A)
\end{align*}
\begin{align*}
( T \circ (H \times \Id))^\dagger(f) &= Y(T(Hf, -)) \\
&= (Y \circ \lambda B. T(-, B) )(Hf) \\
&= T^\dagger(Hf) \\
&= (T^\dagger \circ H)(f). \qedhere
\end{align*}
\end{proof}

\subsection{Coherence properties of (parameterised) initial algebras}\label{sub:relating}

We model (arbitrary) recursive types within a linear category that is
algebraically compact in a strong sense and that allows us to model recursive
types involving the $\multimap$ connective. The non-linear recursive types
form a subset of our types for which we have to, in addition, provide
categorical structure that allows for them to be copied, discarded and promoted
(the substructural rules of intuitionistic linear logic).  We do this by
providing a non-linear interpretation of these types within a
cartesian category. The two interpretations live in different categories,
but they are strongly related to each other via suitable mediating functors and
a natural isomorphism. In order to show this, we first explain how
parameterised initial algebras can be related to each other in such a strong
sense. The current subsection is devoted to this.

\begin{nota}
For a functor $F: \AAA \times \BB \to \BB$, we define a functor
\[ F^* \coloneqq S \circ \lambda B. F(-,B) : \AAA \to [\omega, \BB] , \]
so that $F^\dagger = \colim \circ F^*.$
\end{nota}

\begin{asm}\label{ass:parameterised-square}
Throughout the remainder of the section, we assume we are given the following data.
Let $\AAA$ and $\CC$ be categories and let $\BB$ and
$\DD$ be $\omega$-categories with initial objects $\varnothing$ and $0$, respectively.
Let $\alpha: T \circ (N \times M) \naturalto M \circ H$ be a natural isomorphism, as in: 
\cstikz{bekic-sequence.tikz}
where $H$ and $T$ are $\omega$-functors and where $M$ is a strict $\omega$-functor with
$z: 0 \to M \varnothing$ the required (unique) isomorphism. The functor $N$ need not be an $\omega$-functor.
\end{asm}

\begin{lem}\label{lem:alpha-star}
The assignment
{
\small{
\begin{align*}
 \alpha^* &: T^* \circ N \naturalto (M  \tr -) \circ H^* : \AAA \to [\omega, \DD] \quad \text{given by} \\
 (\alpha^*_A)_0     &\coloneqq \left( 0 \xrightarrow z M \varnothing \right) \\
 (\alpha^*_A)_{n+1} &\coloneqq \left( T(NA, -)^{n+1} 0 \xrightarrow{T(NA, (\alpha_A^*)_n)} T(NA, MH(A, -)^n \varnothing) \xrightarrow{\alpha_{A, H(A,-)^n \varnothing}} MH(A,-)^{n+1} \varnothing \right) .
\end{align*}
}
}
defines a natural isomorphism.
\end{lem}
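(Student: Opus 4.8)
The plan is to verify three properties, each by induction on the diagram index $n$: that for every $A \in \AAA$ the family $\alpha^*_A$ is a morphism in $[\omega, \DD]$ (a natural transformation of $\omega$-diagrams), that $\alpha^*$ is itself natural in $A$, and that every component $(\alpha^*_A)_n$ is an isomorphism. Throughout I abbreviate $G_A \coloneqq T(NA, -)$ and $K_A \coloneqq H(A, -)$, so that $(T^* \circ N)(A)_n = G_A^n 0$ and $((M \tr -) \circ H^*)(A)_n = M(K_A^n \varnothing)$, and I write $\beta_n \coloneqq (\alpha^*_A)_n \colon G_A^n 0 \to M(K_A^n\varnothing)$; the defining recursion then reads $\beta_0 = z$ and $\beta_{n+1} = \alpha_{A, K_A^n\varnothing} \circ G_A(\beta_n)$.

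I would first establish that each $\alpha^*_A$ is an isomorphism, the easiest part. Since isomorphisms in $[\omega, \DD]$ are exactly the pointwise isomorphisms, it suffices to show each $\beta_n$ is invertible. The base case $\beta_0 = z$ holds by assumption; in the step, $\beta_{n+1}$ is the composite of $G_A(\beta_n)$ --- invertible because $\beta_n$ is and functors preserve isomorphisms --- with the component $\alpha_{A, K_A^n\varnothing}$ of the natural \emph{iso}morphism $\alpha$, hence it is invertible.

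Next I would check that $\alpha^*_A$ is a morphism of $\omega$-diagrams, i.e.\ that $\beta_n$ and $\beta_{n+1}$ are intertwined by the connecting maps $G_A^n(\iota)$ and $M(K_A^n(\jmath))$, where $\iota \colon 0 \to G_A 0$ and $\jmath \colon \varnothing \to K_A\varnothing$ are the initial maps. The base case is immediate because $0$ is initial in $\DD$, so any two parallel maps out of it agree. For the step I would substitute $\beta_{n+1} = \alpha_{A, K_A^n\varnothing} \circ G_A(\beta_n)$, push the left connecting map through $G_A$ using the inductive hypothesis, and then commute it past $\alpha_{A, -}$ using the naturality of $\alpha$ in its $\BB$-argument with respect to the morphism $K_A^n(\jmath)$.

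Finally --- and here I expect the real work --- I would prove naturality of $\alpha^*$ in $A$. Fixing $f \colon A_1 \to A_2$, the vertical maps at level $n$ are $(T(Nf,-)^{*n})_0$ and $M\big((H(f,-)^{*n})_\varnothing\big)$, and I must show they intertwine $\beta^1_n$ and $\beta^2_n$. The induction on $n$ is routine in spirit, but the step is a longer diagram chase: one expands the two horizontal composites at level $n+1$ via the standard component formula for $\sigma * \sigma^{*n}$, rewrites each $\beta^i_{n+1}$ by its recursion, and then applies the naturality of $\alpha$ (as a transformation of functors on $\AAA \times \BB$) at the morphism $\big(f, (H(f,-)^{*n})_\varnothing\big)$, together with the bifunctoriality of $T$ and $H$, to reduce everything to the inductive hypothesis. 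Keeping the interchange of horizontal and vertical composition straight, and choosing the right factorisation of $T(Nf, M(-))$ at each stage, is the main obstacle; once the bookkeeping is fixed every rewrite is forced and the two sides meet.
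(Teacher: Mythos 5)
Your proposal is correct and matches the paper's own proof in all essentials: the same three inductions on $n$ (naturality of $\alpha^*_A$ as a map of $\omega$-diagrams via initiality of $0$ in the base case and naturality of $\alpha$ plus the image of the inductive hypothesis under $T(NA,-)$ in the step; naturality in $A$ via bifunctoriality of $T$ and naturality of $\alpha$; pointwise invertibility since each $(\alpha^*_A)_{n+1}$ is a composite of $T(NA,(\alpha^*_A)_n)$ with a component of the isomorphism $\alpha$). The only difference is the order in which the three checks are carried out, which is immaterial.
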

\begin{proof}
In Appendix~\ref{proof:alpha-star}.
\end{proof}

The next theorem shows how to extend the action of $(-)^\dagger$ to natural transformations.

\begin{thm}\label{thm:alpha-dagger-def}
The natural isomorphism $\alpha$ induces a natural isomorphism
\begin{align*}
\alpha^\dagger &: T^\dagger \circ N \naturalto M \circ H^\dagger : \AAA \to \DD \quad \text{defined by} \\
\alpha^\dagger_A &\coloneqq \left( T^\dagger NA = \colim(T^*NA)  \xrightarrow{\colim(\alpha^*_A)} \colim(MH^*A) = M \colim(H^*A) = MH^\dagger A \right).
\end{align*}
\end{thm}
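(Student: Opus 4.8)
The plan is to obtain $\alpha^\dagger$ by applying the colimit functor $\colim : [\omega, \DD] \to \DD$ to the natural isomorphism $\alpha^*$ supplied by Lemma~\ref{lem:alpha-star}, and then to read off well-definedness, invertibility and naturality from the fact that $\colim$ is a functor, together with the compatibility of $\colim$ with whiskering by the $\omega$-functor $M$ established in Lemma~\ref{lem:whiskering colimit}. In other words, $\alpha^\dagger_A$ should turn out to be nothing more than $\colim(\alpha^*_A)$ with its source and target rewritten through the definitional identifications.

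First I would check that the stated composite makes sense. By definition $T^\dagger = \colim \circ T^*$, so $T^\dagger NA = \colim(T^*NA)$ is exactly the source of $\colim(\alpha^*_A)$. Dually, the target of $\alpha^*_A$ is $((M \tr -) \circ H^*)(A) = M \circ H^*A$, and since $M$ is an $\omega$-functor it preserves the colimit of the $\omega$-diagram $H^*A$ (the object-level identification used throughout, e.g.\ in Theorem~\ref{thm:initial-algebra-simple}), giving $\colim(M \circ H^*A) = M\colim(H^*A) = MH^\dagger A$ via $H^\dagger = \colim \circ H^*$. Hence $\alpha^\dagger_A$ is a well-defined morphism $T^\dagger NA \to MH^\dagger A$. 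Because $\alpha^*$ is a natural isomorphism, each component $\alpha^*_A$ is an isomorphism in $[\omega, \DD]$, and the functor $\colim$ preserves isomorphisms; therefore every $\alpha^\dagger_A$ is an isomorphism.

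It then remains to establish naturality. For $f : A \to A'$ in $\AAA$, naturality of $\alpha^*$ (as a morphism in $[\AAA, [\omega, \DD]]$) gives
\[ ((M\tr -)\circ H^*)(f) \circ \alpha^*_A = \alpha^*_{A'} \circ (T^* \circ N)(f) \]
in $[\omega, \DD]$. Applying $\colim$ and using functoriality yields
\[ \colim\big(((M\tr -)\circ H^*)(f)\big) \circ \colim(\alpha^*_A) = \colim(\alpha^*_{A'}) \circ \colim\big((T^* \circ N)(f)\big). \]
I would then identify the outer edges. On the right, $\colim((T^* \circ N)(f)) = \colim(T^*(Nf)) = T^\dagger(Nf)$ since $T^\dagger = \colim \circ T^*$. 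On the left, $((M\tr -)\circ H^*)(f) = M(H^*f)$ is the whiskering of the natural transformation $H^*f : H^*A \naturalto H^*A'$ by $M$, so Lemma~\ref{lem:whiskering colimit} gives $\colim(M(H^*f)) = M(\colim(H^*f)) = M(H^\dagger f)$. Substituting these, together with $\colim(\alpha^*_A) = \alpha^\dagger_A$ and $\colim(\alpha^*_{A'}) = \alpha^\dagger_{A'}$, turns the display into exactly the naturality square $M H^\dagger f \circ \alpha^\dagger_A = \alpha^\dagger_{A'} \circ T^\dagger N f$.

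The only delicate point, and the step I expect to require the most care, is the coherent use of the strict identification $\colim(M \tr D) = M\colim(D)$ for the $\omega$-functor $M$: it must be applied both at the level of diagrams, to make the target of $\alpha^\dagger_A$ literally $MH^\dagger A$, and at the level of natural transformations, to replace $\colim(M(H^*f))$ by $M(H^\dagger f)$. This is precisely the content of Lemma~\ref{lem:whiskering colimit}, so once that lemma is invoked the remaining manipulations reduce to formal bookkeeping with the definitions of $T^\dagger$ and $H^\dagger$; the substantive work has already been discharged in Lemma~\ref{lem:alpha-star}.
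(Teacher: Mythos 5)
Your proposal is correct and follows essentially the same route as the paper's proof: both establish that each $\alpha^\dagger_A$ is an isomorphism because $\colim$ preserves the isomorphisms $\alpha^*_A$ from Lemma~\ref{lem:alpha-star}, and both derive naturality by applying $\colim$ to the naturality square of $\alpha^*$ and then invoking Lemma~\ref{lem:whiskering colimit} to commute $\colim$ past the whiskering by $M$. The only difference is presentational — the paper arranges the argument as a pasted commutative diagram while you write it as an equational chain.
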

\begin{proof}
By Lemma~\ref{lem:alpha-star}, $\alpha^*_A$ is an isomorphism and thus so is $\alpha^\dagger_A$. Naturality follows from:
\[ \stikz{alpha-dagger-natural.tikz} \]
where (1) and (4) commute by definition, (2) commutes by naturality of $\alpha^*$ and functoriality of $\colim$, and (3) commutes because
$M$ preserves $\omega$-colimits (more specifically, Lemma~\ref{lem:whiskering colimit}).
\end{proof}

\begin{rem}
The above construction generalises the operation $(-)^\dagger$ from \cite[
Corollary 7.3.13]{fiore-thesis} to the context of $\omega$-categories.
\end{rem}

\begin{cor}\label{cor:alpha-n}
If $\AAA = \BB^n$, $\CC = \DD^n$, $N = M^{\times n}$ and
$ \alpha: T \circ M^{\times n+1} \naturalto M \circ H: \BB^{n+1} \to \DD $
is a natural isomorphism, then so is $ \alpha^\dagger: T^\dagger \circ M^{\times n} \naturalto M \circ H^\dagger : \BB^n \to \DD. $
\end{cor}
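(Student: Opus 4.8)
The plan is to obtain the statement as a direct instantiation of Theorem~\ref{thm:alpha-dagger-def}, whose hypotheses are exactly the standing data of Assumption~\ref{ass:parameterised-square} that remains in force throughout this subsection. Concretely, I would set $\AAA := \BB^n$ and $\CC := \DD^n$ (these are categories, which is all the Assumption demands of $\AAA$ and $\CC$; they need not themselves be $\omega$-categories), and take $N := M^{\times n} : \BB^n \to \DD^n$. The functors $T : \DD^{n+1} \to \DD$ and $H : \BB^{n+1} \to \BB$ are the $\omega$-functors already assumed, and $M : \BB \to \DD$ is the strict $\omega$-functor already assumed, so every requirement of Assumption~\ref{ass:parameterised-square} is in place once the type bookkeeping has been checked.

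The only genuine content is verifying that the supplied data fit the template of the Assumption after this substitution. First I would record the canonical identifications of product categories $\BB^n \times \BB = \BB^{n+1}$ and $\DD^n \times \DD = \DD^{n+1}$, obtained by reassociating factors. Under these identifications the product functor $N \times M = M^{\times n} \times M$ is precisely $M^{\times n+1} : \BB^{n+1} \to \DD^{n+1}$. Hence the hypothesised isomorphism $\alpha : T \circ M^{\times n+1} \naturalto M \circ H$ is literally the natural isomorphism $\alpha : T \circ (N \times M) \naturalto M \circ H : \BB^{n+1} \to \DD$ required by the Assumption.

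With all hypotheses in place, Theorem~\ref{thm:alpha-dagger-def} applies verbatim and produces a natural isomorphism $\alpha^\dagger : T^\dagger \circ N \naturalto M \circ H^\dagger : \AAA \to \DD$. Translating back through the identifications, where $T^\dagger : \DD^n \to \DD$, $H^\dagger : \BB^n \to \BB$, $N = M^{\times n}$ and $\AAA = \BB^n$, this reads $\alpha^\dagger : T^\dagger \circ M^{\times n} \naturalto M \circ H^\dagger : \BB^n \to \DD$, which is exactly the claimed isomorphism; in particular it is automatically natural and invertible because $\alpha^\dagger$ is supplied with these properties by the theorem.

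The one point requiring (minor) care, and the closest thing to an obstacle, is precisely this matching of domains and codomains under the product-category identifications: one must confirm that $N \times M$ and $M^{\times n+1}$ denote the same functor and that $\BB^{n+1}$ (respectively $\DD^{n+1}$) is the intended domain (respectively codomain of $M^{\times n+1}$), so that the instance of $\alpha$ really is the one fed into the theorem. Once this bookkeeping is settled there is nothing further to prove, since all the substantive construction of $\alpha^\dagger$ and its naturality is already carried out in Theorem~\ref{thm:alpha-dagger-def}.
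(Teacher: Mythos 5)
Your proposal is correct and is exactly the intended argument: the paper gives no separate proof for this corollary precisely because it is the instantiation of Theorem~\ref{thm:alpha-dagger-def} with $\AAA = \BB^n$, $\CC = \DD^n$, $N = M^{\times n}$, using the canonical identification $N \times M = M^{\times n+1}$ under $\BB^n \times \BB = \BB^{n+1}$. The bookkeeping you flag is the only content, and you have handled it correctly.
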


By reading off the proof of Theorem~\ref{thm:alpha-dagger-def}, we obtain another corollary that we use in our adequacy proof.
\begin{cor}\label{cor:alpha-id}
In the special case where $\alpha = id$ and $z = \id$, then $\alpha^\dagger = \id$ and thus $T^\dagger \circ N = M \circ H^\dagger.$
\end{cor}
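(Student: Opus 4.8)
The plan is to specialise the proof of Theorem~\ref{thm:alpha-dagger-def}, reading off what happens when $\alpha = \id$ and $z = \id$. The claim is purely a degeneration statement: the general construction produces $\alpha^\dagger$ as a colimit of $\alpha^*$, so I want to show that feeding in trivial data ($\alpha = \id$, $z = \id$) makes $\alpha^*$ the identity natural transformation, whence $\alpha^\dagger = \colim(\alpha^*) = \id$ as well. First I would unpack what the hypotheses $\alpha = \id$ and $z = \id$ force at the level of Assumption~\ref{ass:parameterised-square}: since $z : 0 \to M\varnothing$ is now $\id$, we must have $M\varnothing = 0$, and since $\alpha : T \circ (N \times M) \naturalto M \circ H$ is the identity, the two composites $T \circ (N \times M)$ and $M \circ H$ coincide as functors $\AAA \to \DD$.

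Next I would compute $\alpha^*$ using its defining recursion from Lemma~\ref{lem:alpha-star}. The base case is $(\alpha^*_A)_0 = (0 \xrightarrow{z} M\varnothing)$, which is $\id_0$ by the hypothesis $z = \id$. For the inductive step, $(\alpha^*_A)_{n+1}$ is defined as the composite
\[ T(NA,-)^{n+1} 0 \xrightarrow{T(NA,(\alpha^*_A)_n)} T(NA, MH(A,-)^n\varnothing) \xrightarrow{\alpha_{A,H(A,-)^n\varnothing}} MH(A,-)^{n+1}\varnothing. \]
By the inductive hypothesis $(\alpha^*_A)_n = \id$, the first arrow is $T(NA,\id) = \id$, and the second arrow is a component of $\alpha$, hence $\id$ by assumption. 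So $(\alpha^*_A)_{n+1} = \id$, and by induction every component of $\alpha^*_A$ is an identity; thus $\alpha^*_A = \id$ for every $A$, i.e.\ $\alpha^* = \id$. (Implicitly this also records that the two $\omega$-diagrams $T^* N A$ and $(M \tr -) H^* A$ coincide, which is exactly what makes the identity natural transformation type-check.)

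Finally I would conclude by applying $\colim$: by Theorem~\ref{thm:alpha-dagger-def}, $\alpha^\dagger_A = \colim(\alpha^*_A)$, suitably pre- and post-composed with the canonical identifications $T^\dagger NA = \colim(T^* N A)$ and $\colim(M H^* A) = M\colim(H^* A) = MH^\dagger A$. Since $\alpha^*_A = \id$ and $\colim$ is a functor (so $\colim(\id) = \id$), and since the flanking identifications are genuine equalities of objects under the present hypotheses, we get $\alpha^\dagger_A = \id$ for all $A$. Because $\alpha^\dagger$ is the identity natural transformation between $T^\dagger \circ N$ and $M \circ H^\dagger$, these two functors must be equal, giving $T^\dagger \circ N = M \circ H^\dagger$ as desired.

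I do not expect any serious obstacle here; the only point requiring care is bookkeeping the object-level equalities. Specifically, one must check that the source and target of $\alpha^\dagger$ are literally equal (not merely isomorphic) under the hypotheses, so that asserting $\alpha^\dagger = \id$ is meaningful — this follows because $M\varnothing = 0$ and the equality $T \circ (N \times M) = M \circ H$ propagate through the initial-sequence construction to give $T^* N = (M \tr -) H^*$ on the nose, and $\colim$ then yields $T^\dagger N = M H^\dagger$. Thus the whole argument is a routine specialisation, and the cleanest presentation is simply to say that reading off the proof of Theorem~\ref{thm:alpha-dagger-def} with $\alpha = \id$ and $z = \id$ makes $\alpha^*$ the identity by the inductive definition in Lemma~\ref{lem:alpha-star}, so $\alpha^\dagger = \colim(\alpha^*) = \id$.
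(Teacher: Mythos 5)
Your proposal is correct and is exactly the argument the paper intends: the paper offers no separate proof of this corollary, merely the remark that it is obtained ``by reading off the proof of Theorem~\ref{thm:alpha-dagger-def}'', and your induction showing $\alpha^* = \id$ (base case from $z=\id$, inductive step from $\alpha=\id$) followed by $\alpha^\dagger = \colim(\alpha^*) = \id$ is precisely that reading-off, with the object-level equalities handled in the same way the paper's definition of $\alpha^\dagger$ already treats them.
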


We need two lemmas that establish some properties of the $(-)^*$ and $(-)^\dagger$ operations.

\begin{lem}\label{lem:alpha-dagger-def}
The operation $(-)^*$ defined in Lemma~\ref{lem:alpha-star} enjoys the following properties:
\begin{enumerate}
\item If $\beta : H \naturalto H' : \AAA \times \BBB \to \BBB$ is a natural isomorphism, then for any $A \in \AAA$ :
\[(M\beta \circ \alpha)^*_A = M\beta^*_A \circ \alpha^*_A . \] 
\item Given functors $P :\EE \to \CC$ and $Q: \EE \to \AAA$
and given a natural isomorphism \\ ${ \gamma: P \naturalto N \circ Q : \EE \to \CC }$, then for any $E \in \EE $ :
\[(\alpha(Q\times \Id) \circ T(\gamma \times M))^*_E = \alpha^*_{QE} \circ T^* \gamma_E . \]
\end{enumerate}
\end{lem}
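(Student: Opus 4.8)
The plan is to prove both identities \emph{componentwise} by induction on the diagram index $n \in \omega$, using that each side is itself an instance of the $(-)^*$ construction of Lemma~\ref{lem:alpha-star} and is therefore pinned down by the recursion stated there. As a preliminary for both parts I would check that the composite natural isomorphisms on the left-hand sides genuinely satisfy Assumption~\ref{ass:parameterised-square}, so that $(-)^*$ is defined on them. For (1), $M\beta \circ \alpha : T \circ (N \times M) \naturalto M \circ H'$ is a natural isomorphism fitting the assumption with $H$ replaced by $H'$ (which is an $\omega$-functor, since $H' \cong H$ preserves $\omega$-colimits), so its $(-)^*$ is computed by the same recursion but with $H'(A,-)^n\varnothing$ in place of $H(A,-)^n\varnothing$. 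For (2), the left side is the $(-)^*$ of $\tilde\alpha \coloneqq \alpha(Q\times\Id)\circ T(\gamma\times M) : T\circ(P\times M)\naturalto M\circ(H\circ(Q\times\Id))$, which fits the assumption with $\AAA, N, H$ replaced by $\EE, P, H\circ(Q\times\Id)$; here $(H\circ(Q\times\Id))(E,-)=H(QE,-)$ is an $\omega$-functor, so the recursion runs with $H(QE,-)^n\varnothing$.

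For part (1) I write $\beta^*_A \coloneqq S(\beta(A,-))$, so that $(\beta^*_A)_n=(\beta(A,-)^{*n})_\varnothing$ and $(M\beta^*_A)_n = M((\beta^*_A)_n)$. The base case $n=0$ is immediate: both sides equal $z$, using $\beta(A,-)^{*0}=\id$ and $M(\id)=\id$. For the step I unfold $((M\beta\circ\alpha)^*_A)_{n+1}$ by Lemma~\ref{lem:alpha-star}, substitute the induction hypothesis $((M\beta\circ\alpha)^*_A)_n = M((\beta^*_A)_n)\circ(\alpha^*_A)_n$, split $T(NA,-)$ over the composite by functoriality, and expand $(\beta^*_A)_{n+1}$ via the horizontal-composition identity $(\beta(A,-)^{*(n+1)})_\varnothing = \beta_{A,H'(A,-)^n\varnothing}\circ H(A,-)((\beta^*_A)_n)$. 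After cancelling the common outer factor $M(\beta_{A,H'(A,-)^n\varnothing})$ and the common inner factor $T(NA,(\alpha^*_A)_n)$, the two sides reduce to
\[ \alpha_{A, H'(A,-)^n\varnothing}\circ T(NA, M((\beta^*_A)_n)) = M\big(H(A,-)((\beta^*_A)_n)\big)\circ \alpha_{A, H(A,-)^n\varnothing}, \]
which is exactly the naturality square of $\alpha$ applied to the morphism $(\id_A, (\beta^*_A)_n)$ in $\AAA\times\BB$.

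For part (2) I record the two facts that drive the computation: $\tilde\alpha_{(E,Y)} = \alpha_{(QE,Y)}\circ T(\gamma_E,\id_{MY})$ (by definition of whiskering and vertical composition), and $(T^*\gamma_E)_n = (T(\gamma_E,-)^{*n})_0$ (since $T^*=S\circ\lambda B.T(-,B)$ and $S$ acts on morphisms by iterated horizontal composition at $0$). The base case is $z = z\circ\id_0$. In the step I unfold $(\tilde\alpha^*_E)_{n+1}$ by Lemma~\ref{lem:alpha-star}, apply the induction hypothesis, expand $(T(\gamma_E,-)^{*(n+1)})_0$ by horizontal composition, and, after cancelling the common factors $\alpha_{QE,H(QE,-)^n\varnothing}$ and $T(PE,(T(\gamma_E,-)^{*n})_0)$, both sides reduce to
\[ T(\gamma_E, \id_{MH(QE,-)^n\varnothing})\circ T(PE, (\alpha^*_{QE})_n) = T(NQE, (\alpha^*_{QE})_n)\circ T(\gamma_E, \id_{T(NQE,-)^n 0}), \]
which is precisely bifunctoriality of $T$ applied to the pair of morphisms $(\gamma_E, (\alpha^*_{QE})_n)$.

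The computations are mechanical once the correct square is isolated; the real work is \emph{bookkeeping}. The main obstacle will be keeping the towers of objects $H(A,-)^n\varnothing$, $H'(A,-)^n\varnothing$, $T(NA,-)^n 0$, $T(NQE,-)^n 0$ straight so that the domain and codomain of every factor line up, and expanding each $(n{+}1)$-fold horizontal composite into its ``outer component $\circ$ whiskered inner component'' form so that the single naturality square (part 1) or bifunctoriality square (part 2) appears exactly after the common prefix and suffix are cancelled. Verifying that the composite transformations satisfy Assumption~\ref{ass:parameterised-square} is a minor preliminary I would dispatch first, as noted above.
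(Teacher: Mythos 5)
Your proposal is correct and follows essentially the same route as the paper's proof: componentwise induction on $n$, with the base cases trivial and each induction step reducing, after expanding the $(n{+}1)$-fold horizontal composites via the Godement product and cancelling the common outer and inner factors, to exactly the naturality square of $\alpha$ at $(\id_A,(\beta^*_A)_n)$ for part (1) and the bifunctoriality (interchange) square of $T$ at $(\gamma_E,(\alpha^*_{QE})_n)$ for part (2). The only difference is presentational --- you cancel factors equationally where the paper pastes labelled commuting subdiagrams --- so no further comment is needed.
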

\begin{proof}
In Appendix~\ref{proof:alpha-dagger-def}.
\end{proof}

\begin{lem}\label{lem:alpha-operations}
The operation $(-)^\dagger$ enjoys the following properties:
\begin{enumerate}
\item \label{list:beta1} If $\beta : H \naturalto H' : \AAA \times \BBB \to \BBB$ is a natural isomorphism, then
$ (M\beta \circ \alpha)^\dagger = M\beta^\dagger \circ \alpha^\dagger . $
\item Given functors $P :\EE \to \CC$ and $Q: \EE \to \AAA$
and given a natural isomorphism \\ ${ \gamma: P \naturalto N \circ Q : \EE \to \CC }$, then
$ (\alpha(Q\times \Id) \circ T(\gamma \times M))^\dagger = \alpha^\dagger Q \circ T^\dagger \gamma . $
\end{enumerate}
\end{lem}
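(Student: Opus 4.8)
The plan is to prove both identities componentwise, in each case reducing the claim about $(-)^\dagger$ to the corresponding claim about $(-)^*$ established in Lemma~\ref{lem:alpha-dagger-def} and then pushing everything through $\colim$. The organising principle is that, by Theorem~\ref{thm:alpha-dagger-def} together with the identity $F^\dagger = \colim \circ F^*$, any natural isomorphism $\xi$ fitting the framework of Assumption~\ref{ass:parameterised-square} satisfies $\xi^\dagger_X = \colim(\xi^*_X)$ modulo the canonical identifications $\colim(T^* N X) = T^\dagger N X$ and $\colim(M H^* X) = M H^\dagger X$ (the latter using that the strict $\omega$-functor $M$ preserves $\omega$-colimits). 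Since in both parts the two sides are natural transformations between the same pair of functors, it suffices to check equality of components, so the entire argument is a componentwise colimit computation.

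For part~(1) I would fix $A \in \Ob(\AAA)$ and first note that $M\beta \circ \alpha$ again fits Assumption~\ref{ass:parameterised-square} with $H$ replaced by the (isomorphic, hence $\omega$-) functor $H'$, so $(M\beta\circ\alpha)^\dagger_A = \colim\big((M\beta\circ\alpha)^*_A\big)$. By Lemma~\ref{lem:alpha-dagger-def}(1) the inner transformation factors as $(M\beta\circ\alpha)^*_A = M\beta^*_A \circ \alpha^*_A$, so functoriality of $\colim$ yields $\colim(M\beta^*_A)\circ\colim(\alpha^*_A)$. The second factor is $\alpha^\dagger_A$ by definition. For the first, $\beta^\dagger$ is the instance of $(-)^\dagger$ with identity mediating functors, so $\beta^\dagger_A = \colim(\beta^*_A)$, and Lemma~\ref{lem:whiskering colimit} applied to the $\omega$-functor $M$, the diagrams $H^*A, H'^*A$ and the transformation $\beta^*_A$ gives $\colim(M\beta^*_A) = M\colim(\beta^*_A) = M\beta^\dagger_A$. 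Hence $(M\beta\circ\alpha)^\dagger_A = M\beta^\dagger_A \circ \alpha^\dagger_A = (M\beta^\dagger\circ\alpha^\dagger)_A$.

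For part~(2) I would fix $E \in \Ob(\EE)$. The composite $\alpha(Q\times\Id)\circ T(\gamma\times M)$ is a natural isomorphism $T\circ(P\times M)\naturalto M\circ H\circ(Q\times\Id)$, and it fits the framework with parameter category $\EE$ and mediating functor $P$ (the $(-)^\dagger$ is available because each $H(QE,-)$ is an $\omega$-functor), whence $(\alpha(Q\times\Id)\circ T(\gamma\times M))^\dagger_E = \colim\big((\alpha(Q\times\Id)\circ T(\gamma\times M))^*_E\big)$. Lemma~\ref{lem:alpha-dagger-def}(2) rewrites the inner term as $\alpha^*_{QE}\circ T^*\gamma_E$, and functoriality of $\colim$ produces $\colim(\alpha^*_{QE})\circ\colim(T^*\gamma_E)$. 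Here $\colim(\alpha^*_{QE}) = \alpha^\dagger_{QE} = (\alpha^\dagger Q)_E$ by definition, and $\colim(T^*\gamma_E) = T^\dagger\gamma_E = (T^\dagger\gamma)_E$ since $T^\dagger = \colim\circ T^*$. Combining gives $(\alpha^\dagger Q\circ T^\dagger\gamma)_E$, as required.

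The routine bookkeeping aside, the step I expect to be the main obstacle is keeping the canonical identifications honest: the equalities $\xi^\dagger_X = \colim(\xi^*_X)$, $M\beta^\dagger_A = \colim(M\beta^*_A)$, and $T^\dagger\gamma_E = \colim(T^*\gamma_E)$ each hold only up to the coherence isomorphisms arising from preservation of $\omega$-colimits by $M$ (and by the functors $T(NA,-)$ implicit in $\alpha^*$). The careful part is therefore to check that these isomorphisms cancel consistently on both sides of each equation, which is exactly what Lemma~\ref{lem:whiskering colimit} guarantees for the $M$-whiskering, and what the explicit well-definedness of $\alpha^*$ from Lemma~\ref{lem:alpha-star} guarantees for the remaining identifications.
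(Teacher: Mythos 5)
Your proof is correct and follows essentially the same route as the paper's: both parts reduce the claim to the componentwise factorisations of $(-)^*$ from Lemma~\ref{lem:alpha-dagger-def}, push through $\colim$ by functoriality, and invoke Lemma~\ref{lem:whiskering colimit} to commute $M$ past $\colim$ in part~(1). The closing worry about coherence isomorphisms is unnecessary here, since Lemma~\ref{lem:whiskering colimit} gives the identification as a literal equality, exactly as the paper uses it.
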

\begin{proof}
(1) We have:
\begin{align*}
(M\beta \circ \alpha)^\dagger_A &= \colim((M\beta \circ \alpha)^*_A) & \text{(Definition)}\\
& = \colim (M\beta^*_A \circ \alpha^*_A) & \text{(Lemma~\ref{lem:alpha-dagger-def})}\\
& = \colim (M\beta^*_A) \circ \colim(\alpha^*_A) &\text{(Functoriality)}\\
& = M\colim (\beta^*_A) \circ \colim(\alpha^*_A) &\text{(Lemma~\ref{lem:whiskering colimit})}\\
& = M\beta^\dagger_A \circ \alpha^\dagger_A &\text{(Definition)}\\
& = (M\beta^\dagger \circ \alpha^\dagger)_A &
\end{align*}

(2) We have:
\begin{align*}
(\alpha(Q\times \Id) \circ T(\gamma \times M))^\dagger_E &= \colim((\alpha(Q\times \Id) \circ T(\gamma \times M))^*_E) &(\text{Definition}) \\
&= \colim(\alpha^*_{QE} \circ T^*\gamma_E) & (\text{Lemma~\ref{lem:alpha-dagger-def}}) \\
&= \colim(\alpha^*_{QE}) \circ \colim(T^*\gamma_E) &(\text{Functoriality}) \\
&= \alpha^\dagger_{QE} \circ T^\dagger\gamma_E &(\text{Definition}) \\
&= (\alpha^\dagger Q \circ T^\dagger\gamma )_E &
\qedhere
\end{align*}
\end{proof}

\begin{rem}\label{rem:dagger}
As a special case, if one takes $N$ and $M$ to be
identity functors, then given a natural isomorphism ${ \alpha: T \naturalto H:
\AAA \times \BBB \to \BBB }$, it follows ${ \alpha^\dagger: T^\dagger \naturalto
H^\dagger : \AAA \to \BBB }$ is also a natural isomorphism and Lemma~\ref{lem:alpha-operations} (\ref{list:beta1}) shows the operation $(-)^\dagger$ is functorial.
\end{rem}

We need one more lemma before we may prove the main coherence property.

\begin{lem}\label{lem:pentagon-sequence}
In the special case where categories $\AAA$, $\CC$ and the functor $N$ are trivial in Assumption~\ref{ass:parameterised-square}, 
the following diagram of natural transformations:
\[ \stikz{initial-sequence-extravaganza.tikz} \]
commutes, where we regard $H$ and $T$ as functors $H: \BB \to \BB$ and $T:\DD \to \DD$ and we regard $\alpha$ as a natural isomorphism $\alpha: T \circ M \naturalto M \circ H.$
\end{lem}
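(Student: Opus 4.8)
The plan is to verify the pentagon componentwise over $\omega$ and then recognise the resulting equation as an instance of the naturality of $\alpha^*$. Every edge of the diagram is a natural transformation between functors $\omega \to \DD$, so it suffices to check the two composite legs at each object $n \in \omega$. Fixing $n$ and reading off the two paths from $S(T)$ to $M \circ H \circ S(H)$, the identity to establish is
\[ \alpha_{H^n\varnothing} \circ T\big((\alpha^*)_n\big) \circ (s^T)_n \;=\; M\big((s^H)_n\big) \circ (\alpha^*)_n, \]
both sides being morphisms $T^n 0 \to M H^{n+1}\varnothing$ in $\DD$.

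First I would collapse the left-hand leg using the definition of $\alpha^*$ itself. In the trivial-parameter instance of Lemma~\ref{lem:alpha-star}, the defining recursion reads $(\alpha^*)_{n+1} = \alpha_{H^n\varnothing} \circ T((\alpha^*)_n)$, so the first two factors on the left merge into a single arrow and the equation reduces to $(\alpha^*)_{n+1} \circ (s^T)_n = M((s^H)_n) \circ (\alpha^*)_n$.

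The crux, on which everything turns, is that the components of $s^T$ and $s^H$ are exactly the connecting morphisms of their initial sequences: comparing Theorem~\ref{thm:initial-algebra-simple} with the definition of $S$ gives $(s^T)_n = S(T)(n \le n+1)$ and $M((s^H)_n) = M\big(S(H)(n \le n+1)\big) = (M \tr S(H))(n \le n+1)$. After this substitution the displayed equation is precisely the naturality square of $\alpha^* : S(T) \naturalto M \tr S(H)$ at the arrow $n \le n+1$ of $\omega$, which commutes since $\alpha^*$ is a natural transformation by Lemma~\ref{lem:alpha-star}. This finishes the proof.

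The hard part will be nothing more than this identification: once one notices that the two $s$-transformations are the structure maps of the initial $\omega$-chains and that the definition of $\alpha^*$ absorbs the $T$-whiskered and $\alpha$-whiskered edges, the pentagon degenerates to the already-established naturality of $\alpha^*$, and all that remains is routine bookkeeping of the whiskerings. Alternatively, were one to avoid invoking Lemma~\ref{lem:alpha-star}, the same equation could be proved by induction on $n$, with the base case handled by initiality of $0$ in $\DD$ and the inductive step by naturality of $\alpha$.
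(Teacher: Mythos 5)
Your proof is correct, but it takes a genuinely different and more economical route than the paper's. The paper proves the pentagon by a fresh induction on $n$: the base case is handled by initiality of $S(T)(0)=0$, and the inductive step by a diagram chase using the definition of $\alpha^*_{n+1}$, the functor $T$ applied to the induction hypothesis, and naturality of $\alpha$ --- in effect re-deriving, inline, the fact that $\alpha^*$ is a natural transformation of $\omega$-diagrams. You instead observe that after the definitional collapse $(\alpha * S(H))_n \circ (T\alpha^*)_n = \alpha_{H^n\varnothing}\circ T(\alpha^*_n) = \alpha^*_{n+1}$ and the identifications $(s^T)_n = T^n(\iota_{T0}) = S(T)(n\le n+1)$ and $M((s^H)_n) = (M\circ S(H))(n\le n+1)$, the componentwise pentagon is \emph{literally} the naturality square of $\alpha^* : S(T)\naturalto M\circ S(H)$ at the arrow $n\le n+1$ of $\omega$, which is already part of what Lemma~\ref{lem:alpha-star} asserts (its proof establishes exactly this equation, there labelled \eqref{eq:betaAnatural2}, by the same induction). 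Your argument is legitimate because the well-definedness of $\alpha^*$ as a morphism into $[\omega,\DD]$ does include naturality of each component in $\omega$; what it buys is the recognition that Lemma~\ref{lem:pentagon-sequence} carries no content beyond Lemma~\ref{lem:alpha-star} plus bookkeeping, whereas the paper's version is self-contained at the cost of duplicating the induction. Your closing remark about the inductive alternative is, in fact, precisely the paper's proof.
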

\begin{proof}
In Appendix~\ref{proof:pentagon-sequence}.
\end{proof}

The main result from this section follows. It shows how parameterised initial algebras may be related in the situation of Assumption~\ref{ass:parameterised-square}. We will later use this theorem in order to show our semantics is sound.

\begin{thm}\label{thm:fixpoint-deconstruction}
The following diagram of natural isomorphisms:
\cstikz{fixpoint-deconstruction.tikz}
commutes.
\end{thm}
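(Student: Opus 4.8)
The plan is to reduce the statement to the non-parameterised situation already handled by Lemma~\ref{lem:pentagon-sequence} and then transport the resulting identity of $\omega$-sequences down to the carriers by applying $\colim$. Since every edge of the diagram is a natural isomorphism between functors $\AAA \to \DD$, and naturality of each composite edge is inherited from naturality of its constituents, it suffices to verify commutativity of the induced diagram of morphisms in $\DD$ at a fixed but arbitrary object $A \in \Ob(\AAA)$.

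First I would fix $A$ and form the partial applications $T_A := T(NA,-) : \DD \to \DD$ and $H_A := H(A,-) : \BB \to \BB$. Both are $\omega$-functors, since partial application of an $\omega$-functor at a fixed object is an $\omega$-functor (this is exactly what the currying lemma above delivers). Together with the strict $\omega$-functor $M$, its structure isomorphism $z : 0 \to M\varnothing$, and the natural isomorphism $\alpha_{A,-} : T_A \circ M \naturalto M \circ H_A$ obtained by fixing the first argument of $\alpha$, the tuple $(T_A, H_A, M, z, \alpha_{A,-})$ is precisely an instance of Assumption~\ref{ass:parameterised-square} in which the categories $\AAA, \CC$ and the functor $N$ are trivial. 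I would then note that the formula of Lemma~\ref{lem:alpha-star} specialises correctly: the $A$-component $\alpha^*_A$ of the parameterised transformation coincides with the (trivial-parameter) transformation built from $(T_A, H_A, \alpha_{A,-})$, because $(\alpha^*_A)_0 = z$ and $(\alpha^*_A)_{n+1} = (\alpha_{A,-})_{H_A^n\varnothing}\circ T_A\big((\alpha^*_A)_n\big)$ matches the trivial-parameter recursion. This is what makes Lemma~\ref{lem:pentagon-sequence} applicable to the instance at $A$.

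Next I would invoke Lemma~\ref{lem:pentagon-sequence} for this instance, obtaining a commuting pentagon of natural transformations between $\omega$-diagrams $\omega \to \DD$ whose edges are the sequence-level maps $s^{T_A}$, $\alpha^*_A$, $M s^{H_A}$, the whiskering $\alpha_{A,-} S(H_A)$, and $T_A\alpha^*_A$. Applying $\colim : [\omega,\DD] \to \DD$ to this pentagon yields, by functoriality of $\colim$, a commuting diagram of morphisms in $\DD$, and it remains to recognise the images of the five edges as the $A$-components of the edges in the theorem. The identifications are: $\colim(s^{T_A}) = (y^{T_A})^{-1} = (\phi^T_{NA})^{-1}$ by Theorem~\ref{thm:initial-algebra-simple} and the definition of $\phi^T$ in Theorem~\ref{thm:dagger is omega-continuous}; $\colim(\alpha^*_A) = \alpha^\dagger_A$ by the definition in Theorem~\ref{thm:alpha-dagger-def}; $\colim(M s^{H_A}) = M\colim(s^{H_A}) = (M\phi^H_A)^{-1}$ since $M$ preserves $\omega$-colimits (Lemma~\ref{lem:whiskering colimit}); and $\colim(T_A\alpha^*_A) = T_A(\colim \alpha^*_A) = T(NA, \alpha^\dagger_A)$, again by Lemma~\ref{lem:whiskering colimit} since $T_A$ is an $\omega$-functor. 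The final edge $\colim(\alpha_{A,-} S(H_A))$ is identified with the appropriate component of $\alpha$ via Lemma~\ref{lem:whiskering colimit2}. As $A$ was arbitrary, the pentagon of natural isomorphisms commutes.

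The hard part will be the bookkeeping in this last identification step: matching each colimit of a (possibly whiskered) sequence-level map against the intended carrier-level edge, keeping careful track of which maps run in the fold direction $y = (\colim s)^{-1} = \phi$ and which in the unfold direction, and correctly applying the two whiskering lemmas so that $\colim$ is pushed past $M$ and past $T_A$. The genuinely novel content is already isolated in Lemma~\ref{lem:pentagon-sequence}; once the pointwise, trivial-parameter reduction is set up, the present theorem is essentially its image under $\colim$.
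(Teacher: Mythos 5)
Your proposal is correct and follows essentially the same route as the paper's proof: fix $A$, pass to the partial applications $T(NA,-)$ and $H(A,-)$ with $\alpha_{A,-}$, invoke Lemma~\ref{lem:pentagon-sequence}, apply $\colim$, and identify the resulting edges via Lemmas~\ref{lem:whiskering colimit} and~\ref{lem:whiskering colimit2} together with the definitions of $y$, $\phi$ and $\alpha^\dagger$. The only (harmless) addition is your explicit check that $\alpha^*_A$ agrees with the trivial-parameter recursion, which the paper leaves implicit.
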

\begin{proof}
Fix an object $A \in \AAA$. We have to show:
\cstikz{fixpoint-specific.tikz}
where ${T' := T(NA,-) : \DD \to \DD}$ and ${H' := H(A,-) : \BBB \to \BBB.}$ 
Note that we then have a natural isomorphism ${\alpha'  : T' \circ M \naturalto M \circ H'}$
given by $\alpha' := \alpha_{A,-}.$
Using Lemma~\ref{lem:pentagon-sequence}, we get:
\begin{equation}
\colim(Ms^{H'}) \circ \colim(\alpha^*_A) = \colim(\alpha' * S(H')) \circ \colim(T' \alpha^*_A) \circ \colim(s^{T'})
\end{equation}
Using Lemma~\ref{lem:whiskering colimit2} and the fact that $M$ and $T$ preserve $\omega$-colimits, we get:
\begin{equation}
M\colim(s^{H'}) \circ \colim(\alpha^*_A) = \alpha_{A,Y(H')} \circ T' \colim(\alpha^*_A) \circ \colim(s^{T'})
\end{equation}
Since $\colim(s^{H'})$ and $\colim(s^{T'})$ are isomorphisms, the above is equivalent to:
\begin{equation}
\colim(\alpha^*_A)  \circ \colim(s^{T'})^{-1} = M\colim(s^{H'})^{-1} \circ \alpha_{A,Y(H')} \circ T' \colim(\alpha^*_A)  
\end{equation}
Finally, by definition of $\alpha^\dagger$ and $y$, we get:
$
\alpha^\dagger_A  \circ y^{T'} = My^{H'} \circ \alpha_{A,Y(H')} \circ T' \alpha^\dagger_A .
$
\end{proof}

By combining this theorem together with Corollary~\ref{cor:alpha-id}, we obtain another corollary that is important for the adequacy proof.
\begin{cor}\label{cor:phi-adequacy}
In the special case where $\alpha = \id$ and $z = \id$, the 2-categorical diagram
\cstikz{2-categorical-cor.tikz}
commutes, i.e. $\phi^T N = M \phi^H.$
\end{cor}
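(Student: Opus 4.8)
The plan is to obtain $\phi^T N = M\phi^H$ as an immediate specialisation of the equation proved in Theorem~\ref{thm:fixpoint-deconstruction}, simplified using Corollary~\ref{cor:alpha-id}. I would fix an object $A \in \AAA$ and reduce the claimed equality of natural transformations to an equality of their $A$-components. By definition of whiskering, $(\phi^T N)_A = \phi^T_{NA}$ and $(M\phi^H)_A = M(\phi^H_A)$; and by Theorem~\ref{thm:dagger is omega-continuous} these are $\phi^T_{NA} = y^{T'}$ and $\phi^H_A = y^{H'}$, where I abbreviate $T' := T(NA,-)$ and $H' := H(A,-)$ exactly as in the proof of Theorem~\ref{thm:fixpoint-deconstruction}. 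Thus the corollary reduces to the single identity $y^{T'} = My^{H'}$.

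Before the computation I would check that the two whiskered transformations actually share a source and target, so that the stated diagram is well-posed; this is where the hypotheses $\alpha = \id$ and $z = \id$ enter. Corollary~\ref{cor:alpha-id} gives $\alpha^\dagger = \id$ and hence $T^\dagger N = M H^\dagger$, which supplies the common codomain. Since $\alpha = \id$ forces the strict equality $T \circ (N \times M) = M \circ H$, precomposing with $\langle \Id, H^\dagger \rangle$ and rewriting $M H^\dagger$ as $T^\dagger N$ shows both $\phi^T N$ and $M\phi^H$ have source $T \circ \langle N, T^\dagger N \rangle$.

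The core step is then to read off the final line of the proof of Theorem~\ref{thm:fixpoint-deconstruction}, namely
\[ \alpha^\dagger_A \circ y^{T'} = My^{H'} \circ \alpha_{A,Y(H')} \circ T'\alpha^\dagger_A, \]
and specialise it. With $\alpha = \id$ we have $\alpha_{A,Y(H')} = \id$, and by Corollary~\ref{cor:alpha-id} $\alpha^\dagger_A = \id$, so the whiskering satisfies $T'\alpha^\dagger_A = T'(\id) = \id$. Substituting these collapses the equation to $y^{T'} = My^{H'}$, which is precisely the reduced claim; as $A$ was arbitrary, we conclude $\phi^T N = M\phi^H$, i.e. the stated 2-categorical diagram commutes. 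I expect no genuine obstacle here: all of the real work lives in Theorem~\ref{thm:fixpoint-deconstruction}, and the only care required is the bookkeeping that confirms the source and target functors agree \emph{on the nose} (which relies essentially on the strictness forced by $\alpha = \id$ together with Corollary~\ref{cor:alpha-id}) and the correct identification of the components $\phi^T_{NA}$ and $M\phi^H_A$ with $y^{T'}$ and $My^{H'}$.
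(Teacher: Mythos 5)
Your proof is correct and follows exactly the route the paper intends: the corollary is obtained by specialising the component equation $\alpha^\dagger_A \circ y^{T'} = My^{H'} \circ \alpha_{A,Y(H')} \circ T'\alpha^\dagger_A$ from Theorem~\ref{thm:fixpoint-deconstruction} using $\alpha^\dagger = \id$ from Corollary~\ref{cor:alpha-id}. Your additional bookkeeping verifying that the source and target functors agree on the nose is a welcome (if routine) check that the paper leaves implicit.
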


We also see that $\phi$ is natural in the choice of functor.

\begin{cor}
In the special case where $N = M = \Id$, then $\alpha^\dagger \circ \phi^T = \phi^H \circ \alpha \langle \Id, \alpha^\dagger \rangle,$
i.e., $\phi^X$ is natural in $X$.
\end{cor}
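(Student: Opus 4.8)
The plan is to obtain this as a direct componentwise consequence of Theorem~\ref{thm:fixpoint-deconstruction}. Setting $N = M = \Id$ forces $\CC = \AAA$ and $\DD = \BB$, and the unique isomorphism $z : 0 \to M\varnothing = \varnothing$ becomes $z = \id$; Assumption~\ref{ass:parameterised-square} then collapses to a single natural isomorphism $\alpha : T \naturalto H : \AAA \times \BB \to \BB$ between $\omega$-functors, and Remark~\ref{rem:dagger} already supplies the natural isomorphism $\alpha^\dagger : T^\dagger \naturalto H^\dagger : \AAA \to \BB$. Since both sides of the claimed identity are natural transformations $T \circ \langle \Id, T^\dagger \rangle \naturalto H^\dagger$, it suffices to check equality of their components at each $A \in \Ob(\AAA)$.

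First I would unfold the horizontal composite appearing on the right-hand side. As $\langle \Id, \alpha^\dagger \rangle : \langle \Id, T^\dagger \rangle \naturalto \langle \Id, H^\dagger \rangle$ has component $(\id_A, \alpha^\dagger_A)$ at $A$, the standard formula for horizontal composition gives
\[
(\alpha \langle \Id, \alpha^\dagger \rangle)_A = \alpha_{(A, H^\dagger A)} \circ T(\id_A, \alpha^\dagger_A).
\]
Hence the component at $A$ of the right-hand side is $\phi^H_A \circ \alpha_{(A, H^\dagger A)} \circ T(\id_A, \alpha^\dagger_A)$, while that of the left-hand side is simply $\alpha^\dagger_A \circ \phi^T_A$.

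Next I would specialise Theorem~\ref{thm:fixpoint-deconstruction} to the fixed object $A$. Writing $T' := T(A,-)$ and $H' := H(A,-)$ (so that $N = \Id$ makes $T(NA,-) = T'$), the equation read off in the proof of that theorem is $\alpha^\dagger_A \circ y^{T'} = M y^{H'} \circ \alpha_{A, Y(H')} \circ T' \alpha^\dagger_A$. I would then identify the ingredients via Theorem~\ref{thm:dagger is omega-continuous}: by definition $\phi^T_A = y^{T(A,-)} = y^{T'}$ and $\phi^H_A = y^{H'}$, while $Y(H') = H^\dagger A$ and $T' \alpha^\dagger_A = T(A, \alpha^\dagger_A) = T(\id_A, \alpha^\dagger_A)$; and since $M = \Id$, the factor $M y^{H'}$ is just $\phi^H_A$. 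Substituting these turns the displayed equation into exactly $\alpha^\dagger_A \circ \phi^T_A = \phi^H_A \circ \alpha_{(A, H^\dagger A)} \circ T(\id_A, \alpha^\dagger_A)$, which is the componentwise identity recorded above.

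There is essentially no hard step here; the only thing demanding care is the bookkeeping that matches the component of the horizontal composite $\alpha \langle \Id, \alpha^\dagger \rangle$ with the product $\alpha_{A, Y(H')} \circ T' \alpha^\dagger_A$ coming from Theorem~\ref{thm:fixpoint-deconstruction}, together with the identification $y^{T(A,-)} = \phi^T_A$ furnished by the definition of $\phi$. The asserted reading ``$\phi^X$ is natural in $X$'' is then merely the observation that the verified identity is precisely the naturality square for $\phi$ with respect to the morphism $\alpha : T \naturalto H$.
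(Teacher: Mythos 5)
Your proof is correct and is exactly the route the paper intends: the corollary is stated as an immediate specialization of Theorem~\ref{thm:fixpoint-deconstruction} to $N = M = \Id$ (with $z = \id$), and your componentwise bookkeeping — identifying $\phi^T_A = y^{T(A,-)}$, $\phi^H_A = y^{H(A,-)}$, and $(\alpha\langle\Id,\alpha^\dagger\rangle)_A = \alpha_{(A,H^\dagger A)} \circ T(\id_A,\alpha^\dagger_A)$ — matches the equation read off at the end of that theorem's proof. No gaps.
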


\section{Categorical Model}\label{sec:categorical-model}

In this section we introduce our categorical model for LNL-FPC and describe its
categorical properties. A $\cpo$-LNL model is a $\cpo$-enriched model of
intuitionistic linear logic (also known as linear/non-linear model
\cite{benton-small,benton-wadler}) that has suitable $\omega$-colimits.

\subsection{$\cpo$-categories and Algebraic Compactness}\label{sub:cpo}

A \emph{cpo} (complete partial order) is a poset that has suprema of all
increasing chains. If, in addition, the cpo has a least element, then we say
it is \emph{pointed}.  A monotone function between two cpo's is
\emph{Scott continuous} if it preserves all suprema of increasing chains.
If, in addition, the two cpo's are pointed and the function preserves the least
element, then we say the function is \emph{strict}.

We denote with $\cpo$ the category of cpo's and Scott continuous functions
and with $\cpobs$ the category of pointed cpo's and strict Scott continuous
functions. Both categories are complete and cocomplete, $\cpo$ is
cartesian closed and $\cpobs$ is symmetric monoidal closed when equipped
with the smash product and the strict function space~\cite{abramskyjung:domaintheory}.

Our categorical model makes heavy use of \emph{enriched} category theory. In particular,
we use $\cpo$-enriched and $\cpobs$-enriched categories. We only provide
a brief introduction to $\cpo$-categories and we suggest the reader
consults~\cite[Chapter 2]{fiore-thesis} for a more detailed introduction.

A $\cpo$-category $\CC$ is a category $\CC$ whose homsets have the additional structure of a cpo and
composition of morphisms is a Scott continuous function (in both arguments).
A $\cpo$-functor $F:\CC \to \DD$ between
$\cpo$-categories is a functor whose action on hom-cpo's ${F_{X,Y}:\CC(X,Y)\to\DD(FX,FY)}$
is Scott continuous for each pair of objects $X,Y\in\CC$.
A $\cpo$-adjunction $\Phi : F \dashv G : \DD \to \CC$ is given by $\cpo$-categories $\CC$ and $\DD$,
$\cpo$-functors $F : \CC \to \DD$ and $G: \DD \to \CC,$ together with a natural isomorphism
${\Phi : \CC(-, G-) \cong \DD(F-, -) : \CC^\op \times \DD \to \cpo.}$ 
A $\cpo$-symmetric monoidal closed category $\CC(I, \otimes, \multimap)$ is a $\cpo$-category $\CC$ 
together with  $\cpo$-functors $\otimes: \CC \times \CC \to \CC$ and $\multimap: \CC^\op \times \CC \to \CC$ 
that form a symmetric monoidal closed
category in the usual sense (one can then conclude Currying is a $\cpo$-adjunction).
Similarly, $\cpobs$-enriched versions of these notions may be defined by
requiring the hom-cpo's to be pointed and Scott continuity to be strict.

\begin{defi}
In a $\cpo$-category $\CC$, a morphism $e: A\to B$ is called an
\emph{embedding} if there is a morphism $p:B\to A$, called a \emph{projection},
such that $p\circ e=\id_A$ and $e\circ p\leq \id_B$. 
The pair $(e,p)$ is said to form an \emph{embedding-projection (e-p) pair}.
\end{defi}

It is well-known that one component of an e-p pair uniquely determines the other, so we shall
write $e^\bullet$ to indicate the projection counterpart of an embedding $e$. 
Every isomorphism is an embedding and
embeddings are closed under composition with
$(e_1\circ e_2)^\bullet=e_2^\bullet\circ e_1^\bullet$,
which allows us to make the next definition.

\begin{defi}
	Given a $\cpo$-category $\CC$, its \emph{subcategory of embeddings}, denoted $\CC_e,$ is the full-on-objects subcategory of $\CC$ whose morphisms are exactly the embeddings of $\CC$.
\end{defi}

By duality, one can also define $\CC_p$ to be the subcategory of projections.
The notion of an e-p pair is of fundamental importance for establishing the famous limit-colimit coincidence theorem~\cite{smyth-plotkin:domain-equations}.
Before we can state it, we need to define a few additional notions.

\begin{defi}
In a $\cpo$-category $\CC$, an \emph{$\omega$-diagram over embeddings} is an $\omega$-diagram $D : \omega \to \CC$,
such that all $D(i \leq j)$ are embeddings. Given such a diagram, we define $D^\bullet : \omega^\op \to \CC$,
to be the $\omega^\op$-diagram given by $D^\bullet (n) \coloneqq D(n)$ and $D^\bullet (i \leq j) \coloneqq D(i \leq j)^\bullet.$
Given a cocone $\mu : D \to A$ in $\CC$, we define a cone $\mu^\bullet : A \to D^\bullet$ in $\CC$ by $(\mu^\bullet)_i \coloneqq (\mu_i)^\bullet$.
We shall say that $\CC$ has all
$\omega$-colimits over embeddings if every $\omega$-diagram over embeddings has a colimit in $\CC$.
\end{defi}

By combining several results from~\cite{smyth-plotkin:domain-equations}, one can show the next theorem.

\begin{thm}[Limit-colimit coincidence]\label{thm:locally-determined}
Let $\CC$ be a $\cpo$-category with all $\omega$-colimits over embeddings. Let $D: \omega \to \CC$ be an
$\omega$-diagram over embeddings and let $\mu: D \to A$ be a cocone of $D$. The following are equivalent:
\begin{enumerate}
  \item $\mu$ is a colimiting cocone of $D$ in $\CC$,
  \item $\mu$ is a colimiting cocone of $D$ in $\CC_e$,
  \item Each $\mu_i$ is an embedding, $(\mu_i \circ \mu_i^\bullet)_{i}$ is an increasing chain and $\bigvee_i \mu_i \circ \mu_i^\bullet = \id_A$,
  \item $\mu^\bullet$ is a limiting cone of $D^\bullet$ in $\CC$,
  \item $\mu^\bullet$ is a limiting cone of $D^\bullet$ in $\CC_p$. 
\end{enumerate}
\end{thm}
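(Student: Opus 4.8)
The plan is to treat condition~(3) as the central, purely \emph{local} (order-theoretic) description of the colimit and to show it is equivalent to each of the global conditions~(1) and~(2); the limit conditions~(4) and~(5) then follow formally by dualising. For the duality I would note that $\CC^\op$ is again a $\cpo$-category whose embeddings are exactly the projections of $\CC$ and conversely, that $D^\bullet$ is an $\omega$-diagram over embeddings in $\CC^\op$ with associated cocone $\mu^\bullet$, and that condition~(3) is invariant under this passage, since the composite $(\mu_i^\bullet)\circ(\mu_i^\bullet)^\bullet$ formed in $\CC^\op$ is $\mu_i\circ\mu_i^\bullet$ formed in $\CC$. Hence, once the equivalences $(1)\Leftrightarrow(3)\Leftrightarrow(2)$ are in place, applying them verbatim to $\mu^\bullet$ in $\CC^\op$ yields $(4)\Leftrightarrow(3)\Leftrightarrow(5)$.

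For the easy directions $(3)\Rightarrow(1)$ and $(3)\Rightarrow(2)$ I would write the mediating morphism explicitly. Given any cocone $\nu: D\to B$, set $h := \bigvee_i \nu_i\circ\mu_i^\bullet$. Using the cocone identities $\mu_i = \mu_{i+1}\circ D(i\le i+1)$ and $\mu_i^\bullet = D(i\le i+1)^\bullet\circ\mu_{i+1}^\bullet$ (from $(e_1 e_2)^\bullet = e_2^\bullet e_1^\bullet$) together with $e\circ e^\bullet\le\id$, this family is increasing, so $h$ is well defined; local continuity of composition and the identity $\mu_i^\bullet\circ\mu_j = D(j\le i)$ for $j\le i$ give $h\circ\mu_j=\nu_j$, and uniqueness follows by precomposing any competitor with $\id_A=\bigvee_i\mu_i\circ\mu_i^\bullet$. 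For $(3)\Rightarrow(2)$ I would additionally exhibit $g := \bigvee_i \mu_i\circ\nu_i^\bullet$ and check that $(g,h)$ is an e-p pair: expanding $g\circ h$ and $h\circ g$ as double suprema, the diagonal terms collapse to $\bigvee_i\mu_i\circ\mu_i^\bullet=\id_A$ and to $\bigvee_i\nu_i\circ\nu_i^\bullet\le\id_B$ respectively, while each off-diagonal term is dominated by a diagonal one (again via $e\circ e^\bullet\le\id$), so $h$ is an embedding and is the mediating morphism in $\CC_e$.

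For the converse $(1)\Rightarrow(3)$ I would construct the projections. For each $n$, the family $\lambda^{(n)}_k := D(k\le n)$ for $k\le n$ and $\lambda^{(n)}_k := D(n\le k)^\bullet$ for $k\ge n$ is a cocone $\lambda^{(n)}: D\to D_n$, so by~(1) it induces $\pi_n: A\to D_n$ with $\pi_n\circ\mu_k=\lambda^{(n)}_k$; in particular $\pi_n\circ\mu_n=\id_{D_n}$. Once I also know $\mu_n\circ\pi_n\le\id_A$, so that $\mu_n$ is an embedding with $\mu_n^\bullet=\pi_n$, the morphism $a := \bigvee_i\mu_i\circ\mu_i^\bullet$ is defined, satisfies $a\le\id_A$ and $a\circ\mu_j=\mu_j$ for all $j$; then $a$ and $\id_A$ are both mediating morphisms for $\mu$ into $A$, and colimit-uniqueness forces $a=\id_A$, which is~(3). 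For $(2)\Rightarrow(3)$ I would instead use the existence hypothesis: $\CC$ has some colimiting cocone $\mu'$ over the embeddings, which satisfies~(3) by the previous step and is therefore also colimiting in $\CC_e$ by $(3)\Rightarrow(2)$; thus $\mu$ and $\mu'$ are related by a unique invertible embedding $\theta$, and conjugation by $\theta$ transports the sup-identity of~(3) from $\mu'$ to $\mu$ (it preserves the order and suprema by local continuity, and $\theta^\bullet=\theta^{-1}$).

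The main obstacle is exactly the inequality $\mu_n\circ\pi_n\le\id_A$ — equivalently, the fact that an abstractly colimiting cocone in $\CC$ automatically consists of embeddings. Colimit-uniqueness only delivers equalities, so this step cannot be extracted from the bookkeeping above alone; it is precisely here that the $\cpo$-enrichment (Scott-continuity of composition in both arguments) is essential, and it is the part assembled from \cite{smyth-plotkin:domain-equations}. Concretely, I would verify that the $\pi_n$ form a cone over $D^\bullet$ and invoke the Smyth--Plotkin local-continuity argument establishing $\bigvee_n\mu_n\circ\pi_n=\id_A$, from which embedding-ness and the sup-identity of~(3) fall out together; everything else in the theorem is then formula manipulation or dualisation in $\CC^\op$.
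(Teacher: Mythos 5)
Your proposal is correct, and on the one implication the paper actually writes out it coincides with the paper's argument: for $(2)\Rightarrow(3)$ both you and the paper use the hypothesis that $\CC$ has all $\omega$-colimits over embeddings to produce a reference colimiting cocone of $D$ in $\CC$, observe that it satisfies (3) and (2) by the already-established implications, and transport the supremum identity along the resulting isomorphism of cocones. The remaining implications the paper simply cites (Theorem~2 of Smyth--Plotkin gives $(4)\Leftrightarrow(3)\Leftrightarrow(1)\Rightarrow(2)\Leftrightarrow(5)$), whereas you unfold most of that material explicitly and derive the limit statements by dualising into $\CC^\op$; both routes ultimately rest on the same local-continuity argument for the genuinely hard step, namely that a colimiting cocone in $\CC$ of a diagram over embeddings automatically consists of embeddings (and there your worry is slightly overstated: colimit-uniqueness applied to the endomorphism $\bigvee_n \mu_n\circ\pi_n$ of the colimit does yield $\bigvee_n \mu_n\circ\pi_n = \id_A$, from which $\mu_n\circ\pi_n\leq\id_A$ is immediate). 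One small wrinkle in the dualisation: the dual of $(2)\Rightarrow(3)$, i.e.\ $(5)\Rightarrow(3)$, needs a limiting cone of $D^\bullet$ to exist in $\CC$, which is not literally among the hypotheses but follows by applying $(1)\Rightarrow(4)$ to the colimiting cocone of $D$ that the hypothesis does provide; once you record that observation, your argument goes through.
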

\begin{proof}
  In Appendix~\ref{proof:locally-determined}.
\end{proof}


\begin{rem}\label{rem:cpo-bad}
This theorem applies to both categories $\cpo$ and $\cpobs$. However, when
constructing an initial algebra for an endofunctor $T : \cpo \to \cpo$, the
theorem is not useful, because the initial map $\iota: \varnothing \to
T\varnothing$ is not an embedding (unless $T\varnothing = \varnothing$ ) and therefore the initial sequence
of $T$ is not necessarily computed over embeddings.
Also, observe that $\cpo_e$ has $\omega$-colimits, but has no initial object and thus it is not a good setting for constructing initial algebras.
\end{rem}

On the other hand, this theorem
is very useful for constructing initial (and final) (co)algebras for
many categories like $\cpobs$ that have some additional structure, which
we explain next.

\begin{defi}[{\cite[Definition 7.1.1]{fiore-thesis}}]
In a $\cpo$-category, an \emph{e-initial} object is an initial object such
that every morphism with it as source is an embedding. The dual notion is
called a \emph{p-terminal} object. An object that is both e-initial and
p-terminal is called an \emph{ep-zero} object.
\end{defi}

The category $\cpo$ has an initial object given by $\varnothing$, but it does not
have an e-initial object. The category $\cpobs$ has an ep-zero object that is
the one-element cpo $\{ \perp \}$.


An endofunctor $T: \CC \to \CC$ is \emph{algebraically compact} if $T$ has an initial $T$-algebra $(\Omega, \tau)$,
such that $(\Omega, \tau^{-1})$ is a final $T$-coalgebra. In this situation we say $(\Omega, \tau)$ is a \emph{free} $T$-algebra.
A $\cpo$-category $\CC$ is \emph{$\cpo$-algebraically compact} if every $\cpo$-endofunctor $T : \CC \to \CC$ has a free $T$-algebra.
$\cpo$-algebraic compactness also implies the existence of \emph{parameterised free algebras} \cite{fiore-thesis}.

\begin{thm}[{\cite[Corollary 7.2.4]{fiore-thesis}}]\label{thm:alg-compact}
A $\cpo$-category with ep-zero and colimits of $\omega$-chains over embeddings is $\cpo$-algebraically compact.
\end{thm}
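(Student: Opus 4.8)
The plan is to fix an arbitrary $\cpo$-endofunctor $T : \CC \to \CC$ and exhibit a \emph{free} $T$-algebra by constructing its initial sequence over embeddings and invoking the limit-colimit coincidence (Theorem~\ref{thm:locally-determined}). First I would record two preliminary facts about $\cpo$-functors. (a) Any $\cpo$-functor preserves embeddings: if $(e, e^\bullet)$ is an e-p pair then $Te \circ Te^\bullet = T(e \circ e^\bullet) \leq T(\id) = \id$ and $Te^\bullet \circ Te = T(\id) = \id$ by functoriality and monotonicity, so $(Te, Te^\bullet)$ is again an e-p pair with $(Te)^\bullet = T(e^\bullet)$. (b) Since the ep-zero $0$ is e-initial, the unique initial map $\iota : 0 \to T0$ is an embedding. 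Combining (a) and (b), the whole initial sequence $0 \xrightarrow{\iota} T0 \xrightarrow{T\iota} T^2 0 \to \cdots$ is an $\omega$-diagram over embeddings, so by hypothesis it has a colimiting cocone $\mu : S(T) \to \Omega$.

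The main step is to show that $T$ \emph{preserves} this particular colimit. Here I would use the locally-determined characterisation, condition~(3) of Theorem~\ref{thm:locally-determined}: each $\mu_i$ is an embedding, the morphisms $\mu_i \circ \mu_i^\bullet$ form an increasing chain, and $\bigvee_i \mu_i \circ \mu_i^\bullet = \id_\Omega$. Applying $T$ and using fact~(a) together with Scott continuity of $T$ on the hom-cpo $\CC(\Omega,\Omega)$, I obtain $\bigvee_i (T\mu_i) \circ (T\mu_i)^\bullet = T\big(\bigvee_i \mu_i \circ \mu_i^\bullet\big) = T(\id_\Omega) = \id_{T\Omega}$, with each $T\mu_i$ an embedding; by the converse implication of the same theorem, $T\mu$ is a colimiting cocone of $T \circ S(T)$. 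Since $T \circ S(T)$ is the one-step shift (cofinal tail) of $S(T)$, its colimit is again $\Omega$, and comparing the two colimits yields an isomorphism $\tau : T\Omega \to \Omega$. Because $T$ preserves the colimit of its own initial chain, the Adámek fixed-point construction (cf.~\secref{sub:initial} and~\cite{adamek-original}) shows $(\Omega, \tau)$ is the initial $T$-algebra.

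Finally I would dualise to obtain finality of the coalgebra. Since $0$ is also p-terminal, it is in fact a zero object, and the unique projection $\iota^\bullet : T0 \to 0$ is forced by terminality to coincide with the terminal map; applying $T^n$ and fact~(a) gives $(T^n\iota)^\bullet = T^n(\iota^\bullet)$, so the dual diagram $S(T)^\bullet : \omega^\op \to \CC$ is \emph{precisely} the terminal sequence of $T$. By the limit-colimit coincidence, $\mu^\bullet$ is a limiting cone of $S(T)^\bullet$, and by the dual of the preservation argument above $T$ preserves this limit. The dual Adámek argument then shows $(\Omega, \tau^{-1})$ is the final $T$-coalgebra. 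Hence $(\Omega, \tau)$ is a free $T$-algebra, and as $T$ was arbitrary, $\CC$ is $\cpo$-algebraically compact.

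The main obstacle is the middle step: verifying that $T$ genuinely preserves the colimit over embeddings. A bare $\cpo$-endofunctor need not preserve arbitrary $\omega$-colimits, so the restriction to embeddings is essential; the trick is to re-express the colimit through the idempotent identity $\bigvee_i \mu_i \circ \mu_i^\bullet = \id_\Omega$, which is exactly the form to which Scott continuity of $T$ applies. The remaining care is bookkeeping: checking that the ep-zero assumption simultaneously makes the initial chain start at an e-initial object (so it lies over embeddings) and makes its dual the genuine terminal sequence (so the final-coalgebra conclusion is legitimate).
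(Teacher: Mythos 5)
Your argument is correct and is essentially the proof behind the cited result (Fiore's Corollary 7.2.4, building on Smyth--Plotkin): run the initial sequence over embeddings from the e-initial object, use the $\bigvee_i \mu_i\circ\mu_i^\bullet=\id$ characterisation plus local continuity of $T$ to see that $T$ preserves the colimit, apply Ad\'amek, and dualise via the limit-colimit coincidence to get finality of $\tau^{-1}$. The only elided point is the routine telescoping computation showing that the limit comparison map $\bigvee_i T\mu_i\circ\mu_{i+1}^\bullet$ is literally the inverse of $\tau=\bigvee_i\mu_{i+1}\circ(T\mu_i)^\bullet$, which is what makes the final coalgebra structure equal to $\tau^{-1}$ rather than merely isomorphic to it.
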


Therefore, categories that enjoy this structure are an ideal setting for solving recursive domain equations.
In particular, algebraic compactness is crucial for interpreting isorecursive types involving $\multimap$.
We will see
that the linear category of our model is indeed such a category and that we can very easily interpret (arbitrary) types within it.
Non-linear types, however, need to admit an additional non-linear interpretation within a cartesian category that does not
enjoy such a strong property. To resolve this problem, our type interpretations
will be based on the following theorem, which combines results of
\cite{smyth-plotkin:domain-equations}.

\begin{thm}
\label{thm:e-p-trick}
  Let $\AAA, \BBB$ and $\CC$ be $\cpo$-categories where $\AAA$ and $\BB$ have 
  $\omega$-colimits over embeddings. If ${T: \AAA^\op \times \BBB \to \CC}$
  is a $\cpo$-functor, then the \emph{covariant} functor 
  \[ T_e : \AAA_e \times \BBB_e \to \CCe	\quad \text{given by} \quad T_e(A,B) = T(A,B) \quad \text{and} \quad T_e(e_1, e_2) = T((e_1^\bullet)^\op, e_2) \]
	is an $\omega$-functor. 
\end{thm}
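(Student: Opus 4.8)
The plan is to first verify that $T_e$ is a genuine functor landing in $\CCe$, and then to establish preservation of existing $\omega$-colimits by transporting the limit-colimit coincidence (Theorem~\ref{thm:locally-determined}) from the well-behaved categories $\AAA$ and $\BBB$ through $T$. The whole argument rests on the characterisation of colimiting cocones via condition (3) of that theorem.

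First I would show $T_e$ sends embeddings to embeddings. Given embeddings $e_1 \in \AAA_e$, $e_2 \in \BBB_e$ with projections $e_1^\bullet, e_2^\bullet$, I claim $\left( T((e_1^\bullet)^\op, e_2),\ T(e_1^\op, e_2^\bullet) \right)$ is an e-p pair, so that $T_e(e_1,e_2)^\bullet = T(e_1^\op, e_2^\bullet)$. Using functoriality of $T$ on $\AAA^\op \times \BBB$ (where composition in the first factor is reversed), the composite $p \circ e$ reduces to $T((e_1^\bullet \circ e_1)^\op, e_2^\bullet \circ e_2) = T(\id^\op,\id) = \id$, while $e \circ p$ reduces to $T((e_1 \circ e_1^\bullet)^\op, e_2 \circ e_2^\bullet)$; since $e_1 \circ e_1^\bullet \leq \id$ and $e_2 \circ e_2^\bullet \leq \id$, and the hom-cpo order on $\AAA^\op$ agrees with that on $\AAA$, local monotonicity of the $\cpo$-functor $T$ yields $e\circ p \leq T(\id^\op,\id) = \id$. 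Functoriality of $T_e$ follows the same way from $(e_1'\circ e_1)^\bullet = e_1^\bullet \circ e_1'^\bullet$. Thus $T_e \colon \AAA_e \times \BBB_e \to \CCe$ is a functor, and it carries $\omega$-diagrams over embeddings to $\omega$-diagrams over embeddings.

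Next I would take an $\omega$-diagram $D = \langle D_1, D_2\rangle \colon \omega \to \AAA_e \times \BBB_e$ possessing a colimit and show $T_e$ preserves it. Colimits in a product category are computed componentwise, so a colimiting cocone $\mu$ splits as $\langle \mu^1, \mu^2\rangle$ with $\mu^1$ colimiting for $D_1$ in $\AAA_e$ and $\mu^2$ colimiting for $D_2$ in $\BBB_e$. Since $\AAA$ and $\BBB$ have all $\omega$-colimits over embeddings, Theorem~\ref{thm:locally-determined} applies and gives $\bigvee_i \mu^1_i \circ (\mu^1_i)^\bullet = \id_A$ and $\bigvee_i \mu^2_i \circ (\mu^2_i)^\bullet = \id_B$, both suprema of increasing chains. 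Computing the self-composite of each embedding $T_e(\mu_i)$ with its projection gives
\[ T_e(\mu_i) \circ T_e(\mu_i)^\bullet \;=\; T\!\left( (\mu^1_i \circ (\mu^1_i)^\bullet)^\op,\ \mu^2_i \circ (\mu^2_i)^\bullet \right). \]
The arguments $\big( (\mu^1_i\circ(\mu^1_i)^\bullet)^\op,\ \mu^2_i\circ(\mu^2_i)^\bullet \big)$ form an increasing chain in the product hom-cpo with supremum $(\id_A^\op, \id_B)$. As $T$ is a $\cpo$-functor, its action on the product hom-cpo is Scott continuous (hence jointly continuous), so $\bigvee_i T_e(\mu_i) \circ T_e(\mu_i)^\bullet = T(\id_A,\id_B) = \id_{T(A,B)}$, again along an increasing chain. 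This is precisely condition (3) for the cocone $T_e\mu$ over the $\omega$-diagram-over-embeddings $T_e D$, which forces $T_e\mu$ to be a colimiting cocone; hence $T_e$ preserves the colimit and is an $\omega$-functor.

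The main obstacle I anticipate is the final appeal to condition (3): Theorem~\ref{thm:locally-determined} is stated for a category $\CC$ having all $\omega$-colimits over embeddings, whereas here $\CC$ is only a $\cpo$-category, so I cannot invoke the full equivalence for the ambient $\CC$. I would therefore isolate and use only the direction (3) $\Rightarrow$ (colimiting), which holds in any $\cpo$-category: for an arbitrary competing cocone $\rho$ over $T_e D$, the mediating morphism is the hom-cpo supremum $\bigvee_i \rho_i \circ (T_e\mu)_i^\bullet$ of an increasing chain, and the factorisation and uniqueness identities follow from $(T_e\mu)_i^\bullet \circ (T_e\mu)_i = \id$ together with $\bigvee_i (T_e\mu)_i \circ (T_e\mu)_i^\bullet = \id$. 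A secondary point of care is the bookkeeping for $\AAA^\op$ (matching the reversed composition and verifying that the order on its hom-cpos coincides with that of $\AAA$), which is what makes the monotonicity and continuity steps for $T$ go through.
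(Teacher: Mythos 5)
Your proof is correct, but it takes a genuinely different route from the paper: the paper's entire proof of this theorem is a one-line appeal to \cite[Theorem 3]{smyth-plotkin:domain-equations}, observing only that $\AAA$ and $\BBB$ have locally determined $\omega$-colimits of embeddings by Theorem~\ref{thm:locally-determined}, whereas you reconstruct the content of that cited result from first principles. Your reconstruction is the standard one and it is sound: the e-p pair computation showing $T_e(e_1,e_2)^\bullet = T(e_1^\op, e_2^\bullet)$ is right (including the point that the hom-cpo order on $\AAA^\op$ agrees with that on $\AAA$, so local monotonicity of $T$ gives $T_e(e_1,e_2)\circ T_e(e_1,e_2)^\bullet \leq \id$), and the passage from componentwise condition (3) in $\AAA$ and $\BBB$ to condition (3) for $T_e\mu$ via joint Scott continuity of $T$ on product hom-cpos is exactly the mechanism that makes the cited theorem work. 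You also correctly isolate the one real subtlety — that Theorem~\ref{thm:locally-determined} cannot be invoked for the target $\CC$, which is not assumed to have $\omega$-colimits over embeddings — and your fix (use only the implication from condition (3) to ``colimiting'', which is valid in any $\cpo$-category via the mediating morphism $\bigvee_i \rho_i \circ (T_e\mu)_i^\bullet$) is precisely what is needed. The only detail you leave implicit is that when the competing cocone $\rho$ lives in $\CC_e$, the mediating morphism is itself an embedding, with projection $\bigvee_i (T_e\mu)_i \circ \rho_i^\bullet$; this is what upgrades ``$T_e\mu$ is colimiting in $\CC$'' to ``$T_e\mu$ is colimiting in $\CC_e$'', which is what the statement of the theorem requires, and it follows by the same supremum computations you already carry out. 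The trade-off between the two approaches is the usual one: the paper's citation is shorter and defers to a classical reference, while your argument is self-contained and makes visible exactly which hypotheses on $\AAA$, $\BBB$ and $\CC$ are used where.
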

\begin{proof}
This follows from \cite[Theorem 3]{smyth-plotkin:domain-equations}, because both categories $\AAA$ and $\BB$ have locally determined $\omega$-colimits of embeddings thanks to Theorem~\ref{thm:locally-determined}.
\end{proof}

Therefore, this theorem allows us to consider mixed-variance $\cpo$-functors on $\cpo$-categories as covariant ones on subcategories of embeddings.
By trivialising the category $\AAA$, we get: 
\begin{cor}\label{cor:cpo functors restrict to omega continuous functors}
Let $T:\BBB \to \CC$ be a $\cpo$-functor between $\cpo$-categories $\BBB$ and $\CC$ where $\BB$ has $\omega$-colimits over embeddings. Then $T$ restricts to an $\omega$-functor $T_e:\BBB_e\to\CC_e$.
\end{cor}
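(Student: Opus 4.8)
The plan is to derive this as the degenerate instance of Theorem~\ref{thm:e-p-trick} flagged by the sentence immediately preceding the statement, namely by trivialising the first argument. Concretely, I would take $\AAA = \mathbf{1}$ to be the terminal $\cpo$-category: it has a single object $\ast$, its unique hom-cpo is the one-point cpo, and its only morphism is $\id_\ast$. This category trivially satisfies the hypothesis of Theorem~\ref{thm:e-p-trick} that $\AAA$ has $\omega$-colimits over embeddings, since every $\omega$-diagram in $\mathbf{1}$ is constant at $\ast$ and admits the evident colimiting cocone; the hypothesis on $\BBB$ is exactly the hypothesis of the corollary, and no condition is imposed on $\CC$.

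Next, I would invoke the canonical isomorphism of $\cpo$-categories $\mathbf{1}^\op \times \BBB \cong \BBB$ given by the projection. Under this isomorphism the given $\cpo$-functor $T : \BBB \to \CC$ corresponds to a $\cpo$-functor $\mathbf{1}^\op \times \BBB \to \CC$ acting by $(\ast, B) \mapsto TB$ on objects and $(\id_\ast, g) \mapsto Tg$ on morphisms. Applying Theorem~\ref{thm:e-p-trick} to this functor then yields a covariant $\omega$-functor $T_e : \mathbf{1}_e \times \BBB_e \to \CC_e$.

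Finally, I would unwind the formula produced by the theorem. Since the only morphism of $\mathbf{1}$ is $\id_\ast$, which is an embedding, we have $\mathbf{1}_e = \mathbf{1}$, and $(\id_\ast^\bullet)^\op = \id_\ast$; hence the assignment $T_e(e_1, e_2) = T((e_1^\bullet)^\op, e_2)$ reduces, under the isomorphism $\mathbf{1}_e \times \BBB_e \cong \BBB_e$, to $T_e(e_2) = T(e_2)$. Thus $T_e$ is exactly the restriction of $T$ to the subcategory of embeddings, landing in $\CC_e$, and it is an $\omega$-functor because $\omega$-functoriality transfers across the isomorphism $\mathbf{1}_e \times \BBB_e \cong \BBB_e$ (an isomorphism of categories preserves and reflects $\omega$-colimits).

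I do not expect any genuine obstacle here: the entire substance of the result already lives in Theorem~\ref{thm:e-p-trick}, and the only points that need care are checking that the trivial category legitimately meets the theorem's hypotheses and that the indexing isomorphism correctly identifies the resulting one-argument functor with the restriction of $T$ to embeddings.
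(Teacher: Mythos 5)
Your proposal is correct and follows exactly the paper's own route: the paper derives this corollary from Theorem~\ref{thm:e-p-trick} precisely ``by trivialising the category $\AAA$'', which is what you do by taking $\AAA = \mathbf{1}$ and unwinding the formula $T_e(e_1,e_2) = T((e_1^\bullet)^\op, e_2)$. The extra care you take in verifying that $\mathbf{1}$ satisfies the hypotheses and that the indexing isomorphism identifies $T_e$ with the restriction of $T$ is exactly the (routine) content the paper leaves implicit.
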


\subsection{Models of Intuitionistic Linear Logic}\label{sub:lnl}

Our type system has both linear and non-linear features, so naturally, it
should be interpreted within a model of Intuitionistic Linear Logic, also known as a linear/non-linear (LNL) model
(see~\cite{benton-wadler,benton-small} for more details).

\begin{defi}\label{def:lnl}
A \emph{model of Intuitionistic Linear Logic} is given by the following data:
a cartesian monoidal category with finite coproducts $(\CC, \vartimes, 1, \amalg, \varnothing)$;
a symmetric monoidal closed category with finite coproducts $(\LL, \otimes, \multimap, I, +, 0)$;
and a symmetric monoidal adjunction $\stikz{ill-model.tikz}$.
We also adopt the following notation. 
The comonad-endofunctor is $! \coloneqq F \circ G : \LL \to \LL$;
the unit and counit of the adjunction are $\eta : \text{Id} \naturalto G \circ F : \CC \to \CC $ and $\epsilon :\ !  \naturalto \text{Id} : \LL \to \LL,$ respectively.
\end{defi}

Moreover, in this situation, the left-adjoint $F$ is both cocontinuous and strong symmetric monoidal. Thus, there exist isomorphisms
$ I \xrightarrow u F(1) $ \text{and} $ 0 \xrightarrow z F(\varnothing) $ in $\LL$,
together with natural isomorphisms
\[
 \otimes \circ (F \times F) \xRightarrow{m\ } F \circ \vartimes : \CC \times \CC \to \LL
\qquad \text{and} \qquad
+ \circ\ (F \times F) \xRightarrow{c\ } F \circ \amalg : \CC \times \CC \to \LL
\]
that satisfy some coherence conditions, which we omit here. Observe, that the
structure of the natural isomorphisms $m$ and $c$ is very similar. In fact, most of our
propositions involving these natural isomorphisms can be proven in a uniform
way, so we introduce some notation for brevity.

\begin{nota}\label{not:odot}
We write $\odot$ for either the
$\otimes$ or $+$ bifunctor; $\boxdot$ for the bifunctor $\vartimes$ or $\amalg$
respectively; $\beta$ for the natural isomorphisms $m$ or $c$, respectively.
Then, we have a natural isomorphism:
\[ 
\odot \circ (F \times F) \xRightarrow{\beta\ } F \circ \boxdot :  \CC \times \CC \to \LL.
\]
\end{nota}

\subsection{Models of LNL-FPC}\label{sub:lnl-fpc}

Before we may define our categorical model, we first introduce the notion of a \emph{pre-embedding}, which is novel
(to the best of our knowledge). Pre-embeddings are of fundamental importance for defining
the non-linear interpretations of non-linear types.

\begin{defi}
Let $T:\AAA\to\BB$ be a $\cpo$-functor between $\cpo$-categories $\AAA$ and $\BB$. Then a morphism $f$ in $\AAA$ is called a \emph{pre-embedding with respect to }$T$ if $T(f)$ is an embedding in $\BB$.
\end{defi}

We can now describe our categorical model by combining results from the previous two subsections. 
Our categorical model is a model of
Intuitionistic Linear Logic with some additional structure for solving
recursive domain equations. Because of that, we shall use the same notation for
it as introduced in the previous subsection.

\begin{defi}\label{def:model}
A \emph{$\cpo$-LNL model} is given by the following data:
\begin{enumerate}
\item[1.] A $\cpo$-symmetric monoidal closed category $(\LL, \otimes, \multimap, I)$ with finite $\cpo$-coproducts $(\LL, +, 0)$;
\item[2.] A $\cpo$-cartesian monoidal category $(\CC, \vartimes, 1)$ with finite $\cpo$-coproducts $(\CC, \amalg, \varnothing)$;
\item[3.] A $\cpo$-symmetric monoidal adjunction $\stikz{lnl-fpc-model.tikz}$, such that:
\item[4.] $\LL$ has an e-initial object $0$ and all $\omega$-colimits over embeddings; $\CC$ has all $\omega$-colimits over pre-embeddings, w.r.t $F$, and the product functor $(- \vartimes -) : \CC \times \CC \to \CC$ preserves them.
\end{enumerate}
\end{defi}

\begin{asm}
Throughout the rest of the paper we assume we are working with an arbitrary, but fixed, $\cpo$-LNL model as in Definition~\ref{def:model}.
\end{asm}

While not immediately obvious, $\LL$ is $\cpo$-algebraically compact. We reason as follows.

\begin{thm}\label{thm:model-compact}
In every $\cpo$-LNL model:
\begin{enumerate}
\item The initial object $0\in\LL$ is an ep-zero object;
\item Each zero morphism $\perp_{A,B}$ is least in $\LL(A,B)$;
\item $\LL$ is $\cpobs$-enriched;
\item $\LL$ is $\cpo$-algebraically compact.
\end{enumerate}
\end{thm}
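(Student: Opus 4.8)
The plan is to establish the four claims in the order given, since each later part builds on the earlier ones, and to note at the outset that the fourth claim is essentially free: once claim (1) gives that $0$ is an ep-zero object, Theorem~\ref{thm:alg-compact} applies directly, because the model assumptions already guarantee that $\LL$ has all $\omega$-colimits over embeddings. Thus the real work lies in claims (1)--(3), and the genuine obstacle is showing that the \emph{e-initial} object $0$ is also \emph{p-terminal}; the remaining items are short computations with the embedding--projection inequalities.

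For claim (1), e-initiality is assumed, so it remains to prove $0$ is p-terminal. The key idea is to first manufacture a terminal object from the symmetric monoidal closed structure. Since $-\otimes B$ is a left adjoint (to $B \multimap -$) it preserves the initial object, so $0 \otimes B \cong 0$ for every $B$; hence the closedness isomorphism gives $\LL(B, 0 \multimap 0) \cong \LL(B \otimes 0, 0) \cong \LL(0,0)$, which is a singleton. Therefore $\top := 0 \multimap 0$ is terminal. I would then use e-initiality to compare $0$ and $\top$: the initial map $\iota_\top : 0 \to \top$ is an embedding, with projection $\iota_\top^\bullet : \top \to 0$ satisfying $\iota_\top^\bullet \circ \iota_\top = \id_0$ and $\iota_\top \circ \iota_\top^\bullet \le \id_\top$. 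Because $\top$ is terminal, $\LL(\top,\top)$ is a singleton, forcing $\iota_\top \circ \iota_\top^\bullet = \id_\top$; hence $\iota_\top$ is an isomorphism and $0 \cong \top$ is terminal. Finally, for each $A$ the unique morphism $A \to 0$ must coincide with the projection $\iota_A^\bullet$ of the embedding $\iota_A : 0 \to A$, so every morphism into $0$ is a projection and $0$ is p-terminal, i.e. an ep-zero object.

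For claim (2), I would define the zero morphism as $\perp_{A,B} := \iota_B \circ \iota_A^\bullet$ and show directly that it is least. Given any $g : A \to B$, initiality of $0$ forces $\iota_B = g \circ \iota_A$ (both are maps out of $0$), so $\perp_{A,B} = g \circ \iota_A \circ \iota_A^\bullet \le g \circ \id_A = g$, using the embedding inequality $\iota_A \circ \iota_A^\bullet \le \id_A$ and monotonicity of composition. This proves $\perp_{A,B}$ is the bottom of $\LL(A,B)$. Claim (3) then follows quickly: claim (2) makes every hom-cpo pointed, and it remains to check that composition is strict. Here I would use the two coincidences: $\iota_B^\bullet \circ g = \iota_A^\bullet$ since $0$ is terminal, giving $\perp_{B,C} \circ g = \perp_{A,C}$; and $f \circ \iota_B = \iota_C$ since $0$ is initial, giving $f \circ \perp_{A,B} = \perp_{A,C}$. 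Thus composition preserves bottoms in each argument, and combined with the assumed Scott continuity this yields $\cpobs$-enrichment. Claim (4) is then immediate from claim (1) and Theorem~\ref{thm:alg-compact}.

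I expect the only subtle point to be claim (1): the trick of producing a terminal object via $0 \multimap 0$ and then collapsing it onto $0$ using e-initiality together with the triviality of the terminal endo-hom. Everything else reduces to the defining inequality of an e-p pair and the universal properties of the initial/terminal object $0$.
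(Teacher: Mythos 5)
Your proof is correct, and parts (2)--(4) essentially coincide with the paper's (the paper also characterises $\perp_{A,B}$ as $e_B\circ e_A^\bullet$ and derives leastness from the e-p inequality plus monotonicity of composition, and (4) is the same one-line appeal to Theorem~\ref{thm:alg-compact}). Where you genuinely diverge is part (1), the p-terminality of $0$. The paper takes the route of B{\o}rner's ``weakly pointed'' categories: it defines global elements $\perp_A \coloneqq e_A\circ e_I^\bullet : I\to A$, checks $h\circ\perp_A=\perp_B$, uncurries $\perp_{A\multimap B}$ to obtain $\perp_{A,B}$, and then cites external results (Propositions 11--12 of \cite{brauner} and Lemma 4.8 of \cite{eclnl}) to conclude that $0$ is a zero object. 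You instead observe that $0\multimap 0$ is terminal --- since $-\otimes B$ is a left adjoint, $B\otimes 0\cong 0$, so $\LL(B,0\multimap 0)\cong\LL(0,0)$ is a singleton --- and then collapse $0\cong 0\multimap 0$ by noting that the e-p pair $(\iota_\top,\iota_\top^\bullet)$ of the initial map must be an isomorphism pair because $\LL(\top,\top)$ is a singleton; p-terminality then follows since the unique map $A\to 0$ is the projection $\iota_A^\bullet$. Both arguments use the monoidal closure of $\LL$ in an essential way, but yours is more self-contained (no appeal to the weakly-pointed machinery or to results proved elsewhere) and isolates the single fact that does the work, namely that an SMCC with an initial object automatically has the terminal object $0\multimap 0$. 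The paper's route has the side benefit of exhibiting the zero morphisms explicitly as uncurried points, which it reuses when identifying $\perp_{A,B}=e_B\circ e_A^\bullet$, but you recover that identification anyway once $0$ is a zero object.
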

\begin{proof}
(1) For every object $A\in\LL$, define a morphism ${\perp_A \coloneqq I \xrightarrow{e_I^\bullet} 0 \xrightarrow{e_A} A,}$
where $e_X : 0 \to X$ is unique and $e_X^\bullet: X \to 0$ is its projection counterpart.
Then, for every $h: A \to B$, we have $h\ \circ \perp_A = \perp_B$.
This means the category $\LL$ is \emph{weakly pointed} in the sense of~\cite[Definition 9]{brauner}. One can now define
morphisms
\[ \perp_{A,B} \coloneqq A \xrightarrow{\lambda_A^{-1}} I \otimes A \xrightarrow{\textbf{uncurry}(\perp_{A \multimap B})} B \]
and prove that:
\begin{align*}
g\ \circ \perp_{A,B} &= \perp_{A,C} & \text{for } g: B \to C && \text { \cite[Proposition 11]{brauner} } \\
\perp_{B,C} \circ\ f &= \perp_{A,C} & \text{for } f: A \to B && \text { \cite[Proposition 12]{brauner} }
\end{align*}
In the presence of an initial object, this then implies that $0$ is a zero object; $\perp_A = \perp_{I,A}$
and that the morphisms $\perp_{A,B}$ are zero morphisms~\cite[Lemma 4.8]{eclnl}. Therefore
$0$ is an ep-zero, because $(e_X, e_X^\bullet)$ forms an e-p pair for every object $X$
and also ${\perp_{A,B} = A \xrightarrow{e_A^\bullet} 0 \xrightarrow{e_B} B}$.

(2) For any $f: A \to B$, we have $ \perp_{A,B} = \perp_{B,B} \circ f = e_B \circ e_B^\bullet \circ f \leq \id_B \circ f = f. $

(3) is now immediate from (2) and (4) follows by Theorem~\ref{thm:alg-compact}.
\end{proof}

Every embedding in $\CC$ is a pre-embedding (because $\cpo$-functors preserve embeddings) and pre-embeddings are closed
under composition, so one can form the full-on-objects subcategory of $\CC$ of
all pre-embeddings. We define such a subcategory within our model.

\begin{defi}
Let $\PE$ be the full-on-objects subcategory of $\CC$ of pre-embeddings with respect to the functor $F$.
\end{defi}

The importance of this subcategory is that it allows us to \emph{reflect}
solutions of certain recursive domain equations from the category $\LL_e$ into
the category $\PE$.  As we have previously explained, the construction of
initial algebras of endofunctors on $\CC$\  is not done over embeddings, in
general (e.g. if $\CC = \cpo$), because the initial map $\iota : \varnothing \to X$ usually is not an embedding. Observe, however, that $\iota:\varnothing \to X$ always is a pre-embedding with respect to $F$.
In fact, we shall see that the initial algebras for \emph{all} non-linear type
interpretations are constructed over initial sequences consisting of
pre-embeddings.

We proceed with a simple lemma.

\begin{lem}\label{lem:creating colimits}
	Let $\AAA$ be a $\cpo$-category that has $\omega$-colimits over embeddings. Then $\AAA_e$ has all $\omega$-colimits, and the inclusion functor $\AAA_e\hookrightarrow\AAA$ is an $\omega$-functor that reflects $\omega$-colimits.
\end{lem}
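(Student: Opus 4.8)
The plan is to reduce every clause of the lemma to the limit-colimit coincidence theorem (Theorem~\ref{thm:locally-determined}), applied with $\CC = \AAA$. The first thing I would record is a simple dictionary: because $\AAA_e$ is the full-on-objects subcategory of embeddings, an $\omega$-diagram $D : \omega \to \AAA_e$ is precisely the same data as an $\omega$-diagram over embeddings in $\AAA$, and a cocone $\mu : D \to A$ in $\AAA_e$ is precisely a cocone of $D$ in $\AAA$ all of whose legs $\mu_i$ are embeddings. With this translation in place, each assertion becomes an instance of one of the equivalences $(1)$--$(3)$ of Theorem~\ref{thm:locally-determined}.

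For existence of $\omega$-colimits in $\AAA_e$, I would take an arbitrary $\omega$-diagram $D$ in $\AAA_e$, regard it as an $\omega$-diagram over embeddings in $\AAA$, and use the hypothesis that $\AAA$ has all $\omega$-colimits over embeddings to obtain a colimiting cocone $\mu : D \to A$ in $\AAA$. The implication $(1) \Rightarrow (3)$ then guarantees that each leg $\mu_i$ is an embedding, so $\mu$ is in fact a cocone in $\AAA_e$, and the implication $(1) \Rightarrow (2)$ shows it is colimiting there. Hence $\AAA_e$ has all $\omega$-colimits, computed exactly as in $\AAA$.

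That final observation immediately yields that the inclusion $\iota : \AAA_e \hookrightarrow \AAA$ is an $\omega$-functor: given any $\omega$-diagram $D$ in $\AAA_e$ together with a colimiting cocone $\mu$ in $\AAA_e$, the implication $(2) \Rightarrow (1)$ says $\iota\mu = \mu$ is a colimiting cocone of $\iota D = D$ in $\AAA$, so $\iota$ preserves all existing $\omega$-colimits. For reflection I would run the same equivalence the other way: if $\mu$ is a cocone of $D$ in $\AAA_e$ (so its legs are embeddings by construction) whose image $\iota\mu$ is colimiting in $\AAA$, then the implication $(1) \Rightarrow (2)$ gives that $\mu$ is colimiting in $\AAA_e$, which is exactly the reflection property.

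Since every step is a direct citation of Theorem~\ref{thm:locally-determined}, I do not anticipate a genuine obstacle. The only point requiring care is the bookkeeping that cocones in $\AAA_e$ coincide with cocones-with-embedding-legs in $\AAA$, since this is precisely what makes the ``over embeddings'' hypotheses of that theorem applicable; once this is noted, the three statements follow by selecting the appropriate implications among $(1)$, $(2)$ and $(3)$.
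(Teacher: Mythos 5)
Your proposal is correct and follows essentially the same route as the paper, which simply cites the limit-colimit coincidence theorem (Theorem~\ref{thm:locally-determined}); you have merely spelled out which implications among (1)--(3) justify each clause. No gaps.
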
	
\begin{proof}
	This follows immediately by Theorem~\ref{thm:locally-determined} (1) -- (2).
\end{proof}	

Our next theorem shows $\PE$ is a suitable setting for constructing initial algebras.

\begin{thm}\label{thm:continuous embeddings}
In every $\cpo$-LNL model:
	\begin{enumerate}
	\item[(1)] $\LL_e$ is an $\omega$-category, and the subcategory inclusion $\LL_e\hookrightarrow\LL$ is a strict $\omega$-functor that also reflects $\omega$-colimits.
	\item[(2)] $\PE$ is an $\omega$-category and the subcategory inclusion $\PE \hookrightarrow \CC$ is a strict $\omega$-functor that also reflects $\omega$-colimits.
	\item[(3)] The subcategory inclusion $\CC_e \hookrightarrow \PE$ preserves and reflects $\omega$-colimits ($\CC_e$ does not necessarily have an initial object).
\end{enumerate}
\end{thm}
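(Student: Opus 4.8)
The plan is to prove the three parts in order, using Lemma~\ref{lem:creating colimits} and the limit-colimit coincidence (Theorem~\ref{thm:locally-determined}) for the embedding cases, and transporting the pre-embedding case through the left adjoint $F$.

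For part (1), I would apply Lemma~\ref{lem:creating colimits} to $\AAA = \LL$, which has all $\omega$-colimits over embeddings by Definition~\ref{def:model}; this yields that $\LLe$ has all $\omega$-colimits and that $\LLe \embeds \LL$ is an $\omega$-functor reflecting $\omega$-colimits. The only additional point is that $\LLe$ has an initial object: since $0$ is e-initial in $\LL$ (Theorem~\ref{thm:model-compact}), the unique morphism $0 \to X$ is always an embedding, so $0$ stays initial in $\LLe$. Hence $\LLe$ is an $\omega$-category and the inclusion preserves $0$, so it is strict.

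Part (2) is the heart of the argument, and the key observation is that the restriction of $F$ sends pre-embeddings to embeddings (by definition of $\PE$) while preserving all colimits (being a left adjoint). First I would verify that $\varnothing$ is initial in $\PE$: since $F$ preserves the initial object, the iso $z : 0 \to F\varnothing$ identifies $F\varnothing$ with $0$, so for any $\iota : \varnothing \to X$ the morphism $F\iota$ agrees up to $z$ with the unique map $0 \to FX$, which is an embedding by e-initiality; hence $F\iota$ is an embedding and $\iota$ is a pre-embedding. Next, given an $\omega$-diagram $D : \omega \to \PE$, Definition~\ref{def:model} supplies a colimiting cocone $\mu : D \to A$ in $\CC$; applying $F$ makes $F\mu$ a colimiting cocone of the embedding-diagram $FD$ in $\LL$, so Theorem~\ref{thm:locally-determined} forces each $F\mu_i$ to be an embedding, whence each $\mu_i$ is a pre-embedding and the cocone lands in $\PE$. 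For the universal property in $\PE$, I would take any competing $\PE$-cocone $\nu : D \to B$ with unique $\CC$-mediator $h : A \to B$; pushing through $F$, functoriality together with uniqueness of mediators in $\LL$ identifies $Fh$ with the mediator computed in $\LLe$, which is an embedding because $F\nu$ is a cocone of embeddings over the colimit $F\mu$. Thus $Fh$ is an embedding, $h$ is a pre-embedding, and $\mu$ is colimiting in $\PE$. This simultaneously shows that $\PE \embeds \CC$ preserves $\omega$-colimits (the $\PE$-colimit is computed in $\CC$) and is strict; reflection follows since a $\PE$-cocone whose $\CC$-image is colimiting must agree, up to a pre-embedding isomorphism, with this canonical colimit.

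For part (3), I would use that embeddings are pre-embeddings (as $\cpo$-functors preserve embeddings), giving inclusions $\CCe \embeds \PE \embeds \CC$. Since $\omega$-diagrams over embeddings are a special case of diagrams over pre-embeddings, $\CC$ has $\omega$-colimits over embeddings, so Lemma~\ref{lem:creating colimits} makes $\CCe \embeds \CC$ an $\omega$-functor reflecting $\omega$-colimits. Preservation of $\omega$-colimits by $\CCe \embeds \PE$ then follows because a $\CCe$-colimit is a $\CC$-colimit and hence a $\PE$-colimit by part (2); reflection follows by composing with the $\omega$-functor $\PE \embeds \CC$ and invoking the reflection property of $\CCe \embeds \CC$. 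No initial object is claimed for $\CCe$, in keeping with Remark~\ref{rem:cpo-bad}. The main obstacle is the universal-property step in part (2): since $\PE$ has no self-contained limit-colimit coincidence, the only route to showing the $\CC$-mediator is a pre-embedding is to transport the universal property through $F$ into $\LLe$, relying on $F$'s cocontinuity to identify the two mediators and on the embedding theory in $\LL$ to certify that the mediator is an embedding.
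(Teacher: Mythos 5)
Your proof is correct and follows essentially the same route as the paper's: part (1) via Lemma~\ref{lem:creating colimits} and e-initiality of $0$, part (2) by transporting cocones and mediating morphisms through $F$ into $\LL_e$ and certifying embeddings there via the limit-colimit coincidence, and part (3) by factoring the inclusion $\CCe \hookrightarrow \CC$ through $\PE$ and combining preservation/reflection of the two legs. The only cosmetic difference is that you establish the universal property of the canonical colimit in $\PE$ directly and deduce reflection from uniqueness of colimits (using that isomorphisms are pre-embeddings), whereas the paper proves reflection first and then uses it to conclude that the $\CC$-colimit is also a $\PE$-colimit.
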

\begin{proof}
(1) follows from Lemma \ref{lem:creating colimits} and the fact that $\LL$ has an  e-initial object $0$. 
	
(2) Let $A:\PE\hookrightarrow\CC$ be the inclusion, and let $D:\omega\to\PE$ be an
$\omega$-diagram.
Let $\mu:D\to X$ be a cocone in $\PE$ and assume that it is colimiting in $\CC$.
We aim to show that $\mu$ is also colimiting in $\PE$, since this would imply
that the inclusion $A$ reflects $\omega$-colimits. So let $\nu: D\to Y$ be
another cocone in $\PE$, then $\nu$ is a cocone in $\CC$, hence there is a
unique morphism $f:X\to Y$ in $\CC$ such that $f\circ\mu=\nu$. Now, both
$F\mu:F\circ D\to FX$ and $F\nu:F\circ D\to FY$ are cocones of $F\circ D$ in
$\LL_e$. Since $\mu$ is colimiting in $\CC$, and $F$ (as a left adjoint)
preserves colimits, it follows that $F\mu$ is the colimiting cocone of $F\circ
D$ in $\LL$. Hence there is a unique $e:FX\to FY$ in $\LL$ such that $e\circ
F\mu=F\nu$. Since $f\circ\mu=\nu$, we obtain $Ff\circ F\mu=F\nu$, so $Ff=e$. By
(1), the inclusion $\LL_e\hookrightarrow\LL$ reflects $\omega$-colimits,
hence $e$ must be a morphism in $\LL_e$, hence an embedding. We conclude that
$Ff$ is an embedding, so $f$ is a pre-embedding, which shows that $\mu$ is
indeed colimiting in $\PE$. 

Next, we have to show that $\PE$ has all $\omega$-colimits, so we have to show
that the colimit of $D$ exists. First, since $\CC$ has $\omega$-colimits over pre-embeddings, the
colimiting cocone $\mu:D\to X$ of $D$ in $\CC$ exists. We have to show that
$\mu$ is actually a cocone in $\PE$. Since $F$ is a left adjoint, it preserves
colimits, hence $F\mu:F\circ D\to FX$ is colimiting in $\LL$. Now $F\circ
D:\omega\to\LL_e$ is an $\omega$-diagram, hence by (1), it
follows that $F\mu$ is also the colimiting cocone of $F\circ D$ in $\LL_e$,
meaning that $F\mu$ consists of embeddings. Hence $\mu$ consists of
pre-embeddings, so is a cocone in $\PE$. Since we already showed that the
inclusion $A$ reflects $\omega$-colimits, it follows that $\mu$ is colimiting
in $\PE$, so the colimit of $D$ exists in $\PE$, and is clearly preserved by
$A$. So the inclusion $A$ is an $\omega$-functor.

Let $\varnothing$ be the initial object of $\CC$, and let $X$ be some other
object of $\CC$. Let $i:\varnothing\to X$ be the unique initial map in $\CC$.
As a left adjoint, $F$ preserves initial objects, hence $F\varnothing\in\LL$ is
initial, and $Fi:F\varnothing\to FX$ is unique. Since $F\varnothing\cong 0$ is
e-initial, it follows that $Fi$ is an embedding. Hence $i$ is a pre-embedding,
so a morphism in $\PE$. It follows that also $\PE$ is an $\omega$-category, and
since we showed that $\varnothing$ is also initial in $\PE$, it follows that
$A$ is strict.

(3) We aim to show that the inclusion $B:\CC_e\hookrightarrow \PE$ preserves
and reflects $\omega$-colimits. Note that $A\circ B:\CC_e\hookrightarrow\CC$
is the inclusion into $\CC$.  It follows from Lemma
\ref{lem:creating colimits} that $\CC_e$ has all $\omega$-colimits, and the
inclusion $A\circ B:\CC_e\hookrightarrow\CC$ preserves and reflects
$\omega$-colimits.  Let $D:\omega\to\CC_e$ be a diagram, and let $\mu:D\to X$
be a cocone in $\CC_e$ such that $B\mu$ is colimiting in $\PE$.  Since $A$
preserves $\omega$-colimits, it follows that $(A\circ B)\mu$ is colimiting in
$\CC$. Since $A\circ B$ reflects $\omega$-colimits, it follows that $\mu$ is
colimiting in $\CC_e$ so $B$ indeed reflects $\omega$-colimits.  In order to
show that $B$ also preserves $\omega$-colimits, let $\mu$ be the colimiting
cocone of $D$ in $\CC_e$. Since $A\circ B$ preserves $\omega$-colimits, it
follows that $A(B\mu)=(A\circ B)\mu$ is colimiting in $\CC$. Since $A$
reflects $\omega$-colimits by (2), it follows that $B\mu$ is
colimiting, hence $B$ indeed preserves $\omega$-colimits.
\end{proof}

By definition, the functor $F : \CC \to \LL$ restricts to a functor
$F_{pe}:\PE\to\LL_e$. By Corollary \ref{cor:cpo functors restrict to omega continuous
functors} the functor $G : \LL \to \CC$ restricts to an $\omega$-functor
$G_e:\LL_e\to\CC_e$. Let $G_{pe} : \LL_e \to \PE$ be the functor given by
$G_{pe} \coloneqq \left( \LL_e \xrightarrow{G_e} \CC_e \hookrightarrow \PE \right)$.
The next theorem shows our model has suitable mediating functors for relating initial algebras.

\begin{thm}\label{thm:Fpe and Gpe continuous}
 		In any $\cpo$-LNL model, $F_{pe}:\PE\to\LL_e$ is a strict $\omega$-functor. In addition, ${G_{pe}:\LL_e\to\PE}$ is an $\omega$-functor.
\end{thm}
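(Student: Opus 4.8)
The plan is to prove the two claims separately, in each case decomposing the functor under consideration through the inclusion functors of Theorem~\ref{thm:continuous embeddings} and invoking their preservation and reflection properties for $\omega$-colimits. The whole argument is a diagram chase once that theorem is available.

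For $F_{pe}$, I would exploit the fact that the square
\[ J \circ F_{pe} = F \circ A : \PE \to \LL \]
commutes strictly, where $A : \PE \hookrightarrow \CC$ and $J : \LL_e \hookrightarrow \LL$ are the inclusions; this is immediate, since on the underlying objects and morphisms all of these functors act either as $F$ or as the identity. To show that $F_{pe}$ preserves $\omega$-colimits, let $\mu : D \to X$ be a colimiting cocone of an $\omega$-diagram $D : \omega \to \PE$. By Theorem~\ref{thm:continuous embeddings}(2) the inclusion $A$ is an $\omega$-functor, so $A\mu$ is colimiting in $\CC$; since $F$ is a left adjoint it preserves all colimits, whence $F A \mu = J F_{pe}\mu$ is colimiting in $\LL$. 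Finally, Theorem~\ref{thm:continuous embeddings}(1) tells us that $J$ \emph{reflects} $\omega$-colimits, and since $F_{pe}D$ is an $\omega$-diagram in $\LL_e$ with cocone $F_{pe}\mu$, we conclude that $F_{pe}\mu$ is colimiting in $\LL_e$, as required. Strictness then follows because $F$, being a left adjoint, preserves the initial object: $F\varnothing$ is initial in $\LL$, and since $F\varnothing \cong 0$ (isomorphisms being embeddings) it is isomorphic to the e-initial $0$ inside $\LL_e$ and hence initial there too. Thus $F_{pe}$ carries the initial object $\varnothing$ of $\PE$ to the initial object of $\LL_e$.

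For $G_{pe}$ the argument is purely formal. By definition $G_{pe} = B \circ G_e$, where $G_e : \LL_e \to \CC_e$ is the $\omega$-functor supplied by Corollary~\ref{cor:cpo functors restrict to omega continuous functors} and $B : \CC_e \hookrightarrow \PE$ is the inclusion. By Theorem~\ref{thm:continuous embeddings}(3) the inclusion $B$ preserves $\omega$-colimits, so it is itself an $\omega$-functor. As $\omega$-functors are closed under composition, the composite $G_{pe}$ is again an $\omega$-functor. I would explicitly note that we do \emph{not} claim strictness for $G_{pe}$, and indeed should not expect it: $G$ is a right adjoint and so need not preserve the initial object.

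I expect no serious obstacle here, precisely because all the substantive work has already been absorbed into Theorem~\ref{thm:continuous embeddings}. The single point that warrants care is the reflection step for $F_{pe}$: it is essential that $J$ reflects (and not merely preserves) $\omega$-colimits, so that colimitality established in the ambient category $\LL$ can be transported back into the subcategory $\LL_e$ of embeddings in which $F_{pe}$ actually takes its values. Everything else is bookkeeping about commuting squares and composition of $\omega$-functors.
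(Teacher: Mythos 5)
Your proposal is correct and follows essentially the same route as the paper: for $F_{pe}$, push the colimiting cocone from $\PE$ into $\CC$ via the inclusion, use that $F$ is a left adjoint, and then reflect back along $\LL_e \hookrightarrow \LL$ using Theorem~\ref{thm:continuous embeddings}, with strictness coming from $z : 0 \to F\varnothing$ being an isomorphism; for $G_{pe}$, write it as the composite of the two $\omega$-functors $G_e$ and $\CC_e \hookrightarrow \PE$. Your explicit emphasis on the reflection step (rather than mere preservation) for $\LL_e \hookrightarrow \LL$ is exactly the right point of care and matches the paper's argument.
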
  
\begin{proof}
Let $D: \omega \to \CC_{pe}$ be an $\omega$-diagram and let $\mu: D \to A$ be its colimit in $\CC_{pe}$.
Since $F_{pe}$ is simply a restriction of $F$, then its action on $D$ and $\mu$ is simply $F \circ D $ and $F \mu,$ respectively.
By Theorem~\ref{thm:continuous embeddings} (1), $\mu$ is also a colimit of $D$ in $\CC$ and since $F: \CC \to \LL$
is a left adjoint, it follows $F \mu$ is a colimit of $F \circ D$ in $\LL$. However, $F \circ D$ is a diagram
over embeddings (in $\LL$) and $F \mu$ is its colimit over embeddings (in $\LL$) and therefore by Theorem~\ref{thm:continuous embeddings} (1), $F \mu$ is a colimit of $F \circ D$
in $\LL_e$. Thus, $F_{pe}$ is an $\omega$-functor and strictness follows because $z : 0 \to F\varnothing$ is an
isomorphism.
By Corollary \ref{cor:cpo functors restrict to omega continuous functors} and by Theorem
\ref{thm:continuous embeddings} (3), we see that $G_{pe}$ is the composition of two $\omega$-functors
and is therefore an $\omega$-functor itself.
\end{proof}

By Notation~\ref{not:odot}, 
$ \odot \circ (F \times F) \xRightarrow{\beta\ } F \circ \boxdot :  \CC \times \CC \to \LL $
is a natural isomorphism
and by Corollary~\ref{cor:cpo functors restrict to omega continuous functors}, we see that
$\odot: \LL \times \LL \to \LL$ restricts to an $\omega$-functor $\odot_e : \LL_e \times \LL_e \to \LL_e.$	
The final theorem of this subsection, together with the previous two, show that $\PE$ is a good setting for the non-linear interpretation of non-linear types.
	
\begin{thm}\label{thm:boxdot is continuous}
  In any $\cpo$-LNL model,
  the bifunctor $\boxdot:\CC\times\CC\to\CC$ restricts to an
  $\omega$-bifunctor ${\boxdot_{pe}:\PE\times\PE\to\PE}$. Moreover,
  the natural isomorphism $\beta$ restricts to a natural isomorphism
	$ \odot_e \circ\ (F_{pe} \times F_{pe}) \xRightarrow{\beta^{pe}} F_{pe} \circ \boxdot_{pe} : \PE \times \PE \to \LL_e. $
\end{thm}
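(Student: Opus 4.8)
The plan is to establish the statement in three stages: first that $\boxdot$ restricts to a functor on pre-embeddings, then that this restriction preserves $\omega$-colimits, and finally that $\beta$ restricts to the claimed natural isomorphism.

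For the restriction, I would argue directly from the naturality of $\beta$. Given pre-embeddings $f$ and $g$ in $\CC$, the naturality square for $\beta$ at the morphism $(f,g)$ of $\CC \times \CC$ yields
\[ F(f \boxdot g) = \beta_{\cod} \circ (Ff \odot Fg) \circ \beta_{\dom}^{-1}. \]
Since $Ff$ and $Fg$ are embeddings, so is the pair $(Ff, Fg)$ in the product $\cpo$-category, and hence the $\cpo$-functor $\odot$ (equivalently, $\odot_e$, which lands in $\LL_e$) sends it to an embedding $Ff \odot Fg$. The components of $\beta$ are isomorphisms, hence also embeddings, and embeddings are closed under composition, so $F(f \boxdot g)$ is an embedding and $f \boxdot g$ is a pre-embedding. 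Thus $\boxdot$ restricts to a functor $\boxdot_{pe} : \PE \times \PE \to \PE$, functoriality being inherited from $\boxdot$.

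To see that $\boxdot_{pe}$ is an $\omega$-bifunctor, I would take an $\omega$-diagram $D : \omega \to \PE \times \PE$ with colimiting cocone $\mu$ and show $\boxdot_{pe}(\mu)$ is colimiting in $\PE$. Since colimits in a product category are computed componentwise and the inclusion $\PE \hookrightarrow \CC$ preserves $\omega$-colimits by Theorem~\ref{thm:continuous embeddings}(2), the image of $\mu$ is colimiting in $\CC \times \CC$. Applying $\boxdot$ then splits into two cases: when $\boxdot = \amalg$, the coproduct bifunctor is left adjoint to the diagonal $\CC \to \CC \times \CC$ and so preserves all colimits; when $\boxdot = \vartimes$, the diagram $D$ is one over pre-embeddings, so the model axiom (Definition~\ref{def:model}, item~4) that the product preserves $\omega$-colimits over pre-embeddings applies. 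Either way $\boxdot(\mu)$ is colimiting in $\CC$; and since $\boxdot_{pe}(\mu)$ is already a cocone in $\PE$ (by the first stage) while $\PE \hookrightarrow \CC$ reflects $\omega$-colimits (again Theorem~\ref{thm:continuous embeddings}(2)), it is colimiting in $\PE$.

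Finally, for $\beta^{pe}$ I would simply restrict $\beta$. At each object $(A,B)$ the component $\beta_{(A,B)}$ is an isomorphism, hence an embedding, hence a morphism of $\LL_e$; its source and target functors $\odot_e \circ (F_{pe} \times F_{pe})$ and $F_{pe} \circ \boxdot_{pe}$ are exactly the restrictions of $\odot \circ (F \times F)$ and $F \circ \boxdot$ (using the first stage for the latter). Naturality and invertibility in $\LL_e$ are inherited from $\LL$, since the inclusion $\LL_e \hookrightarrow \LL$ is faithful and isomorphisms are embeddings. I expect the main obstacle to lie in the $\omega$-bifunctor claim for the product case: unlike the coproduct, $\vartimes$ is not automatically cocontinuous, so the argument genuinely depends on the model assumption that $\vartimes$ preserves $\omega$-colimits over pre-embeddings, together with the reflection property of $\PE \hookrightarrow \CC$ to transport the resulting colimit back into $\PE$.
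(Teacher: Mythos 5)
Your proposal is correct and follows essentially the same route as the paper's own proof: the naturality square for $\beta$ combined with the fact that $\cpo$-functors preserve embeddings gives the restriction of $\boxdot$ to $\PE$; the $\omega$-bifunctor claim is obtained by pushing the colimit into $\CC\times\CC$, applying $\boxdot$ (using the adjointness of $\amalg$ and the model axiom for $\vartimes$), and reflecting back along $\PE\hookrightarrow\CC$ via Theorem~\ref{thm:continuous embeddings}(2); and $\beta^{pe}$ is the restriction of $\beta$, whose components are isomorphisms and hence embeddings. No gaps.
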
		
\begin{proof}
  Let $f:X_1\to Y_1$ and $g:X_2\to Y_2$ be morphisms in $\PE$, so $Ff$ and
  $Fg$ are embeddings. Naturality of $\beta$ gives us :
  \[ \beta_{(Y_1,Y_2)}\circ ( Ff \odot  Fg) = F(f\boxdot g) \circ \beta_{(X_1,X_2)},\]
  hence
  \[  F(f\boxdot g)=  \beta_{(Y_1,Y_2)} \circ (Ff\odot Fg) \circ \beta_{(X_1,X_2)}^{-1} . \]
  As a natural isomorphism, each component of $\beta$
  is an embedding. Since $\odot : \CC\times\CC\to\CC$ is a $\DCPO$-functor,
  it preserves embeddings, hence $Ff\odot Fg$ is an embedding. We conclude
  that $F(f\boxdot g)$ is an embedding, so $\boxdot$ restricts to a functor
  $\boxdot_{pe} : \PE\times\PE\to\PE$. 
  
Let $D: \omega \to \CC_{pe} \times \CC_{pe}$ be an $\omega$-diagram and let $\mu: D \to (A,B)$ be its colimiting cocone in $\CC_{pe} \times \CC_{pe}$.
Since $\boxdot_{pe}$ is simply a restriction of $\boxdot$, then its action on $D$ and $\mu$ is simply $\boxdot \circ D $ and $\boxdot \mu,$ respectively.
By Theorem~\ref{thm:continuous embeddings} (2) and the fact that colimits are calculated pointwise in product categories, it follows $\mu$ is also a colimit of $D$ in $\CC \times \CC.$ Since $\boxdot : \CC \times \CC \to \CC$
preserves $\omega$-colimits over pre-embeddings ($\vartimes$ by assumption and $\amalg$ is a colimit), it follows $\boxdot \mu$ is a colimit of $\boxdot \circ D$ in $\CC$.
Therefore by Theorem~\ref{thm:continuous embeddings} (2), it may be reflected, so that $\boxdot \mu$ is a colimit of $\boxdot \circ D$
in $\PE$. Thus, $\boxdot_{pe}$ is an $\omega$-functor.
Finally, every isomorphism is an embedding and therefore every component of $\beta$ is an embedding (in $\LL$), so that indeed $\beta$ restricts to a natural isomorphism
\[ \odot_e \circ\ (F_{pe} \times F_{pe}) \xRightarrow{\beta^{pe}} F_{pe} \circ \boxdot_{pe} : \PE \times \PE \to \LL_e. \qedhere \]
\end{proof}

\subsection{Concrete CPO-LNL Models}\label{sub:concrete}

We now describe some concrete $\cpo$-LNL models.

\begin{thm}\label{thm:clnl-model}
The adjunction
\stikz{clnl-corollary.tikz},
where the left adjoint is given by (domain-theoretic) lifting and the right adjoint $U$ is the forgetful functor,
is a $\cpo$-LNL model.
\end{thm}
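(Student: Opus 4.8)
The plan is to instantiate Definition~\ref{def:model} with $\CC = \cpo$, $\LL = \cpobs$, left adjoint $F$ given by domain-theoretic lifting $(-)_\perp$, and right adjoint $G = U$ the forgetful functor, and then to check clauses (1)--(4) one at a time. Clauses (1)--(3) are standard domain theory, with the relevant structure of the two categories recalled in \secref{sub:cpo}: $\cpobs$ is a $\cpo$-symmetric monoidal closed category under the smash product $\otimes$ and strict function space $\multimap$ (with unit $I = F1$), and it has finite $\cpo$-coproducts given by the coalesced sum with initial object $0 = \{\perp\}$; dually $\cpo$ is $\cpo$-cartesian monoidal with finite $\cpo$-coproducts given by the separated sum with initial object $\varnothing$. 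The lifting--forgetful pair is the canonical $\cpo$-symmetric monoidal adjunction: $F$ acts Scott-continuously on hom-cpos, the hom-isomorphism $\cpo(X, UA) \cong \cpobs(FX, A)$ is a $\cpo$-isomorphism, and $F$ is strong symmetric monoidal via $F1 \cong I$ and $F(A \vartimes B) \cong FA \otimes FB$ (i.e.\ $(A \times B)_\perp \cong A_\perp \otimes B_\perp$), which is exactly the natural isomorphism $m$ of \secref{sub:lnl}.

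For clause (4), the structural facts are immediate. The object $0 = \{\perp\}$ is the ep-zero object of $\cpobs$ (recalled in \secref{sub:cpo}), hence in particular e-initial. Since both $\cpo$ and $\cpobs$ are cocomplete, each has \emph{all} $\omega$-colimits; a fortiori $\cpobs$ has all $\omega$-colimits over embeddings and $\cpo$ has all $\omega$-colimits over pre-embeddings, so the existence requirements hold trivially. The only clause carrying genuine content is the demand that the product functor $(- \vartimes -): \cpo \times \cpo \to \cpo$ preserve $\omega$-colimits over pre-embeddings. Here I would exploit a feature of the concrete model that the abstract definition deliberately does \emph{not} assume: $\cpo$ is cartesian \emph{closed}. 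Consequently each functor $X \vartimes (-)$ is a left adjoint (to the exponential $(-)^X$) and so preserves all colimits, and symmetrically for $(-) \vartimes Y$.

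To pass from separate preservation in each variable to preservation of $\omega$-colimits of diagrams valued in $\cpo \times \cpo$, I would invoke the standard interchange argument: given $\omega$-diagrams $D, E: \omega \to \cpo$ with colimiting cocones $\mu: D \to X$ and $\nu: E \to Y$, the diagonal $\omega \to \omega \times \omega$ is cofinal, so
\[ \colim_n (D_n \vartimes E_n) \cong \colim_{(m,n)} (D_m \vartimes E_n) \cong \colim_m \bigl(D_m \vartimes Y\bigr) \cong X \vartimes Y , \]
where the first isomorphism uses cofinality of the diagonal, the second cocontinuity of each $D_m \vartimes (-)$, and the third cocontinuity of $(-) \vartimes Y$; moreover the induced comparison cocone is precisely $\mu \vartimes \nu$. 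Thus $\mu \vartimes \nu$ is colimiting, and in particular this holds for diagrams over pre-embeddings, which establishes the last clause and completes the verification. The main obstacle is therefore conceptual rather than technical: one must notice that the pre-embedding hypothesis in Definition~\ref{def:model}, which is essential in an abstract cartesian $\CC$ where $\vartimes$ need not be cocontinuous in either variable, is rendered harmless here by the cartesian closure of $\cpo$, so no appeal to the lifting isomorphism or to the limit--colimit coincidence (Theorem~\ref{thm:locally-determined}) is needed for this particular model.
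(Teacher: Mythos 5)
Your verification is correct, but it takes a genuinely different route from the paper. The paper gives no direct proof of Theorem~\ref{thm:clnl-model}: it obtains the lifting adjunction as the special case $\M = \mathbf 1_{\perp}$ of the general presheaf construction that follows it, i.e.\ as the composite of the copower/representable adjunction with the Day-convolution monoidal structure on $\widehat{\M} \cong \cpobs$. You instead check the four clauses of Definition~\ref{def:model} by hand. Your treatment of the only non-trivial clause is sound: cartesian closure of $\cpo$ makes each $X \vartimes (-)$ a left adjoint, and the cofinality-of-the-diagonal plus Fubini argument correctly upgrades separate cocontinuity to joint preservation of $\omega$-colimits of diagrams valued in $\cpo \times \cpo$; this in fact gives more than clause (4) asks for, since it is not restricted to pre-embeddings. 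What the paper's route buys is a whole family of models (Proto-Quipper-M, ECLNL, etc.) from one argument; what yours buys is a self-contained, elementary proof for this single model that does not depend on enriched presheaf machinery. One small slip worth fixing: the finite coproduct in $\cpo$ is the disjoint union of cpos, not the separated sum --- a separated sum always contains a fresh bottom element and would contradict your (correct) claim that $\varnothing$ is initial. This is purely terminological and does not affect any step of the argument, since all you use is cocompleteness of both categories and the identification of the initial objects.
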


We will also show this model is computationally adequate in \secref{sec:adequate}. This concrete model actually arises from
a more general class of $\cpo$-LNL models, as we shall now explain.

Let $\M$ be a small $\cpobs$-symmetric monoidal category and let $\widehat \M$
be the category of $\cpobs$-presheafs, that is, the category whose objects are
the $\cpobs$-enriched functors $T : \M^\op \to \cpobs$ and whose morphisms are natural
transformations between them (the notion of a $\cpobs$-natural transformation
coincides with the ordinary notion). Then, by the enriched Yoneda Lemma, the
category $\widehat \M$ is complete and cocomplete and it has a
$\cpobs$-symmetric monoidal closed structure when equipped with the Day convolution~\cite{day-convolution}.
In this situation there exists an adjunction
\stikz{copower.tikz}, where the left adjoint is given by taking the $\cpobs$-copower with the tensor unit $I$ and the right adjoint is the representable functor (see~\cite[\S 6]{borceux:handbook2}).

\begin{thm}
Composing the two adjunctions
\stikz{big-model.tikz}
yields a $\cpo$-LNL model.
\end{thm}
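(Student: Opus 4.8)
The plan is to recognise the composite of the two displayed adjunctions as a single $\cpo$-symmetric monoidal adjunction and then to check the four clauses of Definition~\ref{def:model} against it, taking $\CC = \cpo$ and $\LL = \widehat{\M}$. Its left adjoint $F : \cpo \to \widehat{\M}$ is lifting followed by $\cpobs$-copowering with the tensor unit $I$, and its right adjoint $G : \widehat{\M} \to \cpo$ is the representable functor $\widehat{\M}(I,-)$ followed by the forgetful functor $U$.

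For clauses 1--3 I would argue as follows. The category $\cpo$ is $\cpo$-cartesian monoidal and cocomplete, hence has finite $\cpo$-coproducts; by hypothesis $\widehat{\M}$ is $\cpobs$-symmetric monoidal closed under the Day convolution, hence in particular $\cpo$-symmetric monoidal closed, and it is cocomplete, hence has finite $\cpo$-coproducts. Both constituent adjunctions are symmetric monoidal with strong symmetric monoidal left adjoints (lifting by Theorem~\ref{thm:clnl-model}, and copowering with $I$, for which $(X \cdot I) \otimes (Y \cdot I) \cong (X \otimes Y) \cdot I$), and the second one, being $\cpobs$-enriched, is in particular $\cpo$-enriched. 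Since strong symmetric monoidal functors compose, lax symmetric monoidal functors compose, and a composite of monoidal adjunctions is again a monoidal adjunction, the composite $F \dashv G$ is a $\cpo$-symmetric monoidal adjunction, which establishes clauses 1--3.

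The substance is in clause 4. Cocompleteness of $\widehat{\M}$ yields all $\omega$-colimits, a fortiori those over embeddings, and cocompleteness of $\cpo$ yields all $\omega$-colimits, a fortiori those over pre-embeddings. To exhibit the e-initial object of $\widehat{\M}$ I would avoid Theorem~\ref{thm:model-compact} (which would be circular) and argue directly: computed pointwise from the zero object of $\cpobs$, the initial and terminal objects of $\widehat{\M}$ coincide in a zero object $0$, and since $\widehat{\M}$ is $\cpobs$-enriched its hom-objects are pointed cpos with bistrict composition. For any $X$ the terminal map $p_X : X \to 0$ is the sole element of the one-point cpo $\widehat{\M}(X,0)$, so $p_X = \perp_{X,0}$; then $e_X \circ p_X = \perp_{X,X} \le \id_X$ by strictness and pointedness, while $p_X \circ e_X = \id_0$ by initiality, so every $e_X$ is an embedding and $0$ is e-initial. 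Finally, because $\cpo$ is cartesian closed each functor $(-\vartimes X)$ is a left adjoint and so preserves all colimits, and since $\omega$ is filtered a bifunctor preserving $\omega$-colimits in each variable separately preserves them jointly; hence $\vartimes$ preserves all $\omega$-colimits, in particular those over pre-embeddings.

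I expect the only non-formal step to be the direct verification that the initial object of $\widehat{\M}$ is e-initial; everything else reduces to cocompleteness of the two categories, cartesian closedness of $\cpo$, and the standard fact that symmetric monoidal adjunctions compose.
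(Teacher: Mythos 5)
The paper states this theorem without a proof, so there is no official argument to compare against; your clause-by-clause verification of Definition~\ref{def:model} is correct and is essentially the only reasonable route. In particular, you are right that e-initiality of $0$ in $\widehat{\M}$ must be checked directly rather than via Theorem~\ref{thm:model-compact} (which presupposes a $\cpo$-LNL model and would be circular), and your argument there is sound: $\widehat{\M}(X,0)$ is a singleton so $p_X=\perp_{X,0}$, hence $e_X\circ p_X=\perp_{X,X}\le\id_X$ by bistrictness of composition in a $\cpobs$-category, while $p_X\circ e_X=\id_0$ by initiality. The remaining clauses do reduce, as you say, to cocompleteness of the two categories, cartesian closedness of $\cpo$ together with finality of the diagonal $\omega\to\omega\times\omega$, and the fact that symmetric monoidal $\cpo$-adjunctions compose (with both left adjoints strong symmetric monoidal).
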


Observe that any $\cpobs$-functor necessarily preserves least morphisms and therefore also ep-zero objects. Thus, if $\M = \mathbf{1}$, then $\widehat \M \simeq \mathbf 1$ and we recover the trivial (or completely degenerate) $\cpo$-LNL model.
If $\M = \mathbf 1_{\perp}$ (the category with a unique object and two morphisms $\id$ and $\perp,$ where $\perp$ is least), then $\widehat \M \cong \cpobs$ and we recover the $\cpo$-LNL model from Theorem~\ref{thm:clnl-model}.
Choosing $\M$ to be a suitable category of quantum computation, we recover a
model of Proto-Quipper-M, a quantum programming language~\cite{pqm-small}.
Choosing $\M$ to be a suitable category of string diagrams, we recover a model
of ECLNL, a programming language for string diagrams~\cite{eclnl}.
Similar presheaf models have been used to model Ewire, a circuit description language~\cite{ewire-mfps}, and the quantum lambda calculus~\cite{quantum-higher}.
All of these languages have mixed linear/non-linear features and therefore our results presented here could help with design decisions on introducing recursive (or inductive) datatypes.

\section{Denotational Semantics}\label{sec:semantics}

We now describe the denotational semantics of LNL-FPC. Every type $\Theta \vdash A$ admits a (standard) interpretation as an $\omega$-functor $\lrb{\Theta \vdash A } : \LL_e^{|\Theta|} \to \LL_e.$
Every \emph{non-linear} type ${\Theta \vdash P}$ admits an \emph{additional} non-linear interpretation as an $\omega$-functor ${\flrb{\Theta \vdash P} : \PE^{|\Theta|} \to \PE}$ that is related
to its standard interpretation via a natural isomorphism (see Figure~\ref{fig:alpha-def})
\[ \alpha^{\Theta \vdash P}: \lrb{\Theta \vdash P} \circ F_{pe}^{\times |\Theta|} \naturalto F_{pe} \circ \flrb{\Theta \vdash P} : \PE^{|\Theta|} \to \LL_e. \]
It is precisely the existence of this natural isomorphism that allows us to define the substructural operations (copying, discarding and promotion) on non-linear types.

As in FPC, the interpretation of a term $\Theta; \Gamma \vdash m : A$ is a \emph{family} of morphisms of $\LL$
{%
\small{
\[ \lrb{\Theta; \Gamma \vdash m : A} = \left\{ \lrb{\Theta; \Gamma \vdash m : A}_{\vec Z} : \lrb{\Theta \vdash \Gamma}(F^{\times |\Theta|}\vec Z) \to \lrb{\Theta \vdash A}(F^{\times |\Theta|}\vec Z) \ |\ \vec Z \in \text{Ob}(\CC^{|\Theta|}) \right\} \]
}%
}%
indexed by objects $\vec Z$ of $\CC^{|\Theta|}.$
In the special case when $\Theta = \cdot$ the interpretation can be seen as a morphism of $\LL$.
In order to show the semantics is sound, we also provide a non-linear interpretation of non-linear values that is compatible with the substructural operations.
We remark that types are interpreted as functors on $\LL_e$, which is a subcategory of $\LL$, the category in which terms are interpreted. Doing so is crucial for defining the semantics.

\subsection{Interpretation of LNL-FPC Types}\label{sub:type-semantics}

The (standard) interpretation of a well-formed type $\Theta \vdash A$ is a functor
$\lrb{\Theta \vdash A} : \LL_e^{|\Theta|} \to\ \LLe$ defined
by induction on the derivation of $\Theta \vdash A$ (Figure~\ref{fig:type-interpretation}, left).
The non-linear interpretation of a non-linear type $\Theta \vdash P$ is a functor
$\flrb{\Theta \vdash P} : \PE^{|\Theta|} \to\ \PE$ defined
by induction on the derivation of $\Theta \vdash P$ (Figure~\ref{fig:type-interpretation}, right).
The non-linear interpretation of non-linear types depends on the standard interpretation, so the latter needs to be defined first, as we have done.
Both assignments are $\omega$-functors defined on $\omega$-categories, as we show next, and so the assignments are well-defined, i.e.,
the parameterised initial algebras we need to interpret recursive types exist.
\begin{figure}
\centering
\begin{minipage}{0.5\textwidth}
\centering
\begin{align*}
\lrb{\Theta \vdash A} &:\ \LL_{\mathbf e}^{|\Theta|} \to \LLe\\
\lrb{\Theta \vdash \Theta_i} &\coloneqq \Pi_i &&\\
\lrb{\Theta \vdash !A} &\coloneqq\ !_e \circ \lrb{\Theta \vdash A} &&\\
\lrb{\Theta \vdash A + B } &\coloneqq +_e \circ \langle \lrb{\Theta \vdash A}, \lrb{\Theta \vdash B} \rangle &&\\
\lrb{\Theta \vdash A \otimes B } &\coloneqq \otimes_e \circ \langle \lrb{\Theta \vdash A}, \lrb{\Theta \vdash B} \rangle &&\\
\lrb{\Theta \vdash A \multimap B } &\coloneqq\ \multimap_e \circ\ \langle \lrb{\Theta \vdash A}, \lrb{\Theta \vdash B} \rangle &&\\
\lrb{\Theta \vdash \mu X.A} &\coloneqq \lrb{\Theta, X \vdash A}^\dagger &&
\end{align*}
\end{minipage}%
\begin{minipage}{0.5\textwidth}
\centering
	\begin{align*}
		\flrb{\Theta \vdash P} &:\PE^{|\Theta|} \to \PE &&\\
		\flrb{\Theta \vdash \Theta_i} &\coloneqq \Pi_i &&\\
		\flrb{\Theta \vdash !A} &\coloneqq G_{pe} \circ \lrb{\Theta\vdash A} \circ F_{pe}^{\times |\Theta|} &&\\
		\flrb{\Theta \vdash P + Q } &\coloneqq \amalg_{pe} \circ \langle\flrb{\Theta \vdash P}, \flrb{\Theta \vdash Q}\rangle  &&\\
			\flrb{\Theta \vdash P \otimes Q } &\coloneqq \vartimes_{pe} \circ \langle \flrb{\Theta \vdash P}, \flrb{\Theta \vdash Q} \rangle &&\\
		\flrb{\Theta \vdash \mu X.P} &\coloneqq \flrb{\Theta, X \vdash P}^{\dagger} &&
		\end{align*}
\end{minipage}%
\caption{Standard interpretation of types (left) and non-linear interpretation of non-linear types (right).}
\label{fig:type-interpretation}
\end{figure}


\begin{thm}
For any well-formed types $\Theta \vdash A$ and $\Theta \vdash P$, where $P$ is non-linear:
\begin{enumerate}
\item The assignment $\lrb{\Theta \vdash A} : \LL_e^{|\Theta|} \to\ \LLe$ is an $\omega$-functor.
\item The assignment $\flrb{\Theta \vdash P} : \PE^{|\Theta|} \to\ \PE$ is an $\omega$-functor.
\end{enumerate}
\end{thm}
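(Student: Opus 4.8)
The plan is to prove both statements by a single structural induction on the derivation of the well-formedness judgement, establishing (1) in full before (2). This ordering is forced because the non-linear interpretation of $!A$ in Figure~\ref{fig:type-interpretation} is defined in terms of the \emph{standard} interpretation $\lrb{\Theta \vdash A}$ of an arbitrary type $A$; conversely, the standard interpretation never refers to the non-linear one, so its induction is self-contained and no circularity arises.

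Before the induction proper, I would record two routine facts about products of $\omega$-categories. First, if $\CC$ is an $\omega$-category then so is $\CC^n$, with initial object and $\omega$-colimits computed componentwise; consequently each projection $\Pi_i : \CC^n \to \CC$ is a strict $\omega$-functor. Taking $\CC = \LL_e$, respectively $\CC = \PE$ (both $\omega$-categories by Theorem~\ref{thm:continuous embeddings}), this settles the variable case $\Theta \vdash \Theta_i$ in both columns. Second, a product $F^{\times n} = \langle F \circ \Pi_1, \dots, F \circ \Pi_n \rangle$ of an $\omega$-functor $F$ is again an $\omega$-functor, being a pairing of composites of $\omega$-functors; this is what makes $F_{pe}^{\times |\Theta|}$ an $\omega$-functor in the $!A$ case of (2).

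For (1), each inductive case is a direct application of a closure result. The connective cases $!A$, $A+B$, $A\otimes B$, $A\multimap B$ are composites and pairings of the $\omega$-functors $\lrb{\Theta \vdash A}, \lrb{\Theta \vdash B}$ (inductive hypothesis) with the $\omega$-functors $!_e, +_e, \otimes_e, \multimap_e$; these are $\omega$-functors by Corollary~\ref{cor:cpo functors restrict to omega continuous functors} and, for the mixed-variance function space, by Theorem~\ref{thm:e-p-trick} (which supplies the covariant form $\multimap_e$). Closure of $\omega$-functors under composition and pairing then finishes these cases. For $\mu X.A$, the inductive hypothesis gives that $\lrb{\Theta, X \vdash A} : \LL_e^{|\Theta|} \times \LL_e \to \LL_e$ is an $\omega$-functor, and since $\LL_e$ is an $\omega$-category, Theorem~\ref{thm:dagger is omega-continuous}(1) yields that $\lrb{\Theta \vdash \mu X.A} = \lrb{\Theta, X \vdash A}^\dagger$ is an $\omega$-functor.

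For (2) the argument runs in parallel but inside $\PE$: the cases $P+Q$ and $P\otimes Q$ use $\amalg_{pe}$ and $\vartimes_{pe}$, which are $\omega$-functors by Theorem~\ref{thm:boxdot is continuous}; the case $\mu X.P$ applies Theorem~\ref{thm:dagger is omega-continuous}(1) with $\BB = \PE$, an $\omega$-category by Theorem~\ref{thm:continuous embeddings}(2); and the case $!A$ is the composite $G_{pe} \circ \lrb{\Theta \vdash A} \circ F_{pe}^{\times |\Theta|}$, where $F_{pe}$ and $G_{pe}$ are $\omega$-functors by Theorem~\ref{thm:Fpe and Gpe continuous}, $F_{pe}^{\times|\Theta|}$ is an $\omega$-functor by the preliminary remark, and $\lrb{\Theta \vdash A}$ is an $\omega$-functor by the already-established part (1). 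I do not anticipate a genuine obstacle: all the difficulty has been front-loaded into the structural theorems of Sections~\ref{sec:w-categories} and~\ref{sec:categorical-model}, and the remaining work is bookkeeping. The only points requiring care are the induction order between (1) and (2) and the verification that the base-case projections and the product functors $F_{pe}^{\times|\Theta|}$ preserve $\omega$-colimits, both of which reduce to the fact that colimits in product categories are taken componentwise.
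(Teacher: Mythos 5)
Your proposal is correct and follows essentially the same route as the paper's own proof: a structural induction establishing (1) before (2), with each case discharged by Theorem~\ref{thm:e-p-trick}, Theorems~\ref{thm:Fpe and Gpe continuous} and~\ref{thm:boxdot is continuous}, closure of $\omega$-functors under composition and tupling, and Theorem~\ref{thm:dagger is omega-continuous} for the $\mu$-case. The extra remarks you make about componentwise colimits in product categories and the induction ordering are points the paper leaves implicit, but they do not change the argument.
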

\begin{proof}
Both statements are proved by induction one after the other.
The functor $\Pi_i$ is obviously an $\omega$-functor. The functors $!_e, +_e,
\otimes_e, \multimap_e$ are $\omega$-functors because of
Theorem~\ref{thm:e-p-trick}. The functors $G_{pe}, F_{pe}, \vartimes_{pe}, \amalg_{pe}$ are $\omega$-functors (Theorems \ref{thm:Fpe and Gpe continuous} and \ref{thm:boxdot is continuous}).
Moreover $\omega$-functors are closed under
composition and tupling. Finally, the $\mu$-case follows from Theorem~\ref{thm:dagger is
omega-continuous}. 
\end{proof}

The next theorem shows that the standard and non-linear interpretations of non-linear types are strongly related via a natural isomorphism (see Figure~\ref{fig:alpha-def}).
\begin{thm}\label{thm:alpha-def}
	For any non-linear type $\Theta \vdash P,$
	there exists a natural isomorphism
	\begin{align*}
	 \alpha^{\Theta \vdash P} &: \lrb{\Theta \vdash P} \circ F_{pe}^{\times |\Theta|} \naturalto F_{pe} \circ \flrb{\Theta \vdash P} : \PE^{|\Theta|} \to \LL_e 	& \text{inductively defined by} \\
    \alpha^{\Theta \vdash \Theta_i} & \coloneqq \id & \\
    \alpha^{\Theta \vdash !A} & \coloneqq \id & \\
    \alpha^{\Theta \vdash P \odot Q} & \coloneqq \beta^{pe} \langle \flrb{\Theta \vdash P}, \flrb{\Theta \vdash Q} \rangle \circ \odot_e \langle \alpha^{\Theta \vdash P}, \alpha^{\Theta \vdash Q}\rangle & \text{(see Theorem~\ref{thm:boxdot is continuous})} \\
    \alpha^{\Theta \vdash \mu X.P} & \coloneqq  (\alpha^{\Theta, X \vdash P})^\dagger &(\text{see Theorem~\ref{thm:alpha-dagger-def}}).
	\end{align*}
\end{thm}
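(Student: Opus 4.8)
The plan is to induct on the derivation of the non-linear type $\Theta \vdash P$, following the grammar of non-linear types, and to check at each step that the formula displayed in the statement is well-typed and genuinely defines a natural isomorphism between the stated functors. The induction is well-founded and self-contained because the grammar guarantees that every proper non-linear subexpression encountered (namely $P$ and $Q$ in $P \odot Q$, and $P$ under the extended context in $\mu X. P$) is again non-linear, so the inductive hypothesis supplies their non-linear interpretations $\flrb{-}$ together with the corresponding isomorphisms. Throughout I rely on the facts, established above, that $\PE$ and $\LL_e$ are $\omega$-categories (Theorem~\ref{thm:continuous embeddings}), that $F_{pe}$ is a strict $\omega$-functor (Theorem~\ref{thm:Fpe and Gpe continuous}), and that both type interpretations are $\omega$-functors.

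For the two base cases I would argue that the source and target functors are equal on the nose, so that $\id$ is a (trivial) natural isomorphism. In the variable case $\lrb{\Theta \vdash \Theta_i} = \Pi_i = \flrb{\Theta \vdash \Theta_i}$ and $\Pi_i \circ F_{pe}^{\times |\Theta|} = F_{pe} \circ \Pi_i$, since projecting and then applying $F_{pe}$ agrees with applying $F_{pe}$ componentwise and then projecting. In the $!A$ case I would unfold the definitions: the source is ${!_e} \circ \lrb{\Theta \vdash A} \circ F_{pe}^{\times |\Theta|}$ and the target is $F_{pe} \circ G_{pe} \circ \lrb{\Theta \vdash A} \circ F_{pe}^{\times |\Theta|}$, so it suffices to observe ${!_e} = F_{pe} \circ G_{pe}$. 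This holds because $G_{pe}$ factors as $\LL_e \xrightarrow{G_e} \CC_e \hookrightarrow \PE$ and $F_{pe}$ is a restriction of $F$, whence $F_{pe} \circ G_{pe}$ is the restriction of $F \circ G = {!}$ to $\LL_e$, which is exactly $!_e$.

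In the $\odot$ case (with $\odot \in \{\otimes, +\}$ and $\boxdot \in \{\vartimes, \amalg\}$ as in Notation~\ref{not:odot}) the work is purely to type-check the composite and to note it is built from isomorphisms. Whiskering the natural isomorphism $\beta^{pe} : \odot_e \circ (F_{pe} \times F_{pe}) \naturalto F_{pe} \circ \boxdot_{pe}$ of Theorem~\ref{thm:boxdot is continuous} by $\langle \flrb{\Theta \vdash P}, \flrb{\Theta \vdash Q} \rangle$ yields an isomorphism with target $F_{pe} \circ \flrb{\Theta \vdash P \odot Q}$; post-composing $\odot_e$ with $\langle \alpha^{\Theta \vdash P}, \alpha^{\Theta \vdash Q} \rangle$ (the inductive isomorphisms) yields an isomorphism with source $\lrb{\Theta \vdash P \odot Q} \circ F_{pe}^{\times |\Theta|}$ whose target matches. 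Their vertical composite is therefore a natural isomorphism of the required type.

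The $\mu$ case is the crux, and it is where the coherence machinery of Section~\ref{sub:relating} pays off. I would instantiate Corollary~\ref{cor:alpha-n} (whose underlying construction is Theorem~\ref{thm:alpha-dagger-def}) with $n := |\Theta|$, $\BB := \PE$, $\DD := \LL_e$, $M := F_{pe}$, $T := \lrb{\Theta, X \vdash P}$, $H := \flrb{\Theta, X \vdash P}$, and $\alpha := \alpha^{\Theta, X \vdash P}$. The hypotheses of Assumption~\ref{ass:parameterised-square} are met: $\PE$ and $\LL_e$ are $\omega$-categories with initial objects, $T$ and $H$ are $\omega$-functors, $F_{pe}$ is a strict $\omega$-functor carrying the required isomorphism $z : 0 \to F_{pe}\varnothing$, and the parameter functor is precisely $F_{pe}^{\times n} = M^{\times n}$; moreover $\alpha^{\Theta, X \vdash P}$ is a natural isomorphism by the inductive hypothesis. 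The corollary then delivers a natural isomorphism $T^\dagger \circ F_{pe}^{\times n} \naturalto F_{pe} \circ H^\dagger$, and since $T^\dagger = \lrb{\Theta \vdash \mu X. P}$ and $H^\dagger = \flrb{\Theta \vdash \mu X. P}$ by definition (Figure~\ref{fig:type-interpretation}), this is exactly $\alpha^{\Theta \vdash \mu X. P}$ of the claimed type. I expect the only real difficulty to be this bookkeeping: confirming that the mediating parameter functor is the $n$-fold power of the single functor $F_{pe}$ and that the variances and context lengths align, after which Corollary~\ref{cor:alpha-n} closes the case directly.
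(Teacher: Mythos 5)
Your proposal is correct and follows essentially the same route as the paper's proof: induction on the derivation, with the base cases established as on-the-nose equalities of functors, the $\odot$ case handled by composing the whiskered $\beta^{pe}$ with the inductively obtained isomorphisms, and the $\mu$ case closed by the $(-)^\dagger$ construction of Theorem~\ref{thm:alpha-dagger-def} (via Corollary~\ref{cor:alpha-n}). The only difference is that you spell out the justification of $!_e = F_{pe} \circ G_{pe}$ and the verification of Assumption~\ref{ass:parameterised-square}, which the paper leaves implicit.
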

\begin{proof}
In Appendix~\ref{proof:alpha-def}.
\end{proof}

\begin{figure}
\centering
\stikz[0.95]{big-conjecture-simple.tikz}
\caption{Relationship between standard and non-linear interpretations of non-linear types.} \label{fig:alpha-def}
\end{figure}

In FPC, one has to prove a substitution lemma that shows the interpretation of types respects type substitution.
In LNL-FPC, we have to prove three such lemmas: one for the standard
interpretation $\lrb{\Theta \vdash A}$ of (arbitrary) types, one for the non-linear interpretation $\flrb{\Theta \vdash P}$
of non-linear types, and one for the natural isomorphism $\alpha^{\Theta \vdash P}$ that relates them.
In order to do this, one first has to show a permutation and a contraction lemma.

\begin{lem}[Permutation]\label{lem:permutation}
Given types $\Theta, X, Y, \Theta' \vdash A$ and $\Theta, X, Y, \Theta' \vdash P$,
with $P$ non-linear:
\begin{enumerate}
\item $\lrb{\Theta, Y, X, \Theta' \vdash A} = \lrb{\Theta, X, Y, \Theta' \vdash A} \circ \swap_{m,m'}$
\item $\flrb{\Theta, Y, X, \Theta' \vdash P} = \flrb{\Theta, X, Y, \Theta' \vdash P} \circ \swap_{m,m'}$
\item $\alpha^{\Theta, Y, X, \Theta' \vdash P} = \alpha^{\Theta, X, Y, \Theta' \vdash P} \swap_{m,m'}$
\end{enumerate}
where
$|\Theta| = m$, $|\Theta'| = m'$
and $\swap_{m,m'} = \langle \Pi_1, \ldots, \Pi_m, \Pi_{m+2}, \Pi_{m+1}, \Pi_{m+3}, \ldots, \Pi_{m+m'+2} \rangle$.
\end{lem}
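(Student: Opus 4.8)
The plan is to prove all three statements by structural induction on the type, carried out in the order (1), (2), (3): the $!A$ clause of (2) invokes (1), and every clause of (3) invokes both (1) and (2), so I will establish (1) completely first, then (2), then (3). Throughout I treat $\swap_{m,m'} = \langle \Pi_1, \ldots, \Pi_m, \Pi_{m+2}, \Pi_{m+1}, \Pi_{m+3}, \ldots, \Pi_{m+m'+2} \rangle$ as a strict $\omega$-endofunctor on the relevant power category, and I repeatedly use that pairing commutes with precomposition, $\langle G, H \rangle \circ K = \langle G \circ K, H \circ K \rangle$, together with the fact that whiskering distributes over vertical composition and over the pairing of natural transformations of \secref{sub:notation}.

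For part~(1) the base case $A = \Theta_j$ splits into four subcases according to whether the referenced variable lies in $\Theta$, equals $X$, equals $Y$, or lies in $\Theta'$; in each the interpretation is a single projection and the identity $\Pi_i \circ \swap_{m,m'} = \Pi_{\sigma(i)}$, where $\sigma$ is the transposition of $m+1$ and $m+2$, settles the claim at once. The connective cases $!, +, \otimes, \multimap$ follow immediately from the inductive hypothesis, since precomposition with $\swap_{m,m'}$ commutes with pairing and with whiskering by the fixed $\omega$-functors $!_e, +_e, \otimes_e, \multimap_e$. The only substantial case is $\mu Z.A$, where $\lrb{{-} \vdash \mu Z.A} = \lrb{{-}, Z \vdash A}^\dagger$: here I use Proposition~\ref{prop:dagger-equal} with $H = \swap_{m,m'}$ to rewrite $\lrb{\Theta, X, Y, \Theta', Z \vdash A}^\dagger \circ \swap_{m,m'}$ as $\big(\lrb{\Theta, X, Y, \Theta', Z \vdash A} \circ (\swap_{m,m'} \times \Id)\big)^\dagger$, note the bookkeeping identity $\swap_{m,m'} \times \Id = \swap_{m,m'+1}$ (the new variable $Z$ sits in the untouched tail), and apply the inductive hypothesis for $A$ in the longer context to identify the argument of $\dagger$ with $\lrb{\Theta, Y, X, \Theta', Z \vdash A}$.

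Part~(2) is entirely analogous, but carried out in $\PE$ rather than $\LL_e$, with the $\mu$-case again using Proposition~\ref{prop:dagger-equal} (stated for arbitrary $\omega$-categories). The one genuinely new point is the $!A$ clause, where $\flrb{\Theta \vdash !A} = G_{pe} \circ \lrb{\Theta \vdash A} \circ F_{pe}^{\times |\Theta|}$: here I first slide the swap past the coordinatewise functor via $F_{pe}^{\times n} \circ \swap_{m,m'} = \swap_{m,m'} \circ F_{pe}^{\times n}$ (both swaps are assembled from the same projections, and $F_{pe} \circ \Pi_i = \Pi_i \circ F_{pe}^{\times n}$), and then invoke part~(1) for the arbitrary type $A$ appearing under $!$.

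For part~(3) I induct on $P$ using the defining clauses of $\alpha$ from Theorem~\ref{thm:alpha-def}. The cases $\Theta_j$ and $!A$ are identity transformations, and whiskering an identity with $\swap_{m,m'}$ is again an identity, whose (equal) domain and codomain are supplied by parts~(1) and~(2). In the $P \odot Q$ case I push the whiskering through the vertical composite and through $\beta^{pe}\langle -, - \rangle$ and $\odot_e\langle -, - \rangle$, reducing to the inductive hypotheses for $P$ and $Q$. The main obstacle is the $\mu Z.P$ case, where $\alpha^{\Theta \vdash \mu Z.P} = (\alpha^{\Theta, Z \vdash P})^\dagger$ and I must commute the transformation-level operation $(-)^\dagger$ (of Theorem~\ref{thm:alpha-dagger-def}) with whiskering by $\swap_{m,m'}$. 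This is precisely Lemma~\ref{lem:alpha-operations}(2) in the special case $\gamma = \id$ and $Q = \swap_{m,m'}$, which gives $\alpha^\dagger\, \swap_{m,m'} = (\alpha(\swap_{m,m'} \times \Id))^\dagger$; combining it with the same identity $\swap_{m,m'} \times \Id = \swap_{m,m'+1}$ and the inductive hypothesis for $P$ in the extended context closes the argument. The care needed to phrase these $\mu$-cases at the level of the transformation-valued $\dagger$, and to keep track that the recursion variable always lands in the fixed tail of the context, is the part most likely to demand attention.
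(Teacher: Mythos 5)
Your proof is correct and is exactly the structural induction the paper intends: it defers to Fiore's Lemma C.0.1 for the same argument, with the $\mu$-cases handled by Proposition~\ref{prop:dagger-equal} (for the functor-level statements) and Lemma~\ref{lem:alpha-operations}(2) with $\gamma = \id$ (for the $\alpha$-level statement), which are precisely the lemmas you invoke. The bookkeeping identity $\swap_{m,m'} \times \Id = \swap_{m,m'+1}$ and the commutation of $\swap$ with $F_{pe}^{\times n}$ are the right supporting observations.
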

\begin{proof}
Essentially the same as~\cite[Lemma C.0.1]{fiore-thesis}.
\end{proof}

\begin{lem}[Contraction]\label{lem:contraction}
Given types $\Theta, \Theta' \vdash A$ and $\Theta, \Theta' \vdash P$,
with $P$ non-linear:
\begin{enumerate}
\item $\lrb{\Theta, X, \Theta' \vdash A} = \lrb{\Theta,  \Theta' \vdash A} \circ \drop_{m,m'}$
\item $\flrb{\Theta, X, \Theta' \vdash P} = \flrb{\Theta,  \Theta' \vdash P} \circ \drop_{m,m'}$
\item $\alpha^{\Theta, X, \Theta' \vdash P} = \alpha^{\Theta,  \Theta' \vdash P} \drop_{m,m'}$
\end{enumerate}
where
$X \not\in \Theta \cup \Theta'$,
$|\Theta| = m$, $|\Theta'| = m'$
and $\drop_{m,m'} = \langle \Pi_1, \ldots, \Pi_m, \Pi_{m+2}, \ldots, \Pi_{m+m'+1} \rangle$.
\end{lem}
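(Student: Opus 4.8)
The plan is to prove all three clauses simultaneously by induction on the derivation of $\Theta,\Theta' \vdash A$, which (since non-linear types obey the same formation rules) also handles $\Theta,\Theta' \vdash P$. The three clauses must be established together: the domain and codomain of the natural isomorphism $\alpha^{\Theta,X,\Theta' \vdash P}$ are assembled from $\lrb{-}$ and $\flrb{-}$, so clauses (1) and (2) are prerequisites for clause (3) to even be well-typed. The one structural fact I would isolate at the outset is the factorisation
\[ \drop_{m,m'+1} = \drop_{m,m'} \times \Id, \]
valid because both functors delete the $(m+1)$-st coordinate and leave every other coordinate (including the last) untouched; this is exactly the shape required by Proposition~\ref{prop:dagger-equal}. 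I would also record the evident commutations $\drop_{m,m'} \circ F_{pe}^{\times(m+m'+1)} = F_{pe}^{\times(m+m')} \circ \drop_{m,m'}$ and $\Pi_i \circ \drop_{m,m'} = \Pi_{i}$ or $\Pi_{i+1}$, which hold since $\drop$ is a tuple of projections and $F_{pe}^{\times n}$ acts coordinatewise.

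The base and connective cases are routine. A variable occurring at position $i$ in $\Theta,\Theta'$ sits at position $i$ (if $i \le m$) or $i+1$ (if $i>m$) in $\Theta,X,\Theta'$, and in either case $\Pi_i \circ \drop_{m,m'}$ computes precisely that projection, settling clause (1); clause (2) is identical in $\PE$; and clause (3) holds because $\alpha$ on a variable and on $!A$ is the identity, whose whiskering by $\drop$ is again the identity. For the connectives $!A$, $P \odot Q$ and $A \multimap B$ I would apply the inductive hypotheses to the immediate subtypes and then push $\drop_{m,m'}$ through the defining tuplings and postcompositions of Figure~\ref{fig:type-interpretation}, using $\langle FK, GK\rangle = \langle F,G\rangle K$ together with the commutation of $\drop$ with $F_{pe}^{\times(\cdot)}$; for clause (3) on $P \odot Q$ one reads off the definition of $\alpha^{\Theta \vdash P \odot Q}$ from Theorem~\ref{thm:alpha-def} and reassociates the whiskering.

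The crux is the $\mu$-case. Writing $\mu Y.A$, we have $\lrb{\Theta,X,\Theta' \vdash \mu Y.A} = \lrb{\Theta,X,\Theta',Y \vdash A}^\dagger$. Applying the inductive hypothesis (clause (1)) to the body in the extended context $\Theta,\Theta',Y$ — where the inserted $X$ now sits inside a right-hand block of length $m'+1$ — gives $\lrb{\Theta,X,\Theta',Y \vdash A} = \lrb{\Theta,\Theta',Y \vdash A} \circ \drop_{m,m'+1}$. Using the factorisation above and then Proposition~\ref{prop:dagger-equal} with $H = \drop_{m,m'}$, I obtain
\[ \lrb{\Theta,X,\Theta',Y \vdash A}^\dagger = \big(\lrb{\Theta,\Theta',Y \vdash A} \circ (\drop_{m,m'} \times \Id)\big)^\dagger = \lrb{\Theta,\Theta',Y \vdash A}^\dagger \circ \drop_{m,m'}, \]
which is exactly $\lrb{\Theta,\Theta' \vdash \mu Y.A} \circ \drop_{m,m'}$. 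Clause (2) is the same computation carried out in $\PE$. For clause (3) the same factorisation feeds the natural-transformation analogue of Proposition~\ref{prop:dagger-equal}, namely Lemma~\ref{lem:alpha-operations}(2) applied with $Q = \drop_{m,m'}$ and $\gamma$ the identity (legitimate precisely by the commutation of $\drop$ with $F_{pe}^{\times(\cdot)}$), which yields $(\alpha^{\Theta,\Theta',Y \vdash P}\,\drop_{m,m'+1})^\dagger = (\alpha^{\Theta,\Theta',Y \vdash P})^\dagger\,\drop_{m,m'}$ and hence the desired identity $\alpha^{\Theta,X,\Theta' \vdash \mu Y.P} = \alpha^{\Theta,\Theta' \vdash \mu Y.P}\,\drop_{m,m'}$.

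I expect the main obstacle to be precisely the bookkeeping in the $\mu$-case: checking that the reindexing functor $\drop$ factors through the last (recursion) coordinate as $\drop \times \Id$, and that this factorisation matches the hypotheses of Proposition~\ref{prop:dagger-equal} and, at the level of natural transformations, of Lemma~\ref{lem:alpha-operations}(2). Everything else is a mechanical propagation of $\drop$ through the inductive clauses, closely mirroring the permutation argument of Lemma~\ref{lem:permutation}.
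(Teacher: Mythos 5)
Your proposal is correct and follows exactly the route the paper intends: the paper's one-line proof defers to \cite[Lemma C.0.2]{fiore-thesis}, whose argument is precisely your induction on type structure with the $\mu$-case resolved by the factorisation $\drop_{m,m'+1} = \drop_{m,m'} \times \Id$ fed into Proposition~\ref{prop:dagger-equal} (for clauses (1)--(2)) and Lemma~\ref{lem:alpha-operations}(2) with $\gamma = \id$ (for clause (3)), which is also exactly how the paper handles the corresponding $\mu$-case in its appendix proof of the substitution lemma. Nothing is missing.
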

\begin{proof}
Essentially the same as~\cite[Lemma C.0.2]{fiore-thesis}.
\end{proof}

In all models of FPC, the substitution lemma (and also the permutation and contraction lemmas) hold up to isomorphism. However,
in models where the equality of Proposition~\ref{prop:dagger-equal} holds, this can be strengthened to an equality (cf.~\cite[pp. 181]{fiore-thesis}).
Indeed, in LNL-FPC the same is true for the standard substitution lemma, but
the non-linear substitution lemma holds only up to isomorphism, which is induced by the isomorphism $\alpha$.

\begin{lem}[Substitution]\label{lem:alpha-substitution}
Given types $\Theta, X \vdash A$ and $\Theta \vdash B$ and $\Theta, X \vdash P$ and $\Theta \vdash R$ with $P$ and $R$ non-linear:
\begin{enumerate}
\item $\lrb{\Theta \vdash A[B/X]} = \lrb{\Theta, X \vdash A} \circ \langle \Id, \lrb{\Theta \vdash B}\rangle$
\item
	$\flrb{\Theta \vdash P[R/X]} \cong \flrb{\Theta, X \vdash P} \circ \langle \Id, \flrb{\Theta \vdash R}\rangle,$
	where the natural isomorphism, denoted $\gamma^{\Theta \vdash P[R/X]}$ is given by:
	\begin{align*}
	\gamma^{\Theta \vdash \Theta_i [R/X]} & =\id \\
	\gamma^{\Theta \vdash X [R/X]} & =\id  \\
	\gamma^{\Theta \vdash !A [R/X]} & =G_{pe} \lrb{\Theta,X\vdash A} \langle F_{pe}^{\times |\Theta|},\alpha^{\Theta\vdash R}\rangle\\
	\gamma^{\Theta \vdash (P \odot Q) [R/X]} & = \boxdot_{pe} \langle \gamma^{\Theta\vdash P[R/X]},\gamma^{\Theta\vdash Q[R/X]}\rangle\\
	\gamma^{\Theta \vdash \mu Y.P [R/X]} & = (\gamma^{\Theta,Y\vdash P[R/X]})^\dagger
	\end{align*}
\item \label{list:alpha3}
$ \alpha^{\Theta \vdash P[R/X]} = (F_{pe}\gamma^{\Theta\vdash P[R/X]})^{-1} \circ \alpha^{\Theta, X \vdash P} \langle \Id, \flrb{\Theta \vdash R}\rangle \circ \lrb{\Theta, X\vdash P} \langle F_{pe}^{\times |\Theta|}, \alpha^{\Theta \vdash R} \rangle$,
(see Figure~\ref{fig:substitution}).
\end{enumerate}
\begin{figure}
	\cstikz{alpha-substitution.tikz}
	\caption{The commuting diagram of natural isomorphisms for Lemma~\ref{lem:alpha-substitution} (\ref{list:alpha3}).}\label{fig:substitution}
\end{figure}
\end{lem}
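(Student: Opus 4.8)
The plan is to prove all three statements simultaneously by structural induction on the type being substituted into, namely $A$ for part (1) and the non-linear type $P$ for parts (2) and (3). The three parts are genuinely interdependent: the $!A$-clause of the non-linear interpretation is defined through the standard interpretation, so part (1) is invoked inside the inductive step for parts (2) and (3), and the coherence equation of part (3) refers to the isomorphism $\gamma$ introduced in part (2). I would therefore organise the induction so that, at each type former, I first discharge (1), then construct and verify $\gamma$ for (2), and finally check the pentagon of Figure~\ref{fig:substitution} for (3).

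Part (1) is the standard FPC substitution lemma and follows the pattern of Fiore. The variable cases are immediate: for $\Theta_i$ the projection $\Pi_i$ is unaffected by precomposition with $\langle \Id, \lrb{\Theta \vdash B}\rangle$, while for the distinguished variable $X$ the projection $\Pi_{|\Theta|+1}$ picks out exactly $\lrb{\Theta \vdash B}$. The connective cases $!A$, $A \odot B$ and $A \multimap B$ follow from the inductive hypothesis together with the fact that composition distributes over tupling. The only delicate case is $\mu Y. A$, where I would first alpha-rename so that $Y \notin B$, apply the inductive hypothesis in the enlarged context $\Theta, Y$ (using Lemma~\ref{lem:permutation} to move $X$ into last position and Lemma~\ref{lem:contraction} to weaken $B$ by the dummy variable $Y$), and then use Proposition~\ref{prop:dagger-equal} to commute the substitution functor $\langle \Id, \lrb{\Theta \vdash B}\rangle$ past the operation $(-)^\dagger$. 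Because Proposition~\ref{prop:dagger-equal} is an \emph{equality}, part (1) holds on the nose rather than merely up to isomorphism.

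For part (2) the isomorphism $\gamma$ is the identity on both variable cases, since there the two functors are literally equal. The interesting clause is $!A$: unfolding the definition of $\flrb{- \vdash !A}$ and using part (1) on $A$, the two sides differ only in the last coordinate, where one sees $\lrb{\Theta \vdash R} \circ F_{pe}^{\times |\Theta|}$ and the other $F_{pe} \circ \flrb{\Theta \vdash R}$; these are mediated precisely by $\alpha^{\Theta \vdash R}$ (Theorem~\ref{thm:alpha-def}), which accounts for the stated formula $\gamma^{\Theta \vdash !A[R/X]} = G_{pe}\lrb{\Theta,X\vdash A}\langle F_{pe}^{\times|\Theta|}, \alpha^{\Theta\vdash R}\rangle$. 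The $\odot$ clause is then immediate by functoriality of $\boxdot_{pe}$ (Theorem~\ref{thm:boxdot is continuous}) applied to the inductive isomorphisms. The $\mu Y.P$ clause is the crux: here $\gamma$ is obtained by applying $(-)^\dagger$ to the inductive $\gamma^{\Theta,Y \vdash P[R/X]}$ (using Remark~\ref{rem:dagger}, which makes $(-)^\dagger$ act functorially on natural isomorphisms when $N=M=\Id$), again invoking Lemma~\ref{lem:permutation} to reconcile the position of the bound variable $Y$ with the last-coordinate convention of the dagger and Proposition~\ref{prop:dagger-equal} to commute the substitution functor past $(-)^\dagger$.

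Finally, part (3) is a coherence verification that in each case follows by substituting the definitions of $\alpha$ and $\gamma$ into both sides of the pentagon. The variable cases collapse to identities. The $!A$ case reduces to $\id = \id$ once one observes $F_{pe} \circ G_{pe} = {!_e}$, so that $(F_{pe}\gamma^{\Theta \vdash !A[R/X]})^{-1}$ cancels exactly the whiskering $\lrb{\Theta,X \vdash !A}\langle F_{pe}^{\times|\Theta|}, \alpha^{\Theta \vdash R}\rangle$ (the middle factor being the identity since $\alpha^{\Theta,X\vdash !A}=\id$). The $\odot$ case follows from naturality of $\beta^{pe}$ and the inductive hypotheses for $P$ and $Q$. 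I expect the main obstacle, as already in part (2), to be the $\mu$ case: the plan is to apply $(-)^\dagger$ to the entire inductive coherence equation for the body and then rewrite using Lemma~\ref{lem:alpha-operations}, which describes exactly how $(-)^\dagger$ distributes over vertical composition and the relevant whiskerings, so as to recover the pentagon at the level of $\mu Y. P$. Keeping the bookkeeping of the variable reorderings (via the permutation and contraction lemmas) consistent while ensuring every whiskered natural transformation remains type-correct is where the bulk of the care will be required.
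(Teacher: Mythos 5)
Your proposal is correct and follows essentially the same route as the paper's proof: part (1) as in Fiore, then induction on the non-linear type for (2) and (3), with the $!A$ case resolved via $F_{pe}\circ G_{pe}={!_e}$ and $\alpha^{\Theta\vdash R}$, the $\odot$ case via naturality of $\beta^{pe}$, and the $\mu$ case via the permutation and contraction lemmas, Proposition~\ref{prop:dagger-equal}, Remark~\ref{rem:dagger} and Lemma~\ref{lem:alpha-operations}. The only cosmetic differences are that the paper runs the three inductions sequentially rather than simultaneously, and that your $!A$ case for (3) reduces to a trivially true equation between two equal whiskerings rather than literally to $\id=\id$.
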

\begin{proof}
In Appendix~\ref{proof:alpha-substitution}.
\end{proof}

The most important type isomorphism in FPC is the folding and unfolding of
recursive types. Denotationally, this is modelled using the properties of
parameterised initial algebras and the substitution lemma. The same is true in
LNL-FPC, but, in addition to the standard folding and unfolding of recursive types, we also have
a non-linear interpretation for folding and unfolding of non-linear recursive types.

\begin{defi}\label{def:fold-unfold}
Given types $\Theta \vdash \mu X. A$ and $\Theta \vdash \mu X. P,$ with $P$ non-linear, let
\begin{align*}
\sfold^{\Theta \vdash \mu X.A} &: \lrb{\Theta \vdash A[\mu X. A/ X]} \naturalto \lrb{\Theta \vdash \mu X. A} : \LL_e^{|\Theta|} \to \LL_e \\
\ifold^{\Theta \vdash \mu X.P} &: \flrb{\Theta \vdash P[\mu X. P/ X]} \naturalto \flrb{\Theta \vdash \mu X. P} : \PE^{|\Theta|} \to \PE
\end{align*}
be the natural isomorphisms given by
\begin{align*}
\sfold^{\Theta \vdash \mu X.A} &\coloneqq \left( \lrb{\Theta \vdash A[\mu X. A/ X]} = \lrb{\Theta, X \vdash A} \circ \langle \Id, \lrb{\Theta \vdash \mu X. A} \rangle  \xRightarrow{\phi} \lrb{\Theta \vdash \mu X. A} \right) \\
\ifold^{\Theta \vdash \mu X.P} &\coloneqq \left( \flrb{\Theta \vdash P[\mu X. P/ X]} \xRightarrow{\gamma} \flrb{\Theta, X \vdash P} \circ \langle \Id, \flrb{\Theta \vdash \mu X. P} \rangle  \xRightarrow{\phi} \flrb{\Theta \vdash \mu X. P} \right) .
\end{align*}
We denote their inverses by $\sunfold^{\Theta \vdash \mu X.A}$ and $\iunfold^{\Theta \vdash \mu X.P},$ respectively.
\end{defi}

We now show that given a non-linear recursive type, the standard and non-linear
interpretations of its folding and unfolding are also strongly related.
We shall use this theorem later in order to prove that values
$\Theta; \Gamma \vdash \fold\ v: P$ with $P$ non-linear admit a non-linear interpretation. 

\begin{thm}\label{thm:main}
Let $\Theta \vdash \mu X. P$ be a non-linear type. Then the diagram of natural isomorphisms
\cstikz{alpha-intuitionistic.tikz}
commutes.
\end{thm}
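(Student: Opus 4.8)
The plan is to verify the asserted square objectwise, at every $\vec A \in \Ob(\PE^{|\Theta|})$, reducing it to the coherence theorem for parameterised initial algebras (Theorem~\ref{thm:fixpoint-deconstruction}) together with the non-linear substitution lemma (Lemma~\ref{lem:alpha-substitution}). Writing $\vec F := F_{pe}^{\times |\Theta|}$, commutativity amounts to the identity
\[ \alpha^{\Theta \vdash \mu X.P} \circ \left( \sfold^{\Theta \vdash \mu X.P}\, \vec F \right) = \left( F_{pe}\, \ifold^{\Theta \vdash \mu X.P} \right) \circ \alpha^{\Theta \vdash P[\mu X.P/X]}, \]
the unfold form being equivalent to it by inverting all four isomorphisms. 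First I would unfold the ingredients via Definition~\ref{def:fold-unfold} and Figure~\ref{fig:type-interpretation}: the standard fold is the parameterised initial algebra structure $\sfold^{\Theta \vdash \mu X.P} = \phi^{\lrb{\Theta, X \vdash P}}$; the non-linear fold factors as $\ifold^{\Theta \vdash \mu X.P} = \phi^{\flrb{\Theta, X \vdash P}} \circ \gamma^{\Theta \vdash P[\mu X.P/X]}$; and $\alpha^{\Theta \vdash \mu X.P} = (\alpha^{\Theta, X \vdash P})^\dagger$.

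The crucial step is to recognise the present data as an instance of Assumption~\ref{ass:parameterised-square}, taking $\BB := \PE$ and $\DD := \LL_e$ (both $\omega$-categories by Theorem~\ref{thm:continuous embeddings}), $N := \vec F$ and $M := F_{pe}$ (a strict $\omega$-functor by Theorem~\ref{thm:Fpe and Gpe continuous}, with $z : 0 \to F_{pe}\varnothing$ its coherence isomorphism), $T := \lrb{\Theta, X \vdash P}$, $H := \flrb{\Theta, X \vdash P}$, and $\alpha := \alpha^{\Theta, X \vdash P}$, which indeed has the shape $T \circ (N \times M) \naturalto M \circ H$ by Theorem~\ref{thm:alpha-def}. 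Under this dictionary $T^\dagger = \lrb{\Theta \vdash \mu X.P}$, $H^\dagger = \flrb{\Theta \vdash \mu X.P}$ and $(\alpha^{\Theta, X \vdash P})^\dagger = \alpha^{\Theta \vdash \mu X.P}$. Fixing $\vec A$ and abbreviating $T' := T(\vec F \vec A, -)$ and $H' := H(\vec A, -)$, Theorem~\ref{thm:fixpoint-deconstruction} supplies
\[ \alpha^{\Theta \vdash \mu X.P}_{\vec A} \circ y^{T'} = F_{pe}\, y^{H'} \circ \alpha^{\Theta, X \vdash P}_{\vec A,\, \flrb{\Theta \vdash \mu X.P}\vec A} \circ T' \alpha^{\Theta \vdash \mu X.P}_{\vec A} . \]
Since $y^{T'} = \phi^{\lrb{\Theta, X \vdash P}}_{\vec F \vec A} = (\sfold^{\Theta \vdash \mu X.P}\, \vec F)_{\vec A}$ and $F_{pe}\, y^{H'} = F_{pe}\phi^{\flrb{\Theta, X \vdash P}}_{\vec A}$, the left-hand side is already the $\vec A$-component of the target.

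To match the right-hand sides, I would expand $F_{pe}\, \ifold^{\Theta \vdash \mu X.P} = F_{pe}\phi^{\flrb{\Theta, X \vdash P}} \circ F_{pe}\gamma^{\Theta \vdash P[\mu X.P/X]}$ and cancel the common isomorphism $F_{pe}\phi^{\flrb{\Theta, X \vdash P}}_{\vec A}$, whereupon the claim reduces to
\[ \alpha^{\Theta \vdash P[\mu X.P/X]}_{\vec A} = \left( F_{pe}\gamma^{\Theta \vdash P[\mu X.P/X]}_{\vec A} \right)^{-1} \circ \alpha^{\Theta, X \vdash P}_{\vec A,\, \flrb{\Theta \vdash \mu X.P}\vec A} \circ T' \alpha^{\Theta \vdash \mu X.P}_{\vec A} . \]
This is precisely the $\vec A$-component of Lemma~\ref{lem:alpha-substitution}(\ref{list:alpha3}) with $R := \mu X.P$, once one observes that the whiskered transformation $\lrb{\Theta, X \vdash P}\langle \vec F, \alpha^{\Theta \vdash \mu X.P}\rangle$ has component $T' \alpha^{\Theta \vdash \mu X.P}_{\vec A}$ at $\vec A$, because $T(\id_{\vec F \vec A}, -) = T'$. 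I expect the main difficulty to be bookkeeping rather than conceptual: one must check that the $\gamma$ arising in the factorisation of $\ifold^{\Theta \vdash \mu X.P}$ is literally the substitution isomorphism of Lemma~\ref{lem:alpha-substitution}, and carefully track the whiskerings and index evaluations so that the arguments $\vec F \vec A$ and $(\vec A, \flrb{\Theta \vdash \mu X.P}\vec A)$ align across Theorem~\ref{thm:fixpoint-deconstruction} and the substitution lemma.
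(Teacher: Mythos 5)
Your proposal is correct and follows essentially the same route as the paper's proof: both unfold $\ifold$ as $\phi\circ\gamma$, invoke Lemma~\ref{lem:alpha-substitution}(\ref{list:alpha3}) with $R=\mu X.P$, and conclude via Theorem~\ref{thm:fixpoint-deconstruction} applied to $\alpha^{\Theta,X\vdash P}$ viewed as an instance of Assumption~\ref{ass:parameterised-square} with $N=F_{pe}^{\times|\Theta|}$ and $M=F_{pe}$. The only difference is presentational — you verify the square componentwise at $\vec A$ while the paper chains the same equalities at the level of natural transformations.
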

\begin{proof}
\begin{align*}
&\phantom{bbb} F_{pe} \ifold^{\Theta \vdash \mu X. P} \circ \alpha^{\Theta \vdash P[\mu X.P / X]} \\
&= F_{pe} \phi \circ F_{pe} \gamma^{\Theta \vdash P[\mu X. P / X]} \circ \alpha^{\Theta \vdash P[\mu X.P / X]} &(\text{Definition}) \\
&= F_{pe} \phi \circ \alpha^{\Theta, X \vdash P} \langle \Id, \flrb{\Theta \vdash \mu X.P}\rangle \circ
\lrb{\Theta,X \vdash P} \langle F_{pe}^{\times |\Theta|}, \alpha^{\Theta \vdash \mu X.P} \rangle&(\text{Lemma}~\ref{lem:alpha-substitution}\ (\ref{list:alpha3}))\\
&= F_{pe} \phi \circ \alpha^{\Theta, X \vdash P} \langle \Id, \flrb{\Theta, X \vdash P}^\dagger\rangle \circ \lrb{\Theta,X \vdash P} \langle F_{pe}^{\times |\Theta|}, (\alpha^{\Theta, X \vdash P})^\dagger \rangle&(\text{Definition})\\
&= (\alpha^{\Theta, X \vdash P})^\dagger \circ \phi F_{pe}^{\times |\Theta|} &(\text{Theorem}~\ref{thm:fixpoint-deconstruction})\\
&= \alpha^{\Theta \vdash \mu X. P} \circ \sfold^{\Theta \vdash \mu X. P}
  F_{pe}^{\times |\Theta|} &(\text{Definition})~~~~\quad\qedhere
\end{align*}
\end{proof}

\subsection{Interpretation of LNL-FPC Term Contexts}

The interpretation of term contexts is straightforward. The interpretation of an (arbitrary) term context $\Theta \vdash \Gamma$ is an $\omega$-functor
$\lrb{\Theta \vdash \Gamma} : \LL_e^{|\Theta|} \to \LL_e$ (see Figure~\ref{fig:term-context-interpretation}, left). A non-linear term context $\Theta \vdash \Phi$ admits
an additional non-linear interpretation as an $\omega$-functor $\flrb{\Theta \vdash \Phi} : \PE^{|\Theta|} \to \PE$ (see Figure~\ref{fig:term-context-interpretation}, right).
Just like with the interpretation of types, the two are strongly related.

\begin{thm}\label{thm:alpha-context}
	For any non-linear term context $\Theta \vdash \Phi,$
	there exists a natural isomorphism
	\[ \alpha^{\Theta \vdash \Phi}: \lrb{\Theta \vdash \Phi} \circ F_{pe}^{\times |\Theta|} \naturalto F_{pe} \circ \flrb{\Theta \vdash \Phi} : \PE^{|\Theta|} \to \LL_e \]
	that is given by:
	\begin{align*}
    \alpha^{\Theta \vdash \cdot} & \coloneqq k_u
      \qquad (\text{the constant } u \text{ natural transformation}) \\
    \alpha^{\Theta \vdash \Phi, x : P} & \coloneqq m^{pe} \langle \flrb{\Theta \vdash \Phi}, \flrb{\Theta \vdash P} \rangle \circ \otimes_e \langle \alpha^{\Theta \vdash \Phi}, \alpha^{\Theta \vdash P}\rangle
      \mbox{} \qquad \text{(see Theorem~\ref{thm:boxdot is continuous}).}
	\end{align*}
\end{thm}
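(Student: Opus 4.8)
The plan is to proceed by \textbf{induction on the length of the non-linear term context} $\Phi$ (equivalently, on the derivation of $\Theta \vdash \Phi$), mirroring almost verbatim the proof of Theorem~\ref{thm:alpha-def}. A non-linear term context is interpreted linearly as an iterated $\otimes$ and non-linearly as an iterated $\vartimes$, so the inductive step is nothing but the $\odot$-case of that theorem with $\odot$ fixed to be $\otimes$ (and $\boxdot = \vartimes$), while the empty context supplies a nullary base case not present there. At each stage I would check two things: that the stated formula type-checks as a natural transformation between the asserted functors $\PE^{|\Theta|} \to \LL_e$, and that it is in fact a natural \emph{isomorphism} (so that all its components are genuine embeddings).

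For the base case $\Phi = \cdot$, the functor $\lrb{\Theta \vdash \cdot} \circ F_{pe}^{\times |\Theta|}$ is constant at the tensor unit $I \in \LL$, while $F_{pe} \circ \flrb{\Theta \vdash \cdot}$ is constant at $F(1)$, where $1$ is the terminal object of $\CC$. I would take $\alpha^{\Theta \vdash \cdot} \coloneqq k_u$, the constant natural transformation at the structure isomorphism $u : I \to F(1)$ supplied by the strong monoidal structure of $F$ (see \secref{sub:lnl}). Since $u$ is an isomorphism it is in particular an embedding, so $k_u$ is indeed a natural isomorphism in $[\PE^{|\Theta|}, \LL_e]$.

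For the inductive step $\Phi, x : P$, I would invoke the induction hypothesis that $\alpha^{\Theta \vdash \Phi}$ is a natural isomorphism together with Theorem~\ref{thm:alpha-def}, which gives that $\alpha^{\Theta \vdash P}$ is a natural isomorphism. The definition exhibits $\alpha^{\Theta \vdash \Phi, x:P}$ as the vertical composite of the whiskered transformation $\otimes_e \langle \alpha^{\Theta \vdash \Phi}, \alpha^{\Theta \vdash P}\rangle$, followed by $m^{pe}\langle \flrb{\Theta \vdash \Phi}, \flrb{\Theta \vdash P}\rangle$. The pairing $\langle \alpha^{\Theta \vdash \Phi}, \alpha^{\Theta \vdash P}\rangle$ is a natural isomorphism because pairing is computed componentwise in the product category, and whiskering it by the functor $\otimes_e$ preserves this. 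Its codomain is $\otimes_e \circ \langle F_{pe} \flrb{\Theta \vdash \Phi}, F_{pe}\flrb{\Theta \vdash P}\rangle = \otimes_e \circ (F_{pe} \times F_{pe}) \circ \langle \flrb{\Theta \vdash \Phi}, \flrb{\Theta \vdash P}\rangle$, which is exactly the domain of the whiskered isomorphism $m^{pe}\langle \flrb{\Theta \vdash \Phi}, \flrb{\Theta \vdash P}\rangle$ coming from Theorem~\ref{thm:boxdot is continuous}; its codomain is $F_{pe} \circ \vartimes_{pe} \circ \langle \flrb{\Theta \vdash \Phi}, \flrb{\Theta \vdash P}\rangle = F_{pe} \circ \flrb{\Theta \vdash \Phi, x:P}$. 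Thus the composite is well-formed and, being a vertical composite of two natural isomorphisms, is itself a natural isomorphism of the required type.

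The main obstacle here is purely bookkeeping rather than conceptual: I must confirm that precomposition with $F_{pe}^{\times |\Theta|}$ commutes with pairing, i.e. $\langle \lrb{\Theta\vdash\Phi}, \lrb{\Theta\vdash P}\rangle \circ F_{pe}^{\times |\Theta|} = \langle \lrb{\Theta\vdash\Phi}\circ F_{pe}^{\times |\Theta|}, \lrb{\Theta\vdash P}\circ F_{pe}^{\times |\Theta|}\rangle$, and that $\langle F_{pe}G, F_{pe}H\rangle = (F_{pe}\times F_{pe})\langle G, H\rangle$, so that the domains and codomains of the two whiskered transformations line up. All the genuinely nontrivial content — the existence and invertibility of $m^{pe}$ and the existence of $\alpha^{\Theta\vdash P}$ — has already been discharged in Theorems~\ref{thm:boxdot is continuous} and~\ref{thm:alpha-def}, so no further difficulty is expected.
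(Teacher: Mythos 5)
Your proof is correct and is exactly the ``straightforward verification'' the paper has in mind: the paper's own proof of Theorem~\ref{thm:alpha-context} is left as such, and your inductive argument is the evident adaptation of the $P \odot Q$ case of Theorem~\ref{thm:alpha-def} (with $\odot = \otimes$, $\boxdot = \vartimes$) plus the constant-$u$ base case. Nothing is missing.
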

\begin{proof}
Straightforward verification.
\end{proof}

\begin{figure}
\centering
\begin{align*}
\lrb{\Theta \vdash \Gamma} &: \LL_e^{|\Theta|} \to \LL_e  &  \flrb{\Theta \vdash \Phi} &: \PE^{|\Theta|} \to \PE \\
\lrb{\Theta \vdash \cdot} &\coloneqq K_I  &  \flrb{\Theta \vdash \cdot} &\coloneqq K_1 \\
\lrb{\Theta \vdash \Gamma, x : A} &\coloneqq \otimes_e \circ \langle \lrb{\Theta \vdash \Gamma} , \lrb{\Theta \vdash A} \rangle &  \flrb{\Theta \vdash \Phi, x : P} &\coloneqq \vartimes_{pe} \circ \langle \flrb{\Theta \vdash \Phi}, \flrb{\Theta \vdash P} \rangle
\end{align*}

\caption{Interpretation of term contexts (left); non-linear interpretation of non-linear term contexts (right). $K_X$ is the constant $X$ functor.}\label{fig:term-context-interpretation}
\end{figure}

\subsection{Interpretation of LNL-FPC Terms}

We introduce some notation for brevity that we use throughout the rest of the section. Given a type context $\Theta,$ we use $\vec{Z}$ to range over $\text{Ob}(\PE^{|\Theta|}) = \Ob(\CC^{|\Theta|})$.
We write $\vv{FZ}$ for $F_{pe}^{\times |\Theta|} \vec Z = F^{\times |\Theta|} \vec Z .$
We use $\Upsilon$ to range over both types and term contexts and we use $\Psi$ to range over non-linear types and non-linear term contexts.
Then, let
\begin{align*}
\lrbtz{\Upsilon}   &\coloneqq \lrb{\Theta \vdash \Upsilon} \left( \vv{FZ} \right) \in \Ob(\LL), &
\flrbtz{\Psi}  &\coloneqq \flrb{\Theta \vdash \Psi} \left( \vec Z \right) \in \Ob(\CC).
\end{align*}
With this notation in place, from the previous two subsections we have isomorphisms (in $\LL$)
\begin{align*}
 \sunfold^{\Theta \vdash \mu X. A}_{\vv{FZ} } : \lrbtz{\mu X. A} &\cong \lrbtz{A[\mu X. A / X]} : \sfold^{\Theta \vdash \mu X. A}_{\vv{FZ}}, &
 \alpha^{\Theta \vdash \Psi}_{\vec Z} : \lrbtz{\Psi} &\cong F \flrbtz{\Psi} . 
\end{align*}
\begin{rem}
The isomorphisms above are natural in $\PE^{|\Theta|}$, which we saw is
crucial for defining the type interpretations. However, as in FPC, this
naturality is irrelevant for the term interpretations -- we only need the fact that
each component is an isomorphism in $\LL$. Thus, there is no danger in working within
$\CC$ and $\LL$, instead of $\PE$ and $\LL_e$.
\end{rem}

Next, we show how to interpret the substructural rules of Intuitionistic Linear Logic in a $\cpo$-LNL model.
The isomorphism $\alpha$ plays a fundamental part.

\begin{defi}\label{def:discard}
We define discarding ($\diamond$), copying ($\triangle$) and promotion ($\Box$) morphisms:
{%
\small{%
\begin{align*}
  \diamond_{\vec Z}^{\Theta \vdash \Psi}  &\coloneqq \lrbtz{\Psi} \xrightarrow{\alpha} F \flrbtz{\Psi} \xrightarrow{F1} F1 \xrightarrow{u^{-1}} I \\
  \triangle_{\vec Z}^{\Theta \vdash \Psi} &\coloneqq \lrbtz{\Psi} \xrightarrow{\alpha} F \flrbtz{\Psi} \xrightarrow{F \langle \id, \id \rangle } F \left( \flrbtz{\Psi} \vartimes \flrbtz{\Psi} \right) \xrightarrow{m^{-1}} F \flrbtz{\Psi} \otimes F \flrbtz{\Psi}
      \xrightarrow{\alpha^{-1} \otimes \alpha^{-1}} \lrbtz{\Psi} \otimes \lrbtz{\Psi} \\
  \Box_{\vec Z}^{\Theta \vdash \Psi}      &\coloneqq \lrbtz{\Psi} \xrightarrow{\alpha} F \flrbtz{\Psi} \xrightarrow{F\eta}\, !F \flrbtz{\Psi} \xrightarrow{!\alpha^{-1}}\ ! \lrbtz{\Psi}
\end{align*}
}%
}%
where, for brevity, we have written $\alpha$ instead of $\alpha^{\Theta \vdash \Psi}_{\vec Z}.$
\end{defi}

\begin{prop}\label{prop:comonoid}
The triple $\left( \lrbtz \Psi, \triangle_{\vec Z}^{\Theta \vdash \Psi}, \diamond_{\vec Z}^{\Theta \vdash \Psi} \right)$ forms a cocommutative
comonoid in $\LL$.
\end{prop}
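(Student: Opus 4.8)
The plan is to recognise the stated triple as the transport, along the isomorphism $\alpha$, of the $F$-image of the \emph{canonical} comonoid that every object of a cartesian category carries. First I would recall the purely cartesian fact: in $(\CC, \vartimes, 1)$ each object is a cocommutative comonoid for the product, so in particular $\flrbtz{\Psi} \in \Ob(\CC)$, equipped with the diagonal $\langle \id, \id \rangle : \flrbtz{\Psi} \to \flrbtz{\Psi} \vartimes \flrbtz{\Psi}$ and the terminal morphism $\flrbtz{\Psi} \to 1$, satisfies coassociativity, the counit laws, and cocommutativity. These follow directly from the universal property of products and require no computation specific to the model.

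Next I would invoke that $F : \CC \to \LL$ is a \emph{strong} symmetric monoidal functor, with structure isomorphisms $u : I \to F1$ and $m : \otimes \circ (F \times F) \naturalto F \circ \vartimes$. Being strong, $F$ is in particular oplax monoidal via the inverses $u^{-1}$ and $m^{-1}$, and oplax symmetric monoidal functors send cocommutative comonoids to cocommutative comonoids. Concretely, pushing the canonical comonoid on $\flrbtz{\Psi}$ through $F$ yields a cocommutative comonoid in $\LL$ with carrier $F\flrbtz{\Psi}$, comultiplication $m^{-1} \circ F\langle \id, \id \rangle$, and counit $u^{-1} \circ F(\text{terminal morphism})$. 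The verification of the comonoid laws here is exactly the monoidal coherence of $F$: associativity coherence gives coassociativity, unit coherence gives the counit laws, and compatibility with the symmetries gives cocommutativity.

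Then I would transport this comonoid back along the isomorphism $\alpha^{\Theta \vdash \Psi}_{\vec Z} : \lrbtz{\Psi} \xrightarrow{\cong} F\flrbtz{\Psi}$ in $\LL$. Transporting a comonoid $(B, \delta, \epsilon)$ along an isomorphism $\alpha : A \cong B$ produces the comonoid $\bigl(A,\, (\alpha^{-1} \otimes \alpha^{-1}) \circ \delta \circ \alpha,\, \epsilon \circ \alpha \bigr)$, and this preserves cocommutativity. Unwinding Definition~\ref{def:discard}, the transported comultiplication is exactly $(\alpha^{-1} \otimes \alpha^{-1}) \circ m^{-1} \circ F\langle \id, \id \rangle \circ \alpha = \triangle_{\vec Z}^{\Theta \vdash \Psi}$ and the transported counit is exactly $u^{-1} \circ F(\text{terminal}) \circ \alpha = \diamond_{\vec Z}^{\Theta \vdash \Psi}$, so $(\lrbtz{\Psi}, \triangle_{\vec Z}^{\Theta \vdash \Psi}, \diamond_{\vec Z}^{\Theta \vdash \Psi})$ is the asserted cocommutative comonoid.

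The only real obstacle is bookkeeping rather than mathematics: one must check that the oplax structure used to push the comonoid through $F$ inserts $m^{-1}$ and $u^{-1}$ in precisely the positions in which Definition~\ref{def:discard} places them, and similarly that the transport step inserts the two copies of $\alpha^{-1}$ in the tensor factor and the single $\alpha$ on the source as written. No model-specific argument is needed beyond the three standing facts that $\CC$ is cartesian, that $F$ is strong symmetric monoidal, and that each $\alpha^{\Theta \vdash \Psi}_{\vec Z}$ is an isomorphism in $\LL$ (Theorem~\ref{thm:alpha-def}), all of which are already available.
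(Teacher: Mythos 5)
Your proposal is correct and follows essentially the same route as the paper, whose proof is the one-line remark ``by the axioms of monoidal adjunctions and the cartesian structure of $\CC$'': the canonical diagonal comonoid in $\CC$, pushed through the strong (hence oplax) symmetric monoidal functor $F$ and transported along the isomorphism $\alpha^{\Theta\vdash\Psi}_{\vec Z}$, yields exactly the maps of Definition~\ref{def:discard}. Your write-up simply makes explicit the three steps the paper leaves implicit.
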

\begin{proof}
By the axioms of monoidal adjunctions and the cartesian structure of $\CC$.
\end{proof}

We can now define the interpretation of LNL-FPC terms. A term $\Theta; \Gamma \vdash m :A$ is interpreted as a family morphisms of $\LL$ indexed by $\vec Z.$
That is 
\[ \lrb{\Theta ; \Gamma \vdash m :A} := \left\{ \lrbtz{\Gamma \vdash m : A} : \lrbtz{\Gamma} \to \lrbtz{A} \ |\ \vec Z \in \Ob(\CC^{|\Theta|}) \right\}, \]
where $\lrbtz{\Gamma \vdash m : A}$ is defined by induction on the derivation of $\Theta; \Gamma \vdash m : A$ in Figure~\ref{fig:semantics-terms} (notice the remark).
Just as in FPC, the interpretation of terms is invariant with respect to both
$\Theta$ and $\vec Z$, so omitting the subscript and superscript annotations
(as we have done) should not lead to confusion, but on the other hand it improves readability. For brevity, we also informally write $\lrb{m}$ instead of $\lrbtz{\Gamma \vdash m :A}$, which
can be inferred from context.

\begin{figure}
\begin{align*}
&\lrb{\Phi, x: A \vdash x : A} \coloneqq
\lrb{\Phi} \otimes \lrb{A} \xrightarrow{\diamond \otimes \id} I \otimes \lrb{A} \xrightarrow{\cong} \lrb{A} \\
&\lrb{\Gamma \vdash \lleft_{A,B} m : A+B} \coloneqq
\lrb{\Gamma}\xrightarrow{\lrb{m}} \lrb{A} \xrightarrow{\mathrm{left}} \lrb{A}+\lrb{B}=\lrb{A+B} \\
&\lrb{\Gamma \vdash \rright_{A,B} m : A+B} \coloneqq
\lrb{\Gamma}\xrightarrow{\lrb{m}} \lrb{B} \xrightarrow{\mathrm{right}} \lrb{A}+\lrb{B}=\lrb{A+B} \\
&\lrb{\Phi, \Gamma, \Sigma \vdash \ccase\ m\ \texttt{of}\ \{\lleft\ x \to n\ |\ \rright\ y \to p\} : C} \coloneqq
\lrb \Phi \otimes \lrb{\Gamma} \otimes \lrb{\Sigma} 
\xrightarrow{\triangle  \otimes \id} \\
& \quad
\lrb \Phi \otimes \lrb \Phi \otimes \lrb{\Gamma} \otimes \lrb{\Sigma}
\xrightarrow \cong
\lrb \Phi \otimes \lrb{\Sigma} \otimes \lrb \Phi \otimes \lrb{\Gamma}
\xrightarrow{\id \otimes \lrb m}
\lrb \Phi \otimes \lrb{\Sigma} \otimes \lrb{A+B}
\xrightarrow \cong \\
& \quad \left( \lrb \Phi \otimes \lrb{\Sigma} \otimes \lrb A \right)
+
\left( \lrb \Phi \otimes \lrb{\Sigma} \otimes \lrb B \right)
\xrightarrow{\left[ \lrb n, \lrb p \right]}
\lrb C \\
&\lrb{\Phi, \Gamma, \Sigma \vdash \langle m,n \rangle: A \otimes B} \coloneqq
      \lrb{\Phi}\otimes\lrb{\Gamma}\otimes\lrb{\Sigma}
        \xrightarrow{\triangle  \otimes \id}
      \lrb{\Phi} \otimes \lrb{\Phi}\otimes\lrb{\Gamma}\otimes\lrb{\Sigma} \xrightarrow \cong\\
     &\quad \lrb{\Phi}\otimes\lrb{\Gamma}\otimes\lrb{\Phi}\otimes\lrb{\Sigma}
      \xrightarrow{\lrb m \otimes \lrb n}
        \lrb A \otimes \lrb B = \lrb{A \otimes B} \\
&\lrb{\Phi, \Gamma, \Sigma \vdash\llet\ \langle x,y \rangle=m\ \text{in}\ n:C} := \lrb{\Phi}\otimes\lrb{\Gamma}\otimes\lrb{\Sigma}
  			\xrightarrow{\triangle \otimes\id}\lrb{\Phi}\otimes\lrb{\Phi}\otimes \lrb{\Gamma}\otimes\lrb{\Sigma}\xrightarrow \cong \\
        &\quad \lrb{\Phi}\otimes\lrb{\Gamma}\otimes\lrb{\Phi}\otimes\lrb{\Sigma}
  			\xrightarrow{\lrb{m}\otimes \id}
  			\lrb{A\otimes B}\otimes \lrb{\Phi}\otimes\lrb{\Sigma} \xrightarrow \cong
       	\lrb{\Phi}\otimes\lrb{\Sigma}\otimes\lrb{A}\otimes\lrb{B}
  			\xrightarrow{\lrb{n}}
  			\lrb{C}\\
&\lrb{\Gamma \vdash \lambda x^A . m : A \multimap B} \coloneqq
\lrb{\Gamma} \xrightarrow{\textbf{curry}({\lrb m})} (\lrb A \multimap \lrb B) = \lrb{A \multimap B} \\
&\lrb{\Phi, \Gamma, \Sigma \vdash mn : B} \coloneqq
\lrb{\Phi} \otimes \lrb{\Gamma} \otimes \lrb{\Sigma}
\xrightarrow{\triangle \otimes\id}
\lrb{\Phi}\otimes\lrb{\Phi}\otimes
\lrb{\Gamma}\otimes\lrb{\Sigma}
\xrightarrow \cong \\
& \quad \lrb{\Phi}\otimes\lrb{\Gamma}\otimes\lrb{\Phi}\otimes\lrb{\Sigma}
\xrightarrow{\lrb{m}\otimes \lrb{n}}
\left( \lrb{A} \multimap \lrb{B} \right) \otimes \lrb{A}\xrightarrow{\eval}\lrb{B} \\
&\lrb{\Phi \vdash \lift\ m :\ !A} \coloneqq \lrb \Phi \xrightarrow{\Box}\ !\lrb \Phi \xrightarrow{! \lrb m}\ ! \lrb A = \lrb{!A} \\
&\lrb{\Gamma \vdash \force\ m :A} \coloneqq \lrb \Gamma \xrightarrow{\lrb m}\ ! \lrb A \xrightarrow{\epsilon} \lrb A \\
&\lrb{\Gamma \vdash \fold{}_{\mu X.A} m: \mu X. A} \coloneqq \lrb \Gamma \xrightarrow{\lrb m} \lrb{A[\mu X. A / X]} \xrightarrow{\sfold} \lrb{\mu X. A} \\
&\lrb{\Gamma \vdash \unfold\ m: A [\mu X. A / X]} \coloneqq \lrb \Gamma \xrightarrow{\lrb m} \lrb{\mu X. A} \xrightarrow{\sunfold} \lrb{A[\mu X. A / X]}
\end{align*}
where, for brevity, we write $\lrb{\Upsilon}$ for $\lrbtz{\Upsilon}$ for all types or term contexts $\Upsilon$ and where we omit the subscripts and type superscripts of some morphisms (which can be easily inferred).
\caption{Interpretation of \calc{} terms.}\label{fig:semantics-terms}
\end{figure}

The careful reader might have noticed that the above notation is not completely
precise, because terms do not have unique derivations. Technically, the
definition should be read as providing an interpretation for a derivation $D$
of $\Theta; \Gamma \vdash m : A$. However, the next theorem justifies our
notation.

\begin{thm}\label{thm:derivations}
Let $D_1$ and $D_2$ be two derivations of a judgement $\Theta; \Gamma \vdash m: A.$
Then $\lrbtz{D_1} = \lrbtz{D_2}.$
\end{thm}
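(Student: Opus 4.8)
The plan is to prove this by induction on the term $m$, exploiting the fact that type assignment is unique (so $A$ is determined) and that the head symbol of $m$ determines the last rule of \emph{any} derivation of $\Theta;\Gamma\vdash m:A$. Consequently two derivations $D_1,D_2$ of the same judgement can differ only in two respects: in their immediate sub-derivations, which are handled by the induction hypothesis, and in how they split the ambient context into the shared non-linear part $\Phi$ and the linear parts $\Gamma,\Sigma$ that appear in the multi-premise rules (\ccase{}, pairing, \llet{}, application), together with how non-linear variables are discarded inside the variable rule. The entire content of the theorem is thus to show that these structural choices do not affect the interpretation, which should follow from the comonoid axioms established in Proposition~\ref{prop:comonoid} and the coherence of the symmetric monoidal structure.

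The only genuine degree of freedom is that a non-linear variable $x:P$ occurring in $\Gamma$ may be assigned to the shared context $\Phi$ (so that it is copied into every premise and discarded wherever it is unused) or instead placed in a single linear branch. I would therefore prove, simultaneously with the main statement and by the same induction, a \textbf{weakening coherence lemma}: if $x\notin\mathrm{FV}(m)$, then every derivation of $\Theta;\Gamma,x:P\vdash m:A$ interprets as $\lrb{\Gamma}\otimes\lrb{P}\xrightarrow{\id\otimes\diamond}\lrb{\Gamma}\otimes I\cong\lrb{\Gamma}\xrightarrow{\lrb{m}}\lrb{A}$, where $\lrb{m}$ is the (inductively well-defined) interpretation over $\Gamma$ and $\diamond$, $\triangle$, $\Box$ are the canonical, derivation-independent maps of Definition~\ref{def:discard}. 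The base case is the variable rule, where $x$ lands in the discarded context and the claim is exactly the counit law; the inductive cases push the discard map down to the leaves using coassociativity, cocommutativity and counitality, together with the naturality of the associativity, unit and symmetry isomorphisms.

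With this lemma available, the inductive step for the multi-premise constructors becomes a reconciliation of context splits. Taking pairing as the representative case, I would show that both $D_1$ and $D_2$ equal the interpretation obtained from the canonical split that places $\mathrm{FV}(m)\cap\mathrm{FV}(n)$ in $\Phi$, the variables free only in $m$ (resp.\ only in $n$) in $\Gamma$ (resp.\ $\Sigma$), and discards the remaining, necessarily non-linear, variables. A non-linear variable that some $D_i$ treats as shared but that is free in only one premise is moved into that branch by using the weakening lemma to rewrite its discard in the other premise, followed by the counit identity $(\id\otimes\diamond)\circ\triangle=\id$; any reordering of shared variables is absorbed by cocommutativity, and variables free in neither premise are handled by $(\diamond\otimes\diamond)\circ\triangle=\diamond$, which again follows from the comonoid laws. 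The single-premise constructors ($\lift$, $\force$, $\fold$, $\unfold$, the injections, abstraction) fix the shape of their context, so they reduce immediately to the induction hypothesis, with $\Box$, $\epsilon$, $\sfold$ and $\sunfold$ being canonical and hence derivation-independent.

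The main obstacle I anticipate is organising this as a clean \emph{simultaneous} induction: the weakening lemma quantifies over ``any derivation'', which is precisely the well-definedness being established, so the two statements must be proved together and the induction measure chosen so that both steps are legitimate. The conceptual point to verify carefully is that every discrepancy between two splits is accounted for \emph{exactly} by the comonoid axioms of Proposition~\ref{prop:comonoid} and monoidal coherence, with no residual appeal to the naturality of $\alpha$ beyond what already enters the definitions of $\diamond$, $\triangle$ and $\Box$ — this holds because all copying and discarding occurs on the domain by precomposition, so the subterm interpretations themselves never need to be commuted past a comonoid map.
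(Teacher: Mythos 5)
Your proposal is correct and follows essentially the same route as the paper, whose proof is only a two-line sketch ("two derivations may only differ in whether some non-linear variables are propagated up into both premises or just one... the proof follows by induction using Proposition~\ref{prop:comonoid}"). Your explicit weakening-coherence lemma and the reduction of all context splits to a canonical one via the counit, coassociativity and cocommutativity laws is precisely the content that the paper leaves implicit, and your observation that the two statements must be proved by simultaneous induction is the right way to make the argument rigorous.
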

\begin{proof}
As previously explained, two derivations may only differ in whether some
non-linear variables are propagated up into both premises or just one.
Variables that are propagated up but not used are always discarded. The proof follows by induction using
Proposition~\ref{prop:comonoid}.
\end{proof}

In order to prove our semantics sound, we have to first formulate a substitution lemma for terms.
This lemma, in turn, requires that the interpretation of \emph{non-linear
values} behave well with respect to the substructural
morphisms of Definition~\ref{def:discard}. To show this, we reason as follows.

\begin{defi}
Given a non-linear term context $\Theta \vdash \Phi$ and a non-linear type $\Theta \vdash P$,
a morphism $f: \lrbtz{\Phi} \to \lrbtz{P}$ in $\LL$ is called \emph{non-linear}, whenever there exists a morphism ${ f': \flrbtz{\Phi} \to \flrbtz{P} }$ in $\CC$, with
$ f = \left( \lrbtz{\Phi} \xrightarrow{\alpha^{\Theta \vdash \Phi}_{\vec Z}} F \flrbtz{\Phi} \xrightarrow{Ff'} F \flrbtz{P} \xrightarrow{ \left( \alpha^{\Theta \vdash P}_{\vec Z} \right)^{-1} } \lrbtz{P} \right) . $
\end{defi}

\begin{prop}\label{prop:non-linear-morphisms}
If $f: \lrbtz{\Phi} \to \lrbtz{P}$ is non-linear, then:
\begin{align*}
  \diamond_{\vec Z}^{\Theta \vdash P} \circ f &= \diamond^{\Theta \vdash \Phi}_{\vec Z} ; \\
  \triangle^{\Theta \vdash P}_{\vec Z} \circ f &= (f \otimes f) \circ \triangle^{\Theta \vdash \Phi}_{\vec Z} ; \\
  \Box^{\Theta \vdash P}_{\vec Z} \circ f &=\ !f \circ \Box^{\Theta \vdash \Phi}_{\vec Z}.
\end{align*}
In particular, non-linear morphisms are comonoid homomorphisms (with respect to Proposition~\ref{prop:comonoid}).
\end{prop}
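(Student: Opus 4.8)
The plan is to exploit the common shape of the three morphisms in Definition~\ref{def:discard}: each of $\diamond^{\Theta\vdash\Psi}_{\vec Z}$, $\triangle^{\Theta\vdash\Psi}_{\vec Z}$ and $\Box^{\Theta\vdash\Psi}_{\vec Z}$ begins by transporting along $\alpha^{\Theta\vdash\Psi}_{\vec Z}$, then performs an operation that lives entirely in the image of $F$ and comes purely from the cartesian (and monoidal-adjunction) structure on $\CC$, and finally transports back along (a tensor power of) $(\alpha^{\Theta\vdash\Psi}_{\vec Z})^{-1}$. Abbreviating $\alpha_\Phi := \alpha^{\Theta\vdash\Phi}_{\vec Z}$ and $\alpha_P := \alpha^{\Theta\vdash P}_{\vec Z}$ and substituting the defining factorisation $f = \alpha_P^{-1}\circ Ff'\circ\alpha_\Phi$, the leading $\alpha_P$ of the target morphism cancels the trailing $\alpha_P^{-1}$ of $f$ in every case. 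This immediately reduces each of the three identities to a statement about the single morphism $f'$ in the cartesian category $\CC$, together with the naturality of the relevant structure transformations.

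The decisive observation is that $f'$ is automatically a comonoid homomorphism for the canonical cartesian comonoid on $\flrbtz{\Phi}$, since in any cartesian category every morphism commutes with the terminal map and with the diagonal. For discarding, $1_{\flrbtz{P}}\circ f' = 1_{\flrbtz{\Phi}}$ by uniqueness of the map into $1$; applying $F$ and prepending $u^{-1}$ and appending $\alpha_\Phi$ yields exactly $\diamond^{\Theta\vdash\Phi}_{\vec Z}$. For copying, I would first push $Ff'$ through $F\langle\id,\id\rangle$ using naturality of the diagonal, producing $F(f'\vartimes f')$, then move $F(f'\vartimes f')$ across $m^{-1}$ using naturality of $m$, producing $Ff'\otimes Ff'$ to the right of $m^{-1}_{(\flrbtz{\Phi},\flrbtz{\Phi})}\circ F\langle\id,\id\rangle_{\flrbtz{\Phi}}$. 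Recognising $\alpha_P^{-1}\circ Ff' = f\circ\alpha_\Phi^{-1}$ then lets me factor $(\alpha_P^{-1}\otimes\alpha_P^{-1})\circ(Ff'\otimes Ff')$ as $(f\otimes f)\circ(\alpha_\Phi^{-1}\otimes\alpha_\Phi^{-1})$, and the remaining tail is precisely $\triangle^{\Theta\vdash\Phi}_{\vec Z}$, giving $\triangle^{\Theta\vdash P}_{\vec Z}\circ f = (f\otimes f)\circ\triangle^{\Theta\vdash\Phi}_{\vec Z}$.

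For promotion the argument is analogous but uses the unit of the adjunction: naturality of $\eta:\Id\naturalto G\circ F$ gives $\eta_{\flrbtz{P}}\circ f' = GFf'\circ\eta_{\flrbtz{\Phi}}$, so after applying $F$ and using $! = F\circ G$ one has $F\eta_{\flrbtz{P}}\circ Ff' = \,!(Ff')\circ F\eta_{\flrbtz{\Phi}}$; factoring $!(\alpha_P^{-1}\circ Ff') = \,!f\circ\,!\alpha_\Phi^{-1}$ via $\alpha_P^{-1}\circ Ff' = f\circ\alpha_\Phi^{-1}$ then delivers $\Box^{\Theta\vdash P}_{\vec Z}\circ f = \,!f\circ\Box^{\Theta\vdash\Phi}_{\vec Z}$. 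The final clause is then immediate: the discarding and copying identities say exactly that $f$ is compatible with the counit $\diamond$ and the comultiplication $\triangle$ of the comonoids of Proposition~\ref{prop:comonoid}, hence $f$ is a comonoid homomorphism. I expect no genuine obstacle here; the only point requiring care is the bookkeeping of the two naturality squares (for $\langle\id,\id\rangle$ and for $m$) in the copying case, where the order in which $f'\vartimes f'$ is transported across $m^{-1}$ must be tracked precisely, whereas the discarding and promotion cases are single straightforward naturality chases.
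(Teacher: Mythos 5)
Your proof is correct, and it follows exactly the computation the paper intends: the paper states Proposition~\ref{prop:non-linear-morphisms} without any written proof, treating it as a routine consequence of the factorisation $f = \alpha_P^{-1}\circ Ff'\circ\alpha_\Phi$ together with naturality of the terminal map, the diagonal, $m$, and $\eta$. All three cancellation-and-naturality chases check out, so your argument supplies precisely the omitted details.
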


Recall that by Lemma~\ref{lem:values-syntax}, non-linear values necessarily have a non-linear term context.

\begin{prop}\label{prop:values-non-linear}
Let $\Theta; \Phi \vdash v : P$ be a non-linear value. Then $\lrbtz{\Phi \vdash v : P}$ is non-linear.
\end{prop}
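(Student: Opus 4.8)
The plan is to proceed by induction on the derivation of the value judgement $\Theta; \Phi \vdash v : P$, exhibiting in each case an explicit witness $f' : \flrbtz{\Phi} \to \flrbtz{P}$ in $\CC$ and verifying the defining equation $\lrbtz{\Phi \vdash v : P} = (\alpha^{\Theta \vdash P}_{\vec Z})^{-1} \circ F f' \circ \alpha^{\Theta \vdash \Phi}_{\vec Z}$ of Definition preceding Proposition~\ref{prop:non-linear-morphisms}. Since $P$ is non-linear and $v$ is a value, the last rule applied is one of: the variable rule, a left or right injection, a pairing, a $\lift$, or a $\fold$. The abstraction rule cannot occur, because its type $A \multimap B$ is never non-linear ($\multimap$ does not appear in the grammar of non-linear types), so that case is vacuous. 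By Lemma~\ref{lem:values-syntax} every subcontext occurring in the derivation is again non-linear, and the immediate subvalues of injections, pairs and folds again have non-linear types, so the induction hypothesis applies to them.

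For the variable rule I take $f'$ to be the cartesian projection onto the $\flrbtz{P}$ factor; the defining equation then reduces to the identity $\diamond^{\Theta \vdash \Phi}_{\vec Z} = u^{-1} \circ F1 \circ \alpha^{\Theta \vdash \Phi}_{\vec Z}$ from Definition~\ref{def:discard}, together with the coherence of the monoidal isomorphism appearing in $\alpha^{\Theta \vdash \Phi, x:P}$ (see Theorem~\ref{thm:boxdot is continuous}). For the injections I compose the witness of the subvalue with the coprojection of $\amalg$ in $\CC$ and use that $\alpha^{\Theta \vdash A + B}$ is built from $\beta^{pe}$ for coproducts. For pairing I form $f'$ from the two subwitnesses using the product structure of $\CC$, and I discharge the copying morphism $\triangle$ appearing in the interpretation by appealing to Proposition~\ref{prop:non-linear-morphisms}, which says precisely that non-linear morphisms are comonoid homomorphisms. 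For $\lift$, where $\alpha^{\Theta \vdash !A} = \id$, I take $f'$ to be the adjoint transpose of $\lrb{m} \circ (\alpha^{\Theta \vdash \Phi}_{\vec Z})^{-1} : F\flrbtz{\Phi} \to \lrbtz{A}$ under $F \dashv G$; the equation then follows from the triangle identities of the adjunction and the definition of $\Box$ via $F\eta$.

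The main content lies in the $\fold$ case, $\Theta; \Phi \vdash \fold_{\mu X. A}\ v : \mu X. A$ with $\mu X. A$ non-linear, which we write as $\mu X. P$ with $P$ non-linear. Here I apply the induction hypothesis to $v : P[\mu X. P / X]$ to obtain a witness $v'$, and then invoke Theorem~\ref{thm:main}, which gives $F_{pe}\ifold^{\Theta \vdash \mu X. P} \circ \alpha^{\Theta \vdash P[\mu X. P / X]} = \alpha^{\Theta \vdash \mu X. P} \circ \sfold^{\Theta \vdash \mu X. P} F_{pe}^{\times |\Theta|}$. Read componentwise, this equality expresses $\sfold$ as the $\alpha$-conjugate of $F\ifold$, so the composite $\sfold \circ \lrb{v}$ is non-linear with witness $f' = \ifold^{\Theta \vdash \mu X. P}_{\vec Z} \circ v'$. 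I expect the only delicate points to be the bookkeeping in the variable and pairing cases, where the context is discarded or shared and one must track the comonoid structure carefully; the genuinely recursive difficulty has already been isolated in Theorem~\ref{thm:main}, so the remaining cases are essentially diagram chases using Proposition~\ref{prop:non-linear-morphisms} and the coherence of $\beta^{pe}$.
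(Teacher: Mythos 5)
Your proposal is correct and follows essentially the same route as the paper's proof: an induction that defines an explicit non-linear witness $\flrbtz{\Phi \vdash v : P}$ case by case (projection for variables, coprojections for injections, pairing for tensors, the adjoint transpose $G\lrb{m}\circ G\alpha^{-1}\circ\eta$ for $\lift$, and $\ifold\circ\flrb{v}$ for $\fold$), with the verification of the defining equation reduced to diagram chases and the $\fold$ case discharged by Theorem~\ref{thm:main}. The witnesses you exhibit coincide with those in the paper.
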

\begin{proof}
We define a non-linear interpretation \[\flrbtz{\Phi \vdash v : P} : \flrbtz{\Phi} \to \flrbtz P\] in $\CC$ by induction on the derivation of $\Theta; \Phi \vdash v : P$ as follows:
\begin{align*}
\flrb{\Phi, x : P \vdash x : P} &\coloneqq \flrb \Phi \times \flrb P \xrightarrow{\pi_2} \flrb P \\
\flrb{\Phi \vdash \lleft_{P,R} v : P +R } &\coloneqq \flrb \Phi \xrightarrow{\flrb v} \flrb P \xrightarrow{\mathrm{inl}} \flrb P \amalg \flrb R = \flrb{P + R} \\
\flrb{\Phi \vdash \rright_{P,R} v : P +R } &\coloneqq \flrb \Phi \xrightarrow{\flrb v} \flrb R \xrightarrow{\mathrm{inr}} \flrb P \amalg \flrb R = \flrb{P + R} \\
\flrb{\Phi \vdash \left\langle v, w \right\rangle : P \otimes R } &\coloneqq \flrb \Phi \xrightarrow{ \left\langle \flrb v, \flrb w \right\rangle } \flrb P \times \flrb R = \flrb{P \otimes R} \\
\flrb{\Phi \vdash \lift\ m :\ !A } &\coloneqq \flrb \Phi \xrightarrow{\eta} GF\flrb \Phi \xrightarrow{G\alpha^{-1}} G \lrb \Phi \xrightarrow{G \lrb m} G \lrb A = \flrb{!A} \\
\flrb{\Phi \vdash \fold_{\mu X. P} v : \mu X. P} &\coloneqq \flrb \Phi \xrightarrow{\flrb v} \flrb{P[\mu X. P / X]} \xrightarrow{\ifold} \flrb{\mu X. P},
\end{align*}
where $\pi_2$ is the second projection in $\CC$; $\mathrm{inl}$ and $\mathrm{inr}$ are the coproduct injections in $\CC$;  $\langle f, g \rangle$ is the unique map induced by the product in $\CC$ and $\ifold$ is defined in Definition~\ref{def:fold-unfold}.
The definition is invariant with respect to $\Theta$ and $\vec Z$, so again, we omit them from the subscript and superscript to improve readability.


To complete the proof, one has to show
\[ \lrbtz{\Phi \vdash v : P } = \left( \lrbtz{\Phi} \xrightarrow{\alpha^{\Theta \vdash \Phi}_{\vec Z}} F \flrbtz{\Phi} \xrightarrow{F \flrbtz{\Phi \vdash v : P} } F \flrbtz{P} \xrightarrow{ \left( \alpha^{\Theta \vdash P}_{\vec Z} \right)^{-1} } \lrbtz{P} \right) , \]
which follows by simple verification using the available categorical data.
The $\fold$ case is the most complicated (in general), but because of the preparatory work we have done, it may be established simply by using Theorem~\ref{thm:main}.
\end{proof}

Next, as usual, we also have to formulate a substitution lemma for terms. Before we do so, recall Lemma~\ref{lem:syntax-sub}.

\begin{lem}[Substitution]
Let $\Theta; \Phi, \Gamma, x:A \vdash m: B$ be a term and $\Theta; \Phi, \Sigma \vdash v : A$ a value with $\Gamma \cap \Sigma = \varnothing.$
Then
{\small{%
\begin{align*}
\lrb{m[v/x]} = \left(
      \lrb{\Phi}\otimes\lrb{\Gamma}\otimes\lrb{\Sigma}
        \xrightarrow{\cong \circ (\triangle  \otimes \id)}
     \lrb{\Phi}\otimes\lrb{\Gamma}\otimes\lrb{\Phi}\otimes\lrb{\Sigma}
      \xrightarrow{\id \otimes \lrb v}
        \lrb \Phi \otimes \lrb \Gamma \otimes \lrb A
      \xrightarrow{\lrb m}
        \lrb B
      \right)
\end{align*}
}}%
where we have omitted the obvious subscript and superscript annotations, for brevity.
\end{lem}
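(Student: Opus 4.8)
The plan is to prove the statement by induction on the derivation of $\Theta; \Phi, \Gamma, x:A \vdash m : B$, invoking Theorem~\ref{thm:derivations} so that the particular derivation chosen is immaterial. The role of the copying morphism $\triangle^{\Theta\vdash\Phi}$ on the right-hand side is to \emph{share} the non-linear context $\Phi$ between the outer term $m$ and the substituted value $v$: one copy of $\lrb\Phi$ (together with $\lrb\Gamma$ and the output $\lrb A$ of $v$) is fed to $\lrb m$, while the other copy (together with $\lrb\Sigma$) is fed to $\lrb v$. The guiding distinction throughout is whether $A$ is linear or non-linear. If $A$ is linear, the type system forces $x$ to occur exactly once, so $v$ is used once and only $\lrb\Phi$ needs to be duplicated; if $A$ is non-linear, $x$ may occur any number of times (by being treated as part of the shared non-linear zone at the context-splitting rules), in which case $v$ is a \emph{non-linear} value by Lemma~\ref{lem:values-syntax}, so $\lrb{\Phi,\Sigma \vdash v : A}$ is a non-linear morphism by Proposition~\ref{prop:values-non-linear} and hence a comonoid homomorphism by Proposition~\ref{prop:non-linear-morphisms}. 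This last fact is the essential tool that lets us duplicate or discard $v$ coherently with the ambient copying $\triangle^{\Theta\vdash\Phi}$.

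The base cases and non-splitting inductive cases are the routine ones. For $m = x$ we have $B = A$ and $m[v/x] = v$, and the claim reduces, via the interpretation of the variable rule and the counit law of the comonoid (Proposition~\ref{prop:comonoid}), to the observation that the spurious copy of $\lrb\Phi$ is annihilated by $\diamond$. For $m = y$ with $y \neq x$ the variable $x$ is absent, which requires $A$ non-linear; here $m[v/x] = y$ and the identity $\diamond^{\Theta\vdash A}\circ\lrb v = \diamond^{\Theta\vdash\Phi,\Sigma}$ from Proposition~\ref{prop:non-linear-morphisms} discards $v$ correctly. For the non-context-splitting formers ($\lleft$, $\rright$, $\lambda$, $\lift$, $\force$, $\fold$, $\unfold$) the variable $x$ lies in the unique premise, so one applies the induction hypothesis to that premise and then pushes the substitution composite $\lrb m \circ (\id \otimes \lrb v) \circ (\cong \circ (\triangle \otimes \id))$ through the outer operation using naturality of the monoidal isomorphisms and functoriality of the relevant connective.

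The main obstacle is the context-splitting formers: pairing, application, $\ccase$, and $\llet$. At such a node the shared $\Phi$ is copied once by the former's own $\triangle^{\Theta\vdash\Phi}$, while the induction hypothesis introduces a \emph{second} copying of $\Phi$ inside the substituted subterm, so the proof must reconcile nested copyings of $\lrb\Phi$. When $x$ sits in a linear (split) portion of the context it descends into exactly one premise; the hypothesis is applied there and the two copyings of $\lrb\Phi$ are merged into the single canonical one using coassociativity and cocommutativity of the comonoid together with naturality of the symmetry to reorder tensor factors. The genuinely delicate subcase is when $A$ is non-linear and $x$ belongs to the shared non-linear portion, so that $x$ is propagated into \emph{both} premises and $m[v/x]$ substitutes $v$ into each; here one applies the hypothesis to both subterms and then uses that $\lrb v$ is a comonoid homomorphism (Proposition~\ref{prop:non-linear-morphisms}) to show that duplicating $v$ textually agrees with evaluating $v$ once and copying the resulting value in $\lrb A$ via $\triangle^{\Theta\vdash A}$. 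The $\ccase$ former additionally requires handling the distribution of $\lrb\Sigma$ through the coproduct via the copairing $[\lrb n,\lrb p]$, which is dispatched by naturality of the distributivity isomorphism. Once this bookkeeping is organized by the linear/non-linear status of $A$ and the position of $x$, each case closes by a diagram chase over the comonoid and monoidal coherence data.
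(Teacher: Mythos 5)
Your proposal is correct and follows essentially the same route as the paper, whose proof is stated only as ``by induction using Lemma~\ref{lem:syntax-sub} and Propositions~\ref{prop:non-linear-morphisms}--\ref{prop:values-non-linear}'': you have simply unfolded that induction, correctly isolating the comonoid-homomorphism property of non-linear values (via Lemma~\ref{lem:values-syntax} and Propositions~\ref{prop:values-non-linear} and~\ref{prop:non-linear-morphisms}) as the key fact reconciling textual duplication or discarding of $v$ with the semantic $\triangle$ and $\diamond$. The case analysis by the linearity of $A$ and the position of $x$ at context-splitting rules, together with comonoid and monoidal coherence for the bookkeeping, is exactly what the paper's one-line proof presupposes.
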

\begin{proof}
By induction using Lemma~\ref{lem:syntax-sub} and Propositions~\ref{prop:non-linear-morphisms} - \ref{prop:values-non-linear}.
\end{proof}

With this in place, we may now show our semantics is sound.

\begin{thm}[Soundness]\label{thm:soundness}
If $\Theta; \Gamma \vdash m :A$ and $m \Downarrow v,$ then
$\lrbtz{\Gamma \vdash m: A} = \lrbtz{\Gamma \vdash v: A}$.
\end{thm}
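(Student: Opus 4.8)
The plan is to proceed by induction on the derivation of the big-step evaluation judgement $m \Downarrow v$ (Figure~\ref{fig:operational}), showing in each case that $\lrbtz{\Gamma \vdash m : A} = \lrbtz{\Gamma \vdash v : A}$ as morphisms of $\LL$. Since the annotations $\Theta$ and $\vec Z$ are inert here, I would suppress them and argue pointwise in $\LL$. Note that the derived recursion operator $\rec\ z^{!A}.m$ (Theorem~\ref{thm:recursion operator}) and its evaluation rule (Theorem~\ref{thm:operational-recursion}) are definitionally assembled from the core constructs, so no separate case is needed: soundness for the primitive rules already covers it.

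The easy cases are the value rules ($x \Downarrow x$, $\lambda x.m \Downarrow \lambda x.m$, $\lift\ m \Downarrow \lift\ m$), where $m = v$ and there is nothing to prove, and the congruence rules ($\lleft$, $\rright$, pairing, $\fold$), where I would apply the induction hypothesis to the premises and use that the interpretation of each constructor (Figure~\ref{fig:semantics-terms}) is a fixed morphism composed with the interpretation of its immediate subterm. The $\unfold$ case uses the induction hypothesis together with the fact that $\sunfold^{\Theta\vdash\mu X.A}$ and $\sfold^{\Theta\vdash\mu X.A}$ are mutually inverse (Definition~\ref{def:fold-unfold}), so that $\lrb{\unfold\ (\fold\ v)} = \sunfold \circ \sfold \circ \lrb v = \lrb v$. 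The $\force$ case reduces, after applying the induction hypothesis to $m \Downarrow \lift\ m'$ and to $m' \Downarrow v$, to the identity $\epsilon \circ \Box = \id$; this in turn follows from naturality of the counit $\epsilon$ and the triangle identity $\epsilon F \circ F\eta = \id_F$, after unwinding the definition $\Box = {!\alpha^{-1}} \circ F\eta \circ \alpha$ of Definition~\ref{def:discard}.

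The substantive cases are the eliminators that perform substitution: the two $\ccase$ rules, the $\llet$ rule, and application. For each, I would first invoke Subject reduction (Theorem~\ref{thm:subject-reduction}) so that the value produced by the evaluated scrutinee/argument is well-typed of the expected type and the preceding term-level Substitution Lemma applies; then use the induction hypothesis on the premises; and finally reconcile the interpretation of the redex with the Substitution Lemma's formula for $\lrb{n[v/x]}$ (resp. $\lrb{m'[v/x]}$). Concretely, for application I would combine the induction hypotheses $\lrb m = \textbf{curry}(\lrb{m'})$ and $\lrb n = \lrb v$ with the $\beta$-law $\eval \circ (\textbf{curry}(\lrb{m'}) \otimes \id) = \lrb{m'}$; for $\ccase$ I would use the coproduct equations (the copairing precomposed with $\mathrm{inl}$ yields $\lrb n$, with $\mathrm{inr}$ yields $\lrb p$); and for $\llet$ the analogous product $\beta$-law.

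The main obstacle will be exactly this last family of cases, and specifically the coherent handling of the shared non-linear context $\Phi$. The interpretations in Figure~\ref{fig:semantics-terms} explicitly copy $\Phi$ via $\triangle$ and then permute, whereas the Substitution Lemma packages this copying into its own statement; reconciling the two requires the comonoid laws and the fact that the relevant maps are comonoid homomorphisms (Proposition~\ref{prop:comonoid} and the surrounding non-linearity results). The argument is in each case a diagram chase, routine once the correct instance of the Substitution Lemma is identified, but the bookkeeping of which portion of the context is copied, discarded, or threaded linearly is where all the care is needed.
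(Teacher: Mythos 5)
Your proposal is correct and is exactly the argument the paper intends by its one-line proof ("straightforward induction"): induction on the evaluation derivation, with the eliminator cases discharged via subject reduction, the term-level Substitution Lemma, and the comonoid/non-linearity results (Propositions~\ref{prop:comonoid}--\ref{prop:values-non-linear}), and the \force{} case via the triangle identity unwound through $\Box$. You have also correctly identified where the real bookkeeping lies (reconciling the explicit $\triangle$-copying of $\Phi$ in Figure~\ref{fig:semantics-terms} with the form of the Substitution Lemma), so nothing further is needed.
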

\begin{proof}
Straightforward induction.
\end{proof}

\subsection{Notation for closed types}\label{sub:notation-closed}

In the special case when $\Theta = \cdot$, the notation for terms and term contexts can be simplified, as we showed in~\secref{sub:lnl-terms}. The same is also true for the denotational semantics and we will use this simplified notation for the proof of
adequacy in the next section.
We denote with $*$ the unique object of the terminal category $\mathbf 1.$

Let $\lrb \Upsilon \coloneqq \lrb{\cdot \vdash \Upsilon}(*) \in \text{Ob}(\LL).$ 
Let $\flrb \Psi \coloneqq \flrb{\cdot \vdash \Psi}(*) \in \text{Ob}(\CC)$.
For $\chi \in \{\alpha, \diamond, \triangle, \Box \}$, we write $\chi^\Psi \coloneqq \chi^{\cdot \vdash \Psi}_{*}$.
Then, we have isomorphisms
$ \alpha^\Psi : \lrb \Psi \cong F \flrb \Psi$, 
and substructural morphisms $\diamond^\Psi : \lrb \Psi \to I,\ \triangle^\Psi : \lrb \Psi \to \lrb \Psi \otimes \lrb \Psi $ and $\Box^\Psi : \lrb \Psi \to ! \lrb \Psi.$

Writing $\sfold^{\mu X. A}$ for $\sfold^{\cdot \vdash \mu X. A}_*$ and $\sunfold^{\mu X. A}$ for its inverse, we get an isomorphism
$\sunfold^{\mu X. A} : \lrb{\mu X. A} \cong \lrb{A[\mu X. A / X]} : \sfold^{\mu X. A} . $

For the interpretation of terms, we write $\lrb{\Gamma \vdash m : A} \coloneqq \lrb{\Gamma \vdash m : A}^{\cdot}_* : \lrb \Gamma \to \lrb A$, which is a morphism in $\LL$ and its definition
can be simply read off from Figure~\ref{fig:semantics-terms}, disregarding the remark at the bottom.

\section{Computational Adequacy}\label{sec:adequate}

In this section we show that computational adequacy holds at
non-linear types for a class of $\cpo$-LNL models that satisfy some additional
conditions (Theorem ~\ref{thm:adequacy}). In the process, we also provide sufficient
conditions for program termination at \emph{any} type (Theorem ~\ref{thm:termination}).
The main difficulty is showing that the formal approximation relations
exist (Lemma~\ref{lem:formal-relations}), the proof of which is presented in
$\secref{sub:existence}$.

We begin by specifying the additional properties of $\cpo$-LNL models that
we use in the proof of adequacy.

\begin{defi}\label{def:adequate-lnl-fpc}
A \emph{computationally adequate} $\cpo$-LNL model is a $\cpo$-LNL model where:
\begin{enumerate}
\item $\id_I \neq\ \perp_{I,I}$ (or equivalently $I \not \cong 0$).
\item For any two pairs of parallel morphisms $f_1, g_1: I \to A$ and $f_2, g_2: I \to B$, such that $f_1 \neq \perp \neq g_1$ and $f_2 \neq \perp \neq g_2:$
$ f_1 \otimes f_2 \leq g_1 \otimes g_2 \Rightarrow f_1 \leq g_1 \text{ and } f_2 \leq g_2. $
\end{enumerate}
\end{defi}

For instance, the $\cpo$-LNL model of Theorem \ref{thm:clnl-model} is computationally adequate.

\begin{rem}
Condition (1) is clearly necessary for adequacy. If $I \cong 0$, then $\LL
\simeq \mathbf 1,$ and since $\LL$ is completely degenerate, then clearly
adequacy cannot hold.  Condition (2) is a strong assumption that requires the
tensor product of $\LL$ to behave like the smash product of domain theory. This is a
sufficient condition, but we do not know if it is necessary. However, it
appears to be unavoidable when using the standard proof strategy based on
formal approximation relations, just as we do (see Remark~\ref{rem:augmented-functors}). In principle, there could be a
different proof strategy that does not make this assumption, but we do not
know of such a strategy that works in our setting.
\end{rem}

\begin{asm}
Throughout the rest of the section we consider an arbitrary, but fixed,
computationally adequate $\cpo$-LNL model.
\end{asm}

We begin by showing that non-linear values correspond to non-zero morphisms.

\begin{lem}\label{lem:values-total}
Let $\cdot \vdash v: P$ be a non-linear value.
Then ${\lrb{v} \neq\ \perp}.$
\end{lem}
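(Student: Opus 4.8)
The plan is to derive the claim directly from the fact that non-linear values carry a non-linear interpretation, with no induction on $v$. First I would invoke Proposition~\ref{prop:values-non-linear}: since $\cdot \vdash v : P$ is a non-linear closed value, its interpretation $\lrb{v} = \lrb{\cdot \vdash v : P}$ is non-linear, witnessed by the morphism $f' \coloneqq \flrb{\cdot \vdash v : P} : \flrb{\cdot} \to \flrb{P}$ of $\CC$ constructed in that proof. In the closed setting $\flrb{\cdot} = 1$ is the terminal object of $\CC$, so $f' : 1 \to \flrb{P}$, and by the definition of non-linearity
\[ \lrb{v} = \left( \lrb{\cdot} \xrightarrow{\alpha^{\cdot}} F\flrb{\cdot} \xrightarrow{Ff'} F\flrb{P} \xrightarrow{(\alpha^{P})^{-1}} \lrb{P} \right), \]
where $\alpha^{\cdot}$ and $\alpha^{P}$ are components of the natural \emph{isomorphisms} of Theorems~\ref{thm:alpha-context} and~\ref{thm:alpha-def}. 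Because both are isomorphisms and zero morphisms are absorbed under arbitrary composition (Theorem~\ref{thm:model-compact}(1)), we get $\lrb{v} = \perp$ if and only if $Ff' = \perp$; hence it suffices to show $Ff' \neq \perp$.

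Next I would exploit that $f'$ emanates from the terminal object $1$ of $\CC$. Writing $!_{\flrb{P}} : \flrb{P} \to 1$ for the unique such map, we have $!_{\flrb{P}} \circ f' = \id_{1}$, so $f'$ is a split monomorphism. Every functor preserves split monomorphisms, so applying $F$ gives $F(!_{\flrb{P}}) \circ Ff' = \id_{F1}$. Suppose for contradiction that $Ff' = \perp$. Since $\perp$ is a zero morphism, post-composition absorbs it (Theorem~\ref{thm:model-compact}(1)), so $\id_{F1} = F(!_{\flrb{P}}) \circ \perp = \perp_{F1,F1}$. Conjugating by the structural isomorphism $u : I \cong F1$ (available since $F$ is strong monoidal) and again using that $\perp$ absorbs isomorphisms, this forces $\id_{I} = \perp_{I,I}$, contradicting condition (1) of Definition~\ref{def:adequate-lnl-fpc}. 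Therefore $Ff' \neq \perp$, and consequently $\lrb{v} \neq \perp$.

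The only delicate point is the bookkeeping around zero morphisms: I must use that $0 \in \LL$ is a genuine zero object with the $\perp_{A,B}$ being the actual zero morphisms (Theorem~\ref{thm:model-compact}(1)), so that they are absorbed by arbitrary pre- and post-composition, and in particular by the isomorphisms $\alpha^{\cdot}$, $\alpha^{P}$ and $u$. Granting that, the argument is uniform in $v$ and needs no case analysis. I expect the main conceptual step—rather than an obstacle—to be the observation that closedness makes $f'$ a split mono over the terminal object, which is precisely what lets condition (1) bite; condition (2) of Definition~\ref{def:adequate-lnl-fpc} plays no role here.
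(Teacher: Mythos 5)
Your proof is correct and is essentially the paper's argument: the paper post-composes $\lrb{v}$ with the discarding morphism $\diamond^{P} = u^{-1}\circ F(!_{\flrb{P}})\circ\alpha^{P}$ and uses discardability of non-linear values (Propositions~\ref{prop:non-linear-morphisms}--\ref{prop:values-non-linear}) to conclude $\diamond^{P}\circ\lrb{v} = \id_I$, which is exactly your split-mono computation with $F(!_{\flrb{P}})$ unfolded and conjugated by the isomorphisms $\alpha$ and $u$. Both routes reduce to the same contradiction $\id_I = \perp_{I,I}$ against condition (1) of Definition~\ref{def:adequate-lnl-fpc}, so the only difference is presentational.
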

\begin{proof}
Assume for contradiction $\lrb v =\ \perp$.
Therefore $\diamond^P \circ \lrb v = \perp_{I,I}$.
Since $\cdot \vdash v :P$ is a non-linear value, it may be discarded so that
$\diamond^P \circ \lrb v = \diamond^\cdot = \id_I$ (Propositions~\ref{prop:non-linear-morphisms}-\ref{prop:values-non-linear}).
But then $\perp_{I,I}\ = \id_I,$ which contradicts Definition~\ref{def:adequate-lnl-fpc}.
\end{proof}

\begin{nota}
Given morphisms $f_1 : I \to A_1, \ldots, f_n: I \to A_n$, we write $\llangle f_1, \ldots, f_n \rrangle$
for the morphism \[ \llangle f_1, \ldots, f_n \rrangle := \left( I \xrightarrow \cong I \otimes \cdots \otimes I \xrightarrow{f_1 \otimes \cdots \otimes f_n} A_1 \otimes \cdots \otimes A_n \right) . \]
For example, ${\llangle f_1, f_2 \rrangle = (f_1 \otimes f_2) \circ \lambda_I^{-1}}$.
\end{nota}

Let
$
\text{Values}(A)    \coloneqq \{v\ |\ v\text{ is a value and } \cdot \vdash v:A\}
$
and
$
\text{Programs}(A)  \coloneqq \{p\ |\ \cdot \vdash p : A\} 
$
for any closed type $\cdot \vdash A$.
We prove adequacy using the standard method based on 
\emph{formal approximation relations}, a notion first devised in~\cite{plotkin-85}.
\begin{lem}\label{lem:formal-relations}
For each closed type $\cdot \vdash A$, there exist \emph{formal approximation relations:}
\begin{align*}
\tleq_{A} &\subseteq (\LL(I,\sem A)-\{\perp\})\times \text{Values}(A) \\
\sleq_{A} &\subseteq \LL(I , \lrb A) \times \text{Programs}(A)
\end{align*}
with the properties:
  \begin{description}[style=multiline,labelwidth=\widthof{\bfseries (A1.2)~~~:}, leftmargin=\labelwidth, align=right]
	\item[(A1.1)] $f\tleq_{A+B} \lleft\ v$ iff $\exists f'.\ f=\mathrm{left}\circ f'$ and $f'\tleq_{A}v$;
	\item[(A1.2)] $f\tleq_{A+B} \rright\ v$ iff $\exists f'.\ f=\mathrm{right}\circ f'$ and $f'\tleq_{B}v$;
	\item[(A2)] $f\tleq_{A\otimes B}\langle v,w\rangle$ iff $\exists f',f''.$
    $f= \llangle f', f'' \rrangle$ and $f'\tleq_{A}v$ and $f''\tleq_{B}w$;
	\item[(A3)] $f\tleq_{A \multimap B} \lambda x^A.\ p$ iff $\forall f' \in \LL(I, \lrb A),
    \forall v \in \text{Values}(A).$
    $ f' \tleq_{A} v \Rightarrow \mathrm{eval} \circ \llangle f, f' \rrangle \sleq_{B} p[v/x];$
	\item[(A4)] $f\tleq_{!A} \lift\ p$ iff\ $\exists g \in \CC(1, G \lrb A).\ f = I \xrightarrow u F(1) \xrightarrow {F(g)} !\lrb A$ and
              $\epsilon \circ f \sleq_{A} p;$
	\item[(A5)] $f \tleq_{\mu X.A} \fold\ v$ iff $\sunfold^{\mu X. A} \circ f \tleq_{A[\mu X.A / X]} v$; 
  \item[(B)] $f \sleq_{A} p \text{ iff } f \not = \perp\ \Rightarrow\ \exists v \in \text{Values}(A).\ p \Downarrow v \text{ and } f \tleq_{A} v.$
\end{description}
\end{lem}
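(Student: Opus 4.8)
The specification (A1.1)--(B) determines $\tleq_A$ and $\sleq_A$ by a simultaneous recursion over all closed types, and the plan is to read it as a fixed-point problem: the right-hand sides of the clauses define an operator $\Xi$ on the family of candidate relations $(\tleq_A,\sleq_A)_A$, and a solution to the lemma is precisely a fixed point of $\Xi$. For the non-recursive constructors every clause refers to relations at structurally smaller types, so the sole genuine obstruction is clause (A5): $\tleq_{\mu X.A}$ is specified in terms of $\tleq_{A[\mu X.A/X]}$ at a structurally \emph{larger} type, and, because a recursion variable may occur in the domain of a $\multimap$ (as in $\mu X.(!X\multimap A)$), this circularity interacts with the contravariant quantifier of clause (A3). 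Consequently neither induction on the structure of $A$ nor a naive rank induction that merely decreases the number of permitted unfoldings will close the definition, since decreasing a rank on the outside does not control the relation that reappears inside a function-space hypothesis.

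The way I would break the circularity is to exploit the recursive structure of the denotations, i.e. the minimal-invariant property supplied by $\cpo$-algebraic compactness of $\LL$ (Theorem~\ref{thm:model-compact}). Writing $\sem{\mu X.A}$ as the colimit of its initial sequence (\secref{sub:initial}), the limit--colimit coincidence (Theorem~\ref{thm:locally-determined}~(3)) yields an increasing chain of idempotents $\pi_n := \mu_n\circ\mu_n^\bullet$ on $\sem{\mu X.A}$ with $\bigvee_n \pi_n = \id$, where $\mu_n$ is the $n$-th colimiting embedding. I would use these projections to approximate every global element $f : I \to \sem{\mu X.A}$ by its finite stages $\pi_n\circ f$ and build the relations as the limit of finite-stage approximations defined \emph{simultaneously at all types}, working inside a lattice of \emph{admissible} relation families, i.e.\ those whose program component contains $\perp$ and is closed under suprema of chains (recall that each $\LL(I,\sem A)$ is a pointed cpo by Theorem~\ref{thm:model-compact}~(3)), following the technique of Pitts for recursively defined domains. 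The operator $\Xi$ restricts to this lattice, and the minimal-invariant property is exactly what is needed to show that the fixed point it determines satisfies each clause as a genuine biconditional rather than merely one inclusion; note that $\Xi$ is not monotone because of the negative occurrence of $\tleq_A$ in (A3), so Knaster--Tarski does not apply and the minimal invariant is the substitute.

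Concretely, the steps in order would be: first, record the continuity facts underpinning admissibility --- that composition, tupling, $\eval$, and the structural isomorphisms $\sfold,\sunfold$ are Scott-continuous --- so that the candidate lattice is well-behaved; second, define the finite-stage relations, with each $\mu$-unfolding mediated by $\pi_n$ and decreasing the stage, and verify well-foundedness; third, take the limit and check clauses (A1.1)--(A4) and (B) by structural induction, which is routine once admissibility is in place; and finally, verify clause (A5) from $\bigvee_n\pi_n = \id$ together with continuity, so that $f\tleq_{\mu X.A}\fold\,v$ holds exactly when $\sunfold\circ f\tleq_{A[\mu X.A/X]}v$. I expect the main obstacle to be the function-space case (A3) with a recursion variable in negative position: here structural induction is unavailable, and the biconditional must instead be extracted from the invariance of the fixed point under the relational action of $\multimap$, using that the projections $\pi_n$ approximate $\id$ uniformly on both the covariant and the contravariant occurrences. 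This is precisely the point at which the minimal-invariant property is indispensable.
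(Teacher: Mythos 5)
Your plan would work, but it is a genuinely different route from the one the paper takes. You propose the Pitts-style argument: read the clauses as a (non-monotone) operator $\Xi$ on families of admissible relations, approximate global elements of $\sem{\mu X.A}$ by the projections $\pi_n=\mu_n\circ\mu_n^\bullet$ supplied by the limit--colimit coincidence, and use the minimal-invariant property $\bigvee_n\pi_n=\id$ (separating positive and negative occurrences) to force each clause to hold as a biconditional despite the contravariance in (A3). The paper instead follows \cite[Chapter~9]{fiore-thesis}: for each closed type $A$ it builds a $\cpo$-category $\RR(A)$ whose objects are pairs $(X,\tleq)$ with $\tleq$ admissible, shows $\RR(A)$ has an ep-zero and $\omega$-colimits of embeddings (hence is $\cpo$-algebraically compact), lifts $!$, $\otimes$, $\multimap$, $+$ and the fold/unfold isomorphisms to $\cpo$-functors on these categories commuting with the forgetful functors, and defines an \emph{augmented interpretation} $\elrbc{\Theta\vdash A}$ of every type; clause (A5) then falls out of the general coherence results of Section~\ref{sec:w-categories} (Corollaries~\ref{cor:alpha-id} and~\ref{cor:phi-adequacy}), which guarantee that the initial algebra of the lifted functor lies over the standard one with the \emph{same} structure map $\sfold$. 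The underlying mathematics is close --- the colimit relation in $\RR(A)$ is literally $f\tleq_\omega v$ iff $\forall k.\ \mu_k^\bullet\circ f\sleq_k v$, i.e.\ your projection approximation, and the $\op$-variance of $\multimap^{A,B}$ plus passage to embeddings is the categorical form of your positive/negative split --- but the packaging differs: the paper never writes down a non-monotone operator, and it inherits the treatment of nested recursion for free from its machinery for parameterised initial algebras, whereas your ``finite-stage approximations defined simultaneously at all types'' would need a careful hand-rolled formulation of the relational action of open type expressions. Your version is more elementary and self-contained; the paper's reuses more of its existing infrastructure.

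One caution: the steps you describe as ``routine once admissibility is in place'' are exactly where the special hypotheses of Definition~\ref{def:adequate-lnl-fpc} enter. Closure of the $\otimes$-clause under suprema of chains requires recovering increasing chains $(g_i')$ and $(g_i'')$ from an increasing chain $(g_i'\otimes g_i'')$, which is condition (2) of that definition and is the only place in the paper where it is used (Remark~\ref{rem:augmented-functors}); closure of the $!$-clause needs order-reflection along the adjunction (Lemma~\ref{lem:adjoint-reflection}); and the sum clause needs the coproduct injections to be split monomorphisms (Lemma~\ref{lem:injection-split-mono}). None of these are generic Scott-continuity facts, so your first step should record them explicitly rather than subsume them under ``composition, tupling and $\eval$ are continuous''.
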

\begin{proof}
Because of (A5), the relations are defined by recursion and proving their existence requires considerable effort.
The full proof is presented in \secref{sub:existence}.
\end{proof}
\begin{lem}\label{lem:relation-non-linear}
If $f \tleq_{P} v$, where $P$ is a non-linear type, then $f$ is a non-linear morphism.
\end{lem}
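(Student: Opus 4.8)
The plan is to argue by induction on the structure of the \emph{value} $v$, rather than on the type $P$. This is forced by the recursive clause (A5): unfolding $\mu X.P$ replaces the type by $P[\mu X.P/X]$, which is in general not structurally smaller, so a naive induction on types would fail to be well-founded. By contrast, every clause of the formal approximation relation relevant here peels off exactly one value constructor, so the value strictly decreases. I would phrase the inductive hypothesis uniformly over all types: for every closed non-linear type $Q$ and every value $w$ strictly smaller than $v$, $g\tleq_Q w$ implies $g$ is non-linear. Since $P$ is non-linear it is never a function type, so by inspection of the value grammar $v$ can only be $\lleft\ v'$, $\rright\ v'$, $\langle v',v''\rangle$, $\lift\ m$, or $\fold\ v'$; in particular $v$ is never a variable or a $\lambda$-abstraction, so clause (A3) never arises. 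In each case the task is to exhibit a witness $f'\colon 1\to\flrb P$ in $\CC$ with $f=(\alpha^{P})^{-1}\circ Ff'\circ u$, built from the witnesses supplied by the inductive hypothesis.

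The base case $P={!A}$ with $v=\lift\ m$ is immediate: clause (A4) gives $g\in\CC(1,G\lrb A)$ with $f=Fg\circ u$, and since $\flrb{{!A}}=G\lrb A$ and $\alpha^{{!A}}=\id$ (Theorem~\ref{thm:alpha-def}), this $g$ is exactly the required witness, with no appeal to the inductive hypothesis. For the coproduct case $P=P_1+P_2$ with, say, $v=\lleft\ v'$, clause (A1.1) gives $f=\mathrm{left}\circ f'$ with $f'\tleq_{P_1}v'$; the inductive hypothesis supplies a witness $g'\colon 1\to\flrb{P_1}$ for $f'$, and I would take $\mathrm{inl}\circ g'\colon 1\to\flrb{P_1}\amalg\flrb{P_2}=\flrb{P_1+P_2}$ as the witness for $f$. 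The verification reduces to the coherence $c^{-1}\circ F(\mathrm{inl})=\mathrm{left}$, which holds because $c$ is precisely the copairing $[F(\mathrm{inl}),F(\mathrm{inr})]$ witnessing that $F$ preserves the coproduct. The tensor case $P=P_1\otimes P_2$ with $v=\langle v',v''\rangle$ is analogous: clause (A2) gives $f=\llangle f',f''\rrangle$, the inductive hypothesis supplies witnesses $g',g''$, and I would take $\langle g',g''\rangle\colon 1\to\flrb{P_1}\vartimes\flrb{P_2}$. Here the verification uses naturality of $m$ together with the unit coherence of the strong monoidal functor $F$, which identify $m_{(1,1)}\circ(u\otimes u)$ with $u$ up to the unitors; this is the only mildly computational step.

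The crux is the recursive case $P=\mu X.P'$ with $v=\fold\ v'$, and it is where Theorem~\ref{thm:main} does the essential work. Clause (A5) gives $\sunfold^{\mu X.P'}\circ f\tleq_{P'[\mu X.P'/X]}v'$, and since $P'[\mu X.P'/X]$ is again non-linear and $v'$ is a strictly smaller value, the inductive hypothesis yields a witness $k\colon 1\to\flrb{P'[\mu X.P'/X]}$ with $\sunfold\circ f=(\alpha^{P'[\mu X.P'/X]})^{-1}\circ Fk\circ u$. I would then set $h'\coloneqq\ifold^{\mu X.P'}\circ k\colon 1\to\flrb{\mu X.P'}$ and check that $f=(\alpha^{\mu X.P'})^{-1}\circ Fh'\circ u$. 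This verification is exactly the commuting square of Theorem~\ref{thm:main} at the closed point, which reads $F\ifold\circ\alpha^{P'[\mu X.P'/X]}=\alpha^{\mu X.P'}\circ\sfold$; inverting it gives $(\alpha^{\mu X.P'})^{-1}=\sfold\circ(\alpha^{P'[\mu X.P'/X]})^{-1}\circ F\iunfold$, and substituting $h'=\ifold\circ k$ collapses $F\iunfold\circ Fh'=Fk$, recovering $f=\sfold\circ(\sunfold\circ f)$. The main obstacle is conceptual rather than computational: recognising that the induction must run on values (so that (A5) decreases the measure even as the type grows), and that the transport of non-linearity across fold/unfold is precisely the coherence already established in Theorem~\ref{thm:main}. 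Once these two observations are in place, the remaining cases are routine diagram chases using the monoidal coherences of $F$.
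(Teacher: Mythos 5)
Your proof is correct and follows essentially the same route as the paper's: the paper proves this lemma "by induction on $\cdot \vdash v : P$, essentially the same as Proposition~\ref{prop:values-non-linear}", i.e.\ an induction on the value (via its typing derivation) that exhibits an explicit witness in $\CC$ for each value constructor, with the $\fold$ case discharged by Theorem~\ref{thm:main} exactly as you do. Your observation that the induction must run on values rather than types (because of clause (A5)) is precisely the reason the paper sets it up this way.
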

\begin{proof}
By induction on $\cdot \vdash v :P$, essentially the same as Proposition~\ref{prop:values-non-linear}.
\end{proof}

The next proposition is fundamental for the proof of adequacy.

\begin{prop}\label{prop:adequacy-subst}
Let $\Gamma \vdash m: A,$ where
$\Gamma = x_1: A_1, \ldots, x_n: A_n.$
Assume further we are given $v_i$ and $f_i$, such that
$f_i \tleq_{A_i} v_i.$
Then $\lrb m \circ \llangle f_1, \ldots, f_n \rrangle \sleq_{A} m[\vec v\ /\ \vec x].$
\end{prop}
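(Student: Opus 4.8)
The plan is to argue by induction on a fixed derivation of $\Gamma \vdash m : A$; by Theorem~\ref{thm:derivations} the interpretation $\lrb m$ is independent of the chosen derivation, so this is harmless. Writing $g := \lrb m \circ \llangle f_1, \ldots, f_n \rrangle$, I unfold $\sleq_A$ through property (B): it suffices to assume $g \neq \perp$ and then to produce a value $w$ with $m[\vec v/\vec x] \Downarrow w$ and $g \tleq_A w$. The two recurring tools are that zero morphisms are absorbing (Theorem~\ref{thm:model-compact}), which propagates $g \neq \perp$ backward to non-$\perp$-ness of subterm interpretations, and the operational rules of Figure~\ref{fig:operational}, which assemble subterm evaluations into an evaluation of $m[\vec v/\vec x]$.

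The unary cases follow one template. For $\lleft_{A,B} m$ we have $g = \mathrm{left} \circ (\lrb m \circ \llangle \vec f\rrangle)$; because $\mathrm{left} \circ \perp = \perp$ (Theorem~\ref{thm:model-compact}), $g \neq \perp$ forces $\lrb m \circ \llangle \vec f\rrangle \neq \perp$, the induction hypothesis gives $m[\vec v/\vec x] \Downarrow w'$ with $\lrb m \circ \llangle\vec f\rrangle \tleq_A w'$, the operational rule yields $\lleft(m[\vec v/\vec x]) \Downarrow \lleft w'$, and (A1.1) delivers $g \tleq_{A+B} \lleft w'$. The cases $\rright$ and $\fold$ are identical modulo the defining isomorphism ($\mathrm{right}$ or $\sfold$) and the clause (A1.2) or (A5); for $\unfold$ and $\force$ the extracted value $\fold w'$ or $\lift p$ is peeled open by one further evaluation step, governed by (A5)/(A4) and the matching operational rule.

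The base case and the branching/multiplicative cases turn on the bookkeeping of non-linear variables. For a variable $x$ the context is $\Phi, x:A$ with $\Phi$ non-linear, so by Lemma~\ref{lem:relation-non-linear} the components $f_i$ over $\Phi$ are non-linear morphisms; Proposition~\ref{prop:non-linear-morphisms} then collapses the discarding part of $\lrb x$, reducing $g$ to the component $f$ over $x$, and $f \tleq_A v$ closes the case through (B). For $\langle m,n\rangle$, $mn$, $\ccase$ and $\llet$, whose context is $\Phi, \Gamma', \Sigma$, I would use Proposition~\ref{prop:non-linear-morphisms} (copying of non-linear morphisms) and the comonoid laws of Proposition~\ref{prop:comonoid} to rewrite $g$ into the expected shape $\llangle g_m, g_n\rrangle$, $\ev \circ \llangle g_m, g_n\rrangle$, or a coproduct mediating map applied to $g_m$, where $g_m, g_n$ feed the correctly copied components into the subterms --- the same computation as in the denotational substitution lemma for terms. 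Condition~(2) of Definition~\ref{def:adequate-lnl-fpc} is precisely what splits $g \neq \perp$ into $g_m \neq \perp$ and $g_n \neq \perp$, after which the induction hypotheses, the relevant clause among (A1.1)--(A3), and the operational rules finish; $\ccase$ and $\llet$ additionally apply the induction hypothesis to the continuation, using the value and the relation extracted from the scrutinee. For $\lambda x^A.m$ I would verify (A3) directly: being a value it suffices, assuming $g\neq\perp$, to show $\ev \circ \llangle g, f'\rrangle = \lrb m \circ \llangle \vec f, f'\rrangle$ (the $\beta$-law of the monoidal closed structure) for each $f' \tleq_A v'$ and invoke the induction hypothesis in the extended context. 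For $\lift m$ (over a non-linear $\Phi$) I would use that $\llangle\vec f\rrangle$ is non-linear to factor $g$ explicitly as $I \xrightarrow{u} F1 \xrightarrow{Fg'} \,!\lrb A$ as (A4) requires, and the zig-zag identity $\epsilon \circ \Box^\Phi = \id$ to reduce $\epsilon \circ g$ to $\lrb m \circ \llangle\vec f\rrangle$, to which the induction hypothesis applies.

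The main obstacle is not any single case but keeping two threads aligned in the contraction-heavy cases: denotationally, that precomposing the clauses of Figure~\ref{fig:semantics-terms} with $\llangle\vec f\rrangle$ really factors as the tensor/eval/coproduct of subterm interpretations fed with the correctly copied components (resting on Proposition~\ref{prop:non-linear-morphisms} and Proposition~\ref{prop:comonoid}); and syntactically, that $m[\vec v/\vec x]$ decomposes compatibly with the split of $\Phi, \Gamma', \Sigma$ (from the definition of substitution and Lemma~\ref{lem:syntax-sub}), so that the assembled operational rule yields exactly the value predicted by the relevant (A)-clause. The explicit factorization in the $\lift$ case and the use of Condition~(2) of Definition~\ref{def:adequate-lnl-fpc} to transport non-$\perp$-ness across $\otimes$ are the two points where the extra structure of a computationally adequate model is genuinely needed.
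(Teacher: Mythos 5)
Your proposal takes the same route as the paper, whose proof is stated only as ``by induction on $\Gamma \vdash m : A$, using Lemma~\ref{lem:relation-non-linear} where necessary''; your case analysis is the expected expansion of that induction, and the handling of the variable, $\lambda$, $\lift$ and contraction-heavy cases is the right one. One correction: condition~(2) of Definition~\ref{def:adequate-lnl-fpc} is \emph{not} what splits $g \neq \perp$ into $g_m \neq \perp$ and $g_n \neq \perp$. That splitting is the contrapositive of ``$\perp$ is absorbing for $\otimes$'', which holds in every $\cpo$-LNL model: since $\LL$ is monoidal closed, $- \otimes C$ is a left adjoint, so $0 \otimes C \cong 0$, and a morphism tensored with a zero morphism factors through a zero object (cf.\ Theorem~\ref{thm:model-compact}); no extra hypothesis of computational adequacy is involved. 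Condition~(2) is an order-reflection statement about \emph{non-$\perp$} parallel pairs, and, as Remark~\ref{rem:augmented-functors} records, it is used exactly once in the paper --- in the proof of Theorem~\ref{thm:augmented-functors}, to show that $(-\ (\tleq_X \otimes \tleq_Y)\ \langle v,w\rangle)$ is closed under suprema, i.e.\ that the relations of Lemma~\ref{lem:formal-relations} exist at all. Its role in adequacy is therefore confined to the existence of the formal approximation relations, not to the present proposition; with that attribution fixed, your argument goes through as written.
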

\begin{proof}
By induction on $ \Gamma \vdash m : A,$ using Lemma~\ref{lem:relation-non-linear} where necessary.
\end{proof}

The next theorem establishes sufficient conditions for termination at \emph{any} type.

\begin{thm}[Termination]\label{thm:termination}
Let $\cdot \vdash p: A\ $ be a well-typed program.
If $\ \lrb{p} \not = \perp$, then $p \Downarrow.$
\end{thm}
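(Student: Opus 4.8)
The plan is to prove termination as a direct corollary of the machinery of formal approximation relations, specifically Proposition~\ref{prop:adequacy-subst} together with property (B) of Lemma~\ref{lem:formal-relations}.

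First I would observe that $\cdot \vdash p : A$ is a closed program with empty context, so $\Gamma = \cdot$ and the list of variables $x_1, \ldots, x_n$ is empty. I would then apply Proposition~\ref{prop:adequacy-subst} in this degenerate case. With $n = 0$, the substitution $p[\vec v / \vec x]$ is just $p$ itself, and the morphism $\llangle \rrangle$ is (by the convention of the $\llangle - \rrangle$ notation) simply the identity $\id_I : I \to I$. Hence Proposition~\ref{prop:adequacy-subst} specializes to the statement $\lrb{p} \circ \id_I \sleq_A p$, that is, $\lrb p \sleq_A p$.

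Next I would invoke property (B) of the formal approximation relation, which reads: $f \sleq_A p$ iff $f \neq \perp$ implies there exists a value $v \in \text{Values}(A)$ with $p \Downarrow v$ and $f \tleq_A v$. Instantiating $f := \lrb p$, the hypothesis of the theorem gives us precisely $\lrb p \neq \perp$. Therefore property (B), applied to the relation $\lrb p \sleq_A p$ that we just established, yields the existence of a value $v$ with $p \Downarrow v$. By the definition of termination (the notation $m \Downarrow$ meaning there is a value $v$ with $m \Downarrow v$), this is exactly $p \Downarrow$, completing the proof.

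I do not anticipate a genuine obstacle here, since all the heavy lifting — the existence of the formal approximation relations (Lemma~\ref{lem:formal-relations}) and the fundamental substitution-compatibility property (Proposition~\ref{prop:adequacy-subst}) — has already been carried out. The only point requiring a moment of care is the degenerate-case bookkeeping: checking that the empty tensor $\llangle \rrangle$ really is the identity on $I$ and that Proposition~\ref{prop:adequacy-subst} is stated generally enough to allow $n = 0$. Both are immediate from the definitions, so the proof amounts to little more than chaining these two results together.

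\begin{proof}
By Proposition~\ref{prop:adequacy-subst} applied to the program $\cdot \vdash p : A$ (so that the context is empty, $n = 0$, and $\llangle\,\rrangle = \id_I$), we obtain $\lrb p = \lrb p \circ \id_I \sleq_A p$. Since $\lrb p \neq \perp$ by assumption, property (B) of Lemma~\ref{lem:formal-relations} guarantees the existence of a value $v \in \text{Values}(A)$ with $p \Downarrow v$. Hence $p \Downarrow$.
\end{proof}
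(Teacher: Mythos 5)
Your proof is correct and follows exactly the same route as the paper: specialize Proposition~\ref{prop:adequacy-subst} to the empty context to get $\lrb p \sleq_A p$, then apply property (B) with the hypothesis $\lrb p \neq \perp$. The extra bookkeeping about $\llangle\,\rrangle = \id_I$ is a fine (and harmless) elaboration of what the paper leaves implicit.
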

\begin{proof}
This is a special case of the previous proposition when $\Gamma = \cdot$.
We get $\lrb{\cdot \vdash p: A} \sleq_{A} p,$ and
thus $p \Downarrow$ by definition of $\sleq_{A}$.
\end{proof}


\begin{thm}[Adequacy]\label{thm:adequacy}
Let $\cdot \vdash p: P$ be a well-typed program, with $P$ non-linear. Then
\[p \Downarrow \text{ iff }\ \ \lrb{p} \not = \perp.\]
\end{thm}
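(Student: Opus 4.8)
The plan is to prove the two implications separately, and to observe that almost all of the work has already been carried out in the preceding results. The backward implication, namely $\lrb{p} \neq \perp \Rightarrow p \Downarrow$, holds at \emph{every} type and is precisely Theorem~\ref{thm:termination}; so there is nothing further to do in that direction, and in particular this half of the statement does not even require $P$ to be non-linear.

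For the forward implication I would argue as follows. Suppose $p \Downarrow$, so that $p \Downarrow v$ for some value $v$. First I would apply subject reduction (Theorem~\ref{thm:subject-reduction}) to conclude that $\cdot \vdash v : P$ is again a well-typed value of the same type. Since $P$ is non-linear, $v$ is by definition a non-linear value, and hence Lemma~\ref{lem:values-total} applies to give $\lrb{v} \neq \perp$. Finally, soundness (Theorem~\ref{thm:soundness}) yields $\lrb{p} = \lrb{v}$, and therefore $\lrb{p} \neq \perp$, which completes the equivalence.

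The upshot is that the adequacy statement itself is a short corollary; the genuine difficulty lies entirely upstream, in the construction of the formal approximation relations (Lemma~\ref{lem:formal-relations}) underpinning Theorem~\ref{thm:termination}, and in Lemma~\ref{lem:values-total}. I would emphasise that the hypothesis that $P$ be non-linear is used only in the appeal to Lemma~\ref{lem:values-total}: that argument crucially exploits the discarding morphism $\diamond^P$ available at non-linear types, together with the computational-adequacy axiom $\id_I \neq \perp_{I,I}$, to derive a contradiction from the assumption $\lrb{v} = \perp$. At a genuinely linear type no such comonoid structure is available, which is exactly why adequacy is stated only for non-linear types; so rather than a remaining obstacle, the subtle point at this final stage is simply identifying where non-linearity is indispensable.
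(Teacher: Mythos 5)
Your proposal is correct and follows essentially the same route as the paper, which proves the forward direction by soundness together with Lemma~\ref{lem:values-total} and the backward direction by Theorem~\ref{thm:termination}. Your explicit appeal to subject reduction and your remarks on where non-linearity is needed are accurate refinements of the same argument.
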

\begin{proof}
$(\Rightarrow)$ By soundness and Lemma~\ref{lem:values-total}; $(\Leftarrow)$ by Theorem~\ref{thm:termination}.
\end{proof}

This formulation of adequacy immediately implies another one that some readers might be more familiar with.
We shall say that a closed type $\cdot \vdash P$ is \emph{ground} if it is formed without any usage of $!$
or $\multimap$ (note that such a type is necessarily non-linear)\footnote{To make ground types inhabited, we also have to hardcode the unit type $I$ in the language (see Example~\ref{ex:types}, where we defined $I \equiv\ !(0 \multimap 0)$),
so that $I$ has a unique value. This is trivial and causes no problems.}.
By using soundness, adequacy, Lemma \ref{lem:values-total} and injectivity of $\lrb{-}$ on values of ground type, we get:

\begin{cor}
\label{cor:adequacy-ground}
Let $\cdot \vdash p : P$ and $\cdot \vdash v : P$ be a term and a value of ground type $P$. Then \[ p \Downarrow v \text{ iff } \lrb p = \lrb v. \]
\end{cor}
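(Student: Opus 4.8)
The final statement to prove is Corollary~\ref{cor:adequacy-ground}, which characterises operational behaviour of ground-type programs purely in terms of denotational equality.

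\medskip

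The plan is to derive this as a clean consequence of the three results cited in the hint: Soundness (Theorem~\ref{thm:soundness}), Adequacy (Theorem~\ref{thm:adequacy}), Lemma~\ref{lem:values-total}, together with injectivity of $\lrb{-}$ on values of ground type. First I would prove the forward direction. Suppose $p \Downarrow v$. By Soundness, $\lrb{p} = \lrb{v}$, giving one half immediately. For the converse, assume $\lrb{p} = \lrb{v}$. Since $\cdot \vdash v : P$ is a value of a ground type (hence non-linear), Lemma~\ref{lem:values-total} yields $\lrb{v} \neq \perp$, so $\lrb{p} = \lrb{v} \neq \perp$. By Adequacy (Theorem~\ref{thm:adequacy}), applied to the non-linear type $P$, we conclude $p \Downarrow$, i.e.\ there exists a value $w$ with $p \Downarrow w$. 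It then remains to identify $w$ with $v$.

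\medskip

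The main obstacle is precisely this last identification: Adequacy only tells us $p$ terminates to \emph{some} value $w$, not to $v$ specifically. To close the gap I would argue as follows. By Soundness applied to $p \Downarrow w$ we get $\lrb{p} = \lrb{w}$, and therefore $\lrb{w} = \lrb{v}$. Now I invoke injectivity of $\lrb{-}$ on values of ground type: since both $v$ and $w$ are values of the same ground type $P$ with equal interpretations, this injectivity forces $v = w$ (up to the syntactic identification appropriate for the statement). Hence $p \Downarrow v$, completing the converse. The injectivity claim itself is the substantive ingredient the hint supplies; it should follow by induction on the structure of ground types, using that ground types are built only from $+$, $\otimes$, $\mu$, and the hardcoded unit $I$ (per the footnote), so that the interpretation of values is assembled from coproduct injections, tensor pairings, $\sfold$ isomorphisms, and the unique value of $I$, each of which is jointly monic on distinct value shapes.

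\medskip

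In short, the proof is a two-line forward direction via Soundness and a converse that chains Adequacy (to obtain termination), Soundness again (to compute the interpretation of the resulting value), and injectivity on ground values (to pin that value down to $v$). I expect no delicate calculation beyond the injectivity argument, which is where the restriction to ground types — excluding $!$ and $\multimap$, whose values are $\lift$-terms and $\lambda$-abstractions that need not be separated by the semantics — is genuinely used.
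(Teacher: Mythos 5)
Your proof is correct and follows exactly the route the paper intends: soundness for the forward direction, and for the converse the chain of Lemma~\ref{lem:values-total}, Adequacy, soundness again, and injectivity of $\lrb{-}$ on ground-type values to pin down the resulting value. The paper states this corollary with only a one-sentence justification citing the same four ingredients, so your fleshed-out version matches its argument precisely.
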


\subsection{Existence of the formal approximation relations}\label{sub:existence}

Our proof strategy is very
similar to the one in~\cite[Chapter 9]{fiore-thesis} and we now provide an
overview. We define a category where the objects are pairs $(X,\tleq)$ of
objects $X$ of $\LL$ and binary relations $\tleq \subseteq (\LL(I, X) - \{\perp\})
\times $ Values$(A)$ together with a suitable notion of morphism that respects
these relations (Definitions~\ref{def:relation} -- \ref{def:adequacy-category}).
We show that the constructed category has sufficient structure to solve
recursive domain equations (Theorem~\ref{thm:adequacy-compactness}) and 
that the functors we use to interpret types can be lifted to this category
(Theorem~\ref{thm:augmented-functors}), while respecting the obvious forgetful
functor. Then, we provide a non-standard interpretation of types that contains
not only the standard interpretation of types, but also the logical relations
needed for the adequacy proof (Proposition~\ref{prop:augmented}). We also show
that this \emph{augmented interpretation} is appropriate by showing that
(parameterised) initial algebras are computed in the same way as for the
standard interpretation (Corollary~\ref{cor:augmented-folding}) and that it
also respects substitution (Lemma~\ref{lem:adequacy-substitution}). This then
allows us to prove the existence of the required relations.

\begin{defi}\label{def:relation}
Given a closed type $\cdot \vdash A,$ an object $X \in \text{Ob}(\LL)$ and given a relation $\tleq \subseteq (\LL(I, X) - \{\perp\}) \times \text{Values}(A)$, we define another relation
$\sleq \subseteq \LL(I,X) \times $ Programs$(A)$ by
$g \sleq p$ iff $g \not = \perp \Rightarrow \exists\ v \in $ Values$(A).\ p \Downarrow v \land g \tleq v$.
\end{defi}

\begin{defi}\label{def:temp-category}
For any closed type $\cdot \vdash A$, we define a $\Poset$-category $\Rel(A)$ with:
\begin{itemize}
\item Objects $(X, \tleq)$, where $X \in \text{Ob}(\LL)$ and $\tleq \subseteq (\LL(I, X) - \{\perp\}) \times $ Values$(A)$ is a relation;
\item Morphisms $f: (X, \tleq_X) \to (Y, \tleq_Y)$ are exactly the morphisms $f : X \to Y$ in $\LL$, such that:
\[g \tleq_X v \Rightarrow f \circ g \sleq_Y v;\]
\item Composition and identities as in $\LL$;
\item Order inherited from $\LL$, i.e. $f \leq g $ (in $\Rel(A)$) iff $f \leq g$ (in $\LL$).
\end{itemize}
\end{defi}

Unfortunately, the category $\Rel(A)$ is not $\cpo$-enriched (in general) and
therefore it is unsuitable for our purposes. Because of this, we consider a
full subcategory that has the desired structure.

\begin{defi}\label{def:adequacy-category}
Let $\RR(A)$ be the full subcategory of $\Rel(A)$ consisting of objects $(X, \tleq)$, such that
for any $v \in \text{Values}(A)$, the predicate $(- \tleq v)$ is closed under suprema of ascending chains in $\LL(I,X) - \{\perp \}.$
More specifically, if $\{g_i\}_{i=1}^\omega$ is an increasing sequence and $g_i \tleq v$ for every $i \in \omega$,
then $\bigvee g_i \tleq v.$
\end{defi}

Next, we list some useful properties of the categories $\RR(A)$.

\begin{lem}\label{lem:adequacy-basic-properties}
Let $\cdot \vdash A$ be a closed type. In the category $\RR(A):$
\begin{enumerate}
\item For any object $(X, \tleq)$ and $v \in \text{Values}(A)$ and $g \not = \perp$, it follows $g \tleq v$ iff $g \sleq v$. 
\item For any morphism $f : (X, \tleq_X) \to (Y, \tleq_Y)$, if $g \sleq_X p$ then $f \circ g \sleq_Y p$.
\item For any object $(X, \tleq)$ and $p \in \text{Programs}(A)$, the predicate $(- \sleq p)$ is closed under taking suprema of increasing sequences in $\LL(I, X)$.
\end{enumerate}
\end{lem}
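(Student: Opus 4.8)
The plan is to check the three claims in order, each by directly unfolding Definitions~\ref{def:relation}, \ref{def:temp-category} and~\ref{def:adequacy-category}. Only two facts about the operational semantics are needed: every value satisfies $v \Downarrow v$, and $\Downarrow$ is a partial function (i.e.\ deterministic). I will also use that $\perp$ is least in each hom-cpo $\LL(I,X)$ (Theorem~\ref{thm:model-compact}(2)) and that zero morphisms are absorbing, $f \circ \perp = \perp$ (Theorem~\ref{thm:model-compact}(1)). For (1), fix $g \neq \perp$. If $g \tleq v$, then since $v$ is a value $v \Downarrow v$, so $w := v$ witnesses $g \sleq v$ by Definition~\ref{def:relation}. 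Conversely, if $g \sleq v$, then as $g \neq \perp$ there is a value $w$ with $v \Downarrow w$ and $g \tleq w$; but $v \Downarrow v$ and $\Downarrow$ is a partial function, so $w = v$ and hence $g \tleq v$.

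For (2), assume $g \sleq_X p$ and $f \circ g \neq \perp$; I must produce a value $v$ with $p \Downarrow v$ and $f \circ g \tleq_Y v$. First note $g \neq \perp$, since otherwise $f \circ g = f \circ \perp = \perp$ by absorption, contradicting the hypothesis. Thus $g \sleq_X p$ together with $g \neq \perp$ yields a value $v$ with $p \Downarrow v$ and $g \tleq_X v$. The defining condition on a morphism of $\Rel(A)$ (Definition~\ref{def:temp-category}) then gives $f \circ g \sleq_Y v$, and since $f \circ g \neq \perp$, claim (1) upgrades this to $f \circ g \tleq_Y v$. This is exactly the required witness, so $f \circ g \sleq_Y p$.

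For (3), let $\{g_i\}$ be an increasing chain in $\LL(I,X)$ with $g_i \sleq p$ for all $i$, and set $g := \bigvee_i g_i$. Assuming $g \neq \perp$, I must exhibit $v$ with $p \Downarrow v$ and $g \tleq v$. Since $\perp$ is least and $g \neq \perp$, some $g_{i_0} \neq \perp$, and monotonicity then forces $g_i \neq \perp$ for all $i \geq i_0$. For each such $i$, the relation $g_i \sleq p$ with $g_i \neq \perp$ produces a value $v_i$ with $p \Downarrow v_i$ and $g_i \tleq v_i$; by determinism of $\Downarrow$ all the $v_i$ coincide with the unique value $v$ for which $p \Downarrow v$. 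Hence $\{g_i\}_{i \geq i_0}$ is an increasing chain in $\LL(I,X) - \{\perp\}$ every member of which satisfies $(-\tleq v)$, and since $(X,\tleq) \in \RR(A)$ this predicate is closed under such suprema (Definition~\ref{def:adequacy-category}); therefore $g = \bigvee_{i \geq i_0} g_i \tleq v$, giving $g \sleq p$. The only genuinely delicate point lies here: one cannot apply the $\RR(A)$-closure property to the whole chain $\{g_i\}$, since that property is stated only for chains avoiding $\perp$. The key observation is that a supremum distinct from $\perp$ already forces $\perp$ off a cofinal tail of the chain, and that determinism of evaluation makes all tail elements approximate one and the same value $v$, so the closure condition applies to the tail and computes the same supremum $g$.
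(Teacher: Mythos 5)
Your proof is correct and follows essentially the same route as the paper's: part (1) by $v \Downarrow v$ and the definition of $\sleq$, part (2) by extracting a witness $v$ from $g \sleq_X p$ and pushing it through the morphism condition, and part (3) by passing to a cofinal tail on which the $g_i$ are non-$\perp$, using determinism of $\Downarrow$ to identify a single value $v$, and applying the closure condition of Definition~\ref{def:adequacy-category} to that tail. The only difference is that you spell out details the paper leaves implicit (absorption of $\perp$ under composition, leastness of $\perp$, and determinism of evaluation), all of which are justified.
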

\begin{proof}
\textbf{(1)} Because $v \Downarrow v$ and then by definition.

\textbf{(2)} Assume $f \circ g \not = \perp$. Then $g \not = \perp$ and therefore $\exists v.\ p \Downarrow v\ \land\ g \tleq_X v.$
Since ${f : (X, \tleq_X) \to (Y, \tleq_Y)}$ is a morphism, it follows
$f \circ g \sleq_Y v$ and therefore $f \circ g \tleq_Y v$.
This now implies $f \circ g \sleq_Y p.$

\textbf{(3)}
Let $(f_i)_{i \in \omega}$ be an increasing sequence in $\LL(I,X),$ such that for each $i \in \omega$ we have $f_i \sleq p$.
Assume $\bigvee_i f_i \not = \perp.$ Then, there exists $k$, such that for $l \geq k: f_l \not = \perp$
and $f_l \sleq p.$ 
Therefore, $\exists v \in $ Values$(A)$, such that $p \Downarrow v$ and $f_l \tleq v,$ for every $l \geq k.$
Since $(- \tleq v)$ is closed under taking suprema of increasing sequences, we conclude
$\bigvee_{i \geq k} f_i \tleq v$. But then also $\bigvee_{i \geq k} f_i \sleq p$, by definition. To finish the proof, observe that $\bigvee_{i \geq 0} f_i = \bigvee_{i \geq k} f_i.$
\end{proof}

There is an obvious forgetful functor $U^A: \RR(A) \to \LL$, defined by:
\begin{equation}\label{eq:forgetful-adequacy}
U^A(X, \tleq) := X \text{\qquad and \qquad} U^A(f) := f
\end{equation}

The next lemma establishes some crucial properties of the categories $\RR(A).$

\begin{lem}\label{lem:adequacy-properties}
For any closed type $\cdot \vdash A:$
\begin{enumerate}
\item The category $\RR(A)$ is $\cpo$-enriched.
\item $\RR(A)$ has an ep-zero object $(0, \varnothing).$
\item $\RR(A)$ is $\cpobs$-enriched.
\item $U^A: \RR(A) \to \LL$ is a $\cpobs$-functor.
\item $\RR(A)$ has all $\omega$-colimits of embeddings.
\item $U^A_e: \RR(A)_e \to \LL_e$ is a strict $\omega$-functor (cf. Theorem~\ref{thm:e-p-trick}).
\end{enumerate}
\end{lem}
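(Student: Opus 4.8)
The plan is to prove the six items in order, exploiting throughout that the hom‑posets, identities, composition, order, and suprema of $\RR(A)$ are all inherited pointwise from $\LL$; the only thing that ever needs checking is that a construction performed in $\LL$ lands back inside $\RR(A)$. The two workhorses will be Lemma~\ref{lem:adequacy-basic-properties} (parts (2) and (3), saying morphisms preserve $\sleq$ and that $(-\sleq p)$ is closed under suprema) and the structure of $\LL$ coming from Theorem~\ref{thm:model-compact} (zero morphisms, $\cpobs$-enrichment) and Theorem~\ref{thm:locally-determined} (limit–colimit coincidence).

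For items (1), (3), (4) the key fact is that suprema of chains and least morphisms computed in $\LL$ stay in $\RR(A)$. If $\{f_i\}$ is an increasing chain in $\RR(A)((X,\tleq_X),(Y,\tleq_Y))$ and $g \tleq_X v$, then each $f_i\circ g \sleq_Y v$, so by continuity of composition $(\bigvee_i f_i)\circ g = \bigvee_i(f_i\circ g)\sleq_Y v$ using Lemma~\ref{lem:adequacy-basic-properties}(3) (a value is a program); hence $\bigvee_i f_i$ is a morphism, the hom‑posets are cpo's, and composition is continuous, giving (1). Since $\perp_{X,Y}\circ g = \perp_{I,Y}$ is a zero morphism it satisfies $\sleq_Y v$ trivially, so $\perp_{X,Y}$ is the least morphism of $\RR(A)$ (Theorem~\ref{thm:model-compact}(2)) and composing with it yields zero morphisms, giving strictness and (3). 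As $U^A$ sends these suprema and least elements to the identical data in $\LL$, it is a $\cpobs$-functor, giving (4). For (2), $0$ is a zero object of $\LL$, so $\LL(I,0)=\{\perp\}$ forces the only relation on $0$ to be $\varnothing$ and $(0,\varnothing)\in\RR(A)$ trivially; the unique maps $e_Y:0\to Y$ and $e_Y^\bullet:Y\to0$ are morphisms of $\RR(A)$ (vacuously, resp. because $e_Y^\bullet\circ g=\perp$ satisfies $\sleq$), and the e‑p equations hold in $\LL$, so $(0,\varnothing)$ is e‑initial; the dual argument using that every map into $0$ is a zero morphism gives p‑terminality, hence ep‑zero.

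I expect (5) to be the main obstacle. Given an $\omega$-diagram $D$ over embeddings in $\RR(A)$ with $D(n)=(X_n,\tleq_n)$, the functor $U^A$ preserves embeddings, so $U^AD$ is a diagram over embeddings in $\LL$ and has a colimiting cocone $\mu:U^AD\to X_\omega$. By Theorem~\ref{thm:locally-determined} each $\mu_n$ is an embedding, $\bigvee_n \mu_n\circ\mu_n^\bullet=\id_{X_\omega}$, and each $\mu_m^\bullet\circ\mu_n$ is a connecting embedding of $D$ or the projection of one, hence a morphism of $\RR(A)$. I would define the colimit relation by
\[ g \tleq_\omega v \quad\text{iff}\quad g\neq\perp \text{ and } \forall m.\ \mu_m^\bullet\circ g \sleq_m v. \]
Closure under suprema, so that $(X_\omega,\tleq_\omega)\in\RR(A)$, follows from continuity of composition and Lemma~\ref{lem:adequacy-basic-properties}(3); the definition makes each $\mu_m^\bullet$ a morphism at once, and each $\mu_n$ is a morphism since $\mu_m^\bullet\circ\mu_n$ is a morphism and Lemma~\ref{lem:adequacy-basic-properties}(2) applies, so $\mu$ is a cocone over embeddings in $\RR(A)$. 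For universality, a competing cocone $\nu:D\to(Y,\tleq_Y)$ induces a unique $\LL$-morphism $f:X_\omega\to Y$, and writing $f\circ g=\bigvee_m \nu_m\circ(\mu_m^\bullet\circ g)$ via $\bigvee_n\mu_n\mu_n^\bullet=\id$ one checks $f\in\RR(A)$ termwise by Lemma~\ref{lem:adequacy-basic-properties}(2) and then Lemma~\ref{lem:adequacy-basic-properties}(3), establishing (5).

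Finally, (6) is read off from the construction in (5): $U^A$ carries the colimit $(X_\omega,\tleq_\omega)$ with cocone $\mu$ back to the chosen colimit $X_\omega$, $\mu$ of $U^AD$ in $\LL$. Since colimits over embeddings agree with colimits in the subcategories of embeddings (Theorem~\ref{thm:locally-determined}), $U^A_e:\RR(A)_e\to\LL_e$ preserves $\omega$-colimits, and it is strict because it sends the ep‑zero $(0,\varnothing)$, which is initial in $\RR(A)_e$, to the initial object $0$ of $\LL_e$.
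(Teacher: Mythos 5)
Your proposal is correct and follows essentially the same route as the paper: the same inheritance argument for items (1)--(4), the same ep-zero verification for $(0,\varnothing)$, and in (5) the identical colimit relation $g \tleq_\omega v \iff \forall m.\ \mu_m^\bullet \circ g \sleq_m v$ together with the same case analysis showing that the $\mu_n$ and $\mu_n^\bullet$ are morphisms of $\RR(A)$. The only divergence is the final step of (5), where you verify the universal property directly via $f \circ g = \bigvee_m \nu_m \circ (\mu_m^\bullet \circ g)$ and Lemma~\ref{lem:adequacy-basic-properties}, whereas the paper instead checks criterion (3) of Theorem~\ref{thm:locally-determined} ($\bigvee_i \mu_i \circ \mu_i^\bullet = \id$, inherited from $\LL$) and lets the limit--colimit coincidence deliver colimitingness; both are sound, yours being slightly more self-contained and the paper's slightly shorter.
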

\begin{proof}
\textbf{(1)}
It is obvious that $\RR(A)$ is $\Poset$-enriched. We will show that suprema of ascending chains in $\RR(A)$ exist and coincide with those in $\LL$.
Assume that $\{f_i\}_{i=1}^\omega$ is an increasing sequence, where ${f_i: (X, \tleq_X) \to (Y, \tleq_Y)}$.
Assume $g$ and $v$ are arbitrary, such that $g \tleq_X v$. We have to show $(\bigvee_{i \geq 0} f_i) \circ g \sleq_Y v$. Towards that end, assume ${(\bigvee_{i \geq 0} f_i) \circ g \not = \perp.}$
Therefore there exists $k \in \omega$, such that for every ${l \geq k: f_l \circ g \not = \perp.}$ By definition, $f_l \circ g \sleq_Y v$
and since $f_l \circ g \not = \perp$ and $v \Downarrow v$, it follows $f_l \circ g \tleq v$. Since $(- \tleq_Y v)$ is closed under suprema, it follows $\bigvee_{i \geq k} f_i \circ g = (\bigvee_{i \geq k} f_i) \circ g = (\bigvee_{i \geq 0} f_i) \circ g \tleq_Y v$
and therefore it follows $(\bigvee_{i \geq 0} f_i) \circ g \sleq_Y v$.

Finally, observe that composition in $\RR(A)$ is a Scott-continuous operation,
because it coincides with composition in $\LL,$ which is itself a Scott-continuous
operation.

\textbf{(2)} Let $(X, \tleq_X)$ be an arbitrary object of $\RR(A)$.
$(0, \varnothing)$ is initial, because the unique $\LL$-morphism ${\perp_{0, X} : 0 \to X}$ is also a $\RR(A)$-morphism (the additional required condition is trivially satisfied for the empty relation).
Moreover, $(0, \varnothing)$ is terminal, because the unique $\LL$-morphism
$\perp_{X, 0} : X \to 0$ is also a $\RR(A)$-morphism (if $g \tleq_X v$ then
$\perp_{X,0} \circ g = \perp_{I,0} \sleq_\varnothing v$, where $\sleq_\varnothing$
is constructed via Definition~\ref{def:relation} when $\tleq = \varnothing$).
Finally, $\perp_{X,0} \circ \perp_{0, X} = \perp_{0,0} = \id_0$ and 
$\perp_{0,X} \circ \perp_{X, 0} = \perp_{X,X} \leq \id_X$ because the order in $\RR(A)$ is inherited from $\LL$. Thus, $(0, \varnothing)$ is an ep-zero.

\textbf{(3)} Combine (1), (2) and the fact that $\perp_{X,Y}$ is least in its hom-cpo.

\textbf{(4)} True by construction (the order and suprema in $\RR(A)$ coincide with those in $\LL$).

\textbf{(5)} Let $D = ((X_k, \tleq_k), e_k)_{k \in \omega}$ be an $\omega$-diagram, s.t. each $e_k: (X_k, \tleq_k) \to (X_{k+1}, \tleq_{k+1})$ is an embedding.
Let $\mu: UD \to X_\omega$ be the colimiting cocone in $\LL$ of the $\omega$-diagram $UD = (X_k, e_k)_{k \in \omega}$.
Define a relation $\tleq_\omega \subseteq (\LL(I,X_\omega) - \{\perp\}) \times \text{Values}(A)$ by:
\[f \tleq_{\omega} v \text{ iff } \forall k \in \omega.\ \mu_k^\bullet \circ f \sleq_k v , \]
where $\mu_k: X_k \to X_\omega$ is the colimiting embedding and $\mu_k^\bullet: X_\omega \to X_k$ is its corresponding projection.
We will show that $(X_\omega, \tleq_\omega)$ is the colimit of $D$ in $\RR(A)$ with colimiting cocone given by $\mu$.

We have to show $(- \tleq_\omega v)$ is closed under suprema of ascending chains, for arbitrary $v$. Let $(f_i)_{i \in \omega}$ be an increasing
sequence, s.t. $f_i \tleq_\omega v.$
Then ${ \forall k. \forall i.\ \mu_k^\bullet \circ f_i \sleq_k v. }$
But $(- \sleq_k v)$ is closed under suprema of increasing sequences, thus ${ \forall k.\ \bigvee_i \mu_k^\bullet \circ f_i = \mu_k^\bullet \circ \bigvee_i f_i \sleq_k v. }$
Then, by definition $\bigvee_i f_i \tleq_\omega v$.

Next, we show that $\mu_k: (X_k, \tleq_k) \to (X_\omega, \tleq_\omega)$ is a morphism of $\RR(A).$ Towards that end, assume that $g \tleq_k v$. We have to show $\mu_k \circ g \sleq_\omega v$.
To do this, assume further $\mu_k \circ g \neq \perp.$ Now, $\mu_k \circ g \sleq_\omega v$ is equivalent to $\mu_k \circ g \tleq_\omega v$,
which is equivalent to $\forall l.\ \mu_l^\bullet \circ \mu_k \circ g \sleq_l v.$
For $l \leq k$, we have 
\[ \mu_l^\bullet \circ \mu_k \circ g = e_{l,k}^\bullet \circ \mu_k^\bullet \circ \mu_k \circ g = e_{l,k}^\bullet \circ g \]
and since $e_{l,k}^\bullet : (X_k, \tleq_k) \to (X_l, \tleq_l)$ is a morphism, then it follows $e_{l,k}^\bullet \circ g \sleq_l v.$
For $l \geq k$, we have:
\[ \mu_l^\bullet \circ \mu_k \circ g = \mu_l^\bullet \circ \mu_l \circ e_{k,l} \circ g = e_{k,l} \circ g\]
and since $e_{k,l} : (X_k, \tleq_k) \to (X_l, \tleq_l)$ is a morphism, then it follows $e_{k,l} \circ g \sleq_l v.$
Combining both cases, we conclude that $\mu_k: (X_k, \tleq_k) \to (X_\omega, \tleq_\omega)$ is a morphism of $\RR(A)$.

Next, observe that $\mu_k^\bullet: (X_\omega, \tleq_\omega) \to (X_k, \tleq_k)$ is a morphism of $\RR(A),$ by definition of~$\tleq_\omega$.

Therefore, $\mu: D \to (X_\omega, \tleq_\omega)$ is a cocone over embeddings in $\RR(A).$
Clearly, $(\mu_i^\bullet \circ \mu_i)_{i \in \omega}$ forms an increasing sequence of morphisms in $\RR(A)$
and $\bigvee_i \mu_i^\bullet \circ \mu_i = \id$, because the order and suprema in $\RR(A)$ coincide with those of $\LL$.
By the limit-colimit coincidence theorem~\cite{smyth-plotkin:domain-equations}, it follows $\mu : D \to (X_\omega, \tleq_\omega)$ is a colimiting cocone, as required.

\textbf{(6)} Combining (4), (5) and Theorem~\ref{thm:e-p-trick} shows $U^A_e$ is an $\omega$-functor. Strictness is obvious.
\end{proof}

Note that by the above lemma, the categories $\RR(A)_e$ are $\omega$-categories.
Even better, we can show they are suitable categories for interpreting recursive types.

\begin{thm}\label{thm:adequacy-compactness}
Let $\cdot \vdash A$ be a closed type. Then $\RR(A)$ is $\cpo$-algebraically compact.
\end{thm}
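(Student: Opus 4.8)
The plan is to appeal directly to Theorem~\ref{thm:alg-compact}, which guarantees that any $\cpo$-category possessing an ep-zero object and all colimits of $\omega$-chains over embeddings is automatically $\cpo$-algebraically compact. Consequently, the entire content of the argument reduces to checking that the category $\RR(A)$ meets these three structural hypotheses, and the preceding Lemma~\ref{lem:adequacy-properties} was designed precisely to supply them.

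Concretely, I would argue as follows. Part~(1) of Lemma~\ref{lem:adequacy-properties} records that $\RR(A)$ is a $\cpo$-category (its hom-posets carry suprema of ascending chains inherited from $\LL$, and composition is Scott continuous). Part~(2) provides the ep-zero object $(0,\varnothing)$. Part~(5) supplies all $\omega$-colimits over embeddings. With these three facts in hand, Theorem~\ref{thm:alg-compact} applies verbatim to $\RR(A)$, yielding that every $\cpo$-endofunctor on $\RR(A)$ has a free algebra, i.e. $\RR(A)$ is $\cpo$-algebraically compact.

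Since we are permitted to assume the earlier results, the theorem itself is therefore immediate, and I expect no genuine obstacle at this stage. The real work sits inside Lemma~\ref{lem:adequacy-properties}, and in particular inside its part~(5): there one must exhibit the colimit of an $\omega$-diagram of embeddings in $\RR(A)$ by equipping the underlying $\LL$-colimit $X_\omega$ with the relation $f \tleq_\omega v$ iff $\forall k.\ \mu_k^\bullet \circ f \sleq_k v$, and then verify both that this relation is closed under suprema of ascending chains and that the colimiting cocone together with its projections are genuine $\RR(A)$-morphisms, before invoking the limit-colimit coincidence (Theorem~\ref{thm:locally-determined}) to conclude it is colimiting. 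That packaging is exactly what makes the present statement a one-line corollary: the anticipated difficulty has already been discharged in the lemma, so the proof here consists only in assembling hypotheses~(1), (2) and~(5) and quoting Theorem~\ref{thm:alg-compact}.
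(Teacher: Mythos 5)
Your proposal is correct and matches the paper's proof exactly: the paper simply combines Theorem~\ref{thm:alg-compact} with parts (1), (2) and (5) of Lemma~\ref{lem:adequacy-properties}, which is precisely the assembly you describe. Your additional remarks on where the real work lies (inside part (5) of the lemma) are accurate but not part of the proof of this statement itself.
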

\begin{proof}
Combine Theorem~\ref{thm:alg-compact} with Lemma~\ref{lem:adequacy-properties} (1) (2) (5).
\end{proof}

Next, we show that the functors we use to interpret type expressions may be
defined on the categories $\RR(A)$ by extending their action to relations as well.

\begin{defi}\label{def:adequacy-functors}
Let $\cdot \vdash A$ and $\cdot \vdash B$ be two closed types. We define functors as follows:
\begin{enumerate}
\item $!^A : \RR(A) \to \RR(!A)$ given by:
  \[ !^{A} (X, \tleq) := (!X, !\tleq) \qquad \text{and} \qquad !^{A} f := !f, \]
  where the relation $(!\tleq) \subseteq (\LL(I, !X) - \{\perp\}) \times \text{Values}(!A)$ is given by:
  \[ g\ (!\tleq)\ \lift\ p \quad \text{ iff } \quad \exists h \in \CC(1, GX).\ g = I \xrightarrow u F(1) \xrightarrow {F(h)} !X \text{ and } \epsilon \circ g \sleq p\]
\item  $\otimes^{A,B} : \RR(A) \times \RR(B) \to \RR(A \otimes B)$ given by :
  \[ (X, \tleq_X) \otimes^{A,B} (Y, \tleq_Y) := (X \otimes Y, \tleq_X \otimes \tleq_Y) \qquad \text{and} \qquad f \otimes^{A,B} h := f \otimes h, \]
  where the relation $(\tleq_X \otimes \tleq_Y) \subseteq (\LL(I, X \otimes Y) - \{\perp\}) \times \text{Values}(A \otimes B)$ is given by:
  \[ g\ (\tleq_X \otimes \tleq_Y)\ \langle v, w \rangle \quad \text{ iff }\quad \exists g', g''.\ g = (g' \otimes g'') \circ \lambda_I^{-1}\ \land\ g' \tleq_X v\ \land\ g'' \tleq_Y w\]
\item $\multimap^{A,B} : \RR(A)^\op \times \RR(B) \to \RR(A \multimap B)$ given by :
  \[ (X, \tleq_X) \multimap^{A,B} (Y, \tleq_Y) := (X \multimap Y, \tleq_X \multimap \tleq_Y) \qquad \text{and} \qquad f^\op \multimap^{A,B} h := f^\op \multimap h, \]
  where the relation $(\tleq_X \multimap \tleq_Y) \subseteq (\LL(I, X \multimap Y) - \{\perp\}) \times \text{Values}(A \multimap B)$ is given by:
  \[ g\ (\tleq_X \multimap \tleq_Y)\ \lambda x. p \quad \text{ iff }\quad \forall g' \in \LL(I, X),
    \forall v \in \text{Values}(A):\\
    g' \tleq_{X} v \Rightarrow \text{eval} \circ \llangle g, g' \rrangle \sleq_{Y} p[v/x] \]
\item  $+^{A,B} : \RR(A) \times \RR(B) \to \RR(A+B)$ given by :
  \[ (X, \tleq_X) +^{A,B} (Y, \tleq_Y) := (X+Y, \tleq_X + \tleq_Y) \qquad \text{and} \qquad f +^{A,B} h := f+h, \]
  where the relation $(\tleq_X + \tleq_Y) \subseteq (\LL(I, X+Y) - \{\perp\}) \times \text{Values}(A+B)$ is given by:
  \[ g\ (\tleq_X + \tleq_Y)\ \lleft\ v \quad \text{ iff } \quad \exists g'.\ g = \text{left} \circ g'\ \land\ g' \tleq_X v \]
  and
  \[ g\ (\tleq_X + \tleq_Y)\ \rright\ v \quad \text{ iff } \quad \exists g'.\ g = \text{right} \circ g'\ \land\ g' \tleq_Y v \]
\end{enumerate}
\end{defi}

Before we prove our next theorem, we need a couple of lemmas that allow us to
establish some order-reflection properties of our models that will be used
in its proof.

\begin{lem}\label{lem:injection-split-mono}
The coproduct injections in $\LL$ are split monomorphisms.
\end{lem}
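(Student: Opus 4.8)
The plan is to exploit the zero object structure of $\LL$, which is available in any $\cpo$-LNL model. By Theorem~\ref{thm:model-compact}, the initial object $0$ is in fact a zero object, so between any two objects $A$ and $B$ there is a canonical zero morphism $\perp_{A,B}: A \to B$, factoring as $A \xrightarrow{e_A^\bullet} 0 \xrightarrow{e_B} B$. These zero morphisms are precisely what is needed to build retractions of the coproduct injections, so the entire argument is a routine consequence of having a zero object.

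Concretely, for the left injection $\mathrm{left}: A \to A+B$, I would take the retraction $r \coloneqq [\id_A, \perp_{B,A}]: A+B \to A$, that is, the copairing of $\id_A$ with the zero morphism $\perp_{B,A}: B \to A$. The universal property of the coproduct then yields $r \circ \mathrm{left} = \id_A$ directly, exhibiting $\mathrm{left}$ as a split monomorphism. Symmetrically, the morphism $[\perp_{A,B}, \id_B]: A+B \to B$ satisfies $[\perp_{A,B}, \id_B] \circ \mathrm{right} = \id_B$, so $\mathrm{right}$ is split monic as well.

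There is no substantial obstacle here: the splitting is purely categorical and reduces entirely to the presence of zero morphisms supplied by Theorem~\ref{thm:model-compact}. The only point worth emphasising is that no use of the $\cpo$-enrichment beyond the existence of the zero object is required; in particular, the argument does not appeal to least morphisms or to suprema of chains to produce the retractions.
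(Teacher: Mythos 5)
Your proof is correct and is essentially identical to the paper's own argument: both use the zero morphisms guaranteed by Theorem~\ref{thm:model-compact} to form the retractions $[\id_A, \perp_{B,A}]$ and $[\perp_{A,B}, \id_B]$ via the coproduct's universal property. No further comment is needed.
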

\begin{proof}
Let $A \xrightarrow{\mathrm{left}} A+B \xleftarrow{\mathrm{right}} B$ be
two coproduct injections. Since the category $\LL$ is pointed:
\[ [\id_A , \perp_{B,A}] \circ \mathrm{left} = \id_A \qquad \text{ and } \qquad [\perp_{A,B}, \id_B] \circ \mathrm{right} = \id_B. \qedhere \]
\end{proof}

\begin{lem}[Order-reflection via adjunction]\label{lem:adjoint-reflection}
Let $f: X \to GA$ and $h: X \to GA$ be two parallel morphisms in $\CC$, where
$A \in \text{Ob}(\LL)$ and $X \in \text{Ob}(\CC)$. Then $F(f) \leq F(h)$ iff $f \leq h.$
\end{lem}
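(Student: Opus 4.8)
The plan is to reduce the statement to the order-isomorphism property of the enriched adjunction $F \dashv G$. Recall that in a $\cpo$-LNL model the symmetric monoidal adjunction is in particular a $\cpo$-adjunction, so it comes equipped with a $\cpo$-natural isomorphism $\Phi : \CC(-, G-) \cong \LL(F-, -)$. Each component $\Phi_{X,A} : \CC(X, GA) \to \LL(FX, A)$ is therefore an isomorphism in $\cpo$, i.e. an order-isomorphism of cpos, and in particular it both preserves and \emph{reflects} the order. Under the standard formula for the adjunction transpose, $\Phi_{X,A}(f) = \epsilon_A \circ F(f)$, where $\epsilon_A : FGA \to A$ is the component of the counit $\epsilon$ at $A$ (here using $!A = FGA$ as in Definition~\ref{def:lnl}). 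The strategy is to compare $f$ and $h$ through their transposes and to link the transposes back to $F(f)$ and $F(h)$ by postcomposition with $\epsilon_A$.

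For the forward implication, I would assume $F(f) \leq F(h)$ in $\LL(FX, FGA)$. Since $\LL$ is $\cpo$-enriched, composition is Scott continuous and hence monotone, so postcomposing both sides with $\epsilon_A$ gives $\epsilon_A \circ F(f) \leq \epsilon_A \circ F(h)$, that is, $\Phi_{X,A}(f) \leq \Phi_{X,A}(h)$ in $\LL(FX, A)$. Because $\Phi_{X,A}$ is an order-isomorphism it reflects the order, and therefore $f \leq h$ in $\CC(X, GA)$, as required.

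For the reverse implication I would simply invoke that $F$ is a $\cpo$-functor: its action $F_{X, GA}$ on the hom-cpo $\CC(X, GA)$ is Scott continuous, hence monotone, so $f \leq h$ immediately yields $F(f) \leq F(h)$. I expect no genuine obstacle here; the only point needing care is the bookkeeping that a $\cpo$-natural isomorphism has components that are honest order-isomorphisms (in particular order-reflecting), together with the explicit transpose identity $\Phi_{X,A}(f) = \epsilon_A \circ F(f)$, which is precisely what lets postcomposition by $\epsilon_A$ convert the inequality $F(f) \leq F(h)$ into a comparison of transposes. Once these facts are stated, the argument is immediate.
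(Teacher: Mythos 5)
Your proof is correct and is in essence the same argument as the paper's: both directions hinge on the fact that the adjunction transpose is monotone both ways, the easy direction being $\cpo$-functoriality of $F$. The paper just unfolds $\Phi^{-1}$ explicitly as $G(-)\circ\eta$ and finishes with naturality of $\eta$ and the triangle identity $G\epsilon\circ\eta G=\id$, whereas you package the same computation as ``$\Phi_{X,A}(f)=\epsilon_A\circ F(f)$ and $\Phi_{X,A}$ is an order-isomorphism,'' which is a legitimate reading of the $\cpo$-adjunction in Definition~\ref{def:lnl}.
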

\begin{proof}
The right-to-left direction follows because $F$ is a $\cpo$-functor. For the other direction:
\begin{align*}
  F(f) \leq F(h)
& \Rightarrow G\epsilon \circ GF(f) \circ \eta \leq G\epsilon \circ GF(h) \circ \eta & (G \text{ and $(- \circ -)$ are Scott-continuous} )\\
& \Leftrightarrow G\epsilon \circ \eta \circ f \leq G\epsilon \circ \eta \circ h & (\text{Naturality of } \eta)\\
& \Leftrightarrow f \leq h & (\text{counit-unit equation})~~~~\quad\qedhere
\end{align*}
\end{proof}

\begin{thm}\label{thm:augmented-functors}
All functors from Definition~\ref{def:adequacy-functors} are well-defined and are moreover $\cpo$-functors. In addition, the following diagrams:
\cstikz{adequacy-forgetful.tikz}
commute, where $\star \in \{+, \otimes, \multimap \}$.
\end{thm}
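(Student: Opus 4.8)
The plan is to verify the three requirements for each of the four functors of Definition~\ref{def:adequacy-functors}: that they are well defined on objects (the new relations land in the appropriate categories $\RR(!A)$, $\RR(A\otimes B)$, etc.), that they are well defined and functorial on morphisms, that they are $\cpo$-functors, and finally that the forgetful squares commute. It is cleanest to separate object-level well-definedness (admissibility of the new relations, i.e.\ closure under suprema of ascending chains in $\LL(I,-)-\{\perp\}$, as demanded by Definition~\ref{def:adequacy-category}) from morphism-level well-definedness.

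\textbf{Object-level admissibility.} For each functor I would take an ascending chain $g_i$ witnessing membership in the new relation and produce a witness for $\bigvee_i g_i$. In the $+$ case, from $g_i=\mathrm{left}\circ g_i'$ with $g_i'\tleq_X v$, I use that $\mathrm{left}$ is a split monomorphism (Lemma~\ref{lem:injection-split-mono}) to recover $g_i'=r\circ g_i$ for the retraction $r$; then $(g_i')_i$ is ascending, $g':=\bigvee_i g_i'$ satisfies $g'\tleq_X v$ by admissibility of $\tleq_X$, and continuity gives $\mathrm{left}\circ g'=\bigvee_i g_i$. In the $!$ case, from $g_i=F(h_i)\circ u$ I apply the order-reflection Lemma~\ref{lem:adjoint-reflection} to deduce that $(h_i)_i$ is ascending from the fact that $(g_i\circ u^{-1})_i=(F(h_i))_i$ is, set $h:=\bigvee_i h_i$, and use local continuity of $F$ for the factorisation and Lemma~\ref{lem:adequacy-basic-properties}(3) for the $\epsilon\circ-\sleq p$ clause. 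The $\multimap$ case follows directly from continuity of $\eval$, $\otimes$ and composition, together with Lemma~\ref{lem:adequacy-basic-properties}(3).

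\textbf{The main obstacle is the $\otimes$ case}, and it is precisely here that the strong second hypothesis on the model is consumed. Given an ascending chain $g_i=(g_i'\otimes g_i'')\circ\lambda_I^{-1}$ with $g_i'\tleq_X v$ and $g_i''\tleq_Y w$ (so both factors are $\neq\perp$), the decomposition is not obviously componentwise monotone; I would invoke condition (2) of Definition~\ref{def:adequate-lnl-fpc} to reflect $g_i'\otimes g_i''\leq g_{i+1}'\otimes g_{i+1}''$ into $g_i'\leq g_{i+1}'$ and $g_i''\leq g_{i+1}''$, so that the components do form ascending chains. Then $g':=\bigvee_i g_i'$, $g'':=\bigvee_i g_i''$ satisfy $g'\tleq_X v$, $g''\tleq_Y w$ by admissibility, and continuity of $\otimes$ yields $\bigvee_i g_i=(g'\otimes g'')\circ\lambda_I^{-1}$. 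This is the step that forces the tensor to behave like a smash product, and it is what makes condition (2) unavoidable for the whole construction.

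\textbf{Morphisms, functoriality, enrichment, and the squares.} For each functor I would check that the underlying $\LL$-morphism respects the new relations, using throughout Lemma~\ref{lem:adequacy-basic-properties}(1) (for $g\neq\perp$, $g\tleq v\iff g\sleq v$, since $v\Downarrow v$) and (2) (that $\RR$-morphisms preserve $\sleq$). For $!$ I rewrite $\epsilon\circ!f\circ g=f\circ\epsilon\circ g$ by naturality of $\epsilon$; for $\multimap$ the key identity is $\eval\circ\llangle(f\multimap h)\circ g,\,g''\rrangle=h\circ\eval\circ\llangle g,\,f\circ g''\rrangle$, after which the morphism properties of $f$ and $h$ conclude, the degenerate subcase $f\circ g''=\perp$ being handled by $g\otimes\perp=\perp$ so that the left side is $\perp$ and $\perp\sleq p$ holds vacuously; for $\otimes$ the analogous degenerate subcases are ruled out because a tensor one of whose factors is a zero morphism factors through the zero object $0$ and is hence itself $\perp$ (Theorem~\ref{thm:model-compact}). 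Preservation of identities and composition is then immediate, since on underlying data each functor acts exactly as the corresponding $\LL$-functor. Finally, $\cpo$-functoriality is inherited: by Lemma~\ref{lem:adequacy-properties}(1) the hom-cpos of $\RR(-)$ are sub-cpos of those of $\LL$ with the same order and suprema, and each functor acts on homs precisely as the Scott-continuous $\cpo$-functor $\star$ (resp.\ $!$) on $\LL$; and the forgetful squares commute by construction, as $U^{A\star B}$ and $U^{!A}$ merely erase the relations while leaving the underlying objects and morphisms unchanged.
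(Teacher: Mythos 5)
Your proposal is correct and follows essentially the same route as the paper's proof: admissibility of each new relation via Lemma~\ref{lem:injection-split-mono} for $+$, Lemma~\ref{lem:adjoint-reflection} for $!$, and condition (2) of Definition~\ref{def:adequate-lnl-fpc} for $\otimes$ (which is indeed the only place that hypothesis is used, cf.\ Remark~\ref{rem:augmented-functors}), followed by the same morphism-level checks (naturality of $\epsilon$, the $\eval$ identity with the $\perp$ subcase, and ruling out degenerate tensor factors) and the observation that enrichment and the forgetful squares are immediate by construction. No gaps.
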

\begin{proof}
\textbf{Case} $!:$

We first show $!^{A}$ is a well-defined functor. We begin by showing that ${(-\ (!\tleq)\ \lift\ p)}$ is closed under suprema of increasing sequences.
Let $(g_i)_{i \in \omega}$ be an increasing sequence, such that ${g_i\ (!\tleq)\ \lift\ p.}$ Then there exist $h_i \in \CC(1, GX)$, such that $g_i = F(h_i) \circ u$ and $\epsilon \circ g_i \sleq p.$
By Lemma~\ref{lem:adequacy-basic-properties} (3) it follows $ \epsilon \circ \bigvee_i g_i = \bigvee_i \epsilon \circ g_i \sleq p.$
Since $u$ is an isomorphism and $(g_i)_{i \in \omega}$ forms an increasing sequence, then $(Fh_i)_{i \in \omega}$ also forms an increasing sequence.
The functor $F$ reflects the order of this sequence (Lemma~\ref{lem:adjoint-reflection}) and therefore $(h_i)_{i \in \omega}$ also forms an increasing sequence. 
Moreover, $\bigvee_i g_i = \bigvee_i F(h_i) \circ u = (\bigvee_i F(h_i)) \circ u = F(\bigvee_i h_i) \circ u$
and it follows by definition that $\bigvee_i g_i (!\tleq )\ \lift\ p.$
Therefore $!^{A}$ is well-defined on objects.

Next, we show $!^{A}$ is well-defined on morphisms.
Let
$f : (X, \tleq_X) \to (Y, \tleq_Y)$
be a morphism in $\RR(A)$.
Assume that $g\ (!\tleq_{X} )\ \lift\ p$ and that ${!f  \circ g \not = \perp.}$
We have to show $!f \circ g\  (!\tleq_{Y})\ \lift\ p$. 
Then $\exists h.\ g = Fh \circ u$ and $\epsilon \circ g \sleq_{X} p.$
We have
\[!f \circ g = FGf \circ Fh \circ u = F(Gf \circ h) \circ u\]
and also
\[\epsilon \circ !f \circ g = f \circ \epsilon \circ g \sleq_Y p\]
because of Lemma~\ref{lem:adequacy-basic-properties} (2). Therefore, by definition $!f \circ g\ (!\tleq_Y)\ \lift\ p$.
Therefore $!^{A}$ is well-defined on morphisms.

Finally, it immediately follows by definition of $!^{A}$ that it preserves identities and composition, that it is a $\cpo$-functor and that the required diagram commutes.

\textbf{Case} $\star = \otimes :$

We first show $\otimes^{A,B}$ is a well-defined functor. We show ${ (-\ (\tleq_X \otimes \tleq_Y)\ \langle v, w \rangle) }$ is closed under suprema of increasing sequences.
Let $(g_i)_{i \in \omega}$ be an increasing sequence, such that $g_i\ (\tleq_X \otimes \tleq_Y)\ \langle v, w \rangle.$ Then there exist $g_i', g_i''$, such that $g_i = (g_i' \otimes g_i'') \circ \lambda_I^{-1}$ and such that $g_i' \tleq_X v$
and $g_i'' \tleq_Y w.$
Since $\lambda_I^{-1}$ is an isomorphism, $(g_i' \otimes g_i'')_{i \in \omega}$
is an increasing sequence and by Definition~\ref{def:adequate-lnl-fpc}(2) it
follows that both $(g_i')_{i \in \omega}$ and $(g_i'')_{i \in \omega}$ are
increasing sequences. Therefore $\bigvee_i g_i' \tleq_X v$ and $\bigvee_i g_i'' \tleq_Y w$.
Also:
\[\bigvee_i g_i = \bigvee_i (g_i' \otimes g_i'') \circ \lambda_I^{-1} = ((\bigvee_i g_i') \otimes (\bigvee_i g_i'')) \circ \lambda_I^{-1} \]
so that by definition $\bigvee_i g_i\ (\tleq_X \otimes \tleq_Y)\ \langle v, w \rangle.$
Therefore $\otimes^{A,B}$ is well-defined on objects.

Next, we show $\otimes^{A,B}$ is well-defined on morphisms.
Let
\begin{align*}
f_1 &: (X_1, \tleq_{X_1}) \to (Y_1, \tleq_{Y_1})\\
f_2 &: (X_2, \tleq_{X_2}) \to (Y_2, \tleq_{Y_2})
\end{align*}
be two morphisms in $\RR(A)$ and $\RR(B)$ respectively. Assume that $g\ (\tleq_{X_1} \otimes \tleq_{X_2})\ \langle v, w \rangle$ and that ${(f_1 \otimes f_2) \circ g \not = \perp.}$
We have to show $(f_1 \otimes f_2) \circ g\  (\tleq_{Y_1} \otimes \tleq_{Y_2})\ \langle v, w \rangle$. 
Then there exist $g',g''$, such that $g = (g' \otimes g'') \circ \lambda_I^{-1}$ and such that $g' \tleq_{X_1} v$
and $g'' \tleq_{X_2} w.$
Therefore, $f_1 \circ g' \sleq_{Y_1} v$ and $f_2 \circ g'' \sleq_{Y_2} w$.
It's easy to see that $f_1 \not = \perp \not = g'$ and $f_2 \not = \perp \not = g''.$
Therefore, $f_1 \circ g' \tleq_{Y_1} v$ and $f_2 \circ g'' \tleq_{Y_2} w$.
Also:
\[(f_1 \otimes f_2) \circ g = (f_1 \otimes f_2) \circ (g' \otimes g'') \circ \lambda_I^{-1} = ((f_1 \circ g') \otimes (f_2 \circ g'')) \circ \lambda_I^{-1} \]
By definition, ${(f_1 \otimes f_2) \circ g\ (\tleq_{Y_1} \otimes \tleq_{Y_2})\ \langle v, w \rangle.}$
Therefore, $\otimes^{A,B}$ is well-defined on morphisms.

Finally, it immediately follows by definition of $\otimes^{A,B}$ that it preserves identities and composition, that it is a $\cpo$-functor and that the required diagram commutes.

\textbf{Case} $\star = \multimap :$

We first show $\multimap^{A,B}$ is a well-defined functor. We show $(-\ (\tleq_X \multimap \tleq_Y)\ \lambda x. p)$ is closed under suprema of increasing sequences.
Let $(g_i)_{i \in \omega}$ be an increasing sequence, such that $g_i\ (\tleq_X \multimap \tleq_Y)\ \lambda x. p.$
Let $g'$ and $v$ be arbitrary, such that $g' \tleq_X v$. It now follows that $\text{eval} \circ \llangle g_i, g' \rrangle \sleq_Y p[v/x]$.
From Lemma~\ref{lem:adequacy-basic-properties} we get:
\[ \text{eval} \circ \llangle \bigvee_i g_i, g' \rrangle = \bigvee_i \text{eval} \circ \llangle g_i, g' \rrangle \sleq_Y p[v/x] \]
so that by definition $\bigvee_i g_i\ (\tleq_X \multimap \tleq_Y)\ \lambda x. p.$
Therefore $\multimap^{A,B}$ is well-defined on objects.

Next, we show $\multimap^{A,B}$ is well-defined on morphisms.
Let
\begin{align*}
f_1^\op &: (X_1, \tleq_{X_1}) \to (Y_1, \tleq_{Y_1})\\
f_2 &: (X_2, \tleq_{X_2}) \to (Y_2, \tleq_{Y_2})
\end{align*}
be two morphisms in $\RR(A)^\op$ and $\RR(B)$ respectively. Assume that $g\ (\tleq_{X_1} \multimap \tleq_{X_2})\ \lambda x. p$ and that ${(f_1^\op \multimap f_2) \circ g \not = \perp.}$
We have to show $(f_1^\op \multimap f_2) \circ g\ (\tleq_{Y_1} \multimap \tleq_{Y_2})\ \lambda x. p$. 
Let $g'$ and $v$ be arbitrary such that $g' \tleq_{Y_1} v.$ Since
\[ f_1 : (Y_1, \tleq_{Y_1}) \to (X_1, \tleq_{X_1})\]
it follows $f_1 \circ g' \sleq_{X_1} v.$

If $f_1 \circ g' = \perp,$ then it trivially follows that $\text{eval} \circ \llangle g, f_1 \circ g' \rrangle \sleq_{X_2} p[v/x]$.

If $f_1 \circ g' \neq \perp,$ then $f_1 \circ g' \tleq_{X_1} v$ and therefore $\text{eval} \circ \llangle g, f_1 \circ g' \rrangle \sleq_{X_2} p[v/x]$.

Therefore, in both cases, $\text{eval} \circ \llangle g, f_1 \circ g' \rrangle \sleq_{X_2} p[v/x]$. This then implies:
\[ f_2 \circ \text{eval} \circ \llangle g, f_1 \circ g' \rrangle \sleq_{Y_2} p[v/x], \]
because $f_2 : (X_2, \tleq_{X_2}) \to (Y_2, \tleq_{Y_2})$. And therefore:
\begin{align*}
& \text{eval} \circ \llangle (f_1^\op \multimap f_2) \circ g, g' \rrangle = \\
&= \text{eval} \circ (((f_1^\op \multimap f_2) \circ g) \otimes g') \circ \lambda_I^{-1}  \\
&= \text{eval} \circ ((f_1^\op \multimap f_2) \otimes \id) \circ (g \otimes g') \circ \lambda_I^{-1}  \\
&= \text{eval} \circ ((\id^\op \multimap f_2) \otimes \id) \circ ((f_1^\op \multimap \id) \otimes \id) \circ (g \otimes g') \circ \lambda_I^{-1}  \\
&= f_2 \circ \text{eval} \circ ((f_1^\op \multimap \id) \otimes \id) \circ (g \otimes g') \circ \lambda_I^{-1} \qquad (\text{Naturality of eval})\\
&= f_2 \circ \text{eval} \circ (\id \otimes f_1) \circ (g \otimes g') \circ \lambda_I^{-1} \qquad ({\text{Parameterised adjunction \cite[pp. 102]{maclane}}})\\
&= f_2 \circ \text{eval} \circ \llangle g, f_1 \circ g' \rrangle\\
&\sleq_{Y_2} p[v/x]
\end{align*}

By definition, ${(f_1^\op \multimap f_2) \circ g\ (\tleq_{Y_1} \multimap \tleq_{Y_2})\ \lambda x. p.}$
Therefore, $\multimap^{A,B}$ is well-defined on morphisms.

Finally, it immediately follows by definition of $\multimap^{A,B}$ that it preserves identities and composition, that it is a $\cpo$-functor and that the required diagram commutes.

\textbf{Case} $\star = + :$

We first show $+^{A,B}$ is a well-defined functor. We show $(-\ (\tleq_X + \tleq_Y)\ \lleft\ v)$ is closed under suprema of increasing sequences.
Let $(g_i)_{i \in \omega}$ be an increasing sequence, such that $g_i\ (\tleq_X + \tleq_Y)\ \lleft\ v.$ Then there exist $g_i'$, such that $g_i = \text{left}\ \circ g_i'\ \land\ g_i' \tleq_X v.$
Since left is a split monomorphism (Lemma~\ref{lem:injection-split-mono}), it follows $(g_i')_{i \in \omega}$ is an increasing sequence and therefore $\bigvee_i g_i' \tleq_X v.$ Since $\bigvee_i g_i = \text{left}\ \circ \bigvee_i g_i'$ then
by definition $\bigvee_i g_i\ (\tleq_X + \tleq_Y)\ \lleft\ v.$ Similarly, it follows that $(-\ (\tleq_X + \tleq_Y)\ \rright\ v)$ is also closed under suprema of increasing sequences. Therefore
$+^{A,B}$ is well-defined on objects.

Next, we show $+^{A,B}$ is well-defined on morphisms.
Let
\begin{align*}
f_1 &: (X_1, \tleq_{X_1}) \to (Y_1, \tleq_{Y_1})\\
f_2 &: (X_2, \tleq_{X_2}) \to (Y_2, \tleq_{Y_2})
\end{align*}
be two morphisms in $\RR(A)$ and $\RR(B)$ respectively. Assume that $g\ (\tleq_{X_1} + \tleq_{X_2})\ v$ and that ${(f_1 + f_2) \circ g \not = \perp.}$
We have to show $(f_1 + f_2) \circ g\  (\tleq_{Y_1} + \tleq_{Y_2})\ v$. There are two cases for $v$: either $v= \lleft\ w$ or $v = \rright\ w$. We
show the first one and the other one follows by complete analogy.
Thus, assume $v= \lleft\ w$. Then $\exists g'.\ g = \text{left} \circ g'\ \land\ g' \tleq_{X_1} w.$
Therefore, $f_1 \circ g' \sleq_{Y_1} w.$
Since $\text{left} \circ f_1 \circ g' = (f_1 + f_2) \circ \text{left} \circ g' = (f_1 + f_2) \circ g \not = \perp$ it follows 
$f_1 \circ g' \not = \perp$ and therefore
$f_1 \circ g' \tleq_{Y_1} w.$
By definition, ${(f_1 + f_2) \circ g\ (\tleq_{Y_1} + \tleq_{Y_2})\ \lleft\ w.}$
Therefore, $+^{A,B}$ is well-defined on morphisms.

Finally, it immediately follows by definition of $+^{A,B}$ that it preserves identities and composition, that it is a $\cpo$-functor and that the required diagram commutes.
\end{proof}

\begin{rem}
\label{rem:augmented-functors}
In the proof above, condition (2) of Definition~\ref{def:adequate-lnl-fpc} is
used to show $\otimes^{A,B}$ is well-defined. This is the only place where this
assumption is used in the paper.
\end{rem}

So far, we have defined operations (and functors) that show how to construct
appropriate logical relations for types of the form $!A, A+B, A \otimes B$ and $A
\multimap B$. Next, we show how to fold and unfold logical relations via
the functors $\FOLD{}$ and $\UNFOLD{}$ below. $\FOLD{}$ should be
thought of as the \emph{introduction} (fold) and $\UNFOLD{}$ as the
\emph{elimination} (unfold) for logical relations involving recursive types.

\begin{lem}\label{lem:adequacy-iso}
For every closed type $\cdot \vdash \mu X.A,$
there exists an isomorphism of categories \[\FOLD{\mu X.A} : \RR(A[\mu X.A / X]) \cong \RR(\mu X. A) : \UNFOLD{\mu X. A}\] given by:
\begin{align*}
&\FOLD{\mu X.A} :  \RR(A[\mu X.A / X]) \to \RR(\mu X. A)   & & \UNFOLD{\mu X.A} : \RR(\mu X. A) \to \RR(A[\mu X.A / X]) \\
&\FOLD{\mu X.A}(Y, \tleq) = (Y, \FOLD{\mu X.A} \tleq) & & \UNFOLD{\mu X.A}(Y, \tleq) = (Y, \UNFOLD{\mu X.A} \tleq)\\
&\FOLD{\mu X.A}(f) = f & & \UNFOLD{\mu X.A}(f) = f
\end{align*} 
where the folded relation $(\FOLD{\mu X.A} \tleq) $ is defined by: 
\begin{align*}
  (\FOLD{\mu X.A} \tleq) & \subseteq (\LL(I, Y) - \{\perp\}) \times \text{Values}(\mu X.A) \\
  g\ (\FOLD{\mu X.A} \tleq)\ \fold\ v \quad & \iff \quad g \tleq v 
\end{align*}
and the unfolded relation $(\UNFOLD{\mu X.A} \tleq)$ is defined by:
\begin{align*}
(\UNFOLD{\mu X.A} \tleq) & \subseteq (\LL(I, Y) - \{\perp\}) \times \text{Values}(A [\mu X.A / X]) \\
g\ (\UNFOLD{\mu X.A} \tleq)\ v \quad & \iff \quad g \tleq \fold\ v
\end{align*}
Moreover, the functors $\FOLD{\mu X.A}$ and $\UNFOLD{\mu X.A}$ are $\cpobs$-functors.
\end{lem}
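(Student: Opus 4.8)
The plan is to exploit the fact that both $\FOLD{\mu X.A}$ and $\UNFOLD{\mu X.A}$ act as the identity on the underlying $\LL$-objects and $\LL$-morphisms, so that the entire content of the lemma reduces to a single syntactic bijection on values together with the transport of relations along it. First I would establish that key syntactic fact: the closed values of type $\mu X.A$ are exactly the terms $\fold\ v$ with $v$ ranging over the closed values of type $A[\mu X.A/X]$. This follows from the value grammar of Figure~\ref{fig:syntax} (among the value formers, only $\fold$ produces a recursive type, and in a closed context a bare variable is impossible) combined with uniqueness of type assignment. The assignment $v \mapsto \fold\ v$ is then a bijection $\text{Values}(A[\mu X.A/X]) \cong \text{Values}(\mu X.A)$, and this is precisely the reindexing along which the relations $\FOLD{\mu X.A}\tleq$ and $\UNFOLD{\mu X.A}\tleq$ are defined.

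Next I would check well-definedness of $\FOLD{\mu X.A}$ on objects, i.e. that $\FOLD{\mu X.A}\tleq$ satisfies the chain-closure condition of Definition~\ref{def:adequacy-category}. This is immediate, since for a fixed value $\fold\ v$ of type $\mu X.A$ the predicate $(-\ \FOLD{\mu X.A}\tleq\ \fold\ v)$ literally coincides with $(-\ \tleq\ v)$, which is closed under suprema of ascending chains in $\LL(I,Y) - \{\perp\}$ by hypothesis; the argument for $\UNFOLD{\mu X.A}$ is dual. For well-definedness on morphisms, given $f : (Y,\tleq_Y) \to (Z,\tleq_Z)$ in $\RR(A[\mu X.A/X])$, I would verify that the same $\LL$-morphism $f$ meets the defining condition of a $\RR(\mu X.A)$-morphism between the folded objects. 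Unwinding Definition~\ref{def:relation}: assuming $g\ \FOLD{\mu X.A}\tleq_Y\ \fold\ v$, equivalently $g\ \tleq_Y\ v$, the morphism hypothesis yields $f \circ g\ \sleq_Z\ v$; since $\fold\ v$ is a value and $\fold\ v \Downarrow \fold\ v$ is its unique outcome, this is exactly $f \circ g\ \FOLD{\mu X.A}\sleq_Z\ \fold\ v$ (Lemma~\ref{lem:adequacy-basic-properties}(1) packages the $\neq \perp$ bookkeeping). Functoriality is then automatic, as the action on morphisms is the identity and composition and identities are inherited from $\LL$.

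Finally I would confirm that the two functors are mutually inverse and that they preserve the $\cpobs$-enrichment. Mutual inverseness reduces, on objects, to the two relation identities $\UNFOLD{\mu X.A}\FOLD{\mu X.A}\tleq = \tleq$ and $\FOLD{\mu X.A}\UNFOLD{\mu X.A}\tleq = \tleq$, both of which unfold directly from the definitions, the second using that \emph{every} value of $\mu X.A$ has the form $\fold\ v$; on morphisms both composites are the identity. Since the hom-orders, suprema and least elements of $\RR(\mu X.A)$ and $\RR(A[\mu X.A/X])$ are all inherited from $\LL$ (Lemma~\ref{lem:adequacy-properties}), and both functors act as the identity on the underlying $\LL$-morphisms, the induced maps on hom-cpos are identity inclusions and hence strict and Scott-continuous, so both are $\cpobs$-functors. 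The only genuinely nontrivial ingredient is the syntactic value-inversion of the first step; once that bijection is in hand, everything else is a routine unwinding of Definitions~\ref{def:relation}--\ref{def:adequacy-category}, so I expect no serious obstacle beyond stating that inversion carefully.
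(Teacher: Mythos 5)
Your proposal is correct and follows essentially the same route as the paper's proof: well-definedness on objects by the literal coincidence of the chain-closure predicates, well-definedness on morphisms via the $\neq\perp$ bookkeeping of Lemma~\ref{lem:adequacy-basic-properties}(1), mutual inverseness by unfolding the two relation identities, and $\cpobs$-enrichment because both functors are the identity on underlying $\LL$-morphisms. The only difference is that you explicitly isolate the syntactic fact that every closed value of $\mu X.A$ has the form $\fold\ v$, which the paper uses silently when concluding $\FOLD{\mu X.A}(\UNFOLD{\mu X.A}\tleq)=\tleq$; making that step explicit is a minor improvement rather than a different argument.
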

\begin{proof}
We first show that the functor $\FOLD{\mu X.A}$ is well-defined on objects. This
is equivalent to showing $(-\ (\FOLD{\mu X.A} \tleq)\ \fold\ v)$ is closed under
suprema of increasing sequences in ${\LL(I, Y) - \{\perp\}}$. This, by definition, is equivalent to
showing that $(-\ \tleq\ v)$ is also closed under the same suprema, which is true by assumption. 

Next, we show the functor $\FOLD{\mu X.A}$ is well-defined on morphisms. Let
${f: (Y, \tleq_Y) \to (Z, \tleq_Z)}$
be a morphism in $\RR(A [\mu X.A / X])$. Assume $g\ (\FOLD{\mu X.A}
\tleq_Y)\ \fold\ v$ and assume $f \circ g \neq \perp.$ From the former it
follows $g \tleq_Y v$ and therefore $f \circ g \sleq_Z v.$ Since $f \circ g \neq \perp$ it follows $f \circ g \tleq_Z v$ and by definition
$f \circ g\ (\FOLD{\mu X.A} \tleq_Z)\ \fold\ v$. Therefore, $\FOLD{\mu X.A}$ is well-defined on morphisms.

Next, we show that the functor $\UNFOLD{\mu X.A}$ is well-defined on objects. This
is equivalent to showing $(-\ (\UNFOLD{\mu X.A} \tleq)\ v)$ is closed under
suprema of increasing sequences in $\LL(I, Y) - \{\perp\}$. This, by definition, is equivalent to
showing that $(-\ \tleq\ \fold\ v)$ is also closed under the same suprema, which is true by assumption. 

Next, we show $\UNFOLD{\mu X.A}$ is well-defined on morphisms. Let
$f: (Y, \tleq_Y) \to (Z, \tleq_Z)$
be a morphism in $\RR(\mu X.A)$. Assume $g\ (\UNFOLD{\mu X.A}
\tleq_Y)\ v$ and assume $f \circ g \neq \perp.$ From the former it
follows $g \tleq_Y \fold\ v$ and therefore $f \circ g \sleq_Z \fold\ v.$ Since $f \circ g \neq \perp$ it follows $f \circ g \tleq_Z \fold\ v$ and by definition it follows
$f \circ g\ (\UNFOLD{\mu X.A} \tleq_Z)\ v$. Therefore, $\UNFOLD{\mu X.A}$ is well-defined on morphisms.

Obviously, both functors preserve identities and composition.

Next, we show that for any relation $\tleq$, we have that $\FOLD{\mu X.A} (\UNFOLD{\mu X.A} \tleq ) = \tleq$.
This is true, because:
\[g\ (\FOLD{\mu X.A} (\UNFOLD{\mu X.A} \tleq))\ \fold\ v \quad \iff \quad g\ (\UNFOLD{\mu X. A}  \tleq)\ v \quad \iff \quad g \tleq \fold\ v.\]

Also, we show that for any relation $\tleq$, we have that $\UNFOLD{\mu X.A} (\FOLD{\mu X.A} \tleq ) = \tleq$.
This is true, because:
\[g\ (\UNFOLD{\mu X.A} (\FOLD{\mu X.A} \tleq))\ v \quad \iff \quad g\ (\FOLD{\mu X. A}  \tleq)\ \fold\ v \quad \iff \quad g \tleq v.\]
Therefore $\FOLD{\mu X.A} \circ \UNFOLD{\mu X.A} = \Id_{\RR(\mu X. A)}$ and $\UNFOLD{\mu X.A} \circ \FOLD{\mu X.A} = \Id_{\RR(A[\mu X.A / X])}$,
which shows the two functors are isomorphisms. Since both functors are identity on morphisms, it follows that they are both $\cpobs$-functors.
\end{proof}

This finishes the categorical development of the categories $\RR(A)$. Next, we
define a \emph{non-standard type interpretation} on the categories $\RR(A)_e$.
When we compose this interpretation with the forgetful functor, we get exactly
the standard type interpretation (Proposition~\ref{prop:augmented}). Since the new interpretation carries
additional information (the logical relations needed for the proof of
adequacy), we shall refer to it as the \emph{augmented interpretation}.

\begin{nota}\label{not:vector-types}
In what follows, given a type context $\Theta = X_1, \ldots , X_n$ and closed types $\cdot \vdash C_i$ $(1 \leq i \leq n)$,
we write $\vec C$ for $C_1, \ldots , C_n$ and $[\vec C / \Theta]$ for $[C_1 / X_1, \ldots , C_n / X_n]$.
\end{nota}

\begin{defi}
Given a well-formed type $\Theta \vdash A$ and closed types $\vec C$ as in Notation~\ref{not:vector-types}, we define the \emph{augmented interpretation} to be the functor
\[ \elrbc{\Theta \vdash A} : \RR(C_1)_e \times \cdots \times \RR(C_n)_e \to \RR(A[ \vec C / \Theta ])_e \]
defined inductively by:
\begin{align*}
\elrbc{\Theta \vdash \Theta_i} &:= \Pi_i &&\\
\elrbc{\Theta \vdash !A} &:=\ !_e^{A[\vec C / \Theta]} \circ \elrbc{\Theta \vdash A} &&\\
\elrbc{\Theta \vdash A \star B } &:= \star_e^{A[\vec C / \Theta], B[\vec C / \Theta]} \circ \langle \elrbc{\Theta \vdash A}, \elrbc{\Theta \vdash B} \rangle &(\text{where }\star \in \{ +, \otimes, \multimap \})&\\
\elrbc{\Theta \vdash \mu X.A} &:= \left(\FOLD{\mu X. A[\vec C / \Theta]}_e \circ \elrb{\Theta, X \vdash A}{\vec C, \mu X. A[\vec C / \Theta]} \right)^\dagger &&
\end{align*}
\end{defi}

\begin{prop}\label{prop:augmented}
Every functor $\elrbc{\Theta \vdash A}$ is an $\omega$-functor and the following diagram:
\cstikz{extended-diagram.tikz}
commutes.
\end{prop}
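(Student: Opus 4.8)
The plan is to prove both statements simultaneously by induction on the derivation of $\Theta \vdash A$, mirroring exactly the proof that the standard interpretation $\lrb{\Theta \vdash A}$ is an $\omega$-functor. The commuting diagram (extended-diagram) asserts that forgetting the relations recovers the standard interpretation, i.e. that
\[ U^{A[\vec C / \Theta]}_e \circ \elrbc{\Theta \vdash A} = \lrb{\Theta \vdash A} \circ \left( U^{C_1}_e \times \cdots \times U^{C_n}_e \right) . \]
For the $\omega$-functor claim, $\Pi_i$ is trivially an $\omega$-functor; the functors $!^{A[\vec C/\Theta]}_e, +_e, \otimes_e, \multimap_e$ on the $\RR$-categories are $\cpo$-functors by Theorem~\ref{thm:augmented-functors} and hence $\omega$-functors by Theorem~\ref{thm:e-p-trick}; $\omega$-functors are closed under composition and tupling; and in the $\mu$-case $\FOLD{\mu X. A[\vec C/\Theta]}_e$ is an $\omega$-functor since $\FOLD{}$ is an isomorphism of $\cpobs$-categories (Lemma~\ref{lem:adequacy-iso}), so the inner composite is an $\omega$-functor and applying $(-)^\dagger$ preserves this (Theorem~\ref{thm:dagger is omega-continuous}). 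All of this rests on the categories $\RR(A)_e$ being $\omega$-categories, which holds by Lemma~\ref{lem:adequacy-properties}.

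For the commuting square, the base case is immediate because projections commute with products of functors, and $\lrb{\Theta \vdash \Theta_i} = \Pi_i$. The connective cases follow from the inductive hypotheses together with the commuting squares $U^{A \star B} \circ \star^{A,B} = \star \circ (U^A \times U^B)$ and $U^{!A} \circ !^A = \, !\circ U^A$ of Theorem~\ref{thm:augmented-functors}: after restricting to embeddings one distributes the forgetful functors through the tupling $\langle -, - \rangle$ and applies the inductive hypotheses to the two components.

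The main obstacle is the $\mu$-case. Write $T \coloneqq \FOLD{\mu X. A[\vec C/\Theta]}_e \circ \elrb{\Theta, X \vdash A}{\vec C, \mu X. A[\vec C / \Theta]}$, so that $\elrbc{\Theta \vdash \mu X. A} = T^\dagger$, and set $U' \coloneqq U^{\mu X. A[\vec C/\Theta]}_e$ and $\vec U \coloneqq U^{C_1}_e \times \cdots \times U^{C_n}_e$. Since $\FOLD{}$ is the identity on underlying objects and morphisms, $U' \circ \FOLD{\mu X. A[\vec C/\Theta]}_e = U^{A[\mu X.A/X][\vec C/\Theta]}_e$; composing with the inductive hypothesis for the judgement $\Theta, X \vdash A$ yields the \emph{on-the-nose} equality $U' \circ T = \lrb{\Theta, X \vdash A} \circ (\vec U \times U')$. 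This is precisely the situation of Assumption~\ref{ass:parameterised-square} specialised to $\alpha = \id$ and $z = \id$, under the identification: the assumption's $T$ is $\lrb{\Theta, X \vdash A}$, its $H$ is $T$, its $N$ is $\vec U$, and its $M$ is $U'$. Corollary~\ref{cor:alpha-id} then transports this square through the parameterised initial algebra construction to give $\lrb{\Theta, X \vdash A}^\dagger \circ \vec U = U' \circ T^\dagger$, which is exactly the desired square $\lrb{\Theta \vdash \mu X. A} \circ \vec U = U^{(\mu X.A)[\vec C/\Theta]}_e \circ \elrbc{\Theta \vdash \mu X.A}$.

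The delicate points to verify are the hypotheses of Corollary~\ref{cor:alpha-id}. One checks that $\lrb{\Theta, X \vdash A}$ and $T$ are $\omega$-functors (the former from the type-interpretation theorem, the latter from part (1) of this induction) between $\omega$-categories possessing initial objects, and that $U'$ is a \emph{strict} $\omega$-functor by Lemma~\ref{lem:adequacy-properties}~(6). Strictness is what forces $z = \id$: $U'$ sends the ep-zero $(0, \varnothing)$ of $\RR(\mu X. A[\vec C/\Theta])_e$ to the initial object $0$ of $\LL_e$, so the comparison isomorphism $z \colon 0 \to U'(0,\varnothing)$ is the identity. Modulo the routine substitution bookkeeping identifying $A[\vec C, \mu X. A[\vec C/\Theta] / \Theta, X]$ with $A[\mu X.A/X][\vec C/\Theta]$, this completes the argument.
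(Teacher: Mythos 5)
Your proposal is correct and follows essentially the same route as the paper's own proof: induction on the type derivation, using Theorem~\ref{thm:augmented-functors} for the connective cases and, in the $\mu$-case, the identity $U^{\mu X.A[\vec C/\Theta]}\circ\FOLD{\mu X.A[\vec C/\Theta]}=U^{A[\vec C/\Theta][\mu X.A[\vec C/\Theta]/X]}$ combined with the inductive hypothesis to place the square in the setting of Assumption~\ref{ass:parameterised-square} with $\alpha=\id$, $z=\id$, so that Corollary~\ref{cor:alpha-id} concludes. Your extra remarks on why $z=\id$ (strictness of $U'$ from Lemma~\ref{lem:adequacy-properties}(6)) are a welcome elaboration of a point the paper leaves implicit, but they do not change the argument.
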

\begin{proof}
By induction on $\Theta \vdash A$.

\textbf{Case } $\Theta \vdash \Theta_i.$ Obvious.

\textbf{Case } $\Theta \vdash !A.$

By induction $\elrbc{\Theta \vdash A}$ is an
$\omega$-functor. $\omega$-functors are closed under composition and
$!_e^{A[\vec C / \Theta]}$ is an $\omega$-functor because of
Theorem~\ref{thm:e-p-trick} and Theorem~\ref{thm:augmented-functors}. Therefore $\elrbc{\Theta \vdash !A}$ is an $\omega$-functor. Moreover:
\begin{align*}
   U^{!A[\vec C/ \Theta]}_e \circ \elrbc{\Theta \vdash !A}
&= U^{!A[\vec C/ \Theta]}_e \circ !_e^{A[\vec C / \Theta]} \circ \elrbc{\Theta \vdash A} & (\text{Definition})\\
&=\ !_e \circ U^{A[\vec C/ \Theta]}_e \circ \elrbc{\Theta \vdash A} & (\text{Theorem~\ref{thm:augmented-functors}})\\
&=\ !_e \circ \lrb{\Theta \vdash A} \circ (U^{C_1}_e \times \cdots \times U^{C_n}_e) & (\text{IH})\\
&=\ \lrb{\Theta \vdash !A} \circ (U^{C_1}_e \times \cdots \times U^{C_n}_e), & (\text{Definition})
\end{align*}
which shows the required diagram commutes.

\textbf{Case } $\Theta \vdash A \star B.$

By induction $\elrbc{\Theta \vdash A}$ and $\elrbc{\Theta \vdash B}$ are
$\omega$-functors. $\omega$-functors are closed under composition and pairing (\secref{sub:initial})
and $\star_e^{A[\vec C / \Theta], B[\vec C / \Theta]}$ is an $\omega$-functor because of
Theorem~\ref{thm:e-p-trick} and Theorem~\ref{thm:augmented-functors}. Therefore $\elrbc{\Theta \vdash A \star B}$ is an $\omega$-functor. Moreover:
\begin{align*}
& U^{A \star B[\vec C/ \Theta]}_e \circ \elrbc{\Theta \vdash A \star B} = \\
&= U^{A \star B[\vec C/ \Theta]}_e \circ \star_e^{A[\vec C / \Theta], B[\vec C / \Theta]} \circ \langle \elrbc{\Theta \vdash A}, \elrbc{\Theta \vdash B} \rangle & (\text{Definition})\\
&=\ \star_e \circ (U^{A[\vec C/ \Theta]}_e \times U^{B[\vec C/ \Theta]}_e) \circ \langle \elrbc{\Theta \vdash A}, \elrbc{\Theta \vdash B} \rangle & (\text{Theorem~\ref{thm:augmented-functors}})\\
&=\ \star_e \circ \langle U^{A[\vec C/ \Theta]}_e \circ \elrbc{\Theta \vdash A}, U^{B[\vec C/ \Theta]}_e \circ \elrbc{\Theta \vdash B} \rangle & \\
&=\ \star_e \circ \langle \lrb{\Theta \vdash A} \circ (U^{C_1}_e \times \cdots \times U^{C_n}_e), \lrb{\Theta \vdash B} \circ (U^{C_1}_e \times \cdots \times U^{C_n}_e) \rangle & (\text{IH})\\
&=\ \star_e \circ \langle \lrb{\Theta \vdash A}, \lrb{\Theta \vdash B} \rangle \circ (U^{C_1}_e \times \cdots \times U^{C_n}_e) & \\
&=\ \lrb{\Theta \vdash A \star B} \circ (U^{C_1}_e \times \cdots \times U^{C_n}_e), & (\text{Definition})
\end{align*}
which shows the required diagram commutes.

\textbf{Case } $\Theta \vdash \mu X. A.$

By induction $\elrb{\Theta, X \vdash A}{\vec C, \mu X.A[\vec C / \Theta] }$ is an
$\omega$-functor. $\omega$-functors are closed under composition and
$\FOLD{\mu X. A[\vec C / \Theta]}_e$ is an $\omega$-functor because of
Theorem~\ref{thm:e-p-trick} and Theorem~\ref{thm:augmented-functors}. Therefore $\elrbc{\Theta \vdash \mu X. A}$ is an $\omega$-functor, because of Theorem~\ref{thm:dagger is omega-continuous}.

Next, observe that for any closed type $\cdot \vdash \mu Y.B,$ we have:
\begin{equation}\label{eq:fold-forget}
U^{\mu Y.B} \circ \FOLD{\mu Y.B} = U^{B[\mu Y.B / Y]}
\end{equation}
which follows immediately from the definition of $\FOLD{}.$ Then:
\begin{align*}
& U^{\mu X. A[\vec C/ \Theta]}_e \circ \FOLD{\mu X. A[\vec C / \Theta]}_e \circ \elrb{\Theta, X \vdash A}{\vec C, \mu X. A[\vec C / \Theta]} = & \\
& = U^{A[\vec C/ \Theta][\mu X. A[\vec C/ \Theta] / X]}_e \circ \elrb{\Theta, X \vdash A}{\vec C, \mu X. A[\vec C / \Theta]} & \eqref{eq:fold-forget}\\
& = U^{A[\vec C/ \Theta, \mu X. A[\vec C/ \Theta] / X]}_e \circ \elrb{\Theta, X \vdash A}{\vec C, \mu X. A[\vec C / \Theta]} & (\text{Type identity}) \\
& = \lrb{\Theta, X \vdash A}  \circ (U^{C_1}_e \times \cdots \times U^{C_n}_e \times U^{\mu X.A[\vec C / \Theta]}_e),  & (\text{IH})
\end{align*}
which shows that all of the conditions of Assumption~\ref{ass:parameterised-square} are satisfied.
Therefore, by Corollary~\ref{cor:alpha-id} :
\begin{align*}
 U^{\mu X. A[\vec C/ \Theta]}_e \circ \left( \FOLD{\mu X. A[\vec C / \Theta]}_e \circ \elrb{\Theta, X \vdash A}{\vec C, \mu X. A[\vec C / \Theta]} \right)^\dagger =
 \lrb{\Theta, X \vdash A}^\dagger  \circ (U^{C_1}_e \times \cdots \times U^{C_n}_e )
\end{align*}
and then by definition:
\begin{align*}
 U^{\mu X. A[\vec C/ \Theta]}_e \circ \elrbc{\Theta \vdash \mu X.A} =
 \lrb{\Theta \vdash \mu X. A}  \circ (U^{C_1}_e \times \cdots \times U^{C_n}_e ),
\end{align*}
which shows the required diagram commutes.
\end{proof}

By considering the proof of the above proposition for type $\Theta \vdash \mu X. A,$ we get an important corollary which
shows that for any recursive type, the parameterised initial
algebra for the augmented interpretation is exactly the same as
the one for the standard interpretation.

\begin{cor}\label{cor:augmented-folding}
The following (2-categorical) diagram:
\cstikz{augmented-fixpoint.tikz}
commutes (see Theorem~\ref{thm:dagger is omega-continuous} for $\phi$).
\end{cor}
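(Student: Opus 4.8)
The plan is to obtain this corollary as exactly the instance of Corollary~\ref{cor:phi-adequacy} that is already implicit in the $\mu$-case of the proof of Proposition~\ref{prop:augmented}. The only real work is to match the data against Assumption~\ref{ass:parameterised-square} and to confirm that the two special hypotheses $\alpha = \id$ and $z = \id$ are met; once this is done, Corollary~\ref{cor:phi-adequacy} delivers the commuting diagram verbatim. Concretely, I would place the \emph{standard} functor $T := \lrb{\Theta, X \vdash A} : \LL_e^{|\Theta|} \times \LL_e \to \LL_e$ in the role of ``$T$'', and the \emph{augmented} functor $H := \FOLD{\mu X. A[\vec C / \Theta]}_e \circ \elrb{\Theta, X \vdash A}{\vec C, \mu X. A[\vec C / \Theta]}$ in the role of ``$H$''. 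The mediating functors are forgetful: $M := U^{\mu X. A[\vec C / \Theta]}_e : \RR(\mu X. A[\vec C/\Theta])_e \to \LL_e$ and $N := U^{C_1}_e \times \cdots \times U^{C_n}_e$. With these assignments $T^\dagger = \lrb{\Theta \vdash \mu X. A}$ and $H^\dagger = \elrbc{\Theta \vdash \mu X.A}$, and all the structural requirements of Assumption~\ref{ass:parameterised-square} hold: $T$ and $H$ are $\omega$-functors (the latter by the $\mu$-case of Proposition~\ref{prop:augmented}), the codomain categories $\LL_e$ and $\RR(\mu X. A[\vec C/\Theta])_e$ are $\omega$-categories (Theorem~\ref{thm:continuous embeddings} and Lemma~\ref{lem:adequacy-properties}), and $M$ is a strict $\omega$-functor (Lemma~\ref{lem:adequacy-properties}~(6)).

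Next I would verify the two special conditions. The displayed chain of equalities in the $\mu$-case of the proof of Proposition~\ref{prop:augmented} — using \eqref{eq:fold-forget}, the type-level substitution identity, and the induction hypothesis — is precisely the functor equation $M \circ H = T \circ (N \times M)$. The crucial point is that this is an equality \emph{on the nose} rather than merely up to isomorphism; consequently the connecting natural isomorphism of Assumption~\ref{ass:parameterised-square} is the identity, i.e.\ $\alpha = \id$. For $z = \id$, I would note that the initial object of $\RR(\mu X. A[\vec C/\Theta])_e$ is the ep-zero $(0,\varnothing)$ of Lemma~\ref{lem:adequacy-properties}~(2), and that the forgetful functor $M$ carries it identically onto the initial object $0$ of $\LL_e$; hence the required isomorphism $z : 0 \to M\varnothing$ is the identity, which is exactly the strictness recorded in Lemma~\ref{lem:adequacy-properties}~(6).

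With $\alpha = \id$ and $z = \id$ confirmed, I would simply invoke Corollary~\ref{cor:phi-adequacy}, which in this special case asserts $\phi^T N = M \phi^H$, that is, $\phi^{\lrb{\Theta, X \vdash A}}\,(U^{C_1}_e \times \cdots \times U^{C_n}_e) = U^{\mu X. A[\vec C/\Theta]}_e\,\phi^{H}$. This is precisely the commutativity of the stated 2-categorical diagram, expressing that the parameterised initial-algebra structure $\phi$ of the augmented interpretation is transported by the forgetful functor onto that of the standard interpretation. I expect the only delicate step to be the bookkeeping that the functor equality of Proposition~\ref{prop:augmented} holds strictly (so that $\alpha = \id$) rather than up to a nontrivial coherence isomorphism; everything else is a direct appeal to Corollary~\ref{cor:phi-adequacy}.
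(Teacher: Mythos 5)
Your proposal is correct and follows exactly the route of the paper's own proof: the paper likewise observes that the functor equality established in the $\mu$-case of Proposition~\ref{prop:augmented} instantiates Assumption~\ref{ass:parameterised-square} with $M$ and $N$ the (products of) forgetful functors and with $\alpha=\id$, $z=\id$, and then invokes Corollary~\ref{cor:phi-adequacy}. Your write-up merely makes explicit the bookkeeping (strictness of the equality, identification of the initial objects) that the paper leaves implicit.
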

\begin{proof}
In Proposition~\ref{prop:augmented}, we have shown
\begin{align*}
U^{\mu X. A[\vec C/ \Theta]}_e \circ \left( \FOLD{\mu X. A[\vec C / \Theta]}_e \circ \elrb{\Theta, X \vdash A}{\vec C, \mu X. A[\vec C / \Theta]} \right) = \\
 = \lrb{\Theta, X \vdash A}  \circ (U^{C_1}_e \times \cdots \times U^{C_n}_e \times U^{\mu X.A[\vec C / \Theta]}_e),
\end{align*}
which satisfies the conditions of Assumption~\ref{ass:parameterised-square} (with $M$ and $N$ given by the (products of) forgetful functors). Therefore, by
Corollary~\ref{cor:phi-adequacy} we complete the proof.
\end{proof}

Next, we prove a substitution lemma for the augmented interpretations which
shows that they behave as expected. Its proof depends on a
permutation and a contraction lemma.

\begin{lem}[Permutation]\label{lem:adequacy-permutation}
For any well-formed type $\Theta, X, Y, \Theta' \vdash A$ and closed types $C_1, \ldots, C_{m+m'+2}:$
\[ \elrb{\Theta, Y, X, \Theta' \vdash A}{\vec C, C_{m+2}, C_{m+1}, \vec{C'}} = \elrb{\Theta, X, Y, \Theta' \vdash A}{\vec C, C_{m+1}, C_{m+2}, \vec{C'}} \circ \swap_{m,m'} \]
where $|\Theta| = m$, $|\Theta'| = m'$,
$\swap_{m,m'} = \langle \Pi_1, \ldots, \Pi_m, \Pi_{m+2}, \Pi_{m+1}, \Pi_{m+3}, \ldots, \Pi_{m+m'+2} \rangle$
and where $\vec C = C_1, \ldots, C_m$ and $\vec{C'} = C_{m+3}, \ldots, C_{m+m'+2}$.
\end{lem}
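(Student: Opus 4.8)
The plan is to prove the equation by induction on the derivation of $\Theta, X, Y, \Theta' \vdash A$, following the same template as the standard permutation lemma (Lemma~\ref{lem:permutation}), but now accounting for the augmented functors $!_e^{(-)}$, $\star_e^{(-,-)}$ and $\FOLD{(-)}_e$ rather than their standard counterparts. A preliminary observation used throughout is that capture-avoiding substitution does not depend on the order in which the variables are listed, so that $A[\vec C/\Theta, C_{m+2}/Y, C_{m+1}/X, \vec{C'}/\Theta'] = A[\vec C/\Theta, C_{m+1}/X, C_{m+2}/Y, \vec{C'}/\Theta']$ as closed types. Consequently the closed types appearing as indices on the augmented functors coincide on the two sides of the claimed equation; in particular the outer functors $!_e^{A[\ldots]}$, $\star_e^{\ldots}$ and $\FOLD{\mu Z.A[\ldots]}_e$ are literally identical in the original and permuted contexts, which is what makes the two composites comparable.

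For the base case $\Theta, X, Y, \Theta' \vdash \Theta_j$ (a type variable), both sides are projection functors, and the equation reduces to $\Pi_{\sigma(j)} = \Pi_j \circ \swap_{m,m'}$, where $\sigma$ transposes positions $m{+}1$ and $m{+}2$. This is immediate from the definition of $\swap_{m,m'}$, since $\Pi_j \circ \swap_{m,m'}$ is by construction the $j$-th component of $\swap_{m,m'}$. The cases $!A$ and $A \star B$ (for $\star \in \{+,\otimes,\multimap\}$) are routine: one applies the induction hypothesis to the immediate subtype(s) and then uses that composition and tupling of functors satisfy $\langle F, G\rangle \circ H = \langle F \circ H, G \circ H\rangle$ in order to pull $\swap_{m,m'}$ out to the right past the (identical) outer functor.

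The main work is the recursive case $\Theta, X, Y, \Theta' \vdash \mu Z. A$. Its body is typed in the extended context $\Theta, X, Y, \Theta', Z$ of length $m + m' + 3$, and the relevant reindexing is $\swap_{m, m'+1}$. The crucial bookkeeping observation is that $\swap_{m, m'+1} = \swap_{m,m'} \times \Id$, since the final coordinate, corresponding to $Z$, is fixed by the transposition of positions $m{+}1$ and $m{+}2$. Writing $T$ for the functor $\FOLD{\mu Z. A[\ldots]}_e \circ \elrb{\Theta, X, Y, \Theta', Z \vdash A}{\vec C, C_{m+1}, C_{m+2}, \vec{C'}, \mu Z. A[\ldots]}$ and $T'$ for its counterpart in the permuted context, the induction hypothesis applied to the body (with $\Theta'$ replaced by $\Theta', Z$, so $m'$ becomes $m'+1$), together with the coincidence of the closed types and of $\FOLD{}_e$ noted above, yields $T' = T \circ (\swap_{m,m'} \times \Id)$. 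Since $\elrbc{\Theta \vdash \mu Z.A}$ is by definition the $(-)^\dagger$ of these functors, the goal becomes $(T \circ (\swap_{m,m'} \times \Id))^\dagger = T^\dagger \circ \swap_{m,m'}$, which is precisely Proposition~\ref{prop:dagger-equal} instantiated with $H = \swap_{m,m'}$.

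I expect the $\mu$ case to be the only genuine obstacle, and within it the two subtle points are: (i) recognizing that $\swap_{m,m'+1} = \swap_{m,m'} \times \Id$, which is exactly the shape needed for Proposition~\ref{prop:dagger-equal} to apply; and (ii) checking carefully that the recursive type $\mu Z. A[\ldots]$ fed into $\FOLD{}_e$, and hence the functor being iterated, is genuinely the same on both sides, so that one is comparing $\dagger$'s of a single functor and its reindexing rather than of two a priori distinct functors.
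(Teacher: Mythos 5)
Your proof is correct and follows the same route the paper intends: the paper's own proof is just the remark ``straightforward induction, essentially the same as Lemma~\ref{lem:permutation}~(1)'', and your argument is exactly that induction spelled out, with the right key observations (the substituted closed types indexing the augmented functors coincide, $\swap_{m,m'+1} = \swap_{m,m'} \times \Id$, and Proposition~\ref{prop:dagger-equal} handles the $\mu$ case). Nothing to change.
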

\begin{proof}
Straightforward induction, essentially the same as Lemma~\ref{lem:permutation} (1).
\end{proof}

\begin{lem}[Contraction]\label{lem:adequacy-contraction}
For any well-formed type $\Theta, \Theta' \vdash A,$ such that $X \not\in \Theta \cup \Theta',$ and closed types $C_1, \ldots, C_{m+m'+1}:$
\[ \elrb{\Theta, X, \Theta' \vdash A}{\vec C, C_{m+1}, \vec{C'}} = \elrb{\Theta,  \Theta' \vdash A}{\vec C, \vec{C'}} \circ \drop_{m,m'} \]
where
$|\Theta| = m$, $|\Theta'| = m'$,
$\drop_{m,m'} = \langle \Pi_1, \ldots, \Pi_m, \Pi_{m+2}, \ldots, \Pi_{m+m'+1} \rangle$
and where we have $\vec C = C_1, \ldots, C_m$ and $\vec{C'} = C_{m+2}, \ldots, C_{m+m'+1}$.
\end{lem}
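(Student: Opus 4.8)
The plan is to proceed by induction on the derivation of $\Theta, \Theta' \vdash A$, exactly paralleling the proof of Lemma~\ref{lem:contraction}, and to reduce the only genuinely delicate case (the $\mu$-binder) to the strict equality of Proposition~\ref{prop:dagger-equal}. The crucial preliminary observation is that, since $A$ is well-formed in $\Theta, \Theta'$, the variable $X$ does not occur free in $A$; hence for every subtype encountered the closed type obtained by substitution agrees on the two sides, i.e. $A[\vec C, C_{m+1}, \vec{C'} / (\Theta, X, \Theta')] = A[\vec C, \vec{C'} / (\Theta, \Theta')]$. Consequently the augmented functors $!_e^{(-)}$, $\star_e^{(-)}$ and $\FOLD{(-)}_e$ appearing in the two interpretations carry identical superscripts and are therefore literally the same functors; only the coordinate bookkeeping changes.

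For the base case $A = \Theta_i$ (a type variable) both interpretations are projections, and the claim reduces to the purely combinatorial identity $\Pi_i = \Pi_i \circ \drop_{m,m'}$ when $i \le m$ and $\Pi_{i+1} = \Pi_i \circ \drop_{m,m'}$ when $i > m$, which is immediate from $\drop_{m,m'} = \langle \Pi_1, \ldots, \Pi_m, \Pi_{m+2}, \ldots, \Pi_{m+m'+1}\rangle$. For the cases $!A$ and $A \star B$ (with $\star \in \{+, \otimes, \multimap\}$) I would simply postcompose the induction hypothesis with the (shared) functor $!_e^{A[\vec C, \vec{C'}/(\Theta,\Theta')]}$, resp.\ $\star_e^{\cdots}$, using that $\omega$-functors compose and that tupling distributes over $\drop_{m,m'}$; no new ideas beyond the corresponding cases of Proposition~\ref{prop:augmented} are needed here.

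The crux is the case $A = \mu Y.\,B_0$. Writing $B := B_0[\vec C, \vec{C'}/(\Theta,\Theta')]$ and unfolding the definition of $\elrbc{-}$, both sides have the form $T^\dagger$ for an $\omega$-functor $T := \FOLD{\mu Y. B}_e \circ \elrb{\cdots \vdash B_0}{\cdots}$, with the bound variable $Y$ appended to the context. Applying the induction hypothesis to $B_0$ with the right-hand context $\Theta'$ enlarged to $\Theta', Y$ (so its length becomes $m'+1$) yields the inner identity with respect to $\drop_{m, m'+1}$. I would then record the bookkeeping identity $\drop_{m, m'+1} = \drop_{m,m'} \times \Id$, where $\Id$ acts on the freshly appended $Y$-coordinate, which is precisely the split between parameters and recursion variable used by $(-)^\dagger$. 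With this, the inner functor of the weakened interpretation factors as $T_0 \circ (\drop_{m,m'} \times \Id)$, where $T_0$ is the functor of the unweakened interpretation, and Proposition~\ref{prop:dagger-equal} (taken with $H := \drop_{m,m'}$, an $\omega$-functor as a tuple of projections) gives $(T_0 \circ (\drop_{m,m'} \times \Id))^\dagger = T_0^\dagger \circ \drop_{m,m'}$, which is exactly the desired equation at $\mu Y.\,B_0$. The main obstacle is therefore not any analytic estimate but ensuring that $(-)^\dagger$ commutes on the nose with reindexing by $\drop$; this is exactly why the strict equality of Proposition~\ref{prop:dagger-equal}, rather than a mere natural isomorphism, is indispensable, and it is the same mechanism by which Lemma~\ref{lem:contraction} holds as an equality.
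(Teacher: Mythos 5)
Your proof is correct and follows exactly the route the paper intends: the paper's proof is simply ``straightforward induction, essentially the same as Lemma~\ref{lem:contraction}~(1)'' (which in turn defers to Fiore's Lemma~C.0.2), and your induction --- the observation that $X\notin A$ makes the substituted closed types and hence the superscripted functors coincide, the projection bookkeeping for the base case, and the reduction of the $\mu$-case to Proposition~\ref{prop:dagger-equal} via $\drop_{m,m'+1}=\drop_{m,m'}\times\Id$ --- is precisely that argument spelled out. No gaps.
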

\begin{proof}
Straightforward induction, essentially the same as Lemma~\ref{lem:contraction} (1).
\end{proof}

\begin{lem}[Substitution]\label{lem:adequacy-substitution}
For any well-formed types $\Theta, X \vdash A$ and $\Theta \vdash B$ and closed types $C_1, \ldots, C_{m}$:
\[ \elrbc{\Theta \vdash A[B/X]} = \elrb{\Theta, X \vdash A}{\vec C, B[\vec C / \Theta]} \circ \langle \Id, \elrbc{\Theta \vdash B} \rangle. \]
\end{lem}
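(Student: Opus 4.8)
The plan is to prove the statement by induction on the structure of the type $A$ (equivalently, on the derivation of $\Theta, X \vdash A$), mirroring the proof of the standard substitution Lemma~\ref{lem:alpha-substitution}(1). The only genuine semantic input beyond unfolding the definition of the augmented interpretation is Proposition~\ref{prop:dagger-equal} for the recursive case, together with the permutation and contraction Lemmas~\ref{lem:adequacy-permutation} and~\ref{lem:adequacy-contraction}. Throughout I would rely on the purely syntactic identity $A[B/X][\vec C/\Theta] = A[\vec C/\Theta][B[\vec C/\Theta]/X]$, valid since $X \notin \Theta$ and the $\vec C$ are closed, which guarantees that the categories $\RR(-)_e$ appearing as codomains on the two sides coincide, and that the superscript annotations on the functors $!_e^{(-)}$, $\star_e^{(-),(-)}$ and $\FOLD{(-)}_e$ agree.

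First I would dispatch the base cases. For $A = \Theta_i$ a variable of $\Theta$, both sides reduce to the same projection $\Pi_i$, since precomposition with $\langle \Id, \elrbc{\Theta \vdash B}\rangle$ leaves a projection into the $\Id$ factor unchanged; for $A = X$, the right-hand side is $\Pi_{m+1} \circ \langle \Id, \elrbc{\Theta \vdash B}\rangle = \elrbc{\Theta \vdash B}$, which matches the left-hand side because $X[B/X] = B$. For $A = {!}A'$ and $A = A_1 \star A_2$ with $\star \in \{+, \otimes, \multimap\}$, I would unfold the definitions, push the substitution inside, and apply the induction hypothesis to the immediate subterms; the outer functor $!_e^{(-)}$ or $\star_e^{(-),(-)}$ is unaffected by the syntactic identity above, so these cases close using only functoriality and $\langle F, G\rangle \circ H = \langle F\circ H, G\circ H\rangle$.

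The main obstacle is the recursive case $A = \mu Y. A'$, where (renaming so that $Y \notin \Theta \cup \{X\}$ and $Y \notin \mathrm{fv}(B)$) both sides are of the form $(-)^\dagger$. Writing $D$ for the common closed recursive type $\mu Y. A'[\vec C/\Theta, B[\vec C/\Theta]/X]$ (again by the syntactic identity, $D$ is both the fold-index on the left and on the right), the left-hand side is $(\FOLD{D}_e \circ \elrb{\Theta, Y \vdash A'[B/X]}{\vec C, D})^\dagger$ and the right-hand side is $(\FOLD{D}_e \circ \elrb{\Theta, X, Y \vdash A'}{\vec C, B[\vec C/\Theta], D})^\dagger \circ \langle \Id, \elrbc{\Theta \vdash B}\rangle$. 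Applying Proposition~\ref{prop:dagger-equal} to the right-hand side, with $H = \langle \Id, \elrbc{\Theta \vdash B}\rangle$, rewrites it as $(\FOLD{D}_e \circ \elrb{\Theta, X, Y \vdash A'}{\vec C, B[\vec C/\Theta], D} \circ (\langle \Id, \elrbc{\Theta \vdash B}\rangle \times \Id))^\dagger$. It therefore suffices to prove the equality of the two functors inside $(-)^\dagger$, and hence, since both carry the common outer $\FOLD{D}_e$, the equality $\elrb{\Theta, Y \vdash A'[B/X]}{\vec C, D} = \elrb{\Theta, X, Y \vdash A'}{\vec C, B[\vec C/\Theta], D} \circ (\langle \Id, \elrbc{\Theta \vdash B}\rangle \times \Id)$.

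To establish this last equality I would first use the permutation Lemma~\ref{lem:adequacy-permutation} to move $X$ past $Y$ into the final position, rewriting $\elrb{\Theta, X, Y \vdash A'}{\vec C, B[\vec C/\Theta], D}$ as $\elrb{\Theta, Y, X \vdash A'}{\vec C, D, B[\vec C/\Theta]}$ precomposed with the coordinate swap $\swap_{m,0}$. Since $A'$ is a strict subterm, the induction hypothesis applies to $\Theta, Y, X \vdash A'$ with $X$ the last variable, yielding $\elrb{\Theta, Y \vdash A'[B/X]}{\vec C, D} = \elrb{\Theta, Y, X \vdash A'}{\vec C, D, B[\vec C/\Theta]} \circ \langle \Id, \elrb{\Theta, Y \vdash B}{\vec C, D}\rangle$, where the fold-index matches because $B[(\vec C, D)/(\Theta, Y)] = B[\vec C/\Theta]$. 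Finally, the contraction Lemma~\ref{lem:adequacy-contraction} (using $Y \notin \mathrm{fv}(B)$) identifies $\elrb{\Theta, Y \vdash B}{\vec C, D}$ with $\elrbc{\Theta \vdash B} \circ \drop_{m,0}$, and a direct reparametrisation check shows that $\swap_{m,0} \circ (\langle \Id, \elrbc{\Theta \vdash B}\rangle \times \Id)$ equals $\langle \Id, \elrbc{\Theta \vdash B} \circ \drop_{m,0}\rangle$; matching these two descriptions closes the recursive case. The anticipated difficulty lies entirely in keeping the permutation and contraction bookkeeping of the index variables straight, the conceptual content being supplied by Proposition~\ref{prop:dagger-equal}.
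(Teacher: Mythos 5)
Your proposal is correct and follows essentially the same route as the paper, which proves this lemma by the same structural induction as the standard substitution Lemma~\ref{lem:alpha-substitution}(1) (itself modelled on Fiore's Lemma C.0.3), with Proposition~\ref{prop:dagger-equal} plus the permutation and contraction Lemmas~\ref{lem:adequacy-permutation} and~\ref{lem:adequacy-contraction} handling the $\mu$ case. Your explicit attention to the syntactic identity $A[B/X][\vec C/\Theta] = A[\vec C/\Theta][B[\vec C/\Theta]/X]$, which guarantees that the fold-indices and codomain categories $\RR(-)_e$ on the two sides agree, is exactly the bookkeeping the paper leaves implicit.
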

\begin{proof}
Straightforward induction, essentially the same as Lemma~\ref{lem:alpha-substitution} (1).
\end{proof}

We may now finally prove the existence of the formal approximation relations as promised in Lemma~\ref{lem:formal-relations}.
First, we introduce some additional notation for convenience when dealing with closed types and their augmented interpretations.
Recall that the augmented interpretation of a closed type $\cdot \vdash A$ is a functor $\elrb{\cdot \vdash A }{\cdot}: \mathbf 1 \to \RR(A).$
Therefore let $\elrbs{A} := \elrb{\cdot \vdash A}{\cdot}(*)$, where $*$ is the unique object of $\mathbf 1$.
Then $\elrbs A$ is an object of $\RR(A)$. For the standard interpretation of types we use the notation from~\secref{sub:notation-closed}.

\begin{proof}[Proof of Lemma~\ref{lem:formal-relations}]\mbox{}\\
First, observe that by Proposition~\ref{prop:augmented}, $\elrbs A = (\lrb A, \tleq_A)$ for some
\[ \tleq_A \subseteq (\LL(I, \lrb A) - \{\perp\}) \times \text{Values}(A) . \]
We will show that these relations $\tleq_A$ have the required properties.
Next, recall that by Theorem~\ref{thm:e-p-trick}, $T_e(A,B) = T(A,B),$ for any (bi)functor $T$ and objects $A,B$. Then:

Property (B) holds by construction (cf. Definitions~\ref{def:relation}, \ref{def:temp-category} and \ref{def:adequacy-category}).

Properties (A1.1), (A1.2), (A2) and (A3) hold, because:
\[ \elrbs{A \star B} = \elrbs A \star^{A,B} \elrbs B = (\lrb {A \star B}, \tleq_A \star^{A,B} \tleq_B) \qquad\quad (\star \in \{+, \otimes, \multimap \})\]
and then by Definition~\ref{def:adequacy-functors}.

Property (A4) holds, because:
$ \elrbs{!A} = !^A (\elrbs A) = (\lrb{!A}, !\tleq_A) $
and then by Definition~\ref{def:adequacy-functors}.

To prove property (A5), we reason as follows. By using Corollary~\ref{cor:augmented-folding} and Lemma~\ref{lem:adequacy-substitution} in the special case where $\Theta = \cdot$, we get an isomorphism (in the category $\RR(\mu X.A)$)
\[ \phi_*^{-1} : \elrbs{\mu X.A} \cong \FOLD{\mu X.A} \left( \elrb{X \vdash A}{\mu X.A} \left( \elrbs{\mu X. A} \right) \right) = \FOLD{\mu X.A} \left( \elrbs{A[ \mu X. A / X]} \right) : \phi_*, \]
where $*$ is the unique object of the category $\mathbf 1.$
But by Corollary~\ref{cor:augmented-folding} and the notation from~\secref{sub:notation-closed}, it follows $\phi_*^{-1} = \sunfold^{\mu X.A}$ and so we have an isomorphism in $\RR(\mu X. A)$
\begin{equation}\label{eq:adequacy-final}
\begin{split}
\sunfold^{\mu X.A } : (\lrb{\mu X.A}, \tleq_{\mu X.A})
& = \elrbs{\mu X.A}\\
& \cong \FOLD{\mu X.A} \left( \elrbs{A[ \mu X. A / X]} \right)\\
& = (\lrb{A[ \mu X. A / X]}, \FOLD{\mu X.A} \tleq_{A[ \mu X. A / X]}) : \sfold^{\mu X. A}
\end{split}
\end{equation}
We may now easily complete the proof.
For the left-to-right direction of (A5):
\begin{align*}
&f \tleq_{\mu X. A} \fold\ v &\\
\Rightarrow \quad & \sunfold^{\mu X.A} \circ f\ (\FOLD{\mu X.A} \tleq_{A[ \mu X. A / X]} )\ \fold\ v & (\eqref{eq:adequacy-final} \text{ and Lemma~\ref{lem:adequacy-basic-properties} (1)})\\
\Rightarrow \quad & \sunfold^{\mu X.A} \circ f\ \tleq_{A[ \mu X. A / X]} v & (\text{Lemma~\ref{lem:adequacy-iso}})
\end{align*}
And for the right-to-left direction of (A5):
\begin{align*}
& \sunfold^{\mu X.A} \circ f\ \tleq_{A[ \mu X. A / X]} v & \\
\Rightarrow \quad & \sunfold^{\mu X.A} \circ f\ (\FOLD{\mu X.A} \tleq_{A[ \mu X. A / X]} )\ \fold\ v & (\text{Lemma~\ref{lem:adequacy-iso}})\\
\Rightarrow \quad & f = \sfold^{\mu X. A} \circ \sunfold^{\mu X.A} \circ f \tleq_{\mu X. A} \fold\ v & (\eqref{eq:adequacy-final} \text{ and Lemma~\ref{lem:adequacy-basic-properties} (1)})~~~~\quad\qedhere
\end{align*}
\end{proof}

\section{Related Work}\label{sec:related}

LNL-FPC can be seen as an extension with recursive types of the circuit-free
fragment of Proto-Quipper-M~\cite{pqm-small}, which corresponds to the CLNL
calculus in~\cite{eclnl}. The present paper adds a syntactic, operational and
denotational treatment of recursive types to these type systems.

Our work
is also closely related to the LNL calculus~\cite{benton-small} (also known as the adjoint calculus~\cite{benton-wadler}).
The LNL type assignment system has two kinds of term judgements -- a linear one and a non-linear one, which are interpreted in a monoidal category and a cartesian category, respectively.
However, LNL-FPC differs syntactically, because it has only one kind of
term judgement, which we believe results in a more convenient
syntax for programming.

Eppendahl~\cite{eppendahl-thesis} extends LNL with recursive types only for the
linear judgements. In that work, recursive types are treated linearly and are
interpreted in the linear category only. In LNL-FPC, we also show how to add
non-linear recursive types. We are able to treat them non-linearly, because we
present new techniques for solving recursive domain equations within $\cpo$-enriched cartesian categories
that we use for their interpretation.
Moreover, in LNL-FPC, both term-level recursion and type-level recursion are handled elegantly and agnostically with respect to the linearity of the type/term in question, which we think is a point in favor of our syntax.
In contrast, term recursion in LNL and the adjoint calculus seems to be less elegant~\cite[pp. 11]{benton-wadler}.
In an LNL-style syntax, it is also natural to add type recursion over the non-linear types, including non-linear function space, but it is an open problem how to interpret this in the cartesian category.
Indeed, even if we assume $(\CC, \times, [- \to - ], 1)$ is cartesian closed, then the methods presented here do not explain how to compute fixpoints involving the $[- \to -]$ bifunctor, because it is unclear how to see it as a
covariant functor on $\CC$ while still being able to form initial algebras.
In LNL-FPC, we do not assume our cartesian category is closed and so we do not
expose any syntax for the internal-hom of the cartesian category and we have
shown how to interpret the remaining non-linear recursive types (in both
categories, see Figure~\ref{fig:type-interpretation}).

Some mixed linear/non-linear type systems, such as the adjoint
calculus~\cite{benton-wadler}, also support additive conjunction. We did not
include additive conjunction in our type system, because we do not think it is
interesting from a programming perspective. However, it appears it could be
included without causing problems -- on the denotational side we merely have to
assume that our linear category has $\cpo$-products, which is true in all of
our concrete models.

The Lily language~\cite{lily1} is a polymorphic linear/non-linear lambda
calculus where recursive types may be encoded using polymorphism. The
non-linear types in Lily are those of the form $!A$, whereas in LNL-FPC the
non-linear types are also closed under sums, pairs and formation of recursive
types, which is a major focus of our paper and that we believe improves the
usability of the language.  A version of Lily where the function space $A
\multimap B$ is \emph{strict}, rather than linear, was considered
in~\cite{rosolini-simpson} and the language was equipped with a denotational
model based on synthetic domain theory. The authors prove two computational
adequacy results (one for call-by-name and one for call-by-value evaluation)
that hold at all types of the form $!A$.  Computational adequacy for Lily at
types $!A$ was established in~\cite{lily3}, again using a model based on
synthetic domain theory.  These are the only adequacy results, that we know of,
for languages with a similar feature set to LNL-FPC. Our adequacy result,
however, covers a larger range of types (all non-linear types, rather than
only those of the form $!A$) and our categorical model is considerably
simpler and easier to understand. LNL-FPC does not support polymorphism,
however.

LNL-FPC is also related to Barber's Dual
Intuitionistic Linear Logic (DILL)~\cite{dill}, which, like LNL-FPC, also has
one kind of term judgement, but it enforces a strict separation between
linear and non-linear contexts. As a result, for a lambda abstraction
$\lambda x^A. p$ in DILL, the variable $x$ is always treated as a linear
variable, whereas LNL-FPC allows it to be treated as non-linear, provided
$A$ is, resulting in increased convenience for programming.

Other type systems like $\lambda^q_{\to}$~\cite{Bernardy:linear_haskell},
$^a\lambda_{ms}$~\cite{Tov:practical} and Quill~\cite{Morris:best} support
linear or affine types where the substructural operations are implicit, whereas
in LNL-FPC only the weakening and contraction rules are implicit. However, none
of them have so far been equipped with a categorical semantics (which is the
main contribution of this paper).

Fiore and Plotkin model recursive types within FPC using categories
that are algebraically compact in an enriched
sense~\cite{fiore-thesis,fiore-plotkin}. Any category that is
algebraically compact in such a sense and is also cartesian closed is
necessarily degenerate. In a mixed linear/non-linear setting, we also have
to solve recursive domain equations within the cartesian closed category, and because
of this we cannot model recursive types using these techniques.

Another approach to modeling recursive types involves \emph{bilimit compact
categories}~\cite{levy-cbpv} where type expressions are modelled as functors $T
: \AAA^\op \times \BB^\op \times \AAA \times \BB \to \BB$ for which one can
solve recursive domain equations by using domain-theoretic
methods~\cite{pitts-relation}. We do not know if these techniques would also
work in a mixed linear/non-linear setting like ours, but we think that our
presentation is simpler and more compact, because we simply interpret type
expressions as functors $T: \AAA_e \times \BB_e \to \BB_e$.  In addition, our
categorical models can be easily simplified for languages that have only
inductive datatypes (recursive types allowing only $\otimes$ and $+$).  In such
a case, $\cpo$-enrichment
from the model may be removed and our semantics can be adapted quite easily.

In \cite{qpl-tcs,qpl-fossacs}, the authors present a denotational semantics for
the first-order quantum programming language QPL extended with inductive
datatypes (i.e., datatypes formed using $\otimes$ and $+$, but not $\multimap$
and $!$). This type system is not equipped with $\multimap$ and $!$ types and
in order to model copying of classical information, the authors use methods
based on the work reported here (and in \cite{icfp19}) that shows how to
construct the required comonoids.

\section{Conclusion and Future Work}

We introduced a new language called LNL-FPC, which can be seen as extending FPC
with a mixed linear/non-linear typing discipline (\secref{sec:syntax}). An
interesting feature of our type system is that it has implicit weakening and
contraction rules -- the users of the language do not have to explicitly
specify when to copy and discard non-linear variables, while still enjoying
linear type checking for the linear variables.

We presented a big-step call-by-value operational semantics for the language
and showed that a previously known canonical fixpoint operator on terms~\cite{eclnl} may be
derived via the isorecursive type structure, thereby recreating an important result from FPC (\secref{sec:operational}).

In preparation for the denotational semantics of the language, we first had
to establish some new coherence properties between (parameterised) initial algebras defined in potentially different categories and related by appropriate mediating functors
(\secref{sec:w-categories}).

We then described our categorical model,
which is a $\cpo$-enriched model of Intuitionistic Linear Logic with suitable
$\omega$-colimits (\secref{sec:categorical-model}). A crucial insight of our approach and of our
model, is that non-linear types that correspond to mixed-variance
functors can be seen as \emph{covariant} functors both on the linear category
of embeddings, but also on the non-linear category of
\emph{pre-embeddings}. We also presented an entire class of concrete models
that have found applications not only in classical functional programming, but also in
programming languages for certain emerging fields (quantum computation and
diagrammatic languages).

We proved our denotational semantics sound (\secref{sec:semantics}) and
cemented our results by demonstrating a computational adequacy result
(\secref{sec:adequate}). In doing so, we provided two kinds of denotational
interpretations for non-linear types, terms and contexts -- a standard
interpretation within the linear category, and one within the cartesian category
-- that we showed are very strongly related to one another via a coherent
natural isomorphism. This isomorphism, in turn, allowed us to elegantly
interpret the substructural operations of Intuitionistic Linear Logic and in
particular we precisely described the canonical comonoid structure of the
non-linear (recursive) types.  This, in turn, required developing new
techniques for solving recursive domain equations within a cartesian category,
which is perhaps the main contribution of this paper. More specifically, we
showed that the solution of many (mixed-variance) recursive domain equations on
the linear side (constructed over embeddings) can be reflected onto the
cartesian side (by constructing them over pre-embeddings) and this result can
be of interest even without our sound and adequate semantics.

As part of future work we would like to investigate what additional categorical
structure (if any) is required in order to design and model a type system
similar to ours, but where the substructural rules for promotion and
dereliction are also implicit. In practical terms, this would result in a programming
language that is more convenient for use, while still retaining the benefits of
linear type checking. In theoretical terms, the categorical treatment could lead to a
better understanding of the underlying design principles.

Another line of future work is to consider a more abstract model of LNL-FPC
where the $\cpo$-enrichment of our model is axiomatised away. This is a challenging problem, but if resolved, this could help to discover additional models of LNL-FPC.

\section*{Acknowledgments}
We thank Samson Abramsky, Chris Heunen, Mathys Rennela, Francisco Rios and
Peter Selinger for discussions regarding this paper. We also thank the
anonymous ICFP and LMCS referees for their feedback, which led to multiple improvements of
the paper.
We thank the Simons Institute for the Theory of Computing
where the initial part of this work took place. We also thank Schloss Dagstuhl
- Leibniz Center for Informatics for hosting us during the Quantum Programming
Languages seminar where another part of the work took place.

This work was partially funded by the AFOSR under the MURI
grant number FA9550-16-1-0082 entitled, "Semantics, Formal Reasoning, and Tool
Support for Quantum Programming". Vladimir Zamdzhiev is also supported by the
French projects ANR-17-CE25-0009 SoftQPro and PIA-GDN/Quantex.

\newpage

\bibliographystyle{alpha}
\bibliography{refs}

\newpage

\appendix

\section{Omitted Proofs}\label{app:model}

\subsection{Proof of Lemma~\ref{lem:s-naturality}}\label{proof:s-naturality}

\begin{proof}
For brevity we write $T_A, T_B, T_f$ for $T(A,-), T(B, -), T(f,-)$ respectively.

First, observe that $(T_f^{*n})_X = T(f, T(f, \cdots T(f, X) \cdots ).$

Therefore, $(S(T_f))_n = (T_f^{*n})_\varnothing = T(f, T(f, \cdots T(f, \varnothing) \cdots ).$

Next, observe that $T_A^n(g) = T(A, T(A, \cdots T(A, g) \cdots ).$

Next, we have
\begin{align*}
(T_f * S(T_f))_n = ( (T_f * S(T_B)) \circ (T_A * S(T_f)) )_n &= T(f, T_B^n(\varnothing)) \circ T(A, (T^{*n}_f)_\varnothing ) \\
 &= T(f, (T^{*n}_f)_\varnothing) = (T_f^{*n+1})_\varnothing.
\end{align*}
And therefore
\begin{align*}
((T_f * S(T_f))  \circ s^{T_A} )_n &= (T_f^{*n+1})_\varnothing \circ T_A^n(\iota_{T_A \varnothing}) \\
  & = T(f, T(f, \cdots T(f, T(f, \varnothing)) \cdots ) \circ T(A, T(A, \cdots T(A, \iota_{T_A \varnothing}) \cdots )\\
  & = T(f, T(f, \cdots T(f,  T(f, \varnothing) \circ \iota_{T_A \varnothing}) \cdots ) \\
  & = T(f, T(f, \cdots  T(f, \iota_{T_B \varnothing}) \cdots ) \\
  & = T_B^n(\iota_{T_B \varnothing}) \circ (T_f^{*n})_\varnothing \\
  & = (s^{T_B} \circ S(T_f))_n\qedhere
\end{align*}
\end{proof}

\subsection{Proof of Lemma~\ref{lem:whiskering colimit}}\label{proof:whiskering colimit}

\begin{proof}
Composing $\tau_n:D_n\to D_n'$ with $d_{n}':D_n'\to D_\omega'$ yields a morphism $d_{n}'\circ\tau_n:D_n\to D_\omega'$ for each $n\in\omega$, which forms a cocone of the diagram $D$.
 Since $D_\omega$ is the colimit of $D$, then $\colim(\tau) : D_\omega\to D_\omega'$ is by definition the unique map such that
	\[ 	\colim(\tau)\circ d_n=d'_n\circ\tau_n,\]
	hence by functoriality of $T$, we obtain
	\begin{equation}\label{eq:colim-tau}
	T \colim(\tau)\circ T d_n=Td'_n\circ T\tau_n
	\end{equation}
	for each $n\in\omega$.
	Now, $T\circ D : \omega\to\BB$ is an $\omega$-diagram and since $T$ preserves the colimit of $D$, it follows $TD_\omega$ is the colimit of $T\circ D$ with colimiting morphisms $Td_n:TD_n\to TD_\omega$ for each $n\in\omega$.
  Similarly, $TD_\omega'$ is the colimit of $T\circ D'$ with colimiting morphisms ${Td_n':TD_n'\to TD_\omega'}$.
  Then $Td'_n\circ T\tau_n:TD_n\to TD_\omega'$ forms a cocone of the diagram $T\circ D$, and by definition $\colim(T \tau) : TD_\omega\to TD_\omega'$ is the unique morphism such that
	\[  	\colim(T \tau)\circ T d_n=Td'_n\circ T\tau_n.  \]
	It follows now from (\ref{eq:colim-tau}) that $T \colim(\tau)=\colim(T \tau)$.	
\end{proof}

\subsection{Proof of Lemma~\ref{lem:whiskering colimit2}}\label{proof:whiskering colimit2}

\begin{proof}
	By Notation \ref{not:w-diagrams}, $(D_\omega,d_n)_{n\in\omega}$ is the colimiting cocone of $D$.
	By definition of $\omega$-functors, it follows that $(TD_\omega,Td_n)_{n\in\omega}$ and $(HD_\omega,Hd_n)_{n\in\omega}$ are the colimiting cocones of $T\circ D$ and $H\circ D$, respectively.
	Then $(HD_\omega,Hd_{n}\circ\tau_{D_n})_{n\in\omega}$ is a cocone of $TD$, and by definition $\colim(\tau D) :  TD_\omega\to HD_\omega$ is the unique morphism such that
	\[\colim(\tau D)\circ Td_n=Hd_n\circ\tau_{D_n}\]
	for each $n\in\omega$. 
	On the other hand, by naturality of $\tau:T\naturalto H$, we have 
	\[\tau_{D_\omega}\circ Td_n=Hd_n\circ\tau_{D_n},\]
	for each $n\in\omega$, hence we must have $\colim(\tau D)=\tau_{D_\omega}$.
\end{proof}

\subsection{Proof of Lemma~\ref{lem:alpha-star}}\label{proof:alpha-star}
%
\begin{proof}
	First, we have to show that for each $A \in \AAA$, we have that
  \[ \alpha^*_A : S( T(NA, -) ) \naturalto M \circ S( H(A, -) : \omega \to \DD \]
  is natural in $\omega$, i.e., we have to show that
	\begin{equation}\label{eq:betaAnatural}
 M S ( H(A,-)) (n \leq n+1) \circ (\alpha_A^*)_{n}=(\alpha^*_A)_{n+1}\circ 	S(T(NA,-)(n\leq n+1))
	\end{equation}
	for each $n\in\omega$. Let $\iota_1 : 0\to T(NA,0)$ and $\iota_2 : \varnothing \to H(A,\varnothing)$ be the indicated unique morphisms.
  By definition, equation (\ref{eq:betaAnatural}) is then equivalent to:
	\begin{equation}\label{eq:betaAnatural2}
	M H(A,-)^n\iota_2\circ(\alpha^*_A)_{n}=(\alpha^*_A)_{n+1}\circ T(NA,-)^n\iota_1.
	\end{equation}
	When $n=0$, the equality becomes $ M \iota_2 \circ z=(\alpha^*_A)_1\circ\iota_1, $ which is true by initiality.
	We proceed by induction, so assume that (\ref{eq:betaAnatural2}) holds for $n$. Expanding the equality for $n+1$, by definition of $\alpha^*_A$, we get following diagram:
		\cstikz{lem320-5.tikz}
	Here the outer square is just equation (\ref{eq:betaAnatural2}) for $n+1$. Subdiagram (2) commutes by naturality of $\alpha$, whereas (1) commutes because it is $T(NA,-)$ applied to equation (\ref{eq:betaAnatural2}), which is the induction step. We conclude that $(\alpha^*_A)_n$ is indeed natural in $n$.
  
  Next, we have to show that $\alpha^*_A$ is also natural in $A$. Hence let $f:A_1\to A_2 $ be a morphism in $\AAA$. Then we need to show that 
	\[   MS(H(f,-)) \circ \alpha^*_{A_1}=\alpha^*_{A_2}\circ S(T(Nf,-)). \]
	This equality holds if and only it holds for all its components $n$, i.e., if 
	\begin{equation}\label{eq:betaAnatural3}
	 MH(f,-)^n\varnothing\circ(\alpha^*_{A_1})_n=(\alpha^*_{A_2})_n\circ T(Nf,-)^n0
	\end{equation}
	holds for each $n\in\omega$. For $n=0$, it reduces to $\id \circ z = z \circ \id $, which indeed holds. We proceed by induction. Assume that (\ref{eq:betaAnatural3}) holds for $n$. Consider the following diagram:
		\cstikz[0.89]{lem320-6.tikz}
	Here the outer square is just (\ref{eq:betaAnatural3}) for $n+1$.
  Subdiagram (1) commutes because it is $TN(A_1,-)$ acting on equation (\ref{eq:betaAnatural3}) for $n$, the induction step.  Subdiagram (2) because $T$ is a bifunctor. 
  Subdiagram (3) commutes by naturality of $\alpha$.
  We conclude that (\ref{eq:betaAnatural3}) also holds for $n+1$ and therefore $\alpha^*$ is a natural transformation.

  Finally, a simple induction argument shows that every component $\alpha^*_A$ is an isomorphism and therefore $\alpha^*$ is a natural isomorphism.
\end{proof}

\subsection{Proof of Lemma~\ref{lem:alpha-dagger-def}}\label{proof:alpha-dagger-def}
\begin{proof}
For (1),
notice $\beta$ can be seen as a natural transformation $\beta:H\circ(\Id\times\Id)\naturalto\Id\circ H'$, hence $\beta^*$ becomes a natural transformation $\beta^*:H^*\circ\Id\naturalto(\Id\triangleright-)\circ(H')^*$, defined by
\begin{align*}
(\beta^*_A)_0     &\coloneqq \left( \varnothing \xrightarrow{\id_\varnothing} \varnothing \right) \\
(\beta^*_A)_{n+1} &\coloneqq \left( H(A, -)^{n+1} \varnothing \xrightarrow{H(A, (\beta_A^*)_n)} H(A, H'(A, -)^n \varnothing) \xrightarrow{\beta_{A, H'(A,-)^n \varnothing}} H'(A,-)^{n+1} \varnothing \right) .
\end{align*}
Furthermore, we have a natural transformation $M\beta\circ\alpha:T\circ(N\times M)\naturalto M\circ H'$, hence $(M\beta\circ\alpha)^*:T^*\circ N\naturalto (M\triangleright-)\circ (H')^*$ is a natural transformation defined by
{%
\small{%
\begin{align*}
& ((M\beta\circ\alpha)^*_A)_0     \coloneqq \left( 0 \xrightarrow z M \varnothing \right) \\
& ((M\beta\circ\alpha)^*_A)_{n+1} \coloneqq \\
& \qquad \left( T(NA, -)^{n+1} 0 \xrightarrow{T(NA, (\delta_A^*)_n)} T(NA, MH'(A, -)^n \varnothing) \xrightarrow{ \delta_{A, H'(A,-)^n \varnothing}} MH'(A,-)^{n+1} \varnothing \right) .
\end{align*}
}%
}%
where $\delta=M\beta\circ\alpha$.
We have to show that we have $(M\beta\circ\alpha)_A^*=M\beta_A^*\circ\alpha_A^*$. Base case:
\[ ((M\beta\circ\alpha)^*_A)_0=z=M(\id_\varnothing)\circ z=M(\beta_A^*)_0\circ(\alpha_A^*)_0. \] 
Assume as an induction hypothesis that $ ((M\beta\circ\alpha)^*_A)_n=(M\beta^*_A \circ \alpha^*_A)_n, $
and consider the following diagram.
\cstikz{lemC41.tikz}
Here (1) commutes, because it is $T(NA, -)$ applied to the induction hypothesis, (2) commutes by definition of $\alpha_A^*$, (3) commutes by naturality of $\alpha$, and (4) commutes by definition of $\beta_A^*$.
The outer right column of the diagram equals by definition $((M\beta\circ\alpha)_A^*)_{n+1}$. It follows that 
 \[((M\beta\circ\alpha)^*_A)_{n+1}=(M\beta^*_A \circ \alpha^*_A)_{n+1},\]
 which concludes the proof of the first statement.

For the second statement,
base case:
\begin{align*}
((\alpha(Q\times \Id) \circ T(\gamma \times M))^*_E)_0 &= z &(\text{Definition of } (-)^*)\\
&= z \circ \id_0           & \\
&= (\alpha^*_{QE})_0 \circ (T^*\gamma_E)_0 &(\text{Definition of } (-)^* \text{ and } T^*)
\end{align*}
For the induction step, assume that
$ ((\alpha(Q\times \Id) \circ T(\gamma \times M))^*_E)_n = (\alpha^*_{QE})_n \circ (T^*\gamma_E)_n .$
Using the induction hypothesis, then $((\alpha(Q\times \Id) \circ T(\gamma \times M))^*_E)_{n+1}$ is given by the composition of the right and top sides of the diagram:
\[ \stikz[0.85]{star-sequence-proof2.tikz} \]
Triangle (1) commutes by definition of $T^*$ and the Godement product:
{%
\small{%
\begin{align*}
(T^*\gamma_E)_{n+1} = T(\gamma_E,-)^{*(n+1)} = T(\gamma_E,-) * T(\gamma_E,-)^{*n} = T(\gamma_E, T(NQE,-)^n 0) \circ T(PE, (T^*\gamma_E)_n) .
\end{align*}
}%
}%
Square (2) commutes by bifunctoriality of $T$ and (3) commutes by definition of $\alpha^*$.
\end{proof}

\subsection{Proof of Lemma~\ref{lem:pentagon-sequence}}\label{proof:pentagon-sequence}

\begin{proof}
The two compositions are natural transformations of $\omega$-diagrams in $\DD,$ so we prove this by induction on $n \in \omega$.
We first note that if $\AAA$ is taken to be trivial, then $T^*=S(T)$ and $H^*=S(H)$. Recall that $S(T)(n)=T^n0$, and $S(H)(n)=H^n\varnothing$. Let $\iota_{T0}:0\to T0$ and $\iota_{H\varnothing}:\varnothing\to H\varnothing$ be the unique initial maps. Then we have $(s^T)_n=T^n(\iota_{T0})$ and $(s^H)_n=H^n(\iota_{H\varnothing})$. Then	$\alpha^*: T^* \naturalto MH^* : \omega\to\DD$ is a natural transformation given by: 
\begin{align*}
	\alpha^*_0     &\coloneqq \left( 0 \xrightarrow z M \varnothing \right) \\
	\alpha^*_{n+1} &\coloneqq \left( T^{n+1} 0 \xrightarrow{T\alpha_n^*} TMH^n \varnothing \xrightarrow{\alpha_{H^n \varnothing}} MH^{n+1} \varnothing \right) .
\end{align*}
In order to check that the diagram of natural transformations commutes, we first consider the $0$-component as a basis step. Then we have to show that
\[ (Ms^H\circ\alpha^*)_0=(\alpha*S(H)\circ T\alpha^*\circ s^T)_0.\]
Since the object $S(T)(0)$ that is the source of both sides of the last equality is the initial object $0$, it follows that the last equality indeed holds.
Next, as an induction hypothesis, we assume that the $n$-component of the diagram commutes, hence
\begin{equation}\label{eq:pentagon-sequence}
 (Ms^H\circ\alpha^*)_n=(\alpha*S(H)\circ T\alpha^*\circ s^T)_n,
 \end{equation}
 and we aim to show that the $(n+1)$-component commutes as well.
 Note that 
 \begin{align*}
 ( \alpha*S(H)\circ T\alpha^*\circ s^T)_n &= (\alpha*S(H))_n\circ T\alpha^*_n\circ s^T_n=\alpha_{S(H)(n)}\circ T\alpha^*_n\circ s^T_n \\
  &=\alpha_{H^n\varnothing}\circ T\alpha_n^*\circ s^T_n=\alpha_{n+1}^*\circ s^T_n,
 \end{align*}
 hence (\ref{eq:pentagon-sequence}) becomes
 \begin{equation}\label{eq:pentagon-sequence2}
 Ms^H_n\circ\alpha^*_n=\alpha_{n+1}^*\circ s^T_n.
 \end{equation}
 Consider now the following diagram:
\[ \stikz{pentagon-lemma.tikz} \]
Here the outside of the diagram is exactly the identity we aim to show, namely
the $(n+1)$-component. (1) commutes by definition of $\alpha_{n+1}^*$, (2)
commutes, since this is $T$ applied to equation (\ref{eq:pentagon-sequence2}),
and finally (3) commutes since $\alpha$ is natural.
\end{proof}

\subsection{Proof of Theorem~\ref{thm:locally-determined}}\label{proof:locally-determined}
\begin{proof}
  Using \cite[Theorem 2]{smyth-plotkin:domain-equations}, we get the following implications:
  \[ (4) \iff (3) \iff (1) \Rightarrow (2) \iff (5). \]
  To finish the proof we will show $(2) \Rightarrow (3),$ which is stated in a slightly weaker form in the same paper (but its proof is the same).

  Assume $\mu$ is a colimiting cocone of $D$ in $\CC_e$.
  Let $\nu: D \to B$ be a colimiting cocone of $D$ in $\CC$. Since $(1) \Rightarrow (3)$ and $(1) \Rightarrow (2)$
  it follows that:
  \begin{enumerate}
    \item[1.] Each $\nu_i$ is an embedding, $(\nu_i \circ \nu_i^\bullet)_i$ is an increasing sequence and $\bigvee_i \nu_i \circ \nu^\bullet_i = \id_B$;
    \item[2.] $\nu$ is a colimiting cocone of $D$ in $\CC_e$.
  \end{enumerate}

  Therefore, $\mu$ and $\nu$ are isomorphic as cocones and therefore there exists an isomorphism $h:B \to A$, such that $\mu_i = h \circ \nu_i.$
  Therefore $\mu_i$ is an embedding (for every $i$) and
  for $i \leq j$, we have:
  \[\mu_i \circ \mu_i^\bullet = h \circ \nu_i \circ \nu_i^\bullet \circ h^\bullet \leq h \circ \nu_j \circ \nu_j^\bullet \circ h^\bullet = \mu_j \circ \mu_j^\bullet \]
  so that $(\mu_i \circ \mu_i^\bullet)_i$ forms an increasing sequence and:
  \[ \bigvee_i \mu_i \circ \mu_i^\bullet = \bigvee_i h \circ \nu_i \circ \nu_i^\bullet \circ h^\bullet =  h \circ \left( \bigvee_i \nu_i \circ \nu_i^\bullet \right) \circ h^\bullet = h \circ \id_B \circ h^{-1} = \id_A. \qedhere\]
\end{proof}

\subsection{Proof of Theorem~\ref{thm:alpha-def}}\label{proof:alpha-def}
\begin{proof}
By induction on the derivation of $\Theta \vdash P.$ Let $|\Theta| = n $.
The case $\Theta \vdash \Theta_i$ is given by:
\begin{align*}
\lrb{\Theta \vdash \Theta_i} \circ F_{pe}^{\times n} = \Pi_i \circ F_{pe}^{\times n} = F_{pe} \circ \Pi_i = F_{pe} \circ \flrb{\Theta \vdash \Theta_i} .
\end{align*}
So that $\alpha^{\Theta \vdash \Theta_i} = \id$, as required.
The case $\Theta \vdash !A$ is given by:
\begin{align*}
\lrb{\Theta \vdash !A} \circ F_{pe}^{\times n} &=\, !_e \circ \lrb{\Theta \vdash A} \circ F_{pe}^{\times n}
= F_{pe} \circ G_{pe} \circ \lrb{\Theta \vdash A} \circ F_{pe}^{\times n}
= F_{pe} \circ \flrb{\Theta \vdash !A}
\end{align*}
So that $\alpha^{\Theta \vdash !A} = \id$, as required.

The case $\Theta \vdash P \odot Q$ is given by:
\begin{align*}
\lrb{\Theta \vdash P \odot Q} \circ F_{pe}^{\times n}
&=                                                                                          \odot_e \circ \langle \lrb{\Theta \vdash P}, \lrb{\Theta \vdash Q} \rangle \circ F_{pe}^{\times n}  & (\text{Definition}) \\
&=                                                                                          \odot_e \circ \langle \lrb{\Theta \vdash P}\circ F_{pe}^{\times n}, \lrb{\Theta \vdash Q} \circ F_{pe}^{\times n}\rangle  & \\
&\cong                                                                                      \odot_e \circ \langle F_{pe} \circ \flrb{\Theta \vdash P}, F_{pe} \circ \flrb{\Theta \vdash Q} \rangle  & (\text{IH}) \\
&=                                                                                          \odot_e \circ (F_{pe} \times F_{pe}) \circ \langle \flrb{\Theta \vdash P}, \flrb{\Theta \vdash Q} \rangle  & \\
&\cong                                                                                      F_{pe} \circ \boxdot_{pe} \circ \langle \flrb{\Theta \vdash P}, \flrb{\Theta \vdash Q} \rangle & (\text{Theorem~\ref{thm:boxdot is continuous}}) \\
&=                                                                                          F_{pe} \circ \flrb{\Theta \vdash P \odot Q} & (\text{Definition})
\end{align*}
Reading off the morphism, we get $\alpha^{\Theta \vdash P \odot Q} = \beta^{pe} \langle \flrb{\Theta \vdash P}, \flrb{\Theta \vdash Q} \rangle \circ \odot_e \langle \alpha^{\Theta \vdash P}, \alpha^{\Theta \vdash Q}\rangle$, as required.

The case $\Theta \vdash \mu X.P$ is given by:
\begin{equation*}
\begin{array}{rclr}
\lrb{\Theta \vdash \mu X. P} \circ F_{pe}^{\times n}
&=                                                   &\lrb{\Theta, X \vdash P}^\dagger \circ F_{pe}^{\times n}  &\\
&\cong                                               &F_{pe} \circ \flrb{\Theta, X \vdash P}^\dagger            & (\text{Theorem } \ref{thm:alpha-dagger-def})\\
&=                                                   &F_{pe} \circ \flrb{\Theta \vdash \mu X. P}                &
\end{array}
\end{equation*}
Reading off the morphism, we get $\alpha^{\Theta \vdash \mu X.P} = (\alpha^{\Theta, X \vdash P})^\dagger,$ as required.
\end{proof}

\subsection{Proof of Lemma~\ref{lem:alpha-substitution}}\label{proof:alpha-substitution}
\begin{proof}

\textbf{For (1):} Essentially the same as~\cite[Lemma C.0.3]{fiore-thesis}.
  
\textbf{For (2):}
  By induction on the derivation of $\Theta, X \vdash P.$ Let $|\Theta| = n.$

	Case $\Theta, X \vdash \Theta_i$:
	\[ 	\flrb{\Theta\vdash \Theta_i[R/X]}=	\flrb{\Theta\vdash \Theta_i}=\Pi_i=\Pi_i\circ\langle\Pi_1,\ldots,\Pi_n,	\flrb{\Theta\vdash R}\rangle=	\flrb{\Theta, X\vdash \Theta_i}\circ\langle\Id,	\flrb{\Theta\vdash R}\rangle.   \]
  so that $\gamma^{\Theta \vdash \Theta_i [R/X]} =\id,$ as required.

	Case $\Theta, X \vdash X$:
	\[ 	\flrb{\Theta\vdash X[R/X]} =	\flrb{\Theta\vdash R} = \Pi_{n+1} \circ \langle \Id, \flrb{\Theta \vdash R}\rangle = \flrb{\Theta, X \vdash X} \circ \langle \Id, \flrb{\Theta \vdash R}\rangle\]
  so that $\gamma^{\Theta \vdash X [R/X]} =\id,$ as required.
	
	Case $\Theta, X \vdash !A$:
	\begin{align*}
	\flrb{\Theta\vdash !A[R/X]}
  & = \flrb{\Theta\vdash ! (A[R/X])}                                                                          &\\
	& = G_{pe}\circ\lrb{\Theta\vdash A[R/X]}\circ F_{pe}^{\times n}                                                       &\\
	& = G_{pe}\circ\lrb{\Theta,X\vdash A}\circ\langle\Id,\lrb{\Theta\vdash R}\rangle\circ F_{pe}^{\times n}               &(\text{Lemma~\ref{lem:alpha-substitution} (1)})\\
	& = G_{pe}\circ\lrb{\Theta,X\vdash A}\circ \langle F_{pe}^{\times n},\lrb{\Theta\vdash R}\circ F_{pe}^{\times n}\rangle    &\\
	& \cong  G_{pe}\circ\lrb{\Theta,X\vdash A}\circ\langle  F_{pe}^{\times n}, F_{pe}\circ \flrb{\Theta\vdash R}\rangle        &(\text{Theorem~\ref{thm:alpha-def}})\\
	& = G_{pe}\circ\lrb{\Theta,X\vdash A}\circ F_{pe}^{\times (n+1)}\circ\langle \Id,\flrb{\Theta\vdash R}\rangle         &\\
	& = \flrb{\Theta,X\vdash !A}\circ\langle \Id,\flrb{\Theta\vdash R}\rangle                                   &
	\end{align*}
  so that $\gamma^{\Theta \vdash !A [R/X]}  = G_{pe} \lrb{\Theta,X\vdash A} \langle F_{pe}^{\times n},\alpha^{\Theta\vdash R}\rangle$.
	
  Case $\Theta, X \vdash P \odot Q$:
	\begin{align*}
	\flrb{\Theta\vdash (P\odot Q)[R/X]}
  & = \flrb{\Theta\vdash P[R/X]\odot Q[R/X]}                                                                                                                                             &\\
	& = \boxdot_{pe} \circ\langle 	 \flrb{\Theta\vdash P[R/X]},	 \flrb{\Theta\vdash  Q[R/X]}\rangle                                                                                      &\\
	& \cong \boxdot_{pe} \circ\langle 	 \flrb{\Theta,X\vdash P}\circ\langle\Id,\flrb{\Theta\vdash R}\rangle, \flrb{\Theta,X\vdash  Q}\circ\langle\Id,\flrb{\Theta\vdash R}\rangle\rangle  &(\text{IH})\\
	& = \boxdot_{pe} \circ\langle 	 \flrb{\Theta,X\vdash P}, \flrb{\Theta,X\vdash  Q}\rangle\circ\langle\Id,\flrb{\Theta\vdash R}\rangle                                                  &\\
	& = \flrb{\Theta,X\vdash P\odot Q}\circ\langle\Id,\flrb{\Theta\vdash R}\rangle                                                                                                         &
	\end{align*}
  so that $\gamma^{\Theta \vdash (P \odot Q) [R/X]} = \boxdot_{pe} \langle \gamma^{\Theta\vdash P[R/X]},\gamma^{\Theta\vdash Q[R/X]}\rangle$.
	
  Case $\Theta, X \vdash \mu Y.P$:
\begin{align*}
&\phantom{bbb} \flrb{\Theta\vdash \mu Y.P[R/X]} \\
&= \flrb{\Theta,Y\vdash P[R/X]}^\dagger                                                                                              &\\
&\cong \left(\flrb{\Theta, Y, X \vdash P} \circ \langle \Id, \flrb{\Theta, Y \vdash R} \rangle \right)^\dagger                       &\text{(IH and Remark~\ref{rem:dagger})}\\
&= \left(\flrb{\Theta, X, Y \vdash P} \circ \swap_{n,0} \circ \langle \Id, \flrb{\Theta, Y \vdash R} \rangle \right)^\dagger         &(\text{Lemma~\ref{lem:permutation} (2)})\\
&= \left(\flrb{\Theta, X, Y \vdash P} \circ \langle \Id, \flrb{\Theta, Y \vdash R}, \Pi_{n+1} \rangle \right)^\dagger                &\\
&= \left(\flrb{\Theta, X, Y \vdash P} \circ \langle \Id, \flrb{\Theta \vdash R} \circ \drop_{n,0}, \Pi_{n+1} \rangle \right)^\dagger &(\text{Lemma~\ref{lem:contraction} (2)})\\
&= \left(\flrb{\Theta, X, Y \vdash P} \circ (\langle \Id, \flrb{\Theta \vdash R}\rangle \times \Id)  \right)^\dagger                 &\\
&= \flrb{\Theta, X, Y \vdash P}^\dagger \circ \langle \Id, \flrb{\Theta \vdash R}\rangle                                             &(\text{Proposition~\ref{prop:dagger-equal}})\\
&= \flrb{\Theta, X \vdash \mu Y.P} \circ \langle \Id, \flrb{\Theta \vdash R}\rangle                                                  &
\end{align*}
so that $\gamma^{\Theta \vdash \mu Y.P [R/X]} = (\gamma^{\Theta,Y\vdash P[R/X]})^\dagger$.

\textbf{For (3):}
Let $|\Theta| = n$ and let $\vec Z := (Z_1, \ldots, Z_n) \in Ob(\PE^{\times n})$ be an arbitrary object.
We shall prove the equivalent statement 

\[ F_{pe}\gamma^{\Theta\vdash P[R/X]} \circ\alpha^{\Theta \vdash P[R/X]} =  \alpha^{\Theta, X \vdash P} \langle \Id, \flrb{\Theta \vdash R}\rangle \circ \lrb{\Theta, X\vdash P} \langle F_{pe}^{\times n}, \alpha^{\Theta \vdash R} \rangle . \]
We do so by induction on the derivation of $\Theta,X \vdash P.$ 

Case $\Theta, X \vdash \Theta_i$:
\begin{align*}
(F_{pe}\gamma^{\Theta\vdash \Theta_i[R/X]} \circ \alpha^{\Theta \vdash \Theta_i[R/X]})_{\vec Z}
&= \id_{F(Z_i)} \\
&= (\alpha^{\Theta, X \vdash \Theta_i} \langle \Id, \flrb{\Theta \vdash R}\rangle \circ \lrb{\Theta, X\vdash \Theta_i} \langle F_{pe}^{\times n}, \alpha^{\Theta \vdash R} \rangle)_{\vec Z}
\end{align*}

Case $\Theta, X \vdash X$:
\begin{align*}
(F_{pe}\gamma^{\Theta\vdash X[R/X]} \circ \alpha^{\Theta \vdash X[R/X]})_{\vec Z}
&= \alpha^{\Theta \vdash R}_{\vec Z} \\
&= (\alpha^{\Theta, X \vdash X} \langle \Id, \flrb{\Theta \vdash R}\rangle \circ \lrb{\Theta, X\vdash X} \langle F_{pe}^{\times n}, \alpha^{\Theta \vdash R} \rangle)_{\vec Z}
\end{align*}

Case $\Theta, X \vdash !A$:
\begin{align*}
 F_{pe} \gamma^{\Theta\vdash !A[R/X]} \circ \alpha^{\Theta \vdash !A[R/X]} & = F_{pe} G_{pe} \lrb{\Theta,X\vdash A} \langle F_{pe}^{\times n}, \alpha^{\Theta\vdash R} \rangle \circ \id \\
 & = !_e \lrb{\Theta,X\vdash A} \langle F_{pe}^{\times n}, \alpha^{\Theta\vdash R} \rangle \\
 & = \lrb{\Theta,X\vdash !A} \langle F_{pe}^{\times n}, \alpha^{\Theta\vdash R} \rangle \\
 & = \id \circ \lrb{\Theta,X\vdash !A} \langle F_{pe}^{\times n}, \alpha^{\Theta\vdash R} \rangle \\
 & = \alpha^{\Theta, X \vdash !A} \langle \Id, \flrb{\Theta \vdash R}\rangle \circ \lrb{\Theta, X\vdash !A} \langle F_{pe}^{\times n}, \alpha^{\Theta \vdash R} \rangle.
\end{align*}

Case $\Theta, X \vdash P\odot Q:$

Let $|\Theta| = n$ and let $\vec Z := (Z_1, \ldots, Z_n) \in Ob(\PE^{\times n})$ be an arbitrary object. Then the $\vec Z$ component of $\gamma^{\Theta\vdash P[R/X]}$ is a morphism
\[ \gamma^{\Theta\vdash P[R/X]}_{\vec Z}:\flrb{\Theta \vdash P[R/X]}\vec Z\to\flrb{\Theta,X\vdash P}\langle\Id,\flrb{\Theta\vdash R}\rangle\vec Z \]
hence we obtain from Theorem~\ref{thm:boxdot is continuous}:

\begin{align*}
& \phantom{bbb} F_{pe}\big(\gamma_{\vec Z}^{\Theta\vdash P[R/X]}\boxdot_{pe} \gamma_{\vec Z}^{\Theta\vdash Q[R/X]}\big)\circ\beta^{pe}_{\flrb{\Theta\vdash P[R/X]}\vec Z,\flrb{\Theta\vdash Q[R/X]}\vec Z }\\
&=\beta^{pe}_{ \flrb{\Theta,X\vdash P} \langle\Id,\flrb{\Theta\vdash R}\rangle\vec Z,\flrb{\Theta,X\vdash Q} \langle\Id,\flrb{\Theta\vdash R}\rangle\vec Z }\circ\big(F_{pe}\gamma_{\vec Z}^{\Theta\vdash P[R/X]}\odot_e F_{pe}\gamma_{\vec Z}^{\Theta\vdash Q[R/X]}\big),
\end{align*}
hence we find
\begin{align}\label{eq:beta3}
\nonumber & \phantom{bbb} F_{pe}\boxdot\langle\gamma^{\Theta\vdash P[R/X]},\gamma^{\Theta\vdash Q[R/X]}\rangle\circ\beta^{pe}\langle\flrb{\Theta\vdash P[R/X]},\flrb{\Theta\vdash Q[R/X]}\rangle \\
\nonumber &=\beta^{pe}\langle \flrb{\Theta,X\vdash P}\circ\langle\Id,\flrb{\Theta\vdash R}\rangle,\flrb{\Theta,X\vdash Q}\circ\langle\Id,\flrb{\Theta\vdash R}\rangle   \rangle \circ \\
\nonumber &\qquad \odot_e\langle F_{pe}\gamma^{\Theta\vdash P[R/X]},F_{pe}\gamma^{\Theta\vdash Q[R/X]}\rangle\\
&=\beta^{pe}\langle \flrb{\Theta,X\vdash P},\flrb{\Theta,X\vdash Q}\rangle\langle\Id,\flrb{\Theta\vdash R}\rangle\circ\odot_e\langle F_{pe}\gamma^{\Theta\vdash P[R/X]},F_{pe}\gamma^{\Theta\vdash Q[R/X]}\rangle,
\end{align}

Then, surpressing superscripts to improve readability:
\begin{align*}
	& \phantom{bbb} F_{pe}\gamma \circ \alpha \\
  & =F_{pe} \boxdot_{pe} \langle \gamma,\gamma\rangle\circ \beta^{pe} \langle \flrb{\Theta \vdash P[R/X]}, \flrb{\Theta \vdash Q[R/X]} \rangle \circ \odot_e \langle \alpha, \alpha\rangle  \\
	& =  	\beta^{pe}\langle \flrb{\Theta,X\vdash P},\flrb{\Theta,X\vdash Q}\rangle\langle\Id,\flrb{\Theta\vdash R}\rangle\circ\odot_e\langle F_{pe}\gamma,F_{pe}\gamma\rangle \circ\odot_e \langle \alpha, \alpha\rangle\\
	& =  	\beta^{pe}\langle \flrb{\Theta,X\vdash P},\flrb{\Theta,X\vdash Q}\rangle\langle\Id,\flrb{\Theta\vdash R}\rangle \circ\odot_e\langle F_{pe}\gamma\circ \alpha,F_{pe}\gamma\circ \alpha\rangle\\
	& =  	\beta^{pe}\langle \flrb{\Theta,X\vdash P},\flrb{\Theta,X\vdash Q}\rangle\langle\Id,\flrb{\Theta\vdash R}\rangle\circ\\
	& \qquad \odot_e\big\langle \alpha\langle \Id, \flrb{\Theta \vdash R}\rangle \circ \lrb{\Theta, X\vdash P} \langle F_{pe}^{\times n}, \alpha \rangle,\alpha \langle \Id, \flrb{\Theta \vdash R}\rangle \circ \lrb{\Theta, X\vdash Q} \langle F_{pe}^{\times n}, \alpha \rangle\big\rangle\\
	& =  	\beta^{pe}\langle \flrb{\Theta,X\vdash P},\flrb{\Theta,X\vdash Q}\rangle\langle\Id,\flrb{\Theta\vdash R}\rangle\circ\\
	& \qquad \odot_e\big\langle \alpha\langle \Id, \flrb{\Theta \vdash R}\rangle,\alpha\langle \Id, \flrb{\Theta \vdash R}\rangle\big\rangle \circ \odot_e\big\langle \lrb{\Theta, X\vdash P}\langle F_{pe}^{\times n}, \alpha \rangle,\lrb{\Theta, X\vdash Q}\langle F_{pe}^{\times n}, \alpha\rangle\big\rangle \\
	& =  	\beta^{pe}\langle \flrb{\Theta,X\vdash P},\flrb{\Theta,X\vdash Q}\rangle\langle\Id,\flrb{\Theta\vdash R}\rangle\circ\odot_e\langle \alpha,\alpha \rangle\langle \Id, \flrb{\Theta \vdash R}\rangle\circ\\
	& \qquad \odot_e\langle \lrb{\Theta, X\vdash P},\lrb{\Theta, X\vdash Q}\rangle\langle F_{pe}^{\times n}, \alpha \rangle\\
	& =  	\left(\beta^{pe}\langle \flrb{\Theta,X\vdash P},\flrb{\Theta,X\vdash Q}\rangle\circ\odot_e\langle \alpha,\alpha \rangle\right)\langle \Id, \flrb{\Theta \vdash R}\rangle\circ\\
	& \qquad \odot_e\langle \lrb{\Theta, X\vdash P},\lrb{\Theta, X\vdash Q}\rangle\langle F_{pe}^{\times n}, \alpha \rangle \\
	& = \alpha\langle \Id, \flrb{\Theta \vdash R}\rangle\circ\lrb{\Theta,X\vdash P\odot Q}\langle F_{pe}^{\times n}, \alpha \rangle
\end{align*}

where the first and the last equalities are by definition, the second equality
is (\ref{eq:beta3}), the third and the fifth equalities are by functoriality of
$\odot$,  the fourth equality is the induction hypothesis, and the penultimate
equality follows from
$(\rho\circ\sigma)K=\rho K\circ\sigma K$ for any
composable functor $K$ and natural transformations $\rho$ and $\sigma$.

Case $\Theta, X \vdash \mu Y.P$:
\begin{align*}
& \phantom{bbb} F_{pe} \gamma^{\Theta \vdash \mu Y.P[R/X]} \circ \alpha^{\Theta \vdash \mu Y. P[R/X]} \\
&= F_{pe} (\gamma^{\Theta, Y \vdash P[R/X]})^\dagger \circ (\alpha^{\Theta, Y \vdash P[R/X]})^\dagger                                                                                               &(\text{Definition})\\
&= (F_{pe} \gamma^{\Theta, Y \vdash P[R/X]} \circ \alpha^{\Theta, Y \vdash P[R/X]})^\dagger                                                                                                         &(\text{Lemma~\ref{lem:alpha-operations} (1)})\\
&= (\alpha^{\Theta, Y, X \vdash P} \langle \Id, \flrb{\Theta, Y \vdash R}\rangle \circ \lrb{\Theta, Y, X \vdash P} \langle F_{pe}^{\times n+1}, \alpha^{\Theta, Y \vdash R}\rangle)^\dagger         &(\text{IH})\\
&= (\alpha^{\Theta, X, Y \vdash P}\swap_{n,0} \langle \Id, \flrb{\Theta, Y \vdash R}\rangle \circ \\
&\qquad \lrb{\Theta, X, Y \vdash P}\swap_{n,0} \langle F_{pe}^{\times n+1}, \alpha^{\Theta, Y \vdash R}\rangle)^\dagger         &(\text{Lemma~\ref{lem:permutation}})\\
&= (\alpha^{\Theta, X, Y \vdash P} \langle \Pi_1, \ldots, \Pi_n, \flrb{\Theta, Y \vdash R}, \Pi_{n+1}\rangle \circ &\\
&\qquad \lrb{\Theta, X, Y \vdash P}\langle F_{pe} \Pi_1, \ldots, F_{pe} \Pi_n, \alpha^{\Theta, Y \vdash R}, F_{pe} \Pi_{n+1}\rangle )^\dagger         &\\
&= (\alpha^{\Theta, X, Y \vdash P} (\langle \Id, \flrb{\Theta \vdash R} \rangle \times \Id) \circ \lrb{\Theta, X, Y \vdash P} (\langle F_{pe}^{\times n} , \alpha^{\Theta \vdash R} \rangle \times F_{pe}) )^\dagger         &(\text{Lemma~\ref{lem:contraction}})\\
&= (\alpha^{\Theta, X, Y \vdash P})^\dagger \langle \Id, \flrb{\Theta \vdash R} \rangle  \circ \lrb{\Theta, X, Y \vdash P}^\dagger \langle F_{pe}^{\times n} , \alpha^{\Theta \vdash R} \rangle          &(\text{Lemma~\ref{lem:alpha-operations} (2)})\\
&= \alpha^{\Theta, X \vdash \mu Y. P} \langle \Id, \flrb{\Theta \vdash R} \rangle  \circ \lrb{\Theta, X \vdash \mu Y. P} \langle F_{pe}^{\times n} , \alpha^{\Theta \vdash R} \rangle          &(\text{Definition})~~~~\quad\qedhere
\end{align*}

\end{proof}

\end{document}